\documentclass[10pt,letterpaper,twoside,reqno,final]{amsart}
\pdfoutput=1

\synctex=1

\makeatletter{}
\usepackage{fixltx2e}                       
\usepackage[usenames,dvipsnames]{xcolor}
\usepackage{fancyhdr}
\usepackage{amsmath,amsfonts,amsbsy,amsgen,amscd,mathrsfs,amssymb,amsthm}

\usepackage{subfig}
\usepackage{url}

\usepackage[font=small,margin=0.25in,labelfont={sc},labelsep={colon}]{caption}

\usepackage{tikz}
\usepackage{microtype}
\usepackage{enumitem}

\definecolor{dark-gray}{gray}{0.3}
\definecolor{dkgray}{rgb}{.4,.4,.4}
\definecolor{dkblue}{rgb}{0,0,.5}
\definecolor{medblue}{rgb}{0,0,.75}
\definecolor{rust}{rgb}{0.5,0.1,0.1}

\usepackage[colorlinks=true]{hyperref}
\hypersetup{pdftitle={Living on the edge: Phase transitions in convex programs with random data}}    \hypersetup{pdfauthor={Amelunxen, Lotz, McCoy, and Tropp}}     \hypersetup{pdfkeywords={convex optimization} {integral geometry} {conic geometry} {probability} {intrinsic volumes}}  \hypersetup{linkcolor=dkblue}          \hypersetup{citecolor=rust}        \hypersetup{urlcolor=rust}

\usepackage{graphicx}
\usepackage{booktabs,longtable,tabu} \setlength{\tabulinesep}{1mm}
\usepackage{multirow} 
\usepackage{float}

\usepackage[T1]{fontenc}

\usepackage{fourier}
\usepackage{charter}

\usepackage{bm} 
\graphicspath{{figures/}}

\usepackage{pdfsync}

\newtheorem{bigthm}{Theorem}

\newtheorem{theorem}{Theorem}[section]
\newtheorem{lemma}[theorem]{Lemma}
\newtheorem{sublemma}[theorem]{Sublemma}
\newtheorem{proposition}[theorem]{Proposition}
\newtheorem{fact}[theorem]{Fact}

\newtheorem{conjecture}[theorem]{Conjecture}

\theoremstyle{definition}

\newtheorem{definition}[theorem]{Definition}
\newtheorem{example}[theorem]{Example}
\newtheorem{remark}[theorem]{Remark}

\newcommand{\term}{\emph}

\numberwithin{equation}{section} 
\numberwithin{figure}{section}
\numberwithin{table}{section}

\floatstyle{plaintop}
\newfloat{recipe}{thp}{lor}
\floatname{recipe}{Recipe}
\numberwithin{recipe}{section}

\providecommand{\mathbold}[1]{\bm{#1}}

\renewcommand{\phi}{\varphi}

\newcommand{\eps}{\varepsilon}

\newcommand{\half}{\tfrac{1}{2}}

\newcommand{\defeq}{\ensuremath{\mathrel{\mathop{:}}=}}  
 
\newcommand{\econst}{\mathrm{e}}

 \newcommand{\zerovct}{\vct{0}} 
\newcommand{\Id}{\mathbf{I}}

\newcommand{\coll}[1]{\mathscr{#1}}
\newcommand{\sphere}[1]{\mathsf{S}^{#1}}
\newcommand{\ball}[1]{\mathsf{B}^{#1}}

\providecommand{\mathbbm}{\mathbb} 
\newcommand{\R}{\mathbbm{R}}

\newcommand{\polar}{\circ}

\newcommand{\abs}[1]{\left\vert {#1} \right\vert}

\newcommand{\pos}{\operatorname{Pos}}

\newcommand{\diff}[1]{\mathrm{d}{#1}}
\newcommand{\idiff}[1]{\, \diff{#1}}

\newcommand{\argmin}{\operatorname*{arg\; min}}

\newcommand{\Prob}{\mathbbm{P}}

\newcommand{\Expect}{\operatorname{\mathbb{E}}}

\newcommand{\normal}{\textsc{normal}}

\DeclareMathOperator{\Var}{Var}

\newcommand{\vct}[1]{\mathbold{#1}}
\newcommand{\mtx}[1]{\mathbold{#1}}

\newcommand{\transp}{T}

\newcommand{\nullity}{\operatorname{null}}

\renewcommand{\vec}{\operatorname{vec}}

\newcommand{\Proj}{\ensuremath{\mtx{\Pi}}} 

\newcommand{\psdge}{\succcurlyeq}

\newcommand{\ip}[2]{\left\langle {#1},\ {#2} \right\rangle}
\newcommand{\absip}[2]{\abs{\ip{#1}{#2}}}

\newcommand{\norm}[1]{\left\Vert {#1} \right\Vert}
\newcommand{\normsq}[1]{\norm{#1}^2}

\newcommand{\sone}[1]{\norm{#1}_{S_1}}
\newcommand{\snorm}[1]{\sone{#1}}
\DeclareMathOperator{\dist}{dist}

\newcommand{\enorm}[1]{\norm{#1}}
\newcommand{\enormsm}[1]{\enorm{\smash{#1}}}

\newcommand{\enormsq}[1]{\enorm{#1}^2}

\newcommand{\fnorm}[1]{\norm{#1}_{\mathrm{F}}}
\newcommand{\fnormsq}[1]{\fnorm{#1}^2}

\newcommand{\pnorm}[2]{\norm{#2}_{#1}}

\newcommand{\Desc}{\mathcal{D}}
\newcommand{\NormC}{\mathcal{N}} \newcommand{\sdim}{\delta}

\DeclareMathOperator{\Circ}{Circ}

\newcommand{\uniform}{\textsc{uniform}}

\newcommand{\cone}{\operatorname{cone}}

\newcommand{\conv}{\operatorname{conv}}

\newcommand{\minimize}{\text{minimize}\quad}
\newcommand{\subjto}{\quad\text{subject to}\quad}

\newcommand{\Rmm}{R_{\mathrm{mm}}}

\newcommand{\distsubdiff}{J}

\hyphenation{sub-differential}

\evensidemargin=0in
\oddsidemargin=0in
\textwidth=6.5in
\topmargin=-0.33in
\headheight=0.25in
\textheight=9in

\title[Phase transitions in random convex programs]{Living on the edge: Phase transitions \\
in convex programs with random data}

\author[D.~Amelunxen]{Dennis Amelunxen}

\author[M.~Lotz]{Martin Lotz}

\author[M.~B.~McCoy]{Michael B. Mccoy}
 \author[J.~A.~Tropp]{Joel~A.~Tropp}

\date{26 March 2013. Revised 26 January 2014 and 24 April 2014.}

\subjclass[2010]{Primary:
90C25, 52A22, 60D05. Secondary:
52A20, 62C20. }

\begin{document}

\begin{abstract} Recent research indicates that many convex optimization problems with random constraints exhibit a phase transition as the number of constraints increases.  For example, this phenomenon emerges in the $\ell_1$ minimization method for identifying a sparse vector from random linear measurements.  Indeed, the $\ell_1$ approach succeeds with high probability when the number of measurements exceeds a threshold that depends on the sparsity level; otherwise, it fails with high probability.

This paper provides the first rigorous analysis that explains why phase transitions are ubiquitous in random convex optimization problems.  It also describes tools for making reliable predictions about the quantitative aspects of the transition, including the location and the width of the transition region.  These techniques apply to regularized linear inverse problems with random measurements, to demixing problems under a random incoherence model, and also to cone programs with random affine constraints.

The applied results depend on foundational research in conic geometry.  This paper introduces a summary parameter, called the statistical dimension, that canonically extends the dimension of a linear subspace to the class of convex cones.  The main technical result demonstrates that the sequence of intrinsic volumes of a convex cone concentrates sharply around the statistical dimension.  This fact leads to accurate bounds on the probability that a randomly rotated cone shares a ray
with a fixed cone.
\end{abstract}

\maketitle

\thispagestyle{empty}

\section{Motivation}
\label{sec:introduction}

A \term{phase transition} is a sharp change in the character of a computational problem as its parameters vary.  Recent research suggests that phase transitions emerge in many
random convex optimization problems from mathematical signal processing and computational statistics; for example, see~\cite{DonTan:09a,stojnic10, OH:10, CSPW2011, DonGavMon:13, mctr:12}.
This paper proves that the locations of these phase transitions are determined by geometric invariants associated with the mathematical programs.  Our analysis provides the first complete account of transition phenomena in random linear inverse problems, random demixing problems, and random cone programs.

\subsection{Vignette: Compressed sensing}
\label{sec:cs}

To illustrate our goals, we discuss the \term{compressed sensing problem}, a familiar example where a phase transition is plainly visible in numerical experiments~\cite{DonTan:09a}.  Let $\vct{x}_0 \in \R^d$ be an unknown vector with $s$ nonzero entries.  Let $\mtx{A}$ be an $m \times d$ random matrix whose entries are independent standard normal variables, and suppose we have access to the vector 
\begin{equation}\label{eq:l1-obs}
  \vct{z}_0 = \mtx{A} \vct{x}_0 \in \R^m.
\end{equation}
This serves as a model for data acquisition: we interpret $\vct{z}_0$ as a collection of $m$ independent linear measurements of the unknown $\vct{x}_0$.  The compressed sensing problem requires us to identify $\vct{x}_0$ given only the measurement vector $\vct{z}_0$ and the realization of the measurement matrix $\mtx{A}$.  When the number $m$ of measurements is smaller than the ambient dimension $d$, we cannot solve this inverse problem unless we take advantage of the prior knowledge that $\vct{x}_0$ is sparse.

The method of $\ell_1$ minimization~\cite{CDS:01,CT:06,dono:06} is a well-established approach to the compressed sensing problem.  This technique searches for the sparse unknown $\vct{x}_0$ by solving the convex program
\begin{equation} \label{eqn:l1-min}
\minimize \pnorm{1}{ \vct{x} }
\subjto \vct{z}_0 = \mtx{A}\vct{x},
\end{equation}
where $\pnorm{1}{\vct x}:= \sum_{i=1}^d\abs{x_i}$.  This approach is sensible because the $\ell_1$ norm of a vector can serve as a proxy for the sparsity.  We say that~\eqref{eqn:l1-min} \term{succeeds} at solving the compressed sensing problem when it has a unique optimal point $\widehat{\vct{x}}$ and $\widehat{\vct{x}}$ equals the true unknown $\vct{x}_0$; otherwise, it \term{fails}.

Figure~\ref{fig:l1-min} depicts the results of a computer experiment designed to estimate the probability that~\eqref{eqn:l1-min} succeeds as we vary the sparsity $s$ of the unknown $\vct{x}_0$ and the number $m$ of random measurements.  We consider two choices for the ambient dimension, $d = 100$ and $d = 600$.  For each choice of $s$ and $m$, we construct a vector $\vct{x}_0$ with $s$ nonzero entries, we draw $m$ random measurements according to the model~\eqref{eq:l1-obs}, and we solve the problem~\eqref{eqn:l1-min}.  The brightness of the point $(s, m)$ indicates the probability of success, estimated from $50$ independent trials.  White represents certain success, while black represents certain failure.  The plot evinces that, for a given sparsity level $s$, the $\ell_1$ minimization technique~\eqref{eqn:l1-min} almost always succeeds when we have an adequate number $m$ of measurements, while it almost always fails when we have fewer measurements.  Appendix~\ref{app:experiments} contains more details about this experiment.

\begin{figure}[t!]
  \centering
  \includegraphics[width=0.9\textwidth]{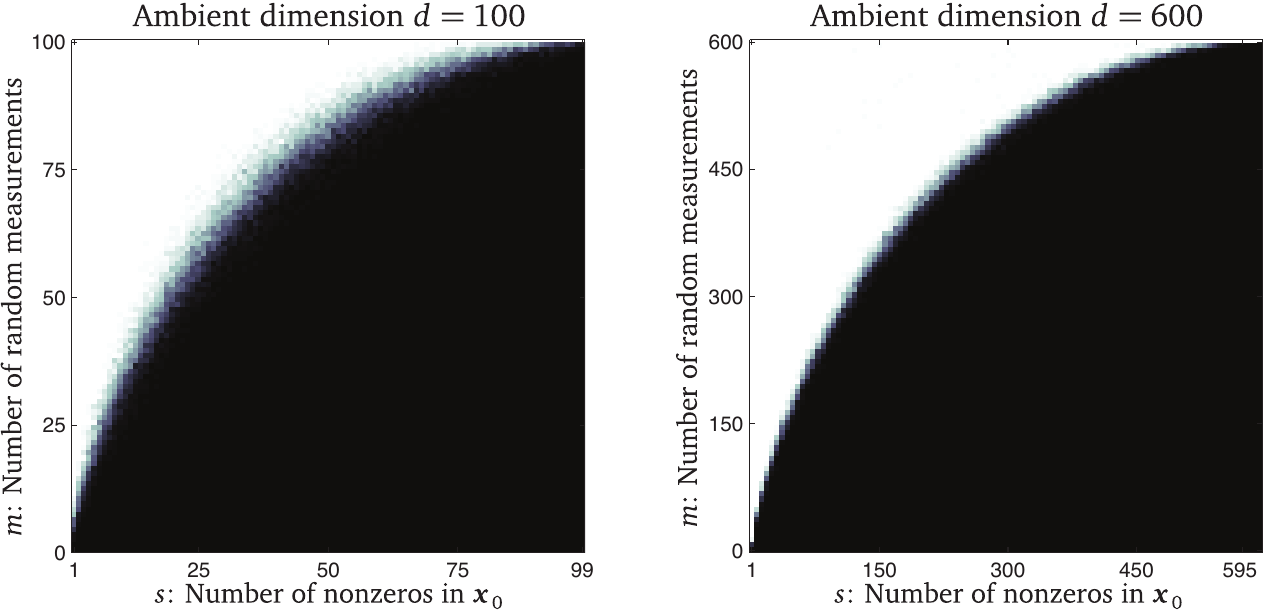}
  \caption{\textbf{The phase transition phenomenon in compressed sensing.}
  This diagram shows the empirical probability that the $\ell_1$ minimization
  method~\eqref{eqn:l1-min} successfully identifies a vector
  $\vct{x}_0 \in \R^d$ with $s$ nonzero entries given
  a vector $\vct{z}_0$ consisting of $m$ random measurements of the form
  $\vct{z}_0 = \mtx{A} \vct{x}_0$ where $\mtx{A}$ is an $m \times d$ matrix
  with independent standard normal entries.
  The brightness of each point reflects the observed probability of success,
  ranging from certain failure (black) to certain success (white).
  \textbf{[left]} The ambient dimension $d = 100$.
  \textbf{[right]} The ambient dimension $d = 600$.}
    \label{fig:l1-min}
\end{figure}

Figure~\ref{fig:l1-min} raises several interesting questions about the
performance of the $\ell_1$ minimization method~\eqref{eqn:l1-min} for solving the compressed sensing problem:

\begin{itemize} \setlength{\itemsep}{1mm}
\item	\textbf{What is the probability of success?}  For a given pair $(s, m)$ of parameters, can we estimate the probability that~\eqref{eqn:l1-min} succeeds or fails?

\item	\textbf{Does a phase transition exist?}  Is there a simple curve $m = \psi(s)$ that separates the parameter space into regions where~\eqref{eqn:l1-min} is very likely to succeed or to fail?

\item	\textbf{Where is the edge of the phase transition?}  Can we find a formula for the location of this threshold between success and failure?

\item	\textbf{How wide is the transition region?}  For a given sparsity level $s$ and ambient dimension $d$, how big is the range of $m$ where the probability of success and failure are comparable?

\item	\textbf{Why does the transition exist?}  Is there a geometric explanation for the phase transition in compressed sensing?  Can we export this reasoning to understand other problems?
\end{itemize}

\noindent
There is an extensive body of work dedicated to these questions and their relatives.
See the books~\cite{EldKut:12,FouRau:13}
for background on compressed sensing in general.
Section~\ref{sec:conclusion} outlines the current state of knowledge about phase
transitions in convex optimization methods for signal processing.
In spite of all this research, a complete explanation of these phenomena is lacking.
The goal of this paper is to answer the questions we have posed.

\subsection{Notation}
\label{sec:notation}
Before moving forward, let us introduce some notation.  We use standard conventions from convex analysis,
as set out in Rockafellar~\cite{Rock}.
For vectors $\vct{x}, \vct{y} \in \R^d$, we define the Euclidean inner product $\ip{\vct{x}}{\smash{\vct{y}}} := \sum_{i=1}^d x_i y_i$ and the squared Euclidean norm $\enormsq{\vct{x}} := \ip{\vct{x}}{\vct{x}}$.  The Euclidean distance to a set $S \subset \R^d$ is the function
$$
\dist( \cdot, S) : \R^d \to \R_+
\quad\text{where}\quad
\dist(\vct{x}, S) := \inf\big\{ \enorm{\smash{\vct{x} - \vct{y}}} : \vct{y} \in S \big\}
$$
and $\R_+$ denotes the set of nonnegative real numbers.
The unit ball $\ball{d}$ and unit sphere $\sphere{d-1}$ in $\R^d$ are the sets
$$
\ball{d} := \big\{ \vct{x} \in \R^d : \enorm{\vct{x}} \leq 1 \big\}
\quad\text{and}\quad
\sphere{d-1} := \big\{ \vct{x} \in \R^d : \enorm{\vct{x}} = 1 \big\}.
$$
An \term{orthogonal basis} $\mtx U \in \R^{d\times d}$ for $\R^d$ is a matrix that satisfies $\mtx U^\transp \mtx U = \Id$, where $\Id$ is the identity.

A \term{convex cone} $C \subset \R^d$ is a convex set that is
positively homogeneous: $C = \tau C$ for all $\tau > 0$.
Let us emphasize that the vertex of a convex cone is always located at the origin.
For a closed convex cone $C$, the Euclidean projection $\Proj_C(\vct{x})$ of a point $\vct{x}$ onto the cone returns the point in $C$ nearest to $\vct x$:
\begin{equation}\label{eq:cone-proj}
\mtx{\Pi}_C(\vct{x}) := \argmin\big\{ \enormsm{ \vct{x} - \vct{y} } : \vct{y} \in C \big\}.
\end{equation}
For a general cone $C\subset \R^d$,
the \term{polar cone} $C^\polar$ is the set of outward normals of $C$:
\begin{equation} \label{eq:polar-cone}
C^\polar := \big\{ \vct{u} \in \R^d : \ip{\vct{u}}{\vct{x}} \leq 0
\quad\text{for all $\vct{x} \in C$} \big\}.
\end{equation}
The polar cone $C^\polar$ is always closed and convex.

We make heavy use of probability in this work.  The symbol $\Prob\{\cdot\}$ denotes the probability of an event, and
$\Expect[ \cdot ]$ returns the expectation of a random variable.
We reserve the letter $\vct{g}$ for a standard normal random vector, i.e.,
a vector whose entries are independent Gaussian random variables with mean
zero and variance one.  We reserve the letter $\vct{\theta}$ for
a random vector uniformly distributed on the Euclidean unit
sphere.  The set of orthogonal matrices forms a compact Lie group,
so it admits an invariant Haar (i.e., uniform) probability measure.
We reserve the letter $\mtx{Q}$ for a uniformly random orthogonal matrix,
and we refer to $\mtx{Q}$ as a \term{random orthogonal basis}
or a \term{random rotation}.

\section{Conic geometry and phase transitions}
\label{sec:conic-phase}

In the theory of convex analysis, convex cones take over the central role that subspaces perform in linear algebra~\cite[p.~90]{HUL:93a}.  In particular, we can use convex cones to express the optimality conditions for a convex program~\cite[Part VII]{HUL:93a}.  When a convex optimization problem includes random data, the optimality conditions may involve \term{random} convex cones.  Therefore, the study of random convex optimization problems leads directly to questions about the stochastic geometry of cones.

This perspective is firmly established in the literature on convex optimization for signal processing applications.
Rudelson \& Vershynin~\cite[Sec.~4]{RV:08} analyze the $\ell_1$ minimization method~\eqref{eqn:l1-min} for the
compressed sensing problem by examining the conic formulation of the optimality conditions.  They apply deep
results~\cite{gord:85,gord:88} for Gaussian processes to bound the probability that $\ell_1$ minimization
succeeds.  Many subsequent papers, including~\cite{stojnic10,OH:10,CRPW:12}, rely on the same argument.

In sympathy with these prior works, we study random convex optimization problems
by considering the conic formulation of the optimality conditions.
In contrast, we have developed a new technical argument to study
the probability that the conic optimality conditions hold.
Our approach depends on exact formulas from the field of
\term{conic integral geometry}~\cite[Chap.~6.5]{scwe:08}.
In this context, the general idea of using integral geometry
is due to Donoho~\cite{dono:06b} and Donoho \& Tanner~\cite{dota:09a}.
The specific method in this paper was proposed in~\cite{mctr:12},
but we need to install additional machinery to
prove that phase transitions occur.

Sections~\ref{sec:kinem-intro}--\ref{sec:calc-sdim-intro} outline the results we need from conic integral geometry, along with our contributions to this subject.  We apply this theory in Sections~\ref{sec:line-inverse-probl} and~\ref{sec:phase-trans-demix} to study some random optimization problems.  We conclude with a summary of our main results in Section~\ref{sec:contributions}.

\subsection{The kinematic formula for cones}
\label{sec:kinem-intro}

Let us begin with a beautiful and classical problem from the field of conic integral geometry:

\begin{quotation} \it
What is the probability that a randomly rotated convex cone shares a ray with a fixed convex cone?
\end{quotation}

\noindent
See Figure~\ref{fig:two-cones} for an illustration of the geometry.  Formally, we consider convex cones $C$ and $K$ in $\R^d$, and we draw a random orthogonal basis $\mtx{Q} \in \R^{d \times d}$.  The goal is to find a useful expression for the probability
$$
\Prob\big\{ C \cap \mtx{Q} K \neq \{ \vct{0} \} \big\}.
$$
As we will discuss, this is the key question we must answer to understand phase transition phenomena in convex optimization problems with random data.

In two dimensions, we quickly determine the solution to the problem.
Consider two convex cones $C$ and $K$ in $\R^2$.  If neither cone is a linear subspace, then
$$
\Prob\big\{ C \cap \mtx{Q} K \neq \{ \vct{0} \} \big\}
	= \min\big\{ v_2(C) + v_2(K), \ 1 \big\},
$$
where $v_2(\cdot)$ returns the proportion of the unit circle subtended by (the closure of) a convex cone in $\R^2$.  A similar formula holds when one of the cones is a subspace.
In higher dimensions, however, convex cones can be complicated objects.  In three dimensions, the question already starts to look difficult, and we might despair that a reasonable solution exists in general.

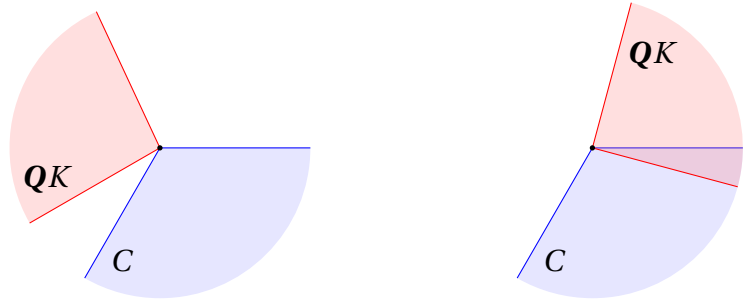
\begin{figure}
\begin{center}
\newcommand{\conelen}{2}
\newcommand{\conelend}{2}

\begin{tikzpicture}[scale=1]
\fill[white!90!blue] (0,0) -- ++(0:\conelend) arc(0:-120:\conelend) -- cycle;
\draw[thin,blue] (0,0) -- ++(0:\conelend) (0,0) -- ++(-120:\conelend);
\path (-0.5, -1.8) node[above, scale=1.25]{$C$};

\fill[white!50!red, nearly transparent] (0,0) -- ++(210:\conelen) arc(210:115:\conelen) -- cycle;
\draw[thin,red] (0,0) -- ++(210:\conelen) (0,0) -- ++(115:\conelen);
\path (-1.5, -0.75) node[above, scale=1.25]{$\mtx{Q} K$};

\fill[black] (0,0) circle (1pt);
\end{tikzpicture}
\hspace{1in}
\begin{tikzpicture}[scale=1]
\fill[white!90!blue] (0,0) -- ++(0:\conelend) arc(0:-120:\conelend) -- cycle;
\draw[thin,blue] (0,0) -- ++(0:\conelend) (0,0) -- ++(-120:\conelend);
\path (-0.5, -1.8) node[above, scale=1.25]{$C$};

\fill[white!50!red,nearly transparent] (0,0) -- ++(75:\conelen) arc(75:-15:\conelen) -- cycle;
\draw[thin,red] (0,0) -- ++(75:\conelen) (0,0) -- ++(-15:\conelen);
\path (0.8, 0.9) node[above, scale=1.25]{$\mtx{Q} K$};

\fill[black] (0,0) circle (1pt);
\end{tikzpicture}
\end{center}
\caption{\textbf{Randomly rotated convex cones.}  Let $C$ and $K$ be nontrivial convex cones, and let $\mtx{Q}$ be a random orthogonal basis.  The cone $C$ is fixed, and $\mtx{Q} K$ is a randomly rotated copy of $K$.  \textbf{[left]} The two cones have a trivial intersection.  \textbf{[right]} The two cones share a ray.}
\label{fig:two-cones}
\end{figure}

It turns out that there is an \emph{exact} formula for the probability that a randomly rotated convex cone shares a ray with a fixed convex cone.
Moreover, in $d$ dimensions, we only need $d + 1$ numbers to summarize each cone.  This wonderful result is called the \term{conic kinematic formula}~\cite[Thm.~6.5.6]{scwe:08}.  We record the statement here, but you should not focus on the details at this stage; Section~\ref{sec:intr-conic-integr} contains a more thorough presentation.

\begin{fact}[The kinematic formula for cones] \label{fact:kinem-intro}
Let $C$ and $K$ be closed convex cones in $\R^d$, one of which is not a subspace.
Draw a random orthogonal basis $\mtx{Q} \in \R^{d\times d}$.
Then
$$
\Prob\big\{ C \cap \mtx{Q} K \neq \{ \vct{0} \} \big\}
	= \sum_{i=0}^d \big( 1 + (-1)^{i+1} \big) \sum_{j=i}^d v_i(C) \cdot v_{d+i-j}(K).
$$
For each $k =0, 1, 2, \dots, d$, the geometric functional $v_k$ maps a closed convex cone to a nonnegative number, called the $k$th \term{intrinsic volume} of the cone. \end{fact}

The papers~\cite{ambu:11c,mctr:12} have recognized that the conic kinematic formula is tailor-made for studying random instances of convex optimization problems.  Unfortunately, this approach suffers a serious weakness: We do not have workable expressions for the intrinsic volumes of a cone, except in the simplest cases.
This paper provides a way to make the kinematic formula effective.  To explain, we need
to have a closer look at the conic intrinsic volumes.

\subsection{Concentration of intrinsic volumes and the statistical dimension}

The conic intrinsic volumes, introduced in Fact~\ref{fact:kinem-intro}, are the fundamental geometric invariants of a closed convex cone.  They do not depend on the dimension of the space in which the cone is embedded, nor on the orientation of the cone within that space.  For an analogy in Euclidean geometry, you may consider similar quantities defined for compact convex sets, such as the usual volume, the surface area, the mean width, and the Euler characteristic~\cite{Sch:93}.

We will provide a more rigorous treatment of the conic intrinsic volumes in Section~\ref{sec:intr-conic-integr}.
For now, the only formal property we need is that the intrinsic volumes of a closed convex cone $C$ in $\R^d$ compose a probability distribution on $\{0, 1, 2, \dots, d \}$.  That is,
$$
\sum_{k=0}^d v_k(C) = 1
\quad\text{and}\quad
v_k(C) \geq 0
\quad\text{for $k = 0, 1, 2, \dots, d$.}
$$
Figure~\ref{fig:int-vol-conc} displays the distribution of intrinsic volumes for a particular cone; you can see that the sequence has a sharp peak at its mean value.
Our work establishes a remarkable new fact about conic geometry:

\begin{quotation} \it
For every closed convex cone, the distribution of conic intrinsic volumes concentrates sharply around its mean value.
\end{quotation}

\noindent
This result is our main technical achievement; Theorem~\ref{thm:main-conc} contains a precise statement.

\begin{figure}
\includegraphics[height=2.5in]{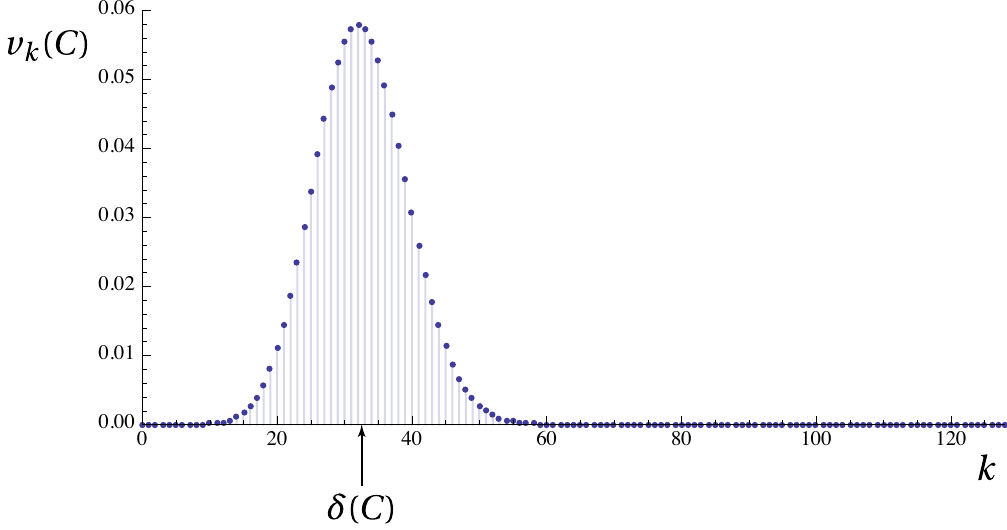}
\caption{\textbf{Concentration of conic intrinsic volumes.}
This plot displays the conic intrinsic volumes $v_k(C)$ of a circular cone $C \subset \R^{128}$ with angle $\pi/6$.  The distribution concentrates sharply around the statistical dimension $\delta(C) \approx 32.5$.  See Section~\ref{sec:circular-cones} for further discussion of this example.}
\label{fig:int-vol-conc}
\end{figure}

Because of the concentration phenomenon, the mean value of the distribution of conic intrinsic volumes serves as a summary for the entire distribution.  This insight leads to the central definition of the paper.

\begin{definition}[Statistical dimension: Intrinsic characterization] \label{def:sdim-int}
Let $C$ be a closed convex cone in $\R^d$.  The \term{statistical dimension} $\delta(C)$ of the cone is defined as
$$
\delta(C) := \sum_{k=0}^d k \, v_k(C).
$$
The statistical dimension of a general convex cone is the statistical dimension of its closure.
\end{definition}

\noindent
As the name suggests, the statistical dimension reflects the dimensionality of a convex cone.  Here are some properties that support this interpretation.  First, the statistical dimension increases with the size of a cone.  Indeed, for nested convex cones $C \subset K \subset \R^d$, we have the inequalities $0 \leq \delta(C) \leq \delta(K) \leq d$.  Second, the statistical dimension of a linear subspace $L$ always satisfies
\begin{equation} \label{eq:sdim-dim-intro}
\delta(L) = \dim(L).
\end{equation}
In fact, the statistical dimension is a \emph{canonical extension} of the dimension of a linear subspace to the class of convex cones!  Section~\ref{sec:sdim-canon} provides technical justification for the latter point, while Sections~\ref{sec:calc-stat-dimens},~\ref{sec:descent-cones}, and~\ref{sec:intr-conic-integr} establish various properties of the statistical dimension.

\subsection{The approximate kinematic formula}

We can simplify the conic kinematic formula, Fact~\ref{fact:kinem-intro},
by exploiting the concentration of intrinsic volumes.

\begin{bigthm}[Approximate kinematic formula] \label{thm:kinematic}
Fix a tolerance $\eta \in (0,1)$.  Let $C$ and $K$ be convex cones in $\R^d$, and draw a random orthogonal basis $\mtx{Q} \in \R^{d\times d}$.
Then
\begin{align*}
\delta(C) + \delta(K) \leq d - a_{\eta} \sqrt{d}
&\quad\Longrightarrow\quad
\Prob\big\{ C \cap \mtx{Q} K \neq \{\vct{0}\} \big\} \leq \eta; \\
\delta(C) + \delta(K) \geq d + a_{\eta} \sqrt{d}
&\quad\Longrightarrow\quad
\Prob\big\{ C \cap \mtx{Q} K \neq \{\vct{0}\} \big\} \geq 1 - \eta.
\end{align*}
The quantity $a_{\eta} := \sqrt{8 \log(4/\eta)}$.  For example, $a_{0.01} < 7$ and $a_{0.001} < 9$.
\end{bigthm}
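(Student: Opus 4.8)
The plan is to reduce the geometric intersection probability to a tail bound for a sum of two independent random variables, and then to invoke the concentration of conic intrinsic volumes from Theorem~\ref{thm:main-conc}. First I would apply the exact conic kinematic formula \cite[Thm.~6.5.6]{scwe:08} to express $\Prob\{C \cap \mtx{Q}K = \{\vct{0}\}\}$ in terms of the conic intrinsic volumes $v_0(C),\dots,v_d(C)$ and $v_0(K),\dots,v_d(K)$. These weights are nonnegative and each sequence sums to one, so I would encode them as the laws of two independent integer-valued random variables $V_C$ and $V_K$, with $\Prob\{V_C=k\}=v_k(C)$ and $\Prob\{V_K=k\}=v_k(K)$. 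The kinematic formula then rewrites the intersection probability as a tail of the convolution, that is, of $V_C+V_K$. The one subtlety is an alternating-sign correction: the raw formula weights the two parities of the index differently. Here the conic Gauss--Bonnet relation $\sum_i (-1)^i v_i(C)=0$ enters, and it holds precisely because one of the cones is not a subspace; using it to balance the even- and odd-index contributions collapses the formula to the clean estimate
\begin{equation*}
\Prob\{C \cap \mtx{Q}K \neq \{\vct{0}\}\} \;\approx\; \Prob\{V_C + V_K \geq d\},
\end{equation*}
up to a boundary discrepancy of a single intrinsic-volume term. Pinning down this parity bookkeeping, and confirming that the non-subspace hypothesis is exactly what controls the residual term, is the step I expect to be the most delicate.

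Second, I would locate the mean of the convolved law. By Proposition~\ref{prop:sdim-int-vols}, the statistical dimension is precisely the expected value of the intrinsic-volume distribution, so $\Expect[V_C]=\delta(C)$ and $\Expect[V_K]=\delta(K)$; independence then gives $\Expect[V_C+V_K]=\delta(C)+\delta(K)$. Thus the hypothesis $\delta(C)+\delta(K)\leq d-a_\eta\sqrt{d}$ places the threshold $d$ at least $a_\eta\sqrt{d}$ above the mean of $V_C+V_K$, while $\delta(C)+\delta(K)\geq d+a_\eta\sqrt{d}$ places it the same distance below the mean.

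Finally, I would apply the concentration inequality of Theorem~\ref{thm:main-conc} to each summand. That bound shows each of $V_C$ and $V_K$ deviates from its mean by $t$ only with probability of order $\exp(-ct^2/d)$, so by independence $V_C+V_K$ exhibits the same sub-Gaussian behavior about $\delta(C)+\delta(K)$. In the first regime the event $\{V_C+V_K\geq d\}$ demands an upward deviation of at least $a_\eta\sqrt{d}$ from the mean, and choosing $a_\eta$ to be a suitable multiple of $\sqrt{\log(1/\eta)}$ forces this probability below $\eta$; hence $\Prob\{C\cap\mtx{Q}K\neq\{\vct{0}\}\}\leq\eta$, which is to say $\Prob\{C\cap\mtx{Q}K=\{\vct{0}\}\}\geq 1-\eta$. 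The second regime follows symmetrically from the lower tail, bounding $\Prob\{V_C+V_K< d\}$. Tracking the numerical constant through both halves of the concentration estimate, together with the factor of two coming from the two cones, is what yields the explicit value $a_\eta=4\sqrt{\log(4/\eta)}$ and completes the proof.
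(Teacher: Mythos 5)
Your architecture---exact kinematic formula, intrinsic volumes viewed as the laws of independent integer variables $V_C$ and $V_K$, convolution, mean identified with the statistical dimension, then concentration---is the same as the paper's, and your first two steps are sound. (The ``parity bookkeeping'' you worry about is resolved in the paper not by Gauss--Bonnet alone but by the interlacing inequality, Proposition~\ref{prop:interlacing}, proved via the Crofton formula: it gives $2\,h_{d+1}(C \times K) \leq t_d(C \times K)$, so the half-tail appearing in the kinematic formula is dominated by an ordinary tail of the convolution, up to exactly the single-term discrepancy you anticipate.)

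The genuine gap is in your final step. Theorem~\ref{thm:main-conc} supplies only tail bounds, so the only way to turn per-cone bounds on $V_C$ and $V_K$ into a bound on $V_C + V_K$ is to split the deviation and take a union bound,
$$
\Prob\{ V_C + V_K \geq d \}
	\leq \Prob\{ V_C \geq \delta(C) + \lambda \}
	+ \Prob\{ V_K \geq \delta(K) + \lambda \}
	\quad\text{with } 2\lambda = a_{\eta}\sqrt{d};
$$
independence buys nothing beyond this. Since $\omega^2(C) \leq d/2$ and $\omega^2(K) \leq d/2$, each term is at most $4\,\econst^{-\lambda^2/(8d)}$, so the total is about $8 \exp(-a_{\eta}^2/32)$, forcing $a_{\eta} \geq \sqrt{32 \log(8/\eta)}$. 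For $\eta = 0.01$ that is roughly $14.6$, so the advertised constant $a_{\eta} = 4\sqrt{\log(4/\eta)}$ and the claim $a_{0.01} < 10$ are out of reach of this route. (This two-term route is precisely how the paper proves the refined Theorem~\ref{thm:approx-kinem}, via Lemma~\ref{lem:prod-tail}, and the bound there is indeed the weaker $p_C(\lambda) + p_K(\lambda)$.) The paper's proof of Theorem~\ref{thm:kinematic} avoids the loss with one observation you are missing: by the product rule~\eqref{eq:prod}, the law of $V_C + V_K$ is itself the intrinsic-volume sequence of the single cone $C \times K \in \mC_{2d}$, so Theorem~\ref{thm:main-conc} can be applied \emph{once}, directly to the sum. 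The product rule~\eqref{eq:sdim-direct-product} gives $\delta(C \times K) = \delta(C) + \delta(K)$, and the totality law~\eqref{eq:sdim-polarity-sum} in $\R^{2d}$ forces $\omega^2(C \times K) \leq d$, whence
$$
t_d(C \times K) \leq 4 \exp\left( \frac{-\lambda^2/8}{d + \lambda} \right)
	\leq 4\,\econst^{-\lambda^2/(16d)}
	\quad\text{for } 0 \leq \lambda \leq d,
$$
and the substitution $\lambda = a_{\eta}\sqrt{d}$ with $a_{\eta} = 4\sqrt{\log(4/\eta)}$ yields exactly $\eta$. The other regime is symmetric, using $2\,h_{d+1}(C\times K) \geq t_{d+1}(C \times K)$ together with~\eqref{eq:tk-}.
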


\noindent
In Section~\ref{sec:kinem-cons-conc}, we derive Theorem~\ref{thm:kinematic} along with some more precise results.

Theorem~\ref{thm:kinematic} says that two randomly rotated cones are likely to share a ray if and only if the total statistical dimension of the two cones exceeds the ambient dimension.  This statement is in perfect sympathy with the analogous result for random subspaces.  We extract the following lesson:

\begin{quotation} \it
We can assign a dimension $\delta(C)$ to each convex cone $C$.  For problems in conic integral geometry, the cone behaves much like a subspace with approximate dimension $\delta(C)$.
\end{quotation}

\noindent
In Sections~\ref{sec:line-inverse-probl} and~\ref{sec:phase-trans-demix}, we use Theorem~\ref{thm:kinematic} to prove that a large class of random convex optimization problems always exhibits a phase transition, and we demonstrate that the statistical dimension describes the location of the phase transition.

\begin{remark}[Gaussian process theory]
If we replace the random orthogonal basis in Theorem~\ref{thm:kinematic} with a standard normal matrix, we obtain a different problem in stochastic geometry.  For the Gaussian model, we can establish a partial analog of Theorem~\ref{thm:kinematic} using a comparison inequality for Gaussian processes~\cite[Thm.~1.4]{gord:85}.  Rudelson \& Vershynin~\cite[Sec.~4]{RV:08} have used a corollary~\cite{gord:88} of this result to study the $\ell_1$ minimization method~\eqref{eqn:l1-min} for compressed sensing.  Many subsequent papers, including~\cite{stojnic10,OH:10,CRPW:12}, depend on the same argument.
In contrast to Theorem~\ref{thm:kinematic}, this approach is not based on an exact formula.  Nor does it apply to the random orthogonal
model, which is more natural than the Gaussian model for many applications.
\end{remark}

\subsection{Calculating the statistical dimension}
\label{sec:calc-sdim-intro}

The statistical dimension arises from deep considerations in conic integral geometry,
and we rely on this connection to prove that phase transitions occur in random convex optimization problems.  There is an alternative formulation that is often useful
for calculating the statistical dimension of specific cones.

\begin{proposition}[Statistical dimension: Metric characterization] \label{def:sdim}
The statistical dimension $\delta(C)$ of a closed convex cone $C$ in $\R^d$ satisfies 
\begin{equation} \label{eq:stat-dim-defn}
\sdim(C) = \Expect \big[ \enormsm{ \mtx{\Pi}_{C}(\vct{g}) }^2 \big],
\end{equation}
where $\vct{g} \in \R^d$ is a standard normal vector, $\enorm{\cdot}$ is the Euclidean norm,
and $\mtx{\Pi}_{C}$ denotes the Euclidean projection~\eqref{eq:cone-proj} onto the cone $C$.
\end{proposition}

\noindent
The proof of Proposition~\ref{def:sdim} appears in Section~\ref{sec:revisit-stat-dimen}.  The argument requires a classical result called the spherical Steiner formula~\cite[Thm.~6.5.1]{scwe:08}.

The metric characterization of the statistical dimension provides a surprising link between two perspectives on random convex optimization problems: our approach based on integral geometry and the alternative approach based on Gaussian process theory.  Indeed, the formula~\eqref{eq:stat-dim-defn} is closely related to the definition of another summary parameter for convex cones called the \term{Gaussian width}; see Section~\ref{sec:statdim_sqgw} for more information.  This connection allows us to perform statistical dimension calculations by adapting methods~\cite{RV:08,stojnic10,OH:10,CRPW:12} developed for the Gaussian width.  

We undertake this program in Sections~\ref{sec:calc-stat-dimens} and~\ref{sec:descent-cones} to estimate the statistical dimension for several important families of convex cones.
Although the resulting formulas are not substantially novel, we \emph{prove} for the first time that the error in these calculations is negligible.  Our contribution to this analysis forms a critical part of the rigorous computation of phase transitions.

\subsection{Regularized linear inverse problems with a random model}
\label{sec:line-inverse-probl}

Our first application of Theorem~\ref{thm:kinematic} concerns a generalization of the compressed sensing problem that has been studied in~\cite{CRPW:12}.  A \term{linear inverse problem} asks us to infer an unknown vector $\vct{x}_0 \in \R^d$ from an observed vector $\vct{z}_0 \in \R^m$ of the form
\begin{equation} \label{eq:lin-inv-obs}
	\vct{z}_0 = \mtx{A} \vct{x}_0,
\end{equation}
where $\mtx{A} \in \R^{m \times d}$ is
a matrix that describes a linear data acquisition process.  When the matrix is fat $(m < d)$,
the inverse problem is underdetermined.
In this situation, we cannot hope to identify $\vct{x}_0$ unless we take advantage of prior information about its structure.

\subsubsection{Solving linear inverse problems with convex optimization}

Suppose that $f : \R^d \to \overline{\R}$ is a proper convex function\footnote{The extended real numbers $\overline{\R} := \R \cup \{\pm \infty\}$.  A \term{proper} convex function has at least one finite value and never takes the value $-\infty$.}
that reflects the amount of ``structure'' in a vector.
We can attempt to identify the structured unknown $\vct{x}_0$
in~\eqref{eq:lin-inv-obs} by solving a convex optimization problem:
\begin{equation}
  \label{eq:lin-inv-gen}
\minimize f( \vct{x} )
\subjto \vct{z}_0 = \mtx{A} \vct{x}.
\end{equation}

\noindent
The function $f$ is called a \term{regularizer}, and
the formulation~\eqref{eq:lin-inv-gen} is called a \term{regularized linear inverse problem}.  To illustrate the kinds of regularizers that arise in practice,
we highlight two familiar examples.

\begin{example}[Sparse vectors]
When the vector $\vct{x}_0$ is known to be sparse, we can minimize the $\ell_1$ norm to look for a sparse solution to the inverse problem.  Repeating~\eqref{eqn:l1-min}, we have the optimization
\begin{equation}
  \label{eq:l1-min-v2}
\minimize \pnorm{1}{ \vct{x} }
\subjto \vct{z}_0 = \mtx{A} \vct{x}.
\end{equation}
This approach was proposed by Chen et al.~\cite{CDS:01}, motivated by work in geophysics~\cite{CM:73,SS:86}.
\end{example}

\begin{example}[Low-rank matrices]
Suppose that $\mtx{X}_0$ is a low-rank matrix, and we have acquired a vector of measurements of the form $\vct{z}_0 = \coll{A}(\mtx{X}_0)$, where $\coll{A}$ is a linear operator.  This process is equivalent with~\eqref{eq:lin-inv-obs}.  We can look for low-rank solutions to the linear inverse problem by minimizing the Schatten 1-norm:
\begin{equation} \label{eqn:S1-min}
\minimize \pnorm{S_1}{ \mtx{X} }
\subjto \vct{z}_0 = \coll{A}(\mtx{X}).
\end{equation}
This method was proposed in~\cite{RFP2010}, based on ideas from
control~\cite{MP:97} and optimization~\cite{Faz2002}.
\end{example}

We say that the regularized linear inverse problem~\eqref{eq:lin-inv-gen} \term{succeeds}
at solving~\eqref{eq:lin-inv-obs} when the convex program has a unique minimizer $\widehat{\vct{x}}$ that coincides with the true unknown; that is,
$\widehat{\vct{x}} = \vct{x}_0$.
To develop conditions for success, we introduce a convex cone associated with the
regularizer $f$ and the unknown $\vct{x}_0$.

\begin{definition}[Descent cone] \label{def:feas-cone}
The \term{descent cone} $\Desc(f,\vct{x})$ of a proper convex function $f : \R^d \to \overline{\R}$
at a point $\vct{x} \in \R^d$ is the conic hull of the
perturbations that do not increase $f$ near $\vct{x}$.
\begin{equation*}
  \Desc(f,\vct{x}) \defeq \bigcup_{\tau > 0}  \big\{\vct{y} \in \R^d :
  f(\vct{x} + \tau \vct{y} ) \leq f(\vct{x}) \big\}.
\end{equation*}
\end{definition}

\noindent
The descent cones of a proper convex function are always convex, but they may not be closed.
The descent cones of a smooth convex function are always halfspaces,
so this concept inspires the most interest when the function is nonsmooth.

To characterize when the optimization problem~\eqref{eq:lin-inv-gen} succeeds,
we write the primal optimality condition in terms of the descent cone;
cf.~\cite[Sec.~4]{RV:08} and \cite[Prop.~2.1]{CRPW:12}.

\begin{fact}[Optimality condition for linear inverse problems]
\label{fact:lin-inv-opt}
Let $f$ be a proper convex function.
The vector $\vct x_0$ is the unique optimal point of the convex program~\eqref{eq:lin-inv-gen}
if and only if $\Desc(f,\vct{x}_0) \cap \nullity(\mtx{A})  = \{\vct{0}\}.$
\end{fact}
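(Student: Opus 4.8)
The plan is to recognize that this equivalence is a direct geometric translation of the primal optimality condition, so I would prove it by unpacking both sides into statements about feasible perturbations of $\vct{x}_0$. The first observation is that the feasible set of~\eqref{eq:lin-inv-gen} is exactly the affine subspace $\vct{x}_0 + \nullity(\mtx{A})$, since $\vct{z}_0 = \mtx{A}\vct{x}_0$ and $\mtx{A}(\vct{x}_0 + \vct{y}) = \vct{z}_0$ if and only if $\vct{y} \in \nullity(\mtx{A})$. Thus every feasible point has the form $\vct{x}_0 + \vct{y}$ with $\vct{y}$ ranging over the null space, and $\vct{x}_0$ is the unique optimal point precisely when $f(\vct{x}_0 + \vct{y}) > f(\vct{x}_0)$ for every nonzero $\vct{y} \in \nullity(\mtx{A})$. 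I would then establish the contrapositive form of the equivalence: $\vct{x}_0$ fails to be the unique optimal point if and only if the intersection $\Desc(f,\vct{x}_0) \cap \nullity(\mtx{A})$ contains a nonzero vector.

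For the direction in which the intersection is nontrivial, I would take a nonzero $\vct{y} \in \Desc(f,\vct{x}_0) \cap \nullity(\mtx{A})$ and unpack the definition of the descent cone: there is some $\tau > 0$ with $f(\vct{x}_0 + \tau\vct{y}) \le f(\vct{x}_0)$. Because $\vct{y} \in \nullity(\mtx{A})$, the point $\vct{x}_0 + \tau\vct{y}$ is feasible, and it differs from $\vct{x}_0$ since $\tau\vct{y} \neq \vct{0}$. This exhibits a feasible competitor whose objective value does not exceed $f(\vct{x}_0)$, so $\vct{x}_0$ cannot be the unique optimal point.

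For the reverse direction, I would suppose $\vct{x}_0$ is not the unique optimal point and produce a feasible point $\vct{x}_0 + \vct{y}$, distinct from $\vct{x}_0$, with $f(\vct{x}_0 + \vct{y}) \le f(\vct{x}_0)$; reading $\vct{y}$ as the perturbation then gives $\vct{y} \in \nullity(\mtx{A})$ and, taking $\tau = 1$ in the descent-cone definition, $\vct{y} \in \Desc(f,\vct{x}_0)$, so $\vct{y}$ is a nonzero element of the intersection. The only care here is to enumerate the ways $\vct{x}_0$ can fail to be the unique minimizer: either it is not a minimizer at all (including the case where the infimum is not attained), in which case some feasible point has strictly smaller value, or it is a minimizer but a second feasible point attains the same value. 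In both cases a suitable nonzero $\vct{y}$ exists.

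The main thing to watch is bookkeeping rather than any genuine obstacle: I must be precise about what ``unique optimal point'' encompasses --- attainment of the minimum together with uniqueness --- and I must use the scaling freedom in the descent cone correctly, since membership requires only that \emph{some} positive multiple of $\vct{y}$ be a nonincreasing perturbation. It is worth noting that convexity of $f$ is not actually needed for this particular equivalence; it is a set-theoretic consequence of the definitions, and the hypotheses on $f$ matter only for the later results that exploit the conic structure of $\Desc(f,\vct{x}_0)$.
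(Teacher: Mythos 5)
Your proof is correct and is essentially the intended argument: the paper gives no proof of this Fact, citing Chandrasekaran et al.~\cite[Prop.~2.1]{CRPW:12} and remarking that it is ``simply a geometric statement of the primal optimality condition,'' which is precisely what you make rigorous by identifying the feasible set with $\vct{x}_0 + \nullity(\mtx{A})$, using the scaling freedom in $\tau$ to pass from a nonzero $\vct{y} \in \Desc(f,\vct{x}_0)\cap\nullity(\mtx{A})$ to a feasible competitor, and taking $\tau = 1$ for the converse. Your side remark that convexity of $f$ plays no role in this particular equivalence is also accurate.
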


\noindent
Figure~\ref{fig:geom-opt-cond} illustrates the geometry of this optimality condition.
Despite its simplicity, this result forges a crucial link between
the convex optimization problem~\eqref{eq:lin-inv-gen}
and the theory of conic integral geometry.

\begin{figure}
\begin{center}
\newcommand{\conelen}{2.75}
\newcommand{\myAlp}{12}
\newcommand{\myA}{2}
\begin{tikzpicture}[scale=1.5]
\fill[white!90!blue] (0,1) -- ++(225:\conelen) arc(225:315:\conelen) -- cycle;
\draw[thin,blue] (0,1) -- ++(225:\conelen) (0,1) -- ++(315:\conelen);

\fill[white!70!blue] (1,0) -- (0,1) -- (-1,0) -- (0,-1) -- cycle;
\draw[very thick] (1,0) -- (0,1) -- (-1,0) -- (0,-1) -- cycle;

\draw[very thick,black!30!red] (0,1) ++(180-\myAlp:\myA) -- ++(360-\myAlp:2*\myA); 
\draw[black!60!red,thick] (-1.1,1.4) node[above]{$\vct{x}_0 + \nullity(\mtx{A})$};

\draw[->,>=latex,thick,black!60!blue] (-1.25,0.6) node[left,scale=0.8]{$\{\vct{x} : f(\vct{x}) \leq f(\vct{x}_0) \}$} to [out=-15] (-0.67,0.33);

\fill (0,1) circle (1pt) node[above right=-1pt]{$\vct{x}_0$};

\path (0,-1.2) node[below,black!20!blue]{$\vct{x}_0 + \Desc(f,\vct{x}_0)$};

\end{tikzpicture}
\hspace{0.5in}
\renewcommand{\myAlp}{60}
\begin{tikzpicture}[scale=1.5]
\fill[white!90!blue] (0,1) -- ++(225:\conelen) arc(225:315:\conelen) -- cycle;
\draw[thin,blue] (0,1) -- ++(225:\conelen) (0,1) -- ++(315:\conelen);

\fill[white!70!blue] (1,0) -- (0,1) -- (-1,0) -- (0,-1) -- cycle;
\draw[very thick] (1,0) -- (0,1) -- (-1,0) -- (0,-1) -- cycle;

\draw[very thick,black!30!red] (0,1) ++(180-\myAlp:0.5*\myA) -- ++(360-\myAlp:2*\myA); 
\draw[black!60!red,thick] (-1.1,1.4) node[above]{$\vct{x}_0 + \nullity(\mtx{A})$};

\draw[->,>=latex,thick,black!60!blue] (-1.25,0.6) node[left,scale=0.8]{$\{\vct{x} : f(\vct{x}) \leq f(\vct{x}_0) \}$} to [out=-15] (-0.67,0.33);

\fill (0,1) circle (1pt) node[above right=-1pt]{$\vct{x}_0$};

\path (0,-1.2) node[below,black!20!blue]{$\vct{x}_0 + \mathscr{D}(f,\vct{x}_0)$};

\end{tikzpicture}

\end{center}
\caption{\textbf{The optimality condition for a regularized inverse problem.}
The condition for the regularized linear inverse problem~\eqref{eq:lin-inv-gen} to succeed requires that the descent cone $\Desc(f, \vct{x}_0)$ and the null space $\nullity(\mtx{A})$ do not share a ray.  \textbf{[left]}  The regularized linear inverse problem succeeds.  \textbf{[right]}  The regularized linear inverse problem fails.} 
\label{fig:geom-opt-cond}
\end{figure}
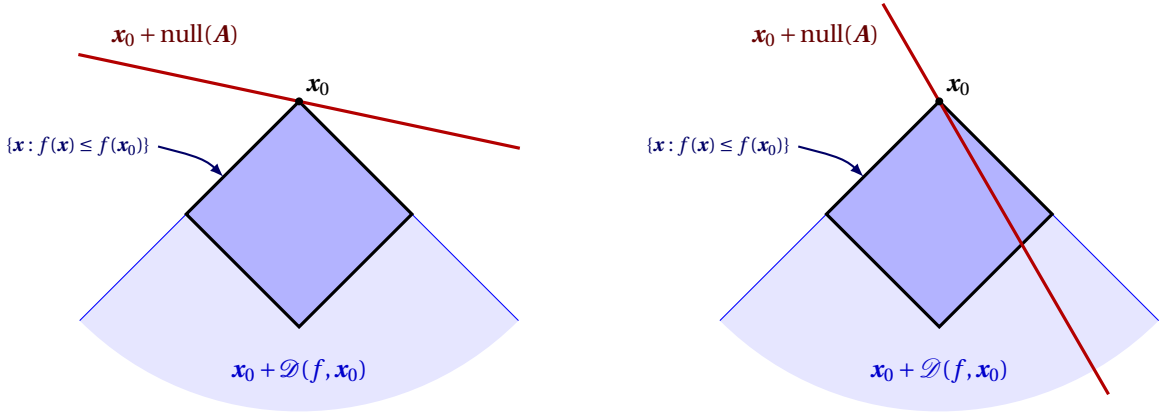

\subsubsection{Linear inverse problems with random data}

Our goal is to understand the power of convex regularization for solving linear
inverse problems, as well as the limitations inherent in this approach.
To do so, we consider the case where the measurements are \emph{generic}.
A natural modeling technique is to draw the measurement matrix $\mtx{A}$ at random
from the standard normal distribution on $\R^{m \times d}$.  In this case,
the kernel of the matrix $\mtx{A}$ is a randomly oriented subspace, so the
optimality condition, Fact~\ref{fact:lin-inv-opt},
requires us to calculate the probability that this
random subspace does not share a ray with the descent cone.

The kinematic formula, Fact~\ref{fact:kinem-intro}, gives an exact expression
for the probability that~\eqref{eq:lin-inv-gen} succeeds under the random model
for $\mtx{A}$.  By invoking the approximate kinematic formula,
Theorem~\ref{thm:kinematic},
we reach a simpler result that allows us to identify a sharp transition in
performance as the number $m$ of measurements varies.

\begin{bigthm}[Phase transitions in linear inverse problems with random measurements] \label{thm:phase-trans-lin-inv}
Fix a tolerance $\eta \in (0,1)$.
Let $\vct{x}_0 \in \R^d$ be a fixed vector, and
let $f : \R^d \to \overline{\R}$ be a proper convex function.
Suppose $\mtx{A} \in \R^{m \times d}$ has independent standard normal entries, and let $\vct{z}_0 = \mtx{A}\vct{x}_0$.
Then
\begin{align*}
m \leq \delta\big( \Desc(f, \vct{x}_0) \big) - a_{\eta} \sqrt{d}
&\quad\Longrightarrow\quad
\text{\eqref{eq:lin-inv-gen} succeeds with probability~$\leq \eta$}; \\
m \geq \delta\big( \Desc(f, \vct{x}_0) \big) + a_{\eta} \sqrt{d}
&\quad\Longrightarrow\quad
\text{\eqref{eq:lin-inv-gen} succeeds with probability~$\geq 1-\eta$.}
\end{align*}
The quantity $a_{\eta} := \sqrt{8\log(4/\eta)}$. \end{bigthm}

\begin{proof}
The standard normal distribution on $\R^{m \times d}$ is invariant under rotation, so the null space $L = \nullity(\mtx{A})$ is almost surely a uniformly random $(d-m)$-dimensional subspace of $\R^d$.  According to~\eqref{eq:sdim-dim-intro}, the statistical dimension $\delta(L) = d - m$ almost surely.  The result follows immediately when we combine the optimality condition, Fact~\ref{fact:lin-inv-opt}, and the kinematic bound, Theorem~\ref{thm:kinematic}.
\end{proof}

Under minimal assumptions, Theorem~\ref{thm:phase-trans-lin-inv} proves that we always encounter a phase transition when we use the regularized formulation~\eqref{eq:lin-inv-gen} to solve the linear inverse problem with random measurements.  The transition occurs where the number of measurements equals the statistical dimension of the descent cone: $m = \delta\big(\Desc(f, \vct{x}_0)\big)$.  The shift from failure to success takes place over a range of about $O(\sqrt{d})$ measurements.

Here is one way to think about this result.  We cannot identify $\vct{x}_0 \in \R^d$ from the observation $\vct{z}_0 \in \R^m$ by solving the linear system $\mtx{A} \vct{x} = \vct{z}_0$ because we only have $m$ equations.  Under the random model for $\mtx{A}$, the regularization in~\eqref{eq:lin-inv-gen} effectively adds $d - \sdim\big(\Desc(f, \vct{x}_0)\big)$ more equations to the system.  Therefore, we can typically recover $\vct{x}_0$ when $m \geq \sdim\big(\Desc(f,\vct{x}_0)\big)$.  This interpretation accords with the heuristic that the statistical dimension measures the dimension of a cone.

There are several reasons that the conclusions of Theorem~\ref{thm:phase-trans-lin-inv} are significant.  The first implication provides evidence about the minimum amount of information we need before we can use the convex method~\eqref{eq:lin-inv-gen} to solve the linear inverse problem.  The second implication tells us that we can solve the inverse problem reliably once we have acquired this quantum of information.  Furthermore, Theorem~\ref{thm:phase-trans-lin-inv} allows us to compare the performance of different regularizers because we know exactly how many measurements each one requires.

\begin{remark}[Prior work]
A variant of the success condition $m \geq \sdim\big(\Desc(f,\vct{x}_0)\big) + O(\sqrt{d})$ from Theorem~\ref{thm:phase-trans-lin-inv} already appears in the literature~\cite[Cor.~3.3(1)]{CRPW:12}.  This result depends on the argument of Rudelson \& Vershynin~\cite[Sec.~4]{RV:08}, which uses the ``escape from the mesh'' theorem~\cite{gord:88} to verify the optimality condition, Fact~\ref{fact:lin-inv-opt}, for the optimization problem~\eqref{eq:lin-inv-gen}. 
There is some evidence that the success condition accurately describes the performance limit for~\eqref{eq:lin-inv-gen} with random measurements.  Stojnic~\cite{stojnic10} presents analysis and experiments for the $\ell_1$ norm, while Oymak \& Hassibi~\cite{OH:10} study the Schatten 1-norm.  Results from~\cite[Sec.~17]{DMM:09-supp} and~\cite{BayLelMon:12} imply that Stojnic's calculation is asymptotically sharp.

Nevertheless, the prior literature offers no hint that the statistical dimension determines the location of the phase transition for every convex regularizer.  In fact, we can derive a variant of the failure condition from
Theorem~\ref{thm:phase-trans-lin-inv} by supplementing Rudelson \& Vershynin's approach
with a polarity argument. A similar observation appeared in Stojnic's paper~\cite{Sto13:Regularly-Random} after our work was released.
\end{remark}

\subsubsection{Computer experiments}

We have performed some computer experiments to compare the theoretical and empirical phase transitions.  Figure~\ref{fig:lin-inv-1}[left] shows the performance of~\eqref{eq:l1-min-v2} for identifying a sparse vector in $\R^{100}$ from random measurements.  Figure~\ref{fig:lin-inv-1}[right] shows the performance of~\eqref{eqn:S1-min} for identifying a low-rank matrix in $\R^{30 \times 30}$ from random measurements.  In each case, the heat map indicates the observed probability of success with respect to the randomness in the measurement operator.  The 5\%, 50\%, and 95\% success isoclines are calculated from the data.  We also draft the theoretical phase transition curve, promised by Theorem~\ref{thm:phase-trans-lin-inv}, where the number $m$ of measurements equals the statistical dimension of the appropriate descent cone; the statistical dimension formulas are drawn from Sections~\ref{sec:descent-cones-ell_1} and~\ref{sec:descent-cones-trace}.  See Appendix~\ref{app:experiments} for the experimental protocol.

In both examples, the theoretical prediction of Theorem~\ref{thm:phase-trans-lin-inv} coincides almost perfectly with the 50\% success isocline.  Furthermore, the phase transition takes place over a range of $O(\sqrt{d})$ values of $m$, as promised.  Although Theorem~\ref{thm:phase-trans-lin-inv} does not explain why the transition region tapers at the bottom-left and top-right corners of each plot, we have established a more detailed version of Theorem~\ref{thm:kinematic} that allows us to predict this phenomenon as well; see Section~\ref{sec:discuss-approx-kinem}.

\begin{figure}[t]
  \centering
  \includegraphics[width=0.48\textwidth]{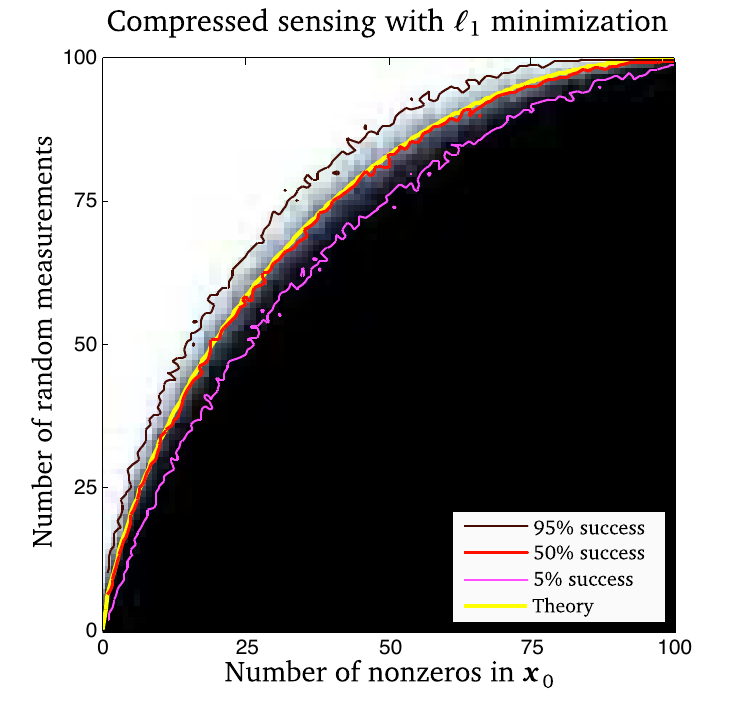}
  \includegraphics[width=0.48\textwidth]{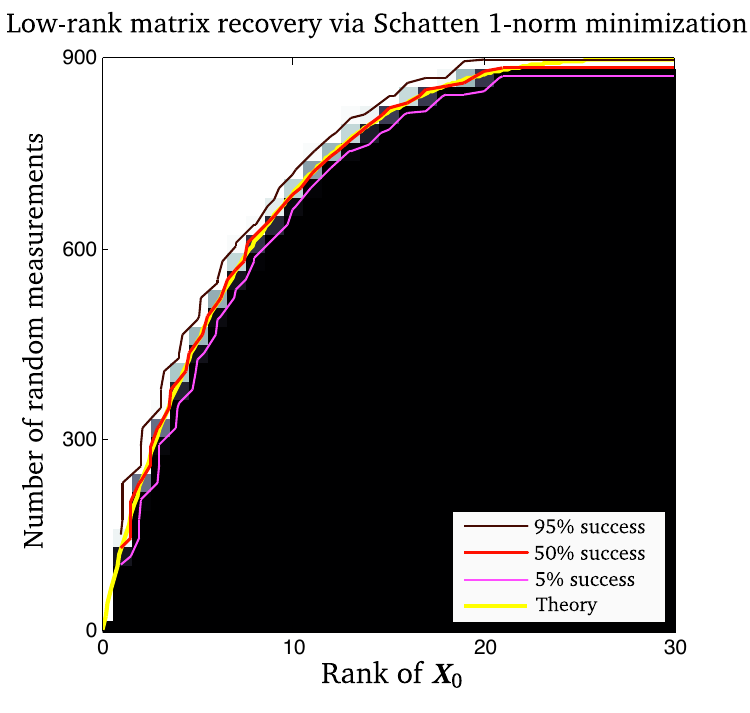}
  \caption{\textbf{Phase transitions for regularized linear inverse problems.}
  \textbf{[left] Recovery of sparse vectors.}  The empirical probability
  that the $\ell_1$ minimization problem~\eqref{eq:l1-min-v2} identifies a sparse vector $\vct{x}_0 \in \R^{100}$ given random linear measurements $\vct{z}_0 = \mtx{A}\vct{x}_0$. 
  \textbf{[right] Recovery of low-rank matrices.}  The empirical probability that the $S_1$ minimization problem~\eqref{eqn:S1-min} identifies a low-rank matrix $\mtx{X}_0 \in \R^{30 \times 30}$ given random linear measurements $\vct{z}_0 = \coll{A}(\mtx{X}_0)$.
	In each panel, the heat map indicates the empirical probability of success (black = 0\%; white = 100\%).  The yellow curve marks the theoretical prediction of the phase transition from Theorem~\ref{thm:phase-trans-lin-inv}; the red curve traces the $50\%$ success isocline calculated from the data.}
  \label{fig:lin-inv-1}
\end{figure}

\subsection{Demixing problems with a random model}
\label{sec:phase-trans-demix}

In a demixing problem~\cite{mctr:12}, we observe a superposition of two structured vectors, and we aim to extract the two constituents from the mixture.  More precisely, suppose that we have acquired a vector $\vct{z}_0 \in \R^d$ of the form
\begin{equation} \label{eq:demix-obs}
\vct{z}_0 = \vct{x}_0 + \mtx{U} \vct{y}_0
\end{equation}
where $\vct{x}_0, \vct{y}_0 \in \R^d$ are unknown and $\mtx{U} \in \R^{d \times d}$ is a known orthogonal matrix.  If we wish to identify the pair $(\vct{x}_0, \vct{y}_0)$, we must assume that each component is structured to reduce the number of degrees of freedom.  In addition, if the two types of structure are coherent (i.e., aligned with each other), it may be impossible to disentangle them, so it is expedient to include the matrix $\mtx{U}$ to model the relative orientation of the two constituent signals.

\subsubsection{Solving demixing problems with convex optimization}

Suppose that $f$ and $g$ are proper convex functions on $\R^d$ that promote the structures we expect to find in $\vct{x}_0$ and $\vct{y}_0$.  Then we can frame the convex optimization problem
\begin{equation} \label{eq:demix-solve}
\minimize f( \vct{x} )
\subjto g( \vct{y} ) \leq g( \vct{y}_0 )
\quad\text{and}\quad \vct{z}_0 = \vct{x} + \mtx{U} \vct{y}.
\end{equation}
In other words, we seek structured vectors $\vct{x}$ and $\vct{y}$ that are consistent with the observation $\vct{z}_0$.  This approach requires the side information $g(\vct{y}_0)$, so a Lagrangian formulation is sometimes more natural in practice~\cite[Sec.~1.2.4]{mctr:12}.
Here are two concrete examples of the demixing program~\eqref{eq:demix-solve} that are adapted from the literature.

\begin{example}[Sparse + sparse]
Suppose that the first signal $\vct{x}_0$ is sparse in the standard basis, and the second signal $\mtx{U}\vct{y}_0$ is sparse in a known basis $\mtx{U}$.  In this case, we can use $\ell_1$ norms to promote sparsity, which leads to the optimization
\begin{equation} \label{eq:l1+l1}
\minimize \pnorm{1}{ \vct{x} }
\subjto \pnorm{1}{ \smash{\vct{y}} } \leq \pnorm{1}{ \smash{ \vct{y}_0 } }
\quad\text{and}\quad \vct{z}_0 = \vct{x} + \mtx{U} \vct{y}.
\end{equation}
This approach for demixing sparse signals
is sometimes called \term{morphological component analysis}~\cite{SDC:2003,SED:2005,ESQD2005,BMS2006}.
\end{example}

\begin{example}[Low-rank + sparse]
Suppose that we observe $\mtx{Z}_0 = \mtx{X}_0 + \coll{U}(\mtx{Y}_0)$ where $\mtx{X}_0$ is a low-rank matrix, $\mtx{Y}_0$ is a sparse matrix, and $\coll{U}$ is a known orthogonal transformation on the space of matrices.
We can minimize the Schatten 1-norm to promote low rank, and we can constrain the $\ell_1$ norm to promote sparsity.  The optimization becomes
\begin{equation} \label{eq:l1+S1}
\minimize \pnorm{S_1}{ \mtx{X} }
\subjto \pnorm{1}{\mtx{Y}} \leq \pnorm{1}{ \mtx{Y}_0 }
\quad\text{and}\quad \mtx{Z}_0 = \mtx{X} + \coll{U}( \mtx{Y} ).
\end{equation}
This demixing problem is called the \term{rank--sparsity decomposition}~\cite{CSPW2011}.
\end{example}

We say that the convex program~\eqref{eq:demix-solve} for demixing \term{succeeds} when it has a unique solution $(\widehat{\vct{x}}, \widehat{\vct{y}})$ that coincides with the vectors that generate the observation: $(\widehat{\vct{x}}, \widehat{\vct{y}}) = (\vct{x}_0, \vct{y}_0)$.  As in the case of a linear inverse problem, we can express the primal  optimality condition in terms of descent cones; cf.~\cite[Lem.~2.4]{mctr:12}.

\begin{fact}[Optimality condition for demixing] \label{fact:demix-cond}
Let $f$ and $g$ be proper convex functions.  The pair $(\vct{x}_0,\vct{y}_0)$ is the unique
optimal point of the convex program~\eqref{eq:demix-solve} if and only if
$\Desc(f,\vct x_0)\cap (-\mtx{U}\Desc(g,\vct{y}_0)) = \{\vct{0}\}$.
\end{fact}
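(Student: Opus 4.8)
The plan is to mirror the argument behind the linear-inverse optimality condition (Fact~\ref{fact:lin-inv-opt}), replacing the null space $\nullity(\mtx{A})$ by the rotated descent cone $-\mtx{U}\Desc(g,\vct{y}_0)$, and to work entirely with feasible perturbations of $(\vct{x}_0,\vct{y}_0)$. The one preparatory fact I would record is that a descent cone is convex and that its membership can be tested at any small scale. Writing $D_\tau := \tau^{-1}\big(\{\vct{x} : f(\vct{x}) \le f(\vct{x}_0)\} - \vct{x}_0\big)$, convexity of $f$ gives $D_\tau \subseteq D_{\tau'}$ whenever $0 < \tau' \le \tau$ (the point $\vct{x}_0 + \tau'\vct{u}$ lies on the segment from $\vct{x}_0$ to $\vct{x}_0 + \tau\vct{u}$), so $\Desc(f,\vct{x}_0) = \bigcup_{\tau > 0} D_\tau$ is an increasing union of convex sets and hence convex. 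From this I extract two facts I will use repeatedly: (i) if $\vct{u} \in \Desc(f,\vct{x}_0)$, then $f(\vct{x}_0 + \tau\vct{u}) \le f(\vct{x}_0)$ for all sufficiently small $\tau > 0$; and (ii) conversely, any $\vct{u}$ with $f(\vct{x}_0 + \vct{u}) \le f(\vct{x}_0)$ lies in $\Desc(f,\vct{x}_0)$ (take $\tau = 1$). The same statements hold for $g$ at $\vct{y}_0$.

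Next I would parametrize feasibility. Since $\vct{z}_0 = \vct{x}_0 + \mtx{U}\vct{y}_0$ and the equality constraint is linear, writing $\vct{x} = \vct{x}_0 + \vct{u}$ and $\vct{y} = \vct{y}_0 + \vct{v}$ shows that $\vct{z}_0 = \vct{x} + \mtx{U}\vct{y}$ holds exactly when $\vct{u} + \mtx{U}\vct{v} = \vct{0}$, i.e. $\vct{u} = -\mtx{U}\vct{v}$, while the side constraint becomes $g(\vct{y}_0 + \vct{v}) \le g(\vct{y}_0)$. Thus feasible directions correspond precisely to pairs $(\vct{u},\vct{v})$ with $\vct{u} = -\mtx{U}\vct{v}$ and $g(\vct{y}_0 + \vct{v}) \le g(\vct{y}_0)$.

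For the reverse implication (nontrivial intersection implies $(\vct{x}_0,\vct{y}_0)$ is not the unique optimum), I take a nonzero $\vct{w} \in \Desc(f,\vct{x}_0) \cap (-\mtx{U}\Desc(g,\vct{y}_0))$, write $\vct{w} = -\mtx{U}\vct{v}$ with $\vct{v} \in \Desc(g,\vct{y}_0)$, and invoke (i) to choose a single small $\tau > 0$ with $f(\vct{x}_0 + \tau\vct{w}) \le f(\vct{x}_0)$ and $g(\vct{y}_0 + \tau\vct{v}) \le g(\vct{y}_0)$ at once. Then $(\vct{x}_0 + \tau\vct{w},\, \vct{y}_0 + \tau\vct{v})$ is feasible with objective value $\le f(\vct{x}_0)$ and is distinct from $(\vct{x}_0,\vct{y}_0)$ because $\vct{w}\ne\vct{0}$; so $(\vct{x}_0,\vct{y}_0)$ is either beaten (strict inequality) or tied (equality). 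For the forward implication I take any feasible $(\vct{x},\vct{y})$ with $f(\vct{x}) \le f(\vct{x}_0)$, set $\vct{u} = \vct{x} - \vct{x}_0$ and $\vct{v} = \vct{y} - \vct{y}_0$, use the parametrization $\vct{u} = -\mtx{U}\vct{v}$, and apply (ii) to conclude $\vct{u} \in \Desc(f,\vct{x}_0)$ and $\vct{v} \in \Desc(g,\vct{y}_0)$; hence $\vct{u} \in \Desc(f,\vct{x}_0) \cap (-\mtx{U}\Desc(g,\vct{y}_0)) = \{\vct{0}\}$. Then $\vct{u} = \vct{0}$, and invertibility of the orthogonal matrix $\mtx{U}$ forces $\vct{v} = \vct{0}$, so $(\vct{x},\vct{y}) = (\vct{x}_0,\vct{y}_0)$; this simultaneously rules out any improving and any tying feasible point, giving unique optimality.

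The only delicate point, and where I expect to spend the most care, is bookkeeping the distinction between optimality and uniqueness. Because the descent cone is defined with the non-strict inequality $f(\vct{x}_0 + \tau\vct{y}) \le f(\vct{x}_0)$, the intersection detects value-preserving directions as well as strictly improving ones. I must phrase the reverse direction so that an equality case is correctly read as a \emph{second} optimal point (failure of uniqueness) rather than discarded, and phrase the forward direction so that emptiness of the intersection excludes ties as well as improvements. Everything else is routine once the convexity and scale-monotonicity of the descent cone are in hand.
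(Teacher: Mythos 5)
Your proof is correct. The paper itself gives no proof of Fact~\ref{fact:demix-cond}; it imports the statement from~\cite[Lem.~2.3]{mctr:12}, and your argument — parametrizing feasible points as $(\vct{x}_0+\vct{u},\vct{y}_0+\vct{v})$ with $\vct{u}=-\mtx{U}\vct{v}$, using convexity to get scale-monotonicity of the descent-cone membership, and carefully treating ties as failures of uniqueness — is essentially the standard argument given in that cited source, with all the delicate points (simultaneous choice of $\tau$, the equality case, invertibility of $\mtx{U}$) handled properly.
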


\noindent
Figure~\ref{fig:geom-opt-cond-2} depicts the geometry of this optimality condition.  The parallel with Fact~\ref{fact:lin-inv-opt}, the optimality condition for a regularized linear inverse problem, is striking.  Indeed, the two conditions coalesce when the function $g$ in~\eqref{eq:demix-solve} is the indicator of an appropriate affine space.  This observation shows that the regularized linear inverse problem~\eqref{eq:lin-inv-gen} is a special case of the convex demixing problem~\eqref{eq:demix-solve}.

\begin{center}
\begin{figure}
\newcommand{\conelen}{2.75}
\newcommand{\conelend}{2.75}
\newcommand{\myAlp}{200}
\newcommand{\myA}{.5}
\pgfmathsetmacro\myB{sqrt(2)*\myA}
\begin{tikzpicture}[scale=1.35]
\fill[white!90!blue] (0,1) -- ++(225:\conelen) arc(225:315:\conelen) -- cycle; \draw[thin,blue] (0,1) -- ++(225:\conelen) (0,1) -- ++(315:\conelen); 
\fill[white!50!red,nearly transparent] (0,1) -- ++(135+\myAlp:\conelend) arc(135+\myAlp:225+\myAlp:\conelend) -- cycle;
\draw[thin,red] (0,1) -- ++(135+\myAlp:\conelend) (0,1) -- ++(225+\myAlp:\conelend);

\fill[white!0!red,nearly transparent] (0,1) -- ++(135+\myAlp:\myB) -- ++(225+\myAlp:\myB) -- ++(315+\myAlp:\myB) -- cycle;
\draw[very thick,black!30!black] (0,1) -- ++(135+\myAlp:\myB) -- ++(225+\myAlp:\myB) -- ++(315+\myAlp:\myB) -- cycle;

\fill[white!75!blue] (1,0) -- (0,1) -- (-1,0) -- (0,-1) -- cycle;
\draw[very thick] (1,0) -- (0,1) -- (-1,0) -- (0,-1) -- cycle;

\draw[->,>=latex,thick,black!60!blue] (-1.25,0.6) node[left,scale=0.8]{$\{\vct{x} : f(\vct{x}) \leq f(\vct{x}_0) \}$} to [out=-15] (-0.67,0.33);

\draw[->,>=latex,thick,black!60!red] (-1,2) node[above,scale=0.8]{$\{\vct{x} : g(\mtx{U}^\transp(\vct{z}_0 - \vct{x})) \leq g(\vct{y}_0) \}$} to [out=-90,in=165] (0.22,1.5);

\fill[black] (0,1) circle(1pt);
\path (0,1) node[above=4pt,left=1pt]{$\vct{x}_0$};

\path (0,-1.2) node[below,black!20!blue]{$\vct{x}_0 + \mathscr{D}(f,\vct{x}_0)$};
\path (1.5,2.25) node[black!40!red]{$\vct{x}_0 -\mtx{U}\mathscr{D}(g,\vct{y}_0)$};

\end{tikzpicture}
\renewcommand{\myAlp}{165}
\begin{tikzpicture}[scale=1.35]
\fill[white!90!blue] (0,1) -- ++(225:\conelen) arc(225:315:\conelen) -- cycle; \draw[thin,blue] (0,1) -- ++(225:\conelen) (0,1) -- ++(315:\conelen); 
\fill[white!75!blue] (1,0) -- (0,1) -- (-1,0) -- (0,-1) -- cycle;
\draw[very thick] (1,0) -- (0,1) -- (-1,0) -- (0,-1) -- cycle;

\fill[white!50!red,nearly transparent] (0,1) -- ++(135+\myAlp:\conelend) arc(135+\myAlp:225+\myAlp:\conelend) -- cycle;
\draw[thin,red] (0,1) -- ++(135+\myAlp:\conelend) (0,1) -- ++(225+\myAlp:\conelend);

\fill[white!0!red,nearly transparent] (0,1) -- ++(135+\myAlp:\myB) -- ++(225+\myAlp:\myB) -- ++(315+\myAlp:\myB) -- cycle;
\draw[very thick,black!30!black] (0,1) -- ++(135+\myAlp:\myB) -- ++(225+\myAlp:\myB) -- ++(315+\myAlp:\myB) -- cycle;

\draw[->,>=latex,thick,black!60!blue] (-1.25,0.6) node[left,scale=0.8]{$\{\vct{x} : f(\vct{x}) \leq f(\vct{x}_0) \}$} to [out=-15] (-0.67,0.33);

\draw[->,>=latex,thick,black!60!red] (-1,2) node[above,scale=0.8]{$\{\vct{x} : g(\mtx{U}^\transp(\vct{z}_0 - \vct{x})) \leq g(\vct{y}_0) \}$} to [out=-90,in=120] (0.45,1.25);

\fill[black] (0,1) circle(1pt);
\path (0,1) node[above=4pt,left=1pt]{$\vct{x}_0$};

\path (0,-1.2) node[below,black!20!blue]{$\vct{x}_0 + \mathscr{D}(f,\vct{x}_0)$};
\path (1.75,1.35) node[black!40!red]{$\vct{x}_0 -\mtx{U}\mathscr{D}(g,\vct{y}_0)$};

\end{tikzpicture}
\caption{\textbf{The optimality conditions for convex demixing.}  The condition for the convex demixing problem~\eqref{eq:demix-obs} to succeed requires that the cones $\Desc(f, \vct{x}_0)$ and $-\mtx{U} \Desc(\vct{g}, \vct{y}_0)$ do not share a ray.  \textbf{[left]}  The convex demixing method succeeds. \textbf{[right]} The convex demixing method fails.}
\label{fig:geom-opt-cond-2}
\end{figure}
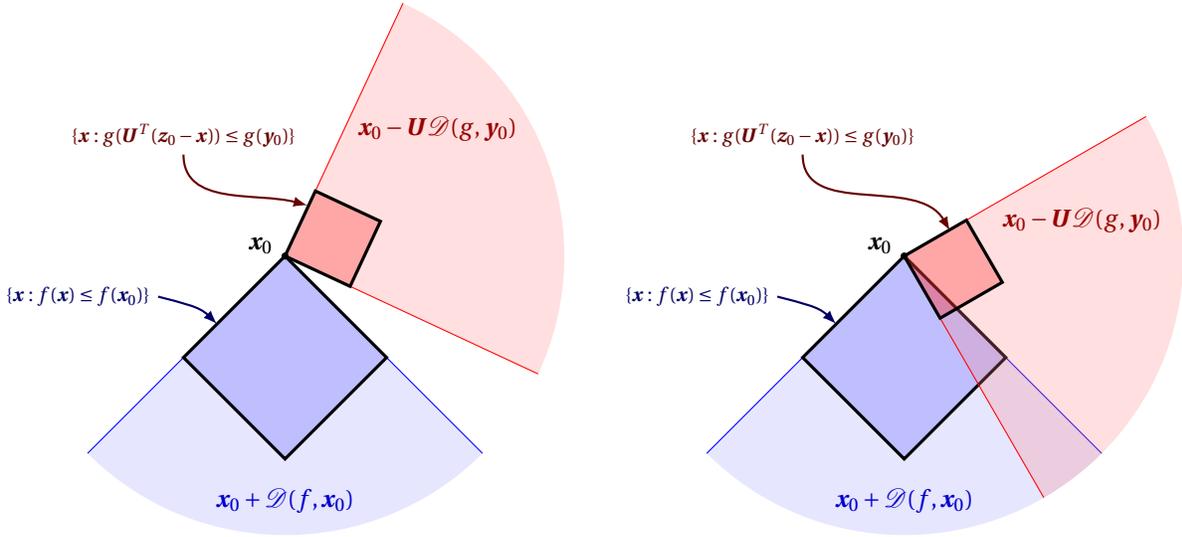
\end{center}

\subsubsection{Demixing with a random model for coherence}

Our goal is to understand the prospects for solving the demixing problem with a convex program of the form~\eqref{eq:demix-solve}.
To that end, we use randomness to model the favorable case where the two structures do not interact with each other.  More precisely, we choose the matrix $\mtx{U}$ to be a random orthogonal basis.  Under this assumption, Theorem~\ref{thm:kinematic} delivers a sharp transition in the performance of the optimization problem~\eqref{eq:demix-solve}.

\begin{bigthm}[Phase transitions in convex demixing with a random coherence model] \label{thm:demix}
Fix a tolerance $\eta \in (0,1)$.
Let $\vct{x}_0$ and $\vct{y}_0$ be fixed vectors in $\R^d$, and
let $f : \R^d \to \overline{\R}$ and $g : \R^d \to \overline{\R}$ be proper convex functions.
Draw a random orthogonal basis $\mtx{U} \in \R^{d \times d}$,
and let $\vct{z}_0 = \vct{x}_0 + \mtx{U} \vct{y}_0$.  Then
\begin{align*}
\sdim\big( \Desc( f, \vct{x}_0 ) \big) + \sdim\big( \Desc( g, \vct{y}_0 ) \big)
	\geq d + a_{\eta} \sqrt{d}
	&\quad\Longrightarrow\quad
	\text{\eqref{eq:demix-solve} succeeds with probability $\leq \eta$}; \\
\sdim\big( \Desc( f, \vct{x}_0 ) \big) + \sdim\big( \Desc( g, \vct{y}_0 ) \big)
	\leq d - a_{\eta} \sqrt{d}
	&\quad\Longrightarrow\quad
	\text{\eqref{eq:demix-solve} succeeds with probability $\geq 1-\eta$}.
\end{align*}
The quantity $a_{\eta} := \sqrt{8\log(4/\eta)}$. \end{bigthm}

\begin{proof}
This theorem follows immediately when we combine the optimality condition, Fact~\ref{fact:demix-cond}, with the kinematic bound, Theorem~\ref{thm:kinematic}.  To simplify the formulas, we invoke the rotational invariance of the statistical dimension, which follows from either Definition~\ref{def:sdim-int} or Proposition~\ref{def:sdim}.
\end{proof}

Under minimal assumptions, Theorem~\ref{thm:demix} establishes that there is always a phase transition when we use the convex program~\eqref{eq:demix-solve} to solve the demixing problem under the random model for $\mtx{U}$.  The optimization is effective if and only if the total statistical dimension of the two descent cones is smaller than the ambient dimension $d$. 

\subsubsection{Computer experiments}

Our numerical work confirms the analysis in Theorem~\ref{thm:demix}.  Figure~\ref{fig:MCA-phase}[left] shows when~\eqref{eq:l1+l1} can demix a sparse vector from a vector that is sparse in a random basis $\mtx{U}$ for $\R^{100}$.  Figure~\ref{fig:MCA-phase}[right] shows when~\eqref{eq:l1+S1} can demix a low-rank matrix from a matrix that is sparse in a random basis $\coll{U}$ for $\R^{35 \times 35}$.  In each case, the experiment provides an empirical estimate for the probability of success with respect to the randomness in the coherence model.  We display the 5\%, 50\%, and 95\% success isoclines, calculated from the data.  We also sketch the theoretical phase transition from Theorem~\ref{thm:demix}, which occurs when the total statistical dimension of the relevant cones equals the ambient dimension; the statistical dimensions of the descent cones are obtained from the formulas in Sections~\ref{sec:descent-cones-ell_1} and~\ref{sec:descent-cones-trace}.  See~\cite[Sec.~6]{mctr:12} for the details of the experimental protocol.

Once again, we see that the theoretical curve of Theorem~\ref{thm:demix} coincides almost perfectly with the empirical 50\% success isocline.  The width of the transition region is $O(\sqrt{d})$.  Although Theorem~\ref{thm:demix} does not predict the tapering of the transition in the top-left and bottom-right corners, the discussion in Section~\ref{sec:discuss-approx-kinem} exposes the underlying reason for this phenomenon.

\begin{figure}[t]
  \centering
  \includegraphics[width=0.48\columnwidth]{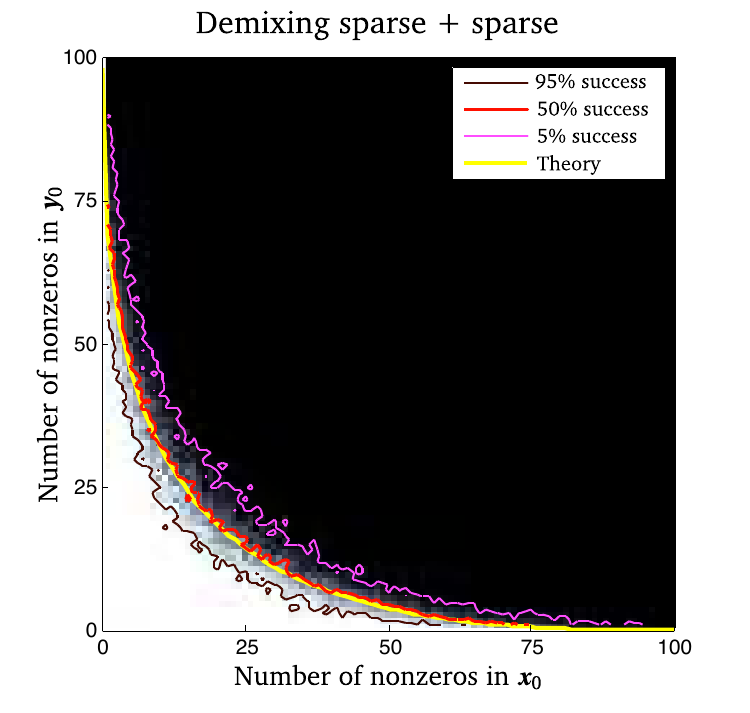}
  \includegraphics[width=0.48\columnwidth]{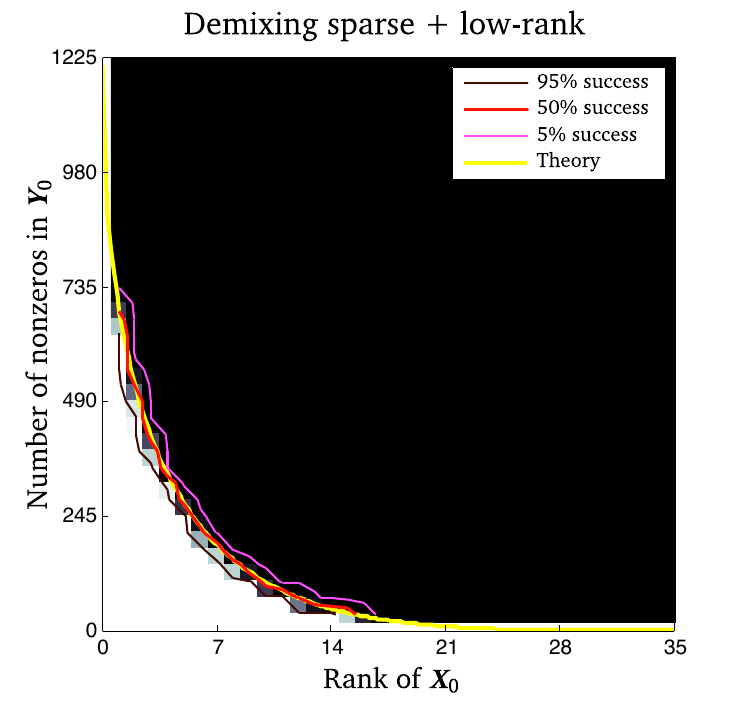}
  \caption{\textbf{Phase transitions for convex demixing.}
  \textbf{[left] Sparse + sparse.}  The empirical probability that
  the convex program~\eqref{eq:l1+l1} successfully demixes a vector
  $\vct{x}_0 \in \R^{100}$ that is sparse in the standard basis
  from a vector $\mtx{U} \vct{y}_0 \in \R^{100}$ that is sparse in the
  random basis $\mtx{U}$.  \textbf{[right] Low rank + sparse.}  The empirical
  probability that the convex program~\eqref{eq:l1+S1} successfully
  demixes a low-rank matrix $\mtx{X}_0 \in \R^{35 \times 35}$ from
  a matrix $\coll{U}(\mtx{Y}_0) \in \R^{35 \times 35}$ that is sparse in the
  random basis $\coll{U}$.
  In each panel, the heat map indicates the empirical probability
  of success (black = 0\%; white = 100\%).  The yellow curve
  marks the theoretical phase transition predicted by Theorem~\ref{thm:demix}.
  The red curve follows the empirical phase transition.}
  \label{fig:MCA-phase}
\end{figure}

\subsection{Contributions} 
\label{sec:contributions}

It takes a substantial amount of argument to establish the existence of phase transitions and to calculate their location for specific problems.  Some parts of our paper depend on prior work, but much of the research is new.  We conclude this section with a summary of our contributions.  Section~\ref{sec:conclusion} contains a detailed discussion of the literature; additional citations appear throughout the presentation.

This paper contains foundational research in conic integral geometry:

\begin{itemize} \setlength{\itemsep}{1mm}
\item	We define the statistical dimension as the mean value of the distribution of intrinsic volumes, and we argue that the statistical dimension canonically extends the linear dimension of a subspace to the class of convex cones.  (Definition~\ref{def:sdim-int} and Section~\ref{sec:sdim-canon})

\item	We demonstrate that the metric characterization of the statistical dimension coincides with the intrinsic characterization, and we use this connection to establish some properties of the statistical dimension.  The statistical dimension is also related to the Gaussian width.  (Definition~\ref{def:sdim-int}, Proposition~\ref{def:sdim}, Proposition~\ref{prop:prop-sdim}, Proposition~\ref{prop:sdim-int-vols}, and Proposition~\ref{prop:sdim-width})

\item	We prove that the distribution of intrinsic volumes of a convex cone concentrates sharply about the statistical dimension of the cone. (Theorem~\ref{thm:main-conc})

\item	The concentration of intrinsic volumes leads to an approximate version of the kinematic formula for cones.  This result uses the statistical dimension to bound the probability that a randomly rotated cone shares a ray with a fixed cone.  (Theorem~\ref{thm:kinematic} and Theorem~\ref{thm:approx-kinem})
\end{itemize}

Building on this foundation, we establish a number of applied results concerning phase transition phenomena in convex optimization problems with random data:

\begin{itemize} \setlength{\itemsep}{1mm}
\item	We prove that a regularized linear inverse problem with random measurements
must exhibit a phase transition as the number of random measurements increases.
The location and width of the transition are controlled by the statistical
dimension of a descent cone.  Our work confirms
and extends the earlier analyses based on polytope angles~\cite{dono:06,dota:09a,KXAH2011,XuHas:12}
and those based on Gaussian process theory~\cite{RV:08,stojnic10,OH:10,CRPW:12}.
(Theorem~\ref{thm:phase-trans-lin-inv}, Theorem~\ref{thm:approx-kinem}, and Proposition~\ref{prop:vec-list})

\item	The paper~\cite{mctr:12} proposes convex programming methods for decomposing a superposition of two structured, randomly oriented vectors into its constituents.  We prove that these methods exhibit a phase transition whose properties depend on the total statistical dimension of two descent cones.  This work confirms a conjecture~\cite[Sec.~4.2.2]{mctr:12} about the existence of phase transitions in these problems.  (Theorem~\ref{thm:demix} and Theorem~\ref{thm:approx-kinem}) 

\item	The work~\cite{ambu:11c} studies cone programs with random affine constraints.  Building on this analysis, we show that a cone program with random affine constraints displays a phase transition as the number of constraints increases.  We can predict the transition using the statistical dimension of the cone.  (Theorem~\ref{thm:cone-prog})

\item	Section~\ref{sec:descent-cones} contains a recipe for estimating the statistical dimension of a descent cone.
The approach is based on ideas from~\cite[App.~C]{CRPW:12}, but we provide the first proof that it delivers accurate estimates.  This result rigorously explains why the bounds computed in~\cite{stojnic10,OH:10} closely match observed phase transitions. (Theorem~\ref{thm:sharp-descent} and Propositions~\ref{prop:l1-sdim} and~\ref{prop:S1-sdim})

\item	The approximate kinematic formula also delivers information about the probability that a face of a polytope maintains its dimension under a random projection.  This argument clarifies the connection between the polytope-angle approach to random linear inverse problems and the approach based on Gaussian process theory. (Section~\ref{sec:face-plant})
\end{itemize}

As an added bonus, we provide the final ingredient needed to resolve a series of conjectures~\cite{DonMalMon:09,DonJohMon:11,DonGavMon:13} about the coincidence between the minimax risk of denoising and the location of phase transitions in linear inverse problems.  Indeed, Oymak \& Hassibi~\cite{OymHas:12} have recently shown that the minimax risk is equivalent with the statistical dimension of an appropriate cone, and our results prove that the phase transition occurs at precisely this spot.  See Section~\ref{sec:minimax} for further details.

\section{Calculating the statistical dimension}
\label{sec:calc-stat-dimens}

Section~\ref{sec:conic-phase} demonstrates that the statistical
dimension is a fundamental quantity in conic integral geometry.
Through the approximate kinematic formula, Theorem~\ref{thm:kinematic},
the statistical dimension drives phase transitions
in random linear problems and random demixing problems.
A natural question, then, is how we can determine the value
of the statistical dimension for a specific cone.

This section explains how to compute the statistical dimension
directly for a few basic cones.
In Section~\ref{sec:prop-stat-dimens}, we present some useful properties of the statistical dimension.  Sections~\ref{sec:self-dual-cones}--\ref{sec:fin_refl_gr} contain some example calculations, in increasing order of difficulty.
See Table~\ref{tab:stadim} for a summary.  We discuss general descent cones later, in Section~\ref{sec:descent-cones}.

\subsection{Basic facts about the statistical dimension}
\label{sec:prop-stat-dimens}

The statistical dimension has a number of valuable properties.
These facts provide useful tools for making computations,
and they strengthen the analogy between the statistical
dimension of a cone and the linear dimension of a subspace.

\begin{proposition}[Properties of statistical dimension] \label{prop:prop-sdim}
Let $C$ be a closed convex cone in $\R^d$.
The statistical dimension obeys the following laws.

\begin{enumerate} \setlength{\itemsep}{1mm}
\item	\textbf{Intrinsic formulation.}
The statistical dimension is defined as
\begin{equation} \label{eq:sdim-intr}
\delta(C) := \sum_{k=0}^d k \, v_k(C)
\end{equation}
where $v_0, \dots, v_d$ denote the conic intrinsic volumes.
(See Section~\ref{sec:conic-intr-vol}.)

\item	\textbf{Gaussian formulation.} The statistical dimension satisfies
\begin{equation} \label{eq:sdim-gauss}
\delta(C) = \Expect{} \big[ \enormsm{ \Proj_{C}(\vct{g}) }^2 \big]
\quad\text{where $\vct{g} \sim \normal(\vct{0}, \Id_d)$.} 
\end{equation}

\item	\textbf{Spherical formulation.} An equivalent expression is
\begin{equation} \label{eq:sdim-circ-expect}
\delta(C) = d \, \Expect{} \big[ \enormsm{ \Proj_{C}(\vct{\theta}) }^2 \big]
\quad\text{where $\vct{\theta} \sim \uniform(\sphere{d-1})$.}
\end{equation}

\item	\textbf{Polar formulation.} The statistical dimension can be expressed in terms of the polar cone:
\begin{equation} \label{eq:delta-dist}
\delta(C) = \Expect \big[ \dist^2( \vct{g}, C^\polar ) \big].
\end{equation}

\item \textbf{Mean-squared-width formulation.}  Another alternative formulation reads
  \begin{equation}\label{eq:sdim-sup-expect}
\delta(C) = \Expect{} \bigl[ \bigl( \sup\nolimits_{\vct{y} \in C \cap \ball{d}} \ip{\smash{\vct{y}}}{ \smash{\vct{g}}} \bigr)^2 \bigr]\quad
\text{where $\vct{g} \sim \normal(\zerovct, \Id_d)$.}
  \end{equation}

\item	\textbf{Rotational invariance.} The statistical dimension does not depend on the orientation of the cone:
\begin{equation} \label{eq:sdim-rot-inv}
\delta(\mtx{U} C) = \delta(C)
\quad\text{for each orthogonal matrix $\mtx{U} \in \R^{d\times d}$.}
\end{equation}

\item	\textbf{Subspaces.} \label{item:sdim-subspace}
For each subspace $L$, the statistical dimension satisfies $\delta(L) = \dim(L)$.

\item	\textbf{Complementarity.} The sum of the statistical dimension of a cone and that of its polar equals the ambient dimension:
\begin{equation} \label{eq:sdim-polarity-sum}
\delta(C) + \delta(C^\polar) = d.
\end{equation}
This generalizes the property $\dim(L) + \dim(L^\perp) = d$
for each subspace $L \subset \R^d$.

\item	\textbf{Direct products.} For each closed convex cone $K\subset\R^d$,
\begin{equation} \label{eq:sdim-direct-product}
\delta(C \times K) = \delta(C) + \delta(K).
\end{equation}
In particular, the statistical dimension is invariant under embedding:
$$
\delta\big(C \times \{ \vct{0}_{d'} \} \big) = \delta(C).
$$
The relation~\eqref{eq:sdim-direct-product} generalizes the rule
$\dim(L \times M) = \dim(L) + \dim(M)$ for linear subspaces $L$ and $M$.

\item	\textbf{Monotonicity.} For each closed convex cone $K \subset \R^d$, the inclusion
$C \subset K$ implies that $\delta(C) \leq \delta(K)$.
\end{enumerate}
\end{proposition}

\noindent
We verify the equivalence of the intrinsic volume formulation~\eqref{eq:sdim-intr}
and the metric formulation~\eqref{eq:sdim-gauss}
below in Proposition~\ref{prop:sdim-int-vols}.  It is possible to establish
the remaining facts on the basis of either~\eqref{eq:sdim-intr} or~\eqref{eq:sdim-gauss}.
In Appendix~\ref{sec:proof-prop-refpr},
we use the metric characterization~\eqref{eq:sdim-gauss} to prove the rest of the proposition.
Many of these elementary results have appeared in~\cite{stojnic10,CRPW:12}
in a slightly different form.

\begin{table}[t]
\centering
\caption{\textbf{The statistical dimensions of some convex cones}.}
\begin{tabu}{lccc}
    \toprule
     Cone & Notation &  Statistical dimension  & Location
    \\ 
    \midrule
    The nonnegative orthant & $\R_+^d$ & $\half d$ & Sec.~\ref{sec:self-dual-cones} \\
    The second-order cone & $\mathbb{L}^{d+1}$ & $\half (d+1)$ & Sec.~\ref{sec:self-dual-cones} \\
    \begin{minipage}{5cm}
    Symmetric positive- \\ semidefinite matrices
    \end{minipage} & $\mathbb{S}_+^{n \times n}$ & $\tfrac{1}{4} n(n+1)$ & Sec.~\ref{sec:self-dual-cones} \\
	\begin{minipage}{5cm} Descent cone of the $\ell_\infty$ norm \\
	at an $s$-saturated vector $\vct{x}$ in $\R^d$
	\end{minipage} & $\Desc(\pnorm{\infty}{\cdot}, \vct{x})$ & $d - \half s$ & Sec.~\ref{sec:desc-linf} \\
    Circular cone in $\R^d$ of angle $\alpha$ & $\Circ_d(\alpha)$ &  $d\sin^2(\alpha) + O(1)$ & Sec.~\ref{sec:circular-cones} \\
    \begin{minipage}{4.5cm}
    Chambers of finite reflection groups acting on $\R^d$
    \end{minipage} &
    \begin{minipage}{2cm} \begin{equation*}\begin{array}{c} C_A \\[2pt] C_{BC}     \end{array}\end{equation*}\end{minipage} &
    \begin{minipage}{4cm}
      \begin{equation*}
      \begin{array}{c}
            \log(d) +O(1) \\[2pt]
           	\half \log(d) + O(1)           \end{array}
      \end{equation*}
  \end{minipage}
 & Sec.~\ref{sec:fin_refl_gr}
    \\
    \bottomrule
  \end{tabu}
  \label{tab:stadim}
\end{table}

\subsection{Self-dual cones}
\label{sec:self-dual-cones}

We say that a cone $C$ is \term{self-dual} when $C^{\polar} = -C$.  Self-dual cones are ubiquitous in the theory and practice of convex optimization.  Here are three important examples:

\begin{enumerate} \setlength{\itemsep}{1mm}
\item	\textbf{The nonnegative orthant.}  The cone $\R_+^d := \big\{ \vct{x} \in \R^d : x_i \geq 0 \ \text{for $i = 1, \dots, d$} \big\}$ is self-dual.

\item	\textbf{The second-order cone.}  The cone $\mathbb{L}^{d+1} := \big\{ (\vct{x}, \tau) \in \R^{d +1} : \enorm{\vct{x}} \leq \tau \big\}$ is self-dual.  This example is sometimes called the \term{Lorentz cone} or the \term{ice-cream cone}.

\item	\textbf{Symmetric positive-semidefinite matrices.}  The cone $\mathbb{S}_+^{n \times n} := \big\{ \mtx{X} \in \R^{n \times n}_{\text{sym}} : \mtx{X} \psdge \mtx{0} \big\}$ is self-dual, where the curly inequality denotes the semidefinite order.  Note that the linear space $\R^{n \times n}_{\text{sym}}$ of $n \times n$ symmetric matrices has dimension $\half n(n+1)$.
\end{enumerate}

\noindent
For a self-dual cone, the computation of the statistical dimension is particularly simple; cf.~\cite[Cor.~3.8]{CRPW:12}.  The first three entries in Table~\ref{tab:stadim} follow instantly from this result.

\begin{proposition}[Self-dual cones] \label{prop:self-dual}
Let $C$ be a self-dual cone in $\R^d$. The statistical dimension
$\delta(C) = \half d$.
\end{proposition}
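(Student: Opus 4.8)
The plan is to derive the identity from just two of the structural properties of the statistical dimension already recorded in Proposition~\ref{prop:prop-sdim}, together with the self-duality hypothesis. The key observation is that self-duality converts a question about the polar cone into a question about a rotated copy of the original cone, and the statistical dimension is insensitive to both operations in a controlled way.

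Concretely, I would first invoke the \textbf{totality law}~\eqref{eq:sdim-polarity-sum}, which asserts that $\delta(C) + \delta(C^\polar) = d$ for every closed convex cone. This already pins down the \emph{sum} $\delta(C) + \delta(C^\polar)$, so it remains only to argue that the two summands are equal, i.e.\ that $\delta(C^\polar) = \delta(C)$. Here is where I use the hypothesis: by self-duality, $C^\polar = -C = (-\Id)\,C$, and the matrix $-\Id \in \mathsf{O}_d$ is orthogonal (it is the reflection through the origin).

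Applying the \textbf{rotational invariance}~\eqref{eq:sdim-rot-inv} of the statistical dimension with $\mtx{U} = -\Id$ then gives
\begin{equation*}
\delta(C^\polar) = \delta\big( (-\Id)\, C \big) = \delta(C).
\end{equation*}
Substituting this into the totality law yields $2\,\delta(C) = d$, and hence $\delta(C) = \half d$, as claimed.

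There is essentially no obstacle to overcome: the entire argument is a two-line consequence of~\eqref{eq:sdim-polarity-sum} and~\eqref{eq:sdim-rot-inv}, which is fitting since the proposition is meant to dispatch the first three entries of Table~\ref{tab:stadim} ``instantly.'' If any care is needed, it is only in confirming that $-\Id$ genuinely lies in $\mathsf{O}_d$ so that rotational invariance applies verbatim; this is immediate from $(-\Id)(-\Id)^\transp = \Id$. I would flag that the same reasoning shows more generally that any cone satisfying $\delta(C^\polar) = \delta(C)$ — for instance any cone congruent to its own polar under an orthogonal transformation — must have statistical dimension exactly $\half d$.
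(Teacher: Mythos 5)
Your argument is correct and coincides exactly with the paper's own proof: both establish $\delta(C^\polar) = \delta(C)$ via self-duality combined with rotational invariance under $\mtx{U} = -\Id$, and then conclude with the totality law~\eqref{eq:sdim-polarity-sum}. The only difference is presentational (the paper writes the two steps as a single chain of equalities), so there is nothing to add or correct.
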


\begin{proof}
Just observe that
$\delta(C) = \half \big[ \delta(C) + \delta(C^\polar) \big] = \half d$.
The first identity holds because of the self-dual property of the cone and the rotational invariance~\eqref{eq:sdim-rot-inv} of statistical dimension.  The second equality follows from the complementarity law~\eqref{eq:sdim-polarity-sum}.
\end{proof}

\subsection{Descent cones of the \texorpdfstring{\(\ell_\infty\)}{linf} norm}
\label{sec:desc-linf}

Recall that the $\ell_\infty$ norm of a vector $\vct{x} \in \R^d$
is defined as
$$
\pnorm{\infty}{\vct{x}} := \max_{i = 1, \dots, d}\ \abs{x_i}.
$$
The descent cones of the $\ell_\infty$ norm have a simple form that
allows us to calculate their statistical dimensions easily.
We have drawn this analysis from~\cite[Sec.~6.2.4]{McC:13}.
Variants of this result drive the applications in~\cite{DT:10,JFF12:Anti-Sparse-Coding}.
Section~\ref{sec:descent-cones} develops a method for studying more general types of
descent cones.

\begin{proposition}[Descent cones of the $\ell_\infty$ norm] \label{prop:linf-desc}
Consider an $s$-saturated vector $\vct{x} \in \R^d$; that is,
$$
\# \big\{ i : \abs{x_i} = \pnorm{\infty}{\vct{x}} \big\} = s
\quad\text{where $1 \leq s \leq d$.}
$$
Then
$$
\Desc( \pnorm{\infty}{\cdot}, \vct{x} ) = d - \half s.
$$
\end{proposition}

\begin{proof}[Proof sketch]
The $\ell_\infty$ norm is homogeneous and invariant under signed permutation,
so we may assume that the $s$-saturated vector $\vct{x} \in \R^d$ takes the form
$$
\vct{x} = (1, \dots, 1, x_{s+1}, \dots, x_d)^\transp
\quad\text{where $1 > x_{s+1} \geq \cdots \geq x_d \geq 0$.}
$$
The descent cone of the $\ell_{\infty}$ norm at the vector $\vct{x}$ can be written as
$$
\Desc(\pnorm{\infty}{\cdot}, \vct{x})
	= \big(- \R_+^s \big) \times \R^{d-s}.
$$
We may now calculate the statistical dimension:
$$
\delta\big(\Desc(\pnorm{\infty}{\cdot}, \vct{x})\big)
	= \delta\big( \R_+^s \big) + \delta\big(\R^{d-s}\big)
	= \half s + (d-s)
	= d - \half s.
$$
The first identity follows from the direct product law~\eqref{eq:sdim-direct-product}
and the rotational invariance~\eqref{eq:sdim-rot-inv} of the statistical dimension.
The next relation depends on the rule for subspaces and the calculation of
the statistical dimension of the nonnegative orthant from Section~\ref{sec:self-dual-cones}.
\end{proof}

\subsection{Circular cones}
\label{sec:circular-cones}

The \term{circular cone} $\Circ_d(\alpha)$ in $\R^d$ with angle $0 \leq \alpha \leq \tfrac{\pi}{2}$ is defined as
$$
\Circ_d(\alpha) := \big\{ \vct{x} \in \R^d : x_1 \geq \enorm{\vct{x}} \cos(\alpha) \big\}.
$$
In particular, the cone $\Circ_d(\tfrac{\pi}{4})$ is isometric to the second-order cone $\mathbb{L}^d$.
Circular cones have numerous applications in optimization; we refer the reader
to~\cite[Sec.~4]{BV:04},~\cite[Sec.~3]{B-TN:01}, and~\cite{AG:03} for details.

We can obtain an accurate expression for the statistical dimension of a circular
cone by expressing the spherical formulation~\eqref{eq:sdim-circ-expect}
in spherical coordinates and administering a dose of asymptotic analysis.

\begin{proposition}[Circular cones] \label{prop:circ-cone}
The statistical dimension of a circular cone satisfies
\begin{equation} \label{eq:sdim-cone-est}
\delta\big(\Circ_d(\alpha) \big) = d \sin^2(\alpha) + O(1).
\end{equation}
The error term is approximately equal to $\cos(2\alpha)$.
See Figure~\ref{fig:statdim-comput}[left] for a plot
of~\eqref{eq:sdim-cone-est}.
\end{proposition}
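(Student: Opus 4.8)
The plan is to exploit the rotational symmetry of the circular cone about its axis $\onevct_1$ to obtain a closed form for the Euclidean projection, and then to evaluate the resulting one-dimensional integral via the spherical formulation~\eqref{eq:sdim-circ-expect}. First I would show that, for a point $\vct{x}$ making angle $\phi \in [0,\pi]$ with $\onevct_1$, the projection onto $\Circ_d(\alpha)$ has squared norm
\begin{equation*}
\enormsm{\Proj_{\Circ_d(\alpha)}(\vct{x})}^2 = \enormsm{\vct{x}}^2 \, h(\phi),
\qquad
h(\phi) =
\begin{cases}
1, & 0 \le \phi \le \alpha,\\
\cos^2(\phi-\alpha), & \alpha < \phi < \tfrac{\pi}{2}+\alpha,\\
0, & \tfrac{\pi}{2}+\alpha \le \phi \le \pi.
\end{cases}
\end{equation*}
This holds because $\Circ_d(\alpha)$ is invariant under every orthogonal map fixing $\onevct_1$; reflecting across the plane spanned by $\onevct_1$ and $\vct{x}$ fixes $\vct{x}$ and carries the cone to itself, so by uniqueness of the projection the image lies in this plane, reducing the problem to projecting onto a planar wedge of half-angle $\alpha$.

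Invoking the spherical formulation~\eqref{eq:sdim-circ-expect}, I would then write $\delta\big(\Circ_d(\alpha)\big) = d \, \Expect[h(\phi)]$, where $\phi$ is the angle between a uniform direction $\vct{\theta}$ and $\onevct_1$; this angle carries density proportional to $\sin^{d-2}\phi$ on $[0,\pi]$. The key observation is that this density concentrates sharply at $\phi=\tfrac{\pi}{2}$, whereas the two breakpoints of $h$ sit at the fixed angles $\alpha$ and $\tfrac{\pi}{2}+\alpha$. For $\alpha$ bounded away from $0$ and $\tfrac{\pi}{2}$, both breakpoints lie a constant distance from the peak, so on $[0,\alpha]\cup[\tfrac{\pi}{2}+\alpha,\pi]$ we have $\sin\phi \le \max\{\sin\alpha,\cos\alpha\} < 1$. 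Hence replacing $h$ by the smooth branch $\tilde{h}(\phi) := \cos^2(\phi-\alpha)$ alters the integral only through the mass the density assigns to those two intervals; bounding that mass against the normalizing constant $\int_0^\pi \sin^{d-2}\phi\,\diff{\phi} = \Theta(d^{-1/2})$ shows the correction is exponentially small in $d$.

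It then remains to evaluate $\Expect[\tilde{h}(\phi)]$ exactly. Writing $\cos^2(\phi-\alpha) = \tfrac12 + \tfrac12\cos(2\alpha)\cos(2\phi) + \tfrac12\sin(2\alpha)\sin(2\phi)$, I would use that the density is symmetric under $\phi\mapsto \pi-\phi$, which kills the odd term $\Expect[\sin(2\phi)]=0$, while $\Expect[\cos(2\phi)] = 1 - 2\Expect[\sin^2\phi]$. The Wallis reduction $\int_0^\pi \sin^d\phi\,\diff{\phi} = \tfrac{d-1}{d}\int_0^\pi \sin^{d-2}\phi\,\diff{\phi}$ gives $\Expect[\sin^2\phi] = \tfrac{d-1}{d}$, hence $\Expect[\cos(2\phi)] = -1 + \tfrac{2}{d}$. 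Assembling the pieces, and using $\tfrac{1-\cos(2\alpha)}{2} = \sin^2\alpha$, yields $\Expect[\tilde{h}(\phi)] = \sin^2\alpha + \tfrac{1}{d}\cos(2\alpha)$, and therefore
\begin{equation*}
\delta\big(\Circ_d(\alpha)\big) = d\sin^2\alpha + \cos(2\alpha) + O(e^{-cd}),
\end{equation*}
which is the claimed estimate, with the advertised error term $\cos(2\alpha)$.

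The only genuinely geometric step is the derivation of the projection formula, and the point needing care there is the reduction to the plane (the uniqueness-plus-reflection argument) together with the sign conventions of the planar wedge projection. Everything downstream is a routine Gaussian/Wallis computation; the one quantitative subtlety is the tail bound, which must be arranged so that the exponentially small error survives multiplication by the factor $d$ in the spherical formulation. The degenerate angles $\alpha \in \{0,\tfrac{\pi}{2}\}$, where a breakpoint collides with the concentration point $\tfrac{\pi}{2}$ and the cone degenerates to a ray or a halfspace, can be treated separately by a direct calculation.
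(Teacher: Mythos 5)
Your proposal is correct, and after the shared opening moves it takes a genuinely different route from the paper. Both arguments begin identically: the piecewise projection formula $h(\phi)$ (the paper calls it $F(\beta)$ and justifies it by "elementary trigonometry"; your reflection-and-uniqueness argument is a clean way to make that step rigorous) and the polar-coordinate integral of $\sin^{d-2}\phi \, h(\phi)$ against the normalized density. From there the paper evaluates the integral \emph{asymptotically}: Laplace's method centered at $\phi = \tfrac{\pi}{2}$ gives $\sqrt{2\pi/d}\,\sin^2(\alpha) + O(d^{-3/2})$, and Gautschi's inequality controls the Gamma-function prefactor, yielding the $O(1)$ claim; the refined error term $\cos(2\alpha)$ is then only sketched, by appeal to the exact intrinsic volumes of circular cones (a binomial-density identity from Amelunxen's thesis plus the interlacing result), with details omitted. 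You instead evaluate the integral \emph{exactly} for the smooth branch: replacing $h$ by $\cos^2(\phi-\alpha)$ costs only an exponentially small error (the density mass on $[0,\alpha]\cup[\tfrac{\pi}{2}+\alpha,\pi]$ is $O(\max\{\sin\alpha,\cos\alpha\}^{d-2}\sqrt{d})$, which survives multiplication by $d$), and the double-angle identity plus the Wallis recursion — equivalently, the observation that $\Expect[\cos^2\phi] = \Expect[\theta_1^2] = 1/d$ — gives $\delta\big(\Circ_d(\alpha)\big) = d\sin^2(\alpha) + \cos(2\alpha) + O(\econst^{-cd})$ in one stroke. What your approach buys is twofold: it avoids Laplace asymptotics and Gamma-ratio estimates entirely, and it proves the refined $\cos(2\alpha)$ error statement rigorously rather than deferring it to a separate intrinsic-volume computation. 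A pleasant sanity check: at $\alpha = \tfrac{\pi}{4}$ your two truncation errors cancel by the $\phi \mapsto \pi - \phi$ symmetry, recovering exactly $\delta = \tfrac{d}{2}$, as self-duality demands.
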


\noindent
Turn to Appendix~\ref{app:circ-cone} for the proof, which seems to be novel.
Even though the formula in Proposition~\ref{prop:circ-cone} is simple,
it already gives an excellent approximation in moderate
dimensions.  See~\cite[Sec.~6.3]{McCTro:13} or~\cite{MuHuaWri:13}
for refinements of Proposition~\ref{prop:circ-cone}
that appeared after our work.

\begin{figure}
\centering
\def\myX{3}  \def\myY{1}
\def\conelen{4.2}
\def\conelend{3.0}
\begin{tikzpicture}[scale=0.75]
\draw[white](0,-5.0) --(0,0); \draw[thin,blue] (\myX,-\myY) -- (\myX,-\myY+\conelen) arc(90:225:\conelen) -- cycle;
\draw[very thick,white] (\myX,-\myY+\conelen) arc(90:225:\conelen);
\fill[white!90!blue] (\myX,-\myY) -- (\myX,-\myY+\conelen) arc(90:225:\conelen) -- cycle;
\draw[thin,black] (\myX,-\myY) -- (\myX+\conelend,-\myY) arc(360:315:\conelend) -- cycle;
\draw[very thick,white] (\myX+\conelend,-\myY) arc(360:315:\conelend);
\fill[blue!70!black] (\myX,-\myY) -- (\myX+\conelend,-\myY) arc(360:315:\conelend) -- cycle;
\draw (-1.2*\myX,0) -- (1.2*\myX,0) (0,-1.2*\myX) -- (0,1.2*\myX);

\draw[very thick] (\myX,\myY) -- (\myY,\myX) -- (-\myY,\myX) -- (-\myX,\myY) -- (-\myX,-\myY) -- (-\myY,-\myX) -- (\myY,-\myX) -- (\myX,-\myY) -- cycle;
\fill (\myX,-\myY) circle(2pt) node[above=1mm, left]{\(\vct x\)};
\path (\myX+1,-\myY+4)
node{}
(\myX+3,-\myY-0.5)
node{}
(-3,-2.7) node{\(\mathcal{P}_\pm(\vct x)\)};
\end{tikzpicture} \hspace{0.25in}
\def\myX{3}  \def\myY{2}
\def\conelen{4.5}
\def\conelend{3.0}
\begin{tikzpicture}[scale=0.75]

\draw[thin,blue] (\myX,-\myY) -- (\myX,-\myY+\conelen) arc(90:225:\conelen) -- cycle;
\draw[very thick,white] (\myX,-\myY+\conelen) arc(90:225:\conelen);
\fill[white!90!blue] (\myX,-\myY) -- (\myX,-\myY+\conelen) arc(90:225:\conelen) -- cycle;
\draw[thin,black] (\myX,-\myY) -- (\myX+\conelend,-\myY) arc(360:315:\conelend) -- cycle;
\draw[very thick,white] (\myX+\conelend,-\myY) arc(360:315:\conelend);
\fill[blue!70!black] (\myX,-\myY) -- (\myX+\conelend,-\myY) arc(360:315:\conelend) -- cycle;
\draw (-1.2*\myX,0) -- (1.2*\myX,0) (0,-1.2*\myX) -- (0,1.2*\myX);
\draw[very thick] (\myX,\myY) -- (\myY,\myX) -- (-\myY,\myX) -- (-\myX,\myY) -- (-\myX,-\myY) -- (-\myY,-\myX) -- (\myY,-\myX) -- (\myX,-\myY) -- cycle;
\fill (\myX,-\myY) circle(2pt) node[above=1mm, left]{\(\vct x\)};
\path (\myX+1,-\myY+4)
node{}
(\myX+3,-\myY-0.5)
node{}
(-3.2,-3.2) node{\(\mathcal{P}_\pm(\vct x)\)};
\end{tikzpicture} \caption{
\textbf{Normal cone at the vertex of a permutahedron.}
The signed permutahedron $\coll{P}_{\pm}(\vct{x})$ generated by \textbf{[left]} the vector $\vct{x} = (3,-1)$ and \textbf{[right]} the vector $\vct{x} = (3,-2)$.
In each panel, the darker cone is the normal cone $\NormC(\coll{P}_{\pm}(\vct{x}), \vct{x})$, and the lighter cone is its polar.  Note that the normal cone does not depend on the generator $\vct{x}$ provided that the entries of $\vct{x}$ are distinct.}
\label{fig:signed_permutahedron}
\end{figure}
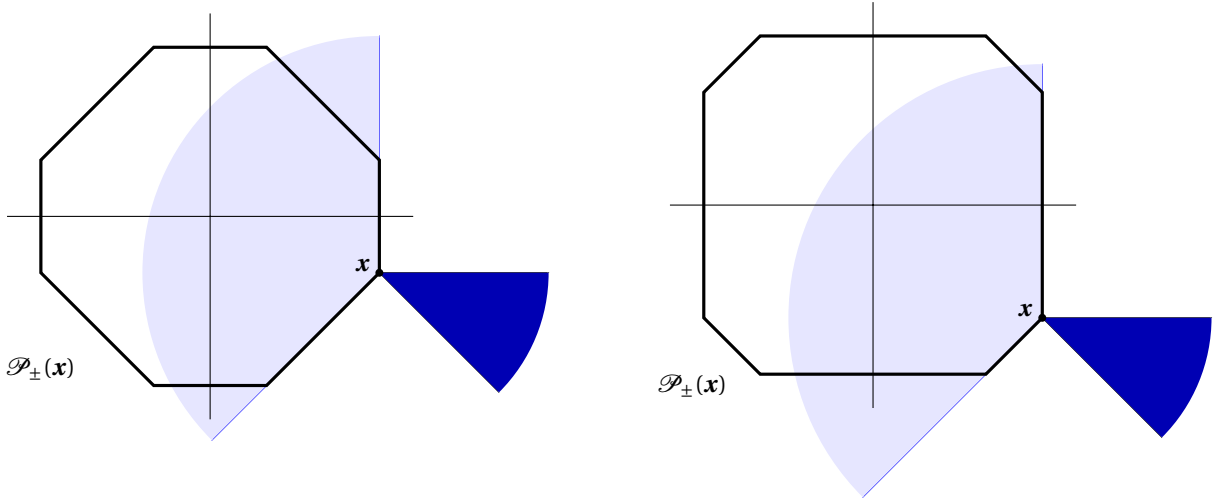

\subsection{Normal cones of a permutahedron}
\label{sec:fin_refl_gr} 
We close this section with a more sophisticated example.
The \term{(signed) permutahedron} generated by a vector $\vct{x} \in \R^d$ is the convex hull of all (signed) coordinate permutations of the vector:
\begin{align}
	\mathcal{P}(\vct{x}) &:= \conv\big\{ \sigma(\vct{x}) : \text{$\sigma$ a coordinate permutation} \big\} \label{eq:perm} \\
	\mathcal{P}_{\pm}(\vct{x}) &:= \conv\big\{ \sigma_{\pm}(\vct{x}) : \text{$\sigma_{\pm}$ a signed coordinate permutation} \big\} \label{eq:sperm}.
\end{align}
A signed permutation permutes the coordinates of a vector and may change the sign of each coordinate.
The \term{normal cone} $\NormC( E, \vct{x} )$
 of a convex set $E \subset \R^d$ at a point $\vct{x} \in E$
 consists of the outward normals to all hyperplanes that support $E$ at $\vct{x}$, i.e.,
 \begin{equation} \label{eq:ncone}
 \NormC( E, \vct{x} ) := \big\{ \vct{s} \in \R^d : \ip{ \vct{s} }{ \smash{\vct{y}} - \vct{x} } \leq 0
 \quad\text{for all $\vct{y} \in E$} \big\}.
 \end{equation}
Figure~\ref{fig:signed_permutahedron} displays two signed permutahedra along with the normal cone at a vertex of each one.

We can develop an exact formula for the statistical dimension of the normal cone of a nondegenerate permutahedron.  In Section~\ref{sec:appl-permutahedron}, we use this calculation to study a signal processing application proposed in~\cite[p.~812]{CRPW:12}.

\begin{proposition}[Normal cones of permutahedra] \label{prop:perm-cone}
Suppose that $\vct{x}$ is a vector in $\R^d$ with distinct entries.  The statistical dimension of the normal cone at a vertex of the (signed) permutahedron generated by $\vct{x}$ satisfies
$$
\delta\big( \NormC( \mathcal{P}(\vct{x}), \vct{x} ) \big) = \mathrm{H}_d
\quad\text{and}\quad
\delta\big( \NormC( \mathcal{P}_{\pm}(\vct{x}), \vct{x} ) \big) = \half \mathrm{H}_d,
$$
where $\mathrm{H}_d := \sum_{i=1}^d i^{-1}$ is the $d$th harmonic number.
\end{proposition}

\noindent
The proof of Proposition~\ref{prop:perm-cone} appears in Appendix~\ref{sec:chamb-finite-refl}.  The argument relies on the intrinsic formulation~\eqref{eq:sdim-intr} of the statistical dimension, and it illustrates some deep connections between conic geometry and classical combinatorics.

\section{The statistical dimension of a descent cone}
\label{sec:descent-cones}

Theorems~\ref{thm:phase-trans-lin-inv} and~\ref{thm:demix} allow us to locate the phase transition for a class of convex optimization problems with random data.
To apply these results, however, we must be able to compute the statistical dimension for the descent cone of a convex function.  In this section, we describe a recipe that delivers an accurate estimate for the statistical dimension of a descent cone.

In Section~\ref{sec:descent-upper}, we explain how to obtain an upper bound for the statistical dimension of a descent cone.  Section~\ref{sec:descent-lower} provides an error estimate for this method.
In Sections~\ref{sec:descent-cones-ell_1} and~\ref{sec:descent-cones-trace},
we apply these ideas to study the descent cones of the $\ell_1$ norm and the Schatten 1-norm.

\subsection{A recipe for the statistical dimension of a descent cone}
\label{sec:descent-upper}

There is an elegant way to obtain an upper bound for the statistical dimension of a descent cone of a convex function.
Recall that the \term{subdifferential} $\partial f(\vct{x})$ of a proper convex function $f : \R^d \to \overline{\R}$ at a point $\vct{x} \in \R^d$ is the closed convex set
\begin{equation*} \partial f(\vct{x}) :=
	\big\{ \vct{s} \in \R^d : f(\vct{y}) \geq f(\vct{x}) + \ip{\vct{s}}{\smash{\vct{y}} - \vct{x}} \text{ for all $\vct{y} \in \R^d$} \big\}.
\end{equation*}
There is a classical duality between descent cones and subdifferentials~\cite[Chap.~23]{Rock}.
As a consequence, we can convert questions about the statistical dimension of a descent cone into questions about the subdifferential.

\begin{proposition}[The statistical dimension of a descent cone] \label{prop:sdim-descent}
Let $f : \R^d \to \overline{\R}$ be a proper convex function, and let $\vct{x} \in \R^d$.
Assume that the subdifferential $\partial f(\vct{x})$ is nonempty, compact, and does not contain the origin.
Define the function
\begin{equation} \label{eq:Ftau}
\distsubdiff(\tau) :=
\distsubdiff(\tau; \ \partial f(\vct{x})) :=
\Expect \big[ \dist^2\big(\vct{g}, \tau \cdot \partial f(\vct{x}) \big) \big]
\quad\text{for $\tau \geq 0$}
\end{equation}
where $\vct{g} \sim \normal(\vct{0}, \Id)$.  We have the upper bound
\begin{equation} \label{eq:sdim-descent}
\sdim\big( \Desc(f, \vct{x}) \big)
	\leq \inf_{\tau \geq 0} \ \distsubdiff(\tau).
\end{equation}
Furthermore, the function $\distsubdiff$ is strictly convex, continuous at $\tau = 0$, and differentiable for $\tau \geq 0$.  It achieves its minimum at a unique point.
\end{proposition}

\begin{proof} The inequality~\eqref{eq:sdim-descent} generalizes some specific arguments from~\cite[App.~C]{CRPW:12}, and it is easy to establish.  We use the polarity relation~\eqref{eq:delta-dist} to compute the statistical dimension:
$$
\delta\big( \Desc(f, \vct{x}) \big)
	= \delta\big( \Desc(f, \vct{x})^{\polar\polar} \big)
	= \Expect \big[ \dist^2\big(\vct{g},{} \Desc(f, \vct{x})^\polar \big) \big]
	= \Expect \left[ \dist^2\left(\vct{g}, \bigcup_{\tau \geq 0} \tau \cdot \partial f(\vct{x}) \right) \right]
	= \Expect{} \bigg[ \inf_{\tau \geq 0}\ \dist^2\big(\vct{g}, \tau \cdot \partial f(\vct{x}) \big) \bigg].
$$
Indeed, the statistical dimension of the descent cone equals the statistical dimension
of its closure, which can be expressed as a double polar.  The second identity
uses~\eqref{eq:delta-dist} and the fact that polarity is an involution on closed convex cones.
The third identity follows from the fact that, under our technical assumptions, the polar of the descent cone is the cone generated by the subdifferential~\cite[Cor.~23.7.1]{Rock}; see Appendix~\ref{sec:repr-desc-cones} for details.
The last identity holds because the distance to a union is the infimal distance to any one of its members.  To reach~\eqref{eq:sdim-descent}, we pass the expectation through the infimum.

The analytic properties of the function $\distsubdiff$ are new, and they demand substantial effort.  Lemma~\ref{slem:distance-conic} in Appendix~\ref{sec:dist-conic} contains the proof of the remaining claims. \end{proof}

Proposition~\ref{prop:sdim-descent} suggests a method for studying the statistical dimension of a descent cone: Minimize the function $\distsubdiff$ by setting its derivative to zero.  We formalize this approach in Recipe~\ref{fig:desc-recipe}.  In Section~\ref{sec:descent-cones-ell_1} and~\ref{sec:descent-cones-trace}, we discuss some examples where this recipe is provably effective.

\begin{remark}[Prior work]
Rudelson \& Vershynin~\cite[Sec.~4]{RV:08} established a bound for the Gaussian width of the descent cone of the $\ell_1$ norm at a sparse vector using a geometric argument.  Stojnic~\cite{stojnic10} obtained a substantial refinement of this estimate using linear programming duality.  Oymak \& Hassibi~\cite{OH:10} developed a related method to study the descent cone of the Schatten 1-norm at a low-rank matrix.  The paper~\cite[App.~C]{CRPW:12} clarifies the calculations of Stojnic and Oymak \& Hassibi using geometric polarity arguments.  The result~\eqref{eq:sdim-descent} extends the latter approach to a general convex function.
\end{remark}

\subsection{An error estimate for the descent cone recipe}
\label{sec:descent-lower}

The papers~\cite{stojnic10,OH:10} contain computational evidence that the ideas behind Recipe~\ref{fig:desc-recipe} lead to accurate upper bounds in some special cases.  Among the major contributions of this paper is an error estimate that explains why the descent cone recipe works so well.  This result is an essential ingredient in our method for locating
the phase transition of a random convex program.  Indeed, we need this theorem to ensure that we can calculate the statistical dimension of a descent cone correctly.

\begin{theorem}[Error bound for descent cone recipe] \label{thm:sharp-descent}
Let $f$ be a norm on $\R^d$, and fix a nonzero point $\vct{x} \in \R^d$.  Then
\begin{equation} \label{eq:sharp-descent}
0 \leq \bigg( \inf_{\tau \geq 0} \  \distsubdiff(\tau;\ \partial f(\vct{x})) \bigg) - \sdim \big( \Desc(f, \vct{x}) \big)
	\leq \frac{2 \, \sup\big\{ \enorm{\vct{s}} : \vct{s} \in \partial f(\vct{x}) \big\}}{f(\vct{x} / \enorm{\vct{x}})},
\end{equation}
where the function $\distsubdiff$ is defined in~\eqref{eq:Ftau}.
\end{theorem}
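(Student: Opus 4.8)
Since Proposition~\ref{prop:sdim-descent} already establishes $\sdim(\Desc(f,\vct x))\le\inf_{\tau\ge 0}F(\tau)$ via Jensen's inequality, only the reverse estimate is at stake: I must lower-bound the statistical dimension, equivalently bound the nonnegative Jensen defect $\inf_{\tau\ge 0}F(\tau)-\sdim(\Desc(f,\vct x))$. Write $K:=\Desc(f,\vct x)^\polar=\cone(\partial f(\vct x))$ and recall from the proof of Proposition~\ref{prop:sdim-descent} that
\[
\sdim(\Desc(f,\vct x))=\Expect\big[\dist^2(\vct g,K)\big]=\Expect\Big[\inf_{\tau\ge 0}\dist^2\big(\vct g,\tau\cdot\partial f(\vct x)\big)\Big],
\]
whereas $\inf_{\tau\ge 0}F(\tau)=\inf_{\tau\ge 0}\Expect[\dist^2(\vct g,\tau\cdot\partial f(\vct x))]$. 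The defect is thus exactly the price of exchanging the expectation with the infimum over the single scale $\tau$, so the whole argument reduces to controlling how much the optimal dilation fluctuates with $\vct g$.

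The decisive structural input is that $f$ is a norm. Every subgradient $\vct s\in\partial f(\vct x)$ obeys Euler's relation $\ip{\vct s}{\vct x}=f(\vct x)$, so, writing $\vct u:=\vct x/\enorm{\vct x}$ and $\beta:=f(\vct u)$, each $\vct s$ satisfies $\ip{\vct s}{\vct u}=\beta$. Hence $\partial f(\vct x)$ lies in the affine hyperplane $\{\vct v:\ip{\vct v}{\vct u}=\beta\}$ and $\tau\cdot\partial f(\vct x)$ in $\{\vct v:\ip{\vct v}{\vct u}=\tau\beta\}$. Writing $\Proj_K(\vct g)=\tau^\star\vct s^\star$ with $\vct s^\star\in\partial f(\vct x)$ therefore gives the closed form
\[
\tau^\star(\vct g)=\frac{\ip{\Proj_K(\vct g)}{\vct u}}{\beta}
\]
for the pointwise-optimal scale. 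Because $\Proj_K$ is nonexpansive, $\vct g\mapsto\ip{\Proj_K(\vct g)}{\vct u}$ is $1$-Lipschitz, so $\tau^\star$ is $(1/\beta)$-Lipschitz; the Gaussian Poincar\'e inequality then yields the fluctuation bound $\Var(\tau^\star)\le 1/\beta^2$.

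To convert this into a bound on the defect, I fix a deterministic scale $\tau_1$ (for instance $\tau_1=\Expect[\tau^\star]\ge 0$, which is admissible since $\inf_\tau F\le F(\tau_1)$), so that $\inf_\tau F(\tau)-\sdim\le\Expect[\dist^2(\vct g,\tau_1\cdot\partial f(\vct x))-\dist^2(\vct g,K)]$. The map $\tau\mapsto\dist(\vct g,\tau\cdot\partial f(\vct x))$ is $\sigma$-Lipschitz, where $\sigma:=\sup\{\enorm{\vct s}:\vct s\in\partial f(\vct x)\}$, because the Hausdorff distance between dilates $\tau\cdot\partial f(\vct x)$ and $\tau'\cdot\partial f(\vct x)$ is at most $\sigma\abs{\tau-\tau'}$; moreover $\dist(\vct g,K)=\dist(\vct g,\tau^\star\cdot\partial f(\vct x))$. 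This delivers the clean first-order estimate $\Expect[\dist(\vct g,\tau_1\cdot\partial f(\vct x))-\dist(\vct g,K)]\le\sigma\,\Expect\abs{\tau_1-\tau^\star}\le\sigma/\beta$.

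The main obstacle is upgrading this first-order (distance) control to the second-order (squared-distance) defect without losing the linear constant. Factoring
\[
\dist^2(\vct g,\tau_1\partial f(\vct x))-\dist^2(\vct g,K)=\big(\dist(\vct g,\tau_1\partial f(\vct x))-\dist(\vct g,K)\big)\big(\dist(\vct g,\tau_1\partial f(\vct x))+\dist(\vct g,K)\big)
\]
couples the small first-order defect to distances of order $\sqrt d$, so a naive Cauchy--Schwarz step is lossy and yields only the weaker bound $\sigma^2/\beta^2$. To do better I would use the hyperplane splitting
\[
\dist^2\big(\vct g,\tau\cdot\partial f(\vct x)\big)=\big(\ip{\vct g}{\vct u}-\tau\beta\big)^2+\dist^2\big(\vct g_\perp,\tau\cdot(\partial f(\vct x))_\perp\big),
\]
with $\vct g_\perp$ the component of $\vct g$ orthogonal to $\vct u$ and $(\partial f(\vct x))_\perp$ the image of the subdifferential in $\vct u^\perp$. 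Minimizing over $\tau$ realizes the $\vct u$-direction as a one-dimensional Moreau envelope; its convexity and $2$-smoothness in the scalar $\ip{\vct g}{\vct u}\sim\normal(0,1)$ bound that direction's share of the defect by $\Var\ip{\vct g}{\vct u}=1$. The remaining difficulty, and the heart of the proof, is to show that the transverse term contributes at most an amount proportional to $\sigma/\beta$; I expect this to require comparing the exact projection onto the slice $\tau_1\cdot\partial f(\vct x)$ with the projection onto $K$---rather than the slack ray-wise comparison used above---together with the $(1/\beta)$-Lipschitz fluctuation of $\tau^\star$ and the constraint $\tau\ge 0$. Combining the two contributions is the delicate bookkeeping that produces the stated bound $2\sigma/\beta$.
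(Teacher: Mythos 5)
Your setup is right and partially matches the paper's own argument: the reduction to the Jensen defect, the closed form $\tau_{\vct{g}} = \ip{\Proj_K(\vct{g})}{\vct{u}}/\beta$ for the pointwise-optimal dilation, and the Poincar\'e bound $\Var(\tau_{\vct{g}}) \leq 1/\beta^2$ all appear (in essentially identical form) in the paper's proof. But your proposal has a genuine gap at exactly the point you flag yourself: you never establish that the transverse contribution is $O(\sigma/\beta)$, and the route you sketch is unlikely to get there cleanly. Comparing squared distances through the factorization $a^2-b^2=(a-b)(a+b)$ couples your first-order bound $\sigma/\beta$ to a factor of order $\sqrt{d}$, not $\sigma/\beta$ as you state, so the naive estimate is dimension-dependent; and the hyperplane splitting leaves a joint minimization over $\tau$ in which the axial and transverse terms are coupled, with no mechanism identified for controlling the transverse Jensen defect. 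A proof that ends with ``the heart of the proof is the delicate bookkeeping'' has not proved the theorem.

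The idea you are missing is to exploit convexity of $\tau \mapsto F_{\vct{g}}(\tau) := \dist^2(\vct{g},\tau\cdot\partial f(\vct{x}))$ \emph{before} taking any expectations: linearize at the deterministic minimizer $\tau_\star$ of $F$, so that
\begin{equation*}
\inf_{\tau \geq 0} F_{\vct{g}}(\tau) \;=\; F_{\vct{g}}(\tau_{\vct{g}}) \;\geq\; F_{\vct{g}}(\tau_\star) + (\tau_{\vct{g}}-\tau_\star)\,F'_{\vct{g}}(\tau_\star).
\end{equation*}
Taking expectations turns the defect into a covariance, $\Expect\big[(\tau_{\vct{g}}-\Expect \tau_{\vct{g}})\,(F'_{\vct{g}}(\tau_\star)-\Expect F'_{\vct{g}}(\tau_\star))\big]$, plus the term $\Expect[\tau_{\vct{g}}-\tau_\star]\cdot F'(\tau_\star)$, which is nonnegative by optimality of $\tau_\star$ (either $F'(\tau_\star)=0$, or $\tau_\star=0$ with $F'(0)\geq 0$ and $\tau_{\vct{g}}\geq 0$). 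Cauchy--Schwarz then bounds the covariance by $[\Var(\tau_{\vct{g}})\,\Var(F'_{\vct{g}}(\tau_\star))]^{1/2}$; this is where the $\sqrt{d}$ loss disappears, because the second factor involves the \emph{derivative} in $\tau$, not the distance itself. The second variance bound is the ingredient you lack: the map $\vct{u}\mapsto F'_{\vct{u}}(\tau_\star) = -\tfrac{2}{\tau_\star}\ip{\vct{u}-\vct{\pi}_{\tau_\star S}(\vct{u})}{\vct{\pi}_{\tau_\star S}(\vct{u})}$ is $2\sigma$-Lipschitz (proved from the projection optimality condition and nonexpansiveness of $\Id - \vct{\pi}_{\tau_\star S}$; this is the content of~\eqref{eq:Fx'-lip}), so Fact~\ref{fact:gauss-lip} gives $\Var(F'_{\vct{g}}(\tau_\star)) \leq 4\sigma^2$. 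Multiplying the two standard deviations yields exactly $2\sigma/\beta$, which is the claimed bound~\eqref{eq:sharp-descent}.
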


\noindent
The proof of Theorem~\ref{thm:sharp-descent} is technical in nature, so we defer the details to Appendix~\ref{app:sharp-descent}.  The application of this result requires some care because many different vectors $\vct{x}$ can generate the same subdifferential $\partial f(\vct{x})$ and hence the same descent cone $\Desc(f,\vct{x})$.  From this class of vectors, we ought to select one that maximizes the value $f(\vct{x}/\norm{\vct{x}})$.

\begin{remark}[Related work]
In an independent paper that appeared shortly after this work was released, Foygel \& Mackey~\cite{FoyMac:13} developed another error bound for the descent cone recipe.  These two results operate under different assumptions, and the two bounds are effective in different regimes.  It remains an open question to find an optimal error estimate
for Recipe~\ref{fig:desc-recipe}.
\end{remark}

\begin{recipe}[t]
\centering
\caption{\textbf{The statistical dimension of a descent cone.}}
\label{fig:desc-recipe}
\framebox{
\begin{minipage}{0.9\textwidth}
\textbf{Assume} that $f : \R^d \to \overline{\R}$ is a proper convex function and $\vct{x} \in \R^d$ \\
\textbf{Assume} that the subdifferential $\partial f(\vct{x})$ is nonempty, compact, and does not contain the origin
\vspace{1mm}
\begin{enumerate} \setlength{\itemsep}{1mm}
\item	Identify the subdifferential $S = \partial f(\vct{x})$.

\item	For each $\tau \geq 0$, compute $\distsubdiff(\tau) = \Expect \big[ \dist^2(\vct{g}, \tau S) \big]$.

\item	Find the unique solution, if it exists, to the stationary equation $\distsubdiff'(\tau) = 0$.

\item	If the stationary equation has a solution $\tau_{\star}$, then $\sdim\big( \Desc(f, \vct{x}) \big) \leq \distsubdiff(\tau_{\star})$.

\item	Otherwise, the bound is vacuous: $\sdim\big( \Desc(f,\vct{x}) \big) \leq \distsubdiff(0) = d$.
\end{enumerate}
\vspace{1mm}
\end{minipage}}
\end{recipe}

\subsection{Descent cones of the \texorpdfstring{\(\ell_1\)}{l1} norm}
\label{sec:descent-cones-ell_1}

When we wish to solve an inverse problem with a sparse unknown,
we often use the $\ell_1$ norm as a regularizer;
cf.~\eqref{eq:l1-min-v2},~\eqref{eq:l1+l1}, and~\eqref{eq:l1+S1}.
Our next result summarizes the calculations required to obtain
the statistical dimension of the descent cone of the $\ell_1$ norm
at a sparse vector.  When we combine this proposition
with Theorems~\ref{thm:phase-trans-lin-inv} and~\ref{thm:demix},
we obtain the exact location of the phase transition
for $\ell_1$ regularized inverse problems whose dimension is large.

\begin{proposition}[Descent cones of the $\ell_1$ norm] \label{prop:l1-sdim}
Let $\vct{x}$ be a vector in $\R^d$ with $s$ nonzero entries.
Then the normalized statistical dimension of the descent cone of the $\ell_1$ norm at $\vct{x}$ 
satisfies the bounds
\begin{equation} \label{eq:l1-sdim}
\psi(s/d) - \frac{2}{\sqrt{sd}}
	\leq \frac{\sdim\big( \Desc(\pnorm{1}{\cdot}, \vct{x} ) \big)}{d} 
	\leq \psi(s/d).
\end{equation}
The function $\psi : [0,1] \to [0,1]$ is defined as
\begin{equation} \label{eq:l1-sdim-curve}
\psi(\rho) := \inf_{\tau \geq 0} \ \left\{ \rho (1 + \tau^2) + (1 - \rho )
	\int_\tau^\infty (u - \tau)^2 \cdot \phi(u) \idiff{u} \right\}.
\end{equation}
The integral kernel $\phi(u) := \sqrt{2/\pi} \, \econst^{-u^2/2}$ is a probability density supported on $[0, \infty)$.
The infimum in~\eqref{eq:l1-sdim-curve} is achieved for the unique positive $\tau$
that solves the stationary equation
\begin{equation} \label{eq:l1-stationary}
\int_\tau^\infty \left(\frac{u}{\tau} - 1 \right) \cdot \phi(u) \idiff{u}
	= \frac{\rho}{1 - \rho}.
\end{equation}
See Figure~\ref{fig:statdim-comput}[center] for a plot of the function~\eqref{eq:l1-sdim-curve}.
\end{proposition}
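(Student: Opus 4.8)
The plan is to obtain the upper bound together with the stationary equation from the descent-cone recipe (Recipe~\ref{fig:desc-recipe}, i.e.\ Proposition~\ref{prop:sdim-descent}), and then to squeeze out the matching lower bound from the error estimate of Theorem~\ref{thm:sharp-descent}. A preliminary observation drives everything: the subdifferential $\partial \pnorm{1}{\vct{x}}$, and hence via~\eqref{eqn:desc-polar} the descent cone itself, depends only on the support and the sign pattern of $\vct{x}$, not on the magnitudes of its entries. So $\Desc(\pnorm{1}{\cdot}, \vct{x})$ and its statistical dimension are unchanged if I rescale the nonzero entries; I will exploit this freedom later, and for the computation I may as well take every nonzero entry to have magnitude one.

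First I would record the subdifferential. Writing $S$ for the support (so $\abs{S} = s$), one has $\partial \pnorm{1}{\vct{x}} = \{\vct{z} \in \R^d : z_i = \sgn{(x_i)}\text{ for }i \in S,\ \abs{z_j} \leq 1\text{ for }j \notin S\}$. Since this is a Cartesian product of intervals, the squared distance $\dist^2(\vct{g}, \tau \cdot \partial\pnorm{1}{\vct{x}})$ separates coordinatewise, and $F$ from~\eqref{eq:Ftau} splits accordingly. Each support coordinate contributes $\Expect[(g_i - \tau\,\sgn{(x_i)})^2] = 1 + \tau^2$, while each off-support coordinate contributes the one-dimensional Gaussian quantity $\Expect[(\abs{g_j} - \tau)_+^2] = \sqrt{2/\pi}\,[(1+\tau^2)\int_\tau^\infty \econst^{-u^2/2}\idiff{u} - \tau\,\econst^{-\tau^2/2}]$, a routine integration by parts. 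Summing $s$ copies of the first and $d-s$ copies of the second and dividing by $d$ reproduces exactly the bracketed expression in~\eqref{eq:l1-sdim-curve}, so $\inf_{\tau \geq 0} F(\tau) = d\,\psi(s/d)$; the upper bound in~\eqref{eq:l1-sdim} is then immediate from~\eqref{eq:sdim-descent}. Differentiating the per-coordinate terms, setting $F'(\tau) = 0$, and dividing by $\tau$ delivers the stationary equation~\eqref{eq:l1-stationary}. Proposition~\ref{prop:sdim-descent} already supplies strict convexity and a unique minimizer of $F$; checking that $F'(0^+) = -2(d-s)\sqrt{2/\pi} < 0$ shows the minimizer is strictly positive, so it is indeed the solution of~\eqref{eq:l1-stationary}.

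For the lower bound I would apply Theorem~\ref{thm:sharp-descent}. With all nonzero entries of unit magnitude, $f(\vct{x}/\enorm{\vct{x}}) = \pnorm{1}{\vct{x}}/\enorm{\vct{x}} = s/\sqrt{s} = \sqrt{s}$, and this is the largest value of $f(\vct{x}/\enorm{\vct{x}})$ over all vectors generating this descent cone (by Cauchy--Schwarz it never exceeds $\sqrt{s}$). The numerator $\sup\{\enorm{\vct{s}} : \vct{s} \in \partial\pnorm{1}{\vct{x}}\}$ is attained by setting every free coordinate to $\pm 1$, giving $\sqrt{s + (d-s)} = \sqrt{d}$. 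Thus~\eqref{eq:sharp-descent} reads $\abs{\sdim(\Desc(\pnorm{1}{\cdot},\vct{x})) - d\,\psi(s/d)} \leq 2\sqrt{d}/\sqrt{s}$; dividing by $d$ produces the slack $2/\sqrt{sd}$, and combining with the sharper recipe upper bound completes~\eqref{eq:l1-sdim}.

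The routine pieces here are the single Gaussian integral and the differentiation. I expect the one genuine subtlety to be the choice of representative vector. Because the error term in Theorem~\ref{thm:sharp-descent} scales like $1/f(\vct{x}/\enorm{\vct{x}})$, a vector with one dominant entry would push $f(\vct{x}/\enorm{\vct{x}})$ toward $1$ and yield only an $O(1)$ bound; the sharp $O(1/\sqrt{sd})$ slack emerges precisely because the descent cone is invariant under rescaling the nonzero entries, so I am free to equalize them. Verifying this invariance carefully through~\eqref{eqn:desc-polar}, and confirming the supremum in the numerator, are the only real checkpoints of the argument.
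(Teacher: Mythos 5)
Your proposal is correct and follows essentially the same route as the paper's proof in Appendix~\ref{app:l1-sdim}: the coordinatewise Gaussian computation of $F(\tau)$ from the product structure of $\partial \pnorm{1}{\vct{x}}$, the upper bound via Proposition~\ref{prop:sdim-descent}, and the lower bound from Theorem~\ref{thm:sharp-descent} applied to the unit-magnitude representative, with $f(\vct{x}/\enorm{\vct{x}}) = \sqrt{s}$ and $\sup\{\enorm{\vct{s}} : \vct{s} \in \partial f(\vct{x})\} = \sqrt{d}$ yielding the $2/\sqrt{sd}$ slack. Your explicit check that $F'(0^+) = -2(d-s)\sqrt{2/\pi} < 0$ is a slightly more detailed justification than the paper's remark that the minimum does not occur at $\tau = 0$, but the argument is otherwise identical.
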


\noindent
Proposition~\ref{prop:l1-sdim} is a direct consequence of Recipe~\ref{fig:desc-recipe} and the error bound in Theorem~\ref{thm:sharp-descent}.  See Appendix~\ref{app:l1-sdim} for details of the proof;
Appendix~\ref{sec:addend-comp-stat} explains the numerical aspects.

Let us emphasize the following consequences of Proposition~\ref{prop:l1-sdim}.
When the number $s$ of nonzeros in the vector $\vct{x}$ is proportional to the ambient dimension $d$, the error in the statistical dimension calculation~\eqref{eq:l1-sdim} is vanishingly small relative to the ambient dimension.  When $\vct{x}$ is sparser, it is more appropriate to compare the error with the statistical dimension itself.  Thus,
$$
0 \leq  \frac{\big(d \cdot \psi(s/d)\big) - \sdim\big(\Desc(\pnorm{1}{\cdot}, \vct{x})\big)}{\sdim\big(\Desc(\pnorm{1}{\cdot}, \vct{x})\big)} 
	\leq \frac{2}{\sqrt{\sdim\big(\Desc(\pnorm{1}{\cdot},\vct{x})\big)}}
	\quad\text{when $s \geq\sqrt{d} + 1$.}
$$
We have used the observation that $\sdim\big(\Desc(\pnorm{1}{\cdot},\vct{x})\big) \geq s - 1$, which holds because $\Desc(\pnorm{1}{\cdot}, \vct{x})$ contains the $(s-1)$-dimensional subspace parallel with the minimal face of the $\ell_1$ ball containing $\vct{x}$.

\begin{remark}[Prior work]
Except for the first inequality in~\eqref{eq:l1-sdim},
the calculations and the resulting formulas in Proposition~\ref{prop:l1-sdim}
are not substantially novel.  Most of the existing analysis concerns the
phase transition in compressed sensing,
i.e., the $\ell_1$ minimization problem~\eqref{eq:l1-min-v2}
with Gaussian measurements.  In this setting,
Donoho~\cite{dono:06b} and Donoho \& Tanner~\cite{dota:09a}
obtained an asymptotic upper bound, equivalent to the upper bound in~\eqref{eq:l1-sdim},
from polytope angle calculations.
Stojnic~\cite{stojnic10} established the same asymptotic upper bound using
a precursor of Recipe~\ref{fig:desc-recipe};
see also~\cite[App.~C]{CRPW:12}. In addition, there are some heuristic arguments,
based on ideas from statistical physics, that lead to the same result,
cf.~\cite{DonMalMon:09} and~\cite[Sec.~17]{DMM:09-supp}.  Very recently,
Bayati et al.~\cite{BayLelMon:12} have shown
that, in the asymptotic setting, the compressed sensing problem undergoes
a phase transition at the location predicted by~\eqref{eq:l1-sdim}.
\end{remark}

\begin{figure}
  \centering
  \includegraphics[width=\textwidth]{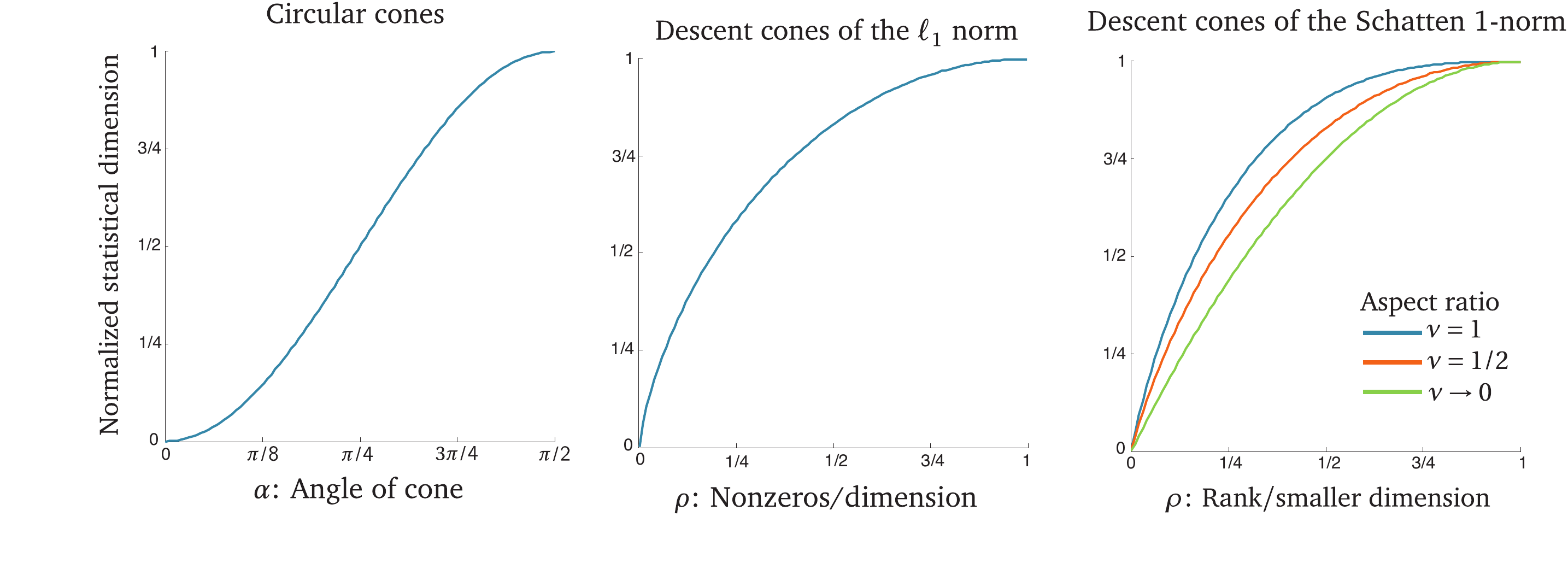}  
  \caption{\textbf{Asymptotic statistical dimension computations.}
  In each panel, we take the dimensional parameters to infinity.
  \textbf{[left] Circular cones.}  The plot shows the normalized
  statistical dimension $\delta(\cdot)/d$ of the circular cone $\Circ_d(\alpha)$.
  \textbf{[center] $\ell_1$ descent cones.}  The curve traces the 
  normalized statistical dimension $\delta(\cdot)/d$ of the descent
  cone of the $\ell_1$ norm on $\R^d$ at a vector with $\lfloor \rho d \rfloor$ nonzero
  entries.
  \textbf{[right] Schatten 1-norm descent cones.}  The normalized statistical
  dimension $\delta(\cdot)/(mn)$ of the descent cone of the $S_1$ norm
  on $\R^{m \times n}$ at a matrix with rank $\lfloor \rho m \rfloor$ for several fixed
  aspect ratios $\nu = m/n$.
  As the aspect ratio $\nu \to 0$, the limiting curve is $\rho \mapsto 2\rho - \rho^2$.}
  \label{fig:statdim-comput}
\end{figure}

\subsection{Descent cones of the Schatten 1-norm}
\label{sec:descent-cones-trace}

When we wish to solve an inverse problem whose unknown is a low-rank matrix, we often use the Schatten 1-norm $S_1$ as a regularizer, as in~\eqref{eqn:S1-min} and~\eqref{eq:l1+S1}.  The following result gives an asymptotically exact expression for the statistical dimension of the descent cone of the $S_1$ norm at a low-rank matrix.  Together with Theorems~\ref{thm:phase-trans-lin-inv} and~\ref{thm:demix}, this proposition allows us
to identify the exact location of the phase transition for $S_1$ regularized
inverse problems as the ambient dimension goes to infinity.

\begin{proposition}[Descent cones of the $S_1$ norm] \label{prop:S1-sdim}
Consider a sequence $\{\mtx{X}(r,m,n)\}$ of matrices where $\mtx{X}(r,m,n)$ has rank $r$ and dimension $m \times n$ with $m \leq n$.  Suppose that $r, m, n \to \infty$ with limiting ratios $r/m \to \rho \in (0,1)$ and $m/n \to \nu \in (0,1]$.  Then
\begin{equation} \label{eq:S1-sdim}
\frac{\sdim\big( \Desc\big( \snorm{\cdot}, \mtx{X}(r,m,n) \big) \big)}{mn}
	\to \psi(\rho, \nu).
\end{equation}
The function $\psi : [0,1]\times[0,1] \to [0,1]$ is defined as
\begin{equation} \label{eq:S1-curve}
\psi(\rho, \nu) := \inf_{\tau \geq 0} \ \left\{ \rho \nu + (1 - \rho \nu) \left[
	\rho \big(1 + \tau^2 \big) + (1 - \rho)
	\int_{a_- \vee \tau}^{a_+} (u-\tau)^2 \cdot \phi_y(u) \idiff{u} \right] \right\}.
\end{equation}
The quantity $y := (\nu - \rho \nu)/(1-\rho \nu)$, and the limits of the integral are $a_{\pm} := 1 \pm \sqrt{y}$.  The integral kernel $\phi_y$ is a probability density supported on $[a_-, a_+]$:
$$
\phi_y(u) := \frac{1}{\pi y u} \sqrt{(u^2 - a_-^2)(a_+^2 - u^2)}
\quad\text{for $u \in [a_-, a_+]$.}
$$
The optimal value of $\tau$ in~\eqref{eq:S1-curve} satisfies the stationary equation
\begin{equation} \label{eq:S1-stationary}
\int_{a_- \vee \tau}^{a_+} \left(\frac{u}{\tau} - 1\right) \cdot \phi_y(u) \idiff{u}
	= \frac{\rho}{1-\rho}.
\end{equation}
See Figure~\ref{fig:statdim-comput}[right] for a visualization of
the curve~\eqref{eq:S1-curve} as function of $\rho$ for several choices of $\nu$.
The operator $\vee$ returns the maximum of two numbers.  
\end{proposition}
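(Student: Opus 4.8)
The plan is to combine Recipe~\ref{fig:desc-recipe} (equivalently Proposition~\ref{prop:sdim-descent}) with the error bound of Theorem~\ref{thm:sharp-descent}, thereby reducing the entire statement to an asymptotic evaluation of the function $F(\tau) = \Expect\big[\dist^2(\mtx{G}, \tau\cdot\partial\snorm{\mtx{X}})\big]$ via random matrix theory. First I would record the subdifferential of the Schatten $1$-norm at a rank-$r$ matrix $\mtx{X}$ with compact singular value decomposition $\mtx{X} = \mtx{U}_1\mtx{\Sigma}_1\mtx{V}_1^{\transp}$, namely $\partial\snorm{\mtx{X}} = \big\{\mtx{U}_1\mtx{V}_1^{\transp} + \mtx{W} : \mtx{U}_1^{\transp}\mtx{W} = \mtx{0},\ \mtx{W}\mtx{V}_1 = \mtx{0},\ \smax(\mtx{W}) \le 1 \big\}$. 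Because this set, and hence the descent cone, depends only on the column space, the row space, and the rank of $\mtx{X}$, I am free to select the representative $\mtx{X}$ whose singular values are all equal; this freedom costs nothing and will be decisive for the error bound.

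Exploiting the rotational invariance built into $F$, I would place $\mtx{U}_1$ and $\mtx{V}_1$ on the first $r$ coordinate axes and partition the Gaussian matrix $\mtx{G}$ into blocks $\mtx{G}_{11}\in\R^{r\times r}$, $\mtx{G}_{12}$, $\mtx{G}_{21}$, and $\mtx{G}_{22}\in\R^{(m-r)\times(n-r)}$. The scaled subdifferential $\tau\cdot\partial\snorm{\mtx{X}}$ then fixes the $(1,1)$ block at $\tau\Id_r$, forces the two off-diagonal blocks to vanish, and allows the $(2,2)$ block to be any matrix in the scaled spectral-norm ball $\tau\mathcal{B}$ with $\mathcal{B} = \{\mtx{W} : \smax(\mtx{W})\le 1\}$. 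Since projection onto $\tau\mathcal{B}$ caps the singular values at $\tau$, the squared distance splits as
\[
\dist^2(\mtx{G}, \tau\cdot\partial\snorm{\mtx{X}}) = \fnorm{\mtx{G}_{11}-\tau\Id_r}^2 + \fnorm{\mtx{G}_{12}}^2 + \fnorm{\mtx{G}_{21}}^2 + \sum\nolimits_i \big(\sigma_i(\mtx{G}_{22})-\tau\big)_+^2 .
\]
The first three expectations are elementary Gaussian computations, equal to $r^2+\tau^2 r$, $r(n-r)$, and $(m-r)r$; after dividing by $mn$ and using $r/m\to\rho$, $m/n\to\nu$ they converge to explicit polynomials in $\rho$ and $\nu$, provided $\tau$ is rescaled as described next.

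The crux, and the step I expect to be the main obstacle, is the asymptotics of $\Expect\sum_i(\sigma_i(\mtx{G}_{22})-\tau)_+^2$. Writing $\tau = \sqrt{n-r}\,t$ turns the recipe parameter into the variable appearing in~\eqref{eq:S1-curve}. The empirical distribution of the scaled singular values $\sigma_i(\mtx{G}_{22})/\sqrt{n-r}$ converges to the Marchenko--Pastur--type law whose density is precisely $\phi_y$: the aspect ratio $(m-r)/(n-r)$ tends to $y = (\nu-\rho\nu)/(1-\rho\nu)$, and a change of variables from the Wishart eigenvalues to singular values produces $\phi_y$ on $[a_-,a_+]$ as written. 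I would then prove that the linear spectral statistic $\tfrac{1}{m-r}\sum_i\big(\sigma_i(\mtx{G}_{22})/\sqrt{n-r}-t\big)_+^2$ converges \emph{in expectation} to $\int_{a_-\vee t}^{a_+}(u-t)^2\phi_y(u)\idiff{u}$. Because the test function grows quadratically, weak convergence of the spectral distribution is not enough; I would supplement it with the Bai--Yin control of the largest singular value (which concentrates near $\sqrt{n-r}\,a_+$) together with a Gaussian concentration and uniform-integrability argument to push the expectation through the limit. Assembling the four pieces and normalizing by $mn$ reproduces exactly the bracketed expression in~\eqref{eq:S1-curve}, and differentiating in $t$---the boundary term vanishing because the integrand is zero at $u=t$---yields the stationary equation~\eqref{eq:S1-stationary}.

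Finally, to upgrade the Recipe inequality $\sdim(\Desc(\snorm{\cdot},\mtx{X}))\le\inf_{\tau\ge 0}F(\tau)$ to the claimed limit, I would invoke Theorem~\ref{thm:sharp-descent}. The largest Frobenius norm of a subdifferential element is $\sqrt{r+(m-r)} = \sqrt{m}$, attained by taking $\mtx{W}$ with all singular values equal to $1$, while the equal-singular-value representative gives $\snorm{\mtx{X}}/\fnorm{\mtx{X}} = \sqrt{r}$. Hence the error is at most $2\sqrt{m}/\sqrt{r} = O(1)$, which is negligible after division by $mn$. It follows that $\sdim(\Desc(\snorm{\cdot},\mtx{X}))/(mn)$ and $\inf_{\tau\ge 0}F(\tau)/(mn)$ share the common limit $\psi(\rho,\nu)$, completing the argument.
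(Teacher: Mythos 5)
Your proposal is correct and follows essentially the same route as the paper's proof: the Recipe bound from Proposition~\ref{prop:sdim-descent}, the block form of $\partial\snorm{\mtx{X}}$ with the distance splitting into elementary Gaussian terms plus the singular-value statistic $\sum_i \pos^2(\sigma_i(\mtx{G}_{22})-\tau)$ (the paper justifies this step via Hoffman--Wielandt), the rescaling $\tau\mapsto\sqrt{n-r}\,\tau$ and a Mar{\v c}enko--Pastur limit with extreme-singular-value control to pass to the limit in expectation, and finally the $O(\sqrt{m/r})$ error from Theorem~\ref{thm:sharp-descent} using the equal-singular-value representative. The paper packages the random-matrix step as Fact~\ref{fact:mp}, but the content matches your Bai--Yin plus uniform-integrability argument.
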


\noindent
See Appendix~\ref{sec:feas-cone-schatt} for a proof of Proposition~\ref{prop:S1-sdim}.
Appendix~\ref{sec:addend-comp-stat} contains details of the numerical calculation.

\begin{remark}[Prior work]
The literature contains several papers that, in effect, contain loose upper bounds for the statistical dimension of the descent cones of the Schatten 1-norm~\cite{RXH:11,OKH:11}.  We single out the work~\cite{OH:10} of Oymak \& Hassibi, which identifies an empirically sharp upper bound via a laborious argument.  The approach here is more in the spirit of the weak upper bound in~\cite[App.~C]{CRPW:12}, but our argument leads to the asymptotically correct estimate.
\end{remark}

\section{Conic integral geometry and the statistical dimension}
\label{sec:intr-conic-integr}

To prove that the statistical dimension controls the location of
phase transitions in random convex optimization problems,
we rely on methods from \term{conic integral geometry},
the field of mathematics concerned with geometric properties
of convex cones that remain invariant under rotations, reflections, and embeddings.
Here are some of the guiding questions in this area:
\begin{itemize}
\item \textit{
What is the probability that a random unit vector lies at most a specified distance from a fixed cone?}
\item \textit{
What is the probability that a randomly rotated cone shares a ray with a fixed cone?}
\end{itemize}
The theory of conic integral geometry offers beautiful and precise answers to these questions,
phrased in terms of a set of geometric invariants called
\term{conic intrinsic volumes}.

In Section~\ref{sec:conic-intr-vol}, we formally introduce the intrinsic volumes of a cone and
we compute the intrinsic volumes of some basic cones.  We state
the key facts about intrinsic volumes in Section~\ref{sec:int-vol-props}.
Sections~\ref{sec:steiner} and~\ref{sec:kinematic} contain more advanced
formulas from conic integral geometry, which are essential tools in our approach
to phase transitions.
In Section~\ref{sec:revisit-stat-dimen},
we establish the equivalence of the two characterizations of the statistical dimension, given in
Definition~\ref{def:sdim-int} and Proposition~\ref{def:sdim}.
Section~\ref{sec:sdim-canon} explains why the statistical dimension
is canonical.

The material in this section is adapted from the book~\cite[Sec.~6.5]{scwe:08} and 
the dissertation~\cite{am:thesis}.  The foundational research in this area is due to
Santal{\'o}~\cite[Part IV]{sant:76}.  Modern treatments depend on the work
of Glasauer~\cite{Gl,Gl:Summ}.  In these sources, the theory is presented in
terms of spherical geometry, rather than in terms of conical geometry.
As noted in~\cite{ambu:11c}, the two approaches are equivalent, but
the conic viewpoint provides simpler formulas and has other benefits that
are revealed by deeper structural investigations.

\subsection{Conic intrinsic volumes}
\label{sec:conic-intr-vol}

We begin with the definition of the intrinsic volumes of a
convex cone.  Recall that a cone is \emph{polyhedral} if it can be written as the intersection of a finite number of halfspaces.  Polyhedral cones are automatically closed and convex.

\begin{definition}[Intrinsic volumes: Polyhedral case] \label{def:intr-vol}
Let $C$ be a polyhedral cone in $\R^d$.  For each $k = 0, 1, 2, \dots, d$, the $k$th \term{(conic) intrinsic volume} $v_k(C)$ is given by
$$
v_k(C) := \Prob\big\{ \text{$\Proj_C(\vct{g})$ lies in the relative interior of a $k$-dimensional face of $C$} \big\}.
$$
As usual, $\vct{g}$ is a standard normal vector in $\R^d$.
See Figure~\ref{fig:conic-int-vols} for an illustration.
\end{definition}

\begin{figure}
\begin{center}
\includegraphics[height=2in]{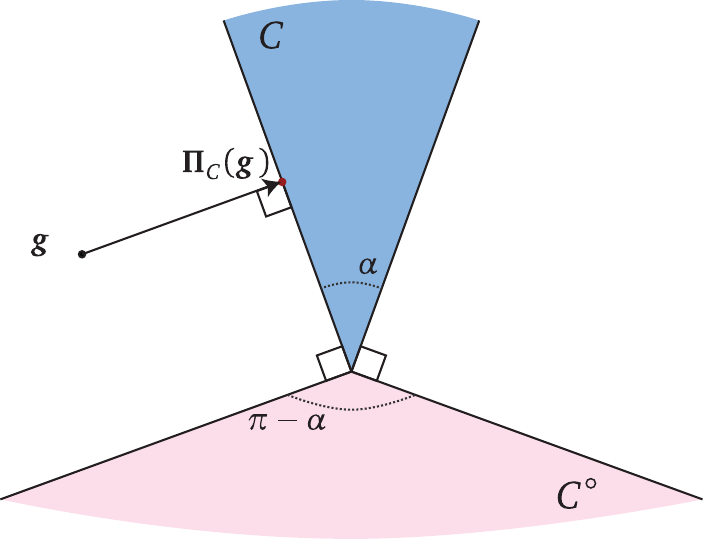}
\caption{\textbf{The intrinsic volumes of a convex cone in two dimensions.} The closed convex cone $C \subset \R^2$ has four faces: one two-dimensional face (dark shading), two one-dimensional faces (the boundary rays), and one zero-dimensional face (the origin). The projection $\Proj_C(\vct{g})$ of a standard normal vector $\vct{g}$ onto $K$ lies in the two-dimensional face when $\vct{g}$ is in the dark region, on a one-dimensional face when $\vct{g}$ is in the white region, and in the zero-dimensional face when $\vct{g}$ is in the light region.  Each intrinsic volume of the cone $C$ can be expressed in terms of the solid angle $\alpha$ measured in radians: $v_2(C) = \alpha / (2\pi)$ and $v_1(C) = 1/2$ and $v_0(C) = (\pi - \alpha)/(2\pi)$.}
\label{fig:conic-int-vols}
\end{center}
\end{figure}

For a polyhedral cone, it is clear that the sequence of intrinsic volumes
forms a probability distribution on $\{0, 1, 2, \dots, d\}$.
The definition also delivers insight about several fundamental examples.

\begin{example}[Linear subspaces] \label{ex:subspace}
Let $L_j$ be a $j$-dimensional subspace in $\R^d$.
Then $L_j$ is a polyhedral cone with precisely one
face, so the map $\Proj_{L_j}$ projects every point onto this $j$-dimensional face.  Thus,
$$
v_k(L_j) = \begin{cases}
	1, & k = j \\
	0, & k \neq j
\end{cases}
\quad\text{for $k = 0, 1, 2, \dots, d$.}
$$
Apply the intrinsic formulation~\eqref{eq:sdim-intr} of the statistical dimension
to confirm that $\delta(L_j) = j$. 
\end{example}

\begin{example}[The nonnegative orthant] \label{ex:orthant}
The nonnegative orthant $\R_+^d$ is a polyhedral cone.
The projection $\Proj_{\R_+^d}(\vct{g})$ lies in the relative interior of a $k$-dimensional
face of the orthant if and only if exactly $k$
coordinates of $\vct{g}$ are positive.  Each coordinate of $\vct{g}$
is positive with probability one-half and negative with probability one-half,
and the coordinates are independent.  Therefore, the intrinsic volumes of the orthant
are given by
$$
v_k\big(\R_+^d\big) = 2^{-d} {d \choose k}
\quad\text{for $k = 0, 1, 2, \dots, d$.}
$$
In other words, the intrinsic volumes coincide with the probability
density of a $\textsc{Binomial}(d, \half)$ random variable.
Apply the intrinsic formulation~\eqref{eq:sdim-intr} of the statistical dimension
to confirm that $\delta\big(\R_+^d\big) = \half d$.
\end{example}

Let us explain briefly how to extend the definition of conic intrinsic
volumes to the general case.
We can equip the family of closed convex cones in $\R^d$ with the
\term{conic Hausdorff metric}\footnote{The conic Hausdorff metric is obtained by identifying each closed convex cone $C \subset \R^d$
with the spherical convex set $C \cap \sphere{d-1}$.  Then we invoke the familiar construction of the
Hausdorff metric on the sphere.}
to form a compact metric space.
The polyhedral cones are dense in this metric space, and
the conic intrinsic volumes are continuous with respect to the metric.
Therefore, we may define the intrinsic volumes of a general closed convex cone
by approximation.

\begin{definition}[Intrinsic volumes: General case] \label{def:intr-vol-gen}
Let $C$ be a closed convex cone in $\R^d$, and let $\{ C_i : i = 1, 2, 3, \dots \} \subset \R^d$
be a sequence of polyhedral cones that converges to $C$ in the
conic Hausdorff metric.  For each $k = 0, 1, 2, \dots, d$, the $k$th
(conic) intrinsic volume $v_k(C)$ is given by the limit
$$
v_k(C) := \lim_{i \to \infty} v_k(C_i).
$$
It can be shown that this limit does not depend on the approximating sequence.
\end{definition}

\noindent
Let us warn the reader that the projection formula in Definition~\ref{def:intr-vol}
breaks down for a general closed convex cone because the limiting process
does not preserve facial structure.
To learn more about the construction behind Definition~\ref{def:intr-vol-gen},
see the book~\cite[Sec.~6.5]{scwe:08}, the thesis~\cite{am:thesis},
or the paper~\cite{McCTro:13}.
The spherical Steiner formula, Fact~\ref{fact:steiner},
provides an alternative geometric interpretation of the intrinsic volumes.

\subsection{Properties of conic intrinsic volumes}
\label{sec:int-vol-props}

The intrinsic volumes of a closed convex cone satisfy a number of important
relationships that we outline here.

\begin{fact}[Properties of intrinsic volumes]
Let $C$ be a closed convex cone in $\R^d$.  The
intrinsic volumes of the cone obey the following laws.
\begin{enumerate}
\item	\textbf{Distribution.}  The intrinsic volumes describe a probability
distribution on $\{0, 1, \dots, d\}$:
\begin{equation} \label{eq:probab-dist}
\sum_{k=0}^d v_k(C) = 1
\quad\text{and}\quad
v_k(C) \geq 0
\quad\text{for $k = 0, 1, 2, \dots, d$.}
\end{equation}

\item	\textbf{Polarity.}  The intrinsic volumes reverse under polarity:
\begin{equation} \label{eq:polarity-reverse}
v_k(C) = v_{d-k}(C^{\polar})
\quad\text{for $k = 0, 1, 2, \dots, d$.}
\end{equation}

\item	\textbf{Gauss--Bonnet formula.}  When $C$ is not a subspace,
\begin{equation} \label{eq:gauss-bonnet}
\sum_{\substack{k=0 \\ \text{$k$ even}}}^d v_k(C)
= \sum_{\substack{k=1 \\ \text{$k$ odd}}}^d v_k(C)
= \frac{1}{2}.
\end{equation}

\item	\textbf{Direct products.}  For each convex cone $K \subset \R^{d'}$,
\begin{equation} \label{eq:prod}
v_k(C \times K) = \sum_{i+j = k} v_i(C) \cdot v_j(K)
\quad\text{for $k = 0, 1, 2, \dots, d + d'$.}
\end{equation}
\end{enumerate}
\end{fact}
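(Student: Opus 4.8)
The plan is to verify all four properties for polyhedral cones, where Definition~\ref{def:intr-vol} has a concrete combinatorial meaning, and then to transfer them to arbitrary closed convex cones via the polyhedral approximation that defines $v_k$ in the general case; each assertion is a closed relation (an equality or a nonnegativity), so it persists in the limit. For the distribution law~\eqref{eq:probab-dist}, I would use that the relative interiors of the faces of a polyhedral cone $C$ partition $C$. Since $\Proj_C(\vct{g})$ always lies in $C$, and since a generic realization of $\vct{g}$ projects into the relative interior of exactly one face---the exceptional set, where the projection meets a lower-dimensional shared boundary, is a finite union of proper subspaces and hence Gaussian-null---the $d+1$ events indexed by face dimension are mutually exclusive and almost surely exhaustive. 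Summing their probabilities gives $\sum_k v_k(C) = 1$, and nonnegativity is automatic because each $v_k(C)$ is a probability.

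For the direct-product rule~\eqref{eq:prod}, I would combine the splitting of projection over Cartesian products~\eqref{eq:proj-split} with the fact that every face of $C \times K$ has the form $F \times G$ for faces $F$ of $C$ and $G$ of $K$, with $\dim(F \times G) = \dim F + \dim G$ and $\relint(F \times G) = \relint F \times \relint G$. Writing $\vct{g} = (\vct{g}_1, \vct{g}_2)$ with independent standard normal blocks, the event that $\Proj_{C \times K}(\vct{g})$ lands in $\relint(F \times G)$ factors into the independent events governing $C$ and $K$ separately; summing over all pairs with $\dim F + \dim G = k$ produces the convolution $v_k(C \times K) = \sum_{i+j=k} v_i(C)\, v_j(K)$. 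The polarity law~\eqref{eq:polarity-reverse} rests on the orthogonal decomposition~\eqref{eq:orthog-decomp}, namely $\vct{g} = \Proj_C(\vct{g}) + \Proj_{C^\polar}(\vct{g})$, together with the facial duality of a polyhedral cone: the faces of $C$ and those of $C^\polar$ are in an inclusion-reversing bijection that sends a $k$-dimensional face to a $(d-k)$-dimensional conjugate face. One checks that $\vct{g}$ projects into the relative interior of a $k$-face of $C$ exactly when the complementary component $\Proj_{C^\polar}(\vct{g})$ projects into the relative interior of the conjugate $(d-k)$-face of $C^\polar$; as these describe the same event, $v_k(C) = v_{d-k}(C^\polar)$.

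The real substance of the Fact, and the step I expect to be the main obstacle, is the Gauss--Bonnet relation~\eqref{eq:gauss-bonnet}. It is equivalent to the vanishing of the alternating sum $\sum_{k=0}^d (-1)^k v_k(C)$ whenever $C$ is not a subspace, and unlike the other three properties it is genuinely global: it cannot be read off the facial partition and instead encodes the conic form of the spherical Gauss--Bonnet theorem. The alternating sum behaves like a conic Euler characteristic; Example~\ref{ex:subspace} shows it equals $(-1)^j$ for a $j$-dimensional subspace, so the content is precisely that this characteristic collapses to $0$ as soon as $C$ is not a subspace. My plan would be to establish this vanishing either by invoking the spherical Gauss--Bonnet theorem for the spherically convex region $C \cap \sphere{d-1}$ directly from~\cite{scwe:08,am:thesis}, or, for a self-contained polyhedral proof, by deriving it from the Sommerville angle-sum relations for spherical polytopes. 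Once the alternating sum is known to vanish, combining it with the distribution law~\eqref{eq:probab-dist} forces the even- and odd-indexed intrinsic volumes each to sum to $\tfrac{1}{2}$, which is the assertion of~\eqref{eq:gauss-bonnet}.
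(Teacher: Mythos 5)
Your proposal is correct in outline, but it takes a genuinely different route from the paper, which offers no proof at all: the paper simply draws the distribution, polarity, and Gauss--Bonnet laws from \cite[Sec.~6.5]{scwe:08} and the product rule from \cite[Prop.~4.4.13]{am:thesis}. Your plan---verify everything for polyhedral cones using the face-partition reading of Definition~\ref{def:intr-vol}, then pass to the limit---is essentially how those references themselves proceed, and it buys a self-contained, probabilistic picture: the distribution law follows because the relative interiors of the faces of a polyhedral cone partition the cone \emph{exactly} (so your Gaussian-null exceptional set is unnecessary; no genericity argument is needed); the product rule follows from the splitting~\eqref{eq:proj-split}, the fact that faces of $C \times K$ are products of faces, and independence of the Gaussian blocks; polarity follows from the decomposition~\eqref{eq:orthog-decomp} together with the inclusion-reversing, dimension-complementing correspondence $F \mapsto C^\polar \cap F^\perp$ and the fact that $\Proj_C(\vct{g}) \in \relint F$ exactly when $\Proj_{C^\polar}(\vct{g}) \in \relint(C^\polar \cap F^\perp)$; and Gauss--Bonnet, as you correctly isolate, is the one genuinely global input, equivalent to $\sum_k (-1)^k v_k(C) = 0$ for non-subspace cones and obtainable from Sommerville's angle-sum relations or spherical Gauss--Bonnet. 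What the paper's citations buy instead is precisely the two points you compress into single clauses: first, that the polyhedral-approximation definition of $v_k$ is well posed, i.e.\ intrinsic volumes and the polar map are continuous under conic (Hausdorff) convergence, so that your ``closed relations persist in the limit'' step is legitimate---here you should also note that polyhedral approximants of a non-subspace cone are eventually non-subspaces, since limits of subspaces are subspaces, so the Gauss--Bonnet hypothesis survives the limit; and second, the spherical Gauss--Bonnet theorem itself. If you want the argument to be fully self-contained rather than partially cited, those are the two places where real work remains.
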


\noindent
The facts~\eqref{eq:probab-dist}, \eqref{eq:polarity-reverse},
and~\eqref{eq:gauss-bonnet} are drawn from~\cite[Sec.~6.5]{scwe:08}.
See~\cite[Prop.~4.4.13]{am:thesis} or~\cite[Cor.~5.1]{McCTro:13} for a
proof of the product rule~\eqref{eq:prod}.

\subsection{The spherical Steiner formula}
\label{sec:steiner}

We continue with a selection of more sophisticated results from conic integral
geometry.  These formulas provide detailed answers, expressed in terms of conic
intrinsic volumes, to the geometric questions posed at the beginning of
Section~\ref{sec:intr-conic-integr}.  To state the first result,
we introduce a family of geometric functions.

\begin{definition}[Tropic functions]
Let $L_k$ be a $k$-dimensional subspace of $\R^d$.  Define
\begin{equation} \label{eq:tropic}
I_k^d(\eps) := \Prob\big\{ \enormsm{\Proj_{L_k}(\vct{\theta})}^2 \geq \eps \big\}
\quad\text{for $\eps \in [0,1]$,}
\end{equation}
where $\vct{\theta}$ is uniformly distributed on the unit sphere in $\R^d$.
\end{definition}

\noindent
Basic geometric reasoning reveals that $I_k^d(\eps)$ is the proportion of
points on the sphere $\sphere{d-1}$ that lie within an angle 
$\arccos(\sqrt{\eps})$ of the subspace $L_k$.  Our terminology
derives from the approximate geographical fact that the tropics
lie within a fixed angle ($23^\circ \, 26'$) of the equator; the
usual term \term{regularized incomplete beta function} is longer and
less evocative.

The core fact in conic integral geometry is the
\term{spherical Steiner formula}~\cite{hot:39,weyl:39,herg:43,alle:48,sant:50},
which describes the fraction of points on the sphere that lie at most a
fixed angle from a closed convex cone.

\begin{fact}[Spherical Steiner formula] \label{fact:steiner}
Let $C$ be a closed convex cone in $\R^d$.  For each $\eps \in [0,1]$,
\begin{equation} \label{eq:steiner}
\Prob\big\{ \enormsm{\Proj_{C}(\vct{\theta})}^2 \geq \eps \big\}
	= \sum_{k=0}^d v_k(C) \, I_k^d(\eps).
\end{equation}
\end{fact}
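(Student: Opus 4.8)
The plan is to establish the identity first for polyhedral cones $C$ and then to recover the general case by the same polyhedral approximation that underlies Definition~\ref{def:intr-vol}. So fix a polyhedral cone $C \in \mC_d$. The engine of the proof is the decomposition of $\R^d$ induced by the normal fan of $C$: for each face $F$, the set of points whose metric projection lands in $\relint F$ is precisely the orthogonal sum $\relint F \oplus \NormC(C, \vct f)$, where $\vct f$ is any point of $\relint F$ and the normal cone $\NormC(C,\vct f) \subseteq (\lin F)^\perp$ is constant on $\relint F$. These regions tile $\R^d$ up to a set of measure zero, so I would first record the almost-sure partition of the random direction $\vct\theta$ according to the face hosting its projection.

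The key geometric observation is that on the region belonging to a $k$-dimensional face $F$ the projection onto $C$ collapses to a linear projection. Writing $\vct x = \vct f + \vct n$ with $\vct f \in \relint F$ and $\vct n \in \NormC(C,\vct f) \subseteq (\lin F)^\perp$, the orthogonality $\NormC(C,\vct f) \perp \lin F$ together with the projection characterization $\vct x - \Proj_C(\vct x) \in \NormC(C,\Proj_C(\vct x))$ forces $\Proj_C(\vct x) = \vct f = \Proj_{\lin F}(\vct x)$. Hence $\enormsm{\Proj_C(\vct\theta)}^2 = \enormsm{\Proj_{\lin F}(\vct\theta)}^2$ whenever $\Proj_C(\vct\theta) \in \relint F$, and the target event splits over faces as a sum of $\Prob\{\Proj_C(\vct\theta) \in \relint F \text{ and } \enormsm{\Proj_{\lin F}(\vct\theta)}^2 \geq \eps\}$.

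The crux is to factor each of these joint probabilities into a product. I would pass to a standard normal vector $\vct g = R\,\vct\theta$ (direction independent of magnitude, as in the proof of Proposition~\ref{prop:prop-sdim}), split it orthogonally via \eqref{eq:orthog-decomp} as $\vct g = \vct g_1 + \vct g_2$ with $\vct g_1 = \Proj_{\lin F}(\vct g)$ and $\vct g_2 = \Proj_{(\lin F)^\perp}(\vct g)$, and note that $\vct g_1, \vct g_2$ are independent standard normals on subspaces of dimension $k$ and $d-k$. The event $\{\Proj_C(\vct\theta) \in \relint F\} = \{\vct g_1 \in \relint F\} \cap \{\vct g_2 \in \NormC(C,\vct f)\}$ depends only on the \emph{directions} of $\vct g_1$ and $\vct g_2$, since $F$ and $\NormC(C,\vct f)$ are cones; whereas $\enormsm{\Proj_{\lin F}(\vct\theta)}^2 = \enorm{\vct g_1}^2 / (\enorm{\vct g_1}^2 + \enorm{\vct g_2}^2)$ depends only on the \emph{magnitudes}. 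Because each Gaussian factor has independent direction and magnitude and $\vct g_1 \perp \vct g_2$, the face event and the projection length are independent, giving the factorization $v_F \cdot I_k^d(\eps)$, where $v_F := \Prob\{\Proj_C(\vct\theta) \in \relint F\}$.

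Finally I would identify the magnitude factor: $\enorm{\vct g_1}^2 / (\enorm{\vct g_1}^2 + \enorm{\vct g_2}^2)$ carries the $\mathrm{Beta}(k/2,(d-k)/2)$ law, which is also the law of $\enormsm{\Proj_{L_k}(\vct\theta)}^2$ for any $k$-dimensional subspace $L_k$ (apply the same split to $C = L_k$), so its upper tail is exactly $I_k^d(\eps)$ by \eqref{eq:tropic}. Summing over faces, grouping by dimension, and recognizing $\sum_{\dim F = k} v_F = v_k(C)$ from Definition~\ref{def:intr-vol} (scale invariance lets me use $\vct\theta$ in place of $\vct g$ there) yields \eqref{eq:steiner}. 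I expect the main obstacle to lie in the two places where care is genuinely needed: justifying that the normal-fan regions partition $\R^d$ up to measure zero and that $\vct\theta$ almost surely avoids the lower-dimensional interfaces between faces, and controlling the final polyhedral-approximation limit so that both sides of \eqref{eq:steiner} converge together, the continuity of $\eps \mapsto I_k^d(\eps)$ and of the intrinsic volumes under Hausdorff approximation of cones being the delicate ingredients.
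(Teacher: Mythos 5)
Your proposal is correct, but there is no internal proof to compare it with: the paper states the spherical Steiner formula as a Fact and defers its proof to Schneider \& Weil~\cite[Thm.~6.5.1]{scwe:08}, remarking only that the formula ``can also be derived from the definition here.'' Your argument supplies exactly that derivation, and it is essentially the classical one: the partition of $\R^d$ into the sets $\relint F \oplus \NormC(C,\vct{f})$ over the faces $F$ of a polyhedral cone (this partition is in fact \emph{exact}, not merely up to measure zero, because every point of $C$ lies in the relative interior of exactly one face, so your caution there is unnecessary); the collapse of $\Proj_C$ to $\Proj_{\lin F}$ on each piece; and the factorization of the joint probability via the mutual independence of the directions and magnitudes of the two orthogonal Gaussian components $\vct{g}_1,\vct{g}_2$, with the magnitude ratio carrying the $\textsc{beta}(k/2,(d-k)/2)$ law that defines $I_k^d$ in~\eqref{eq:tropic}. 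That independence step is the real engine and you have it right. Two points deserve tightening in a full write-up. First, the degenerate faces $k=0$ and $k=d$, where $\vct{g}_1$ or $\vct{g}_2$ vanishes identically, should be checked separately; they are consistent with~\eqref{eq:steiner} but the direction/magnitude argument is vacuous for them. Second, in the limiting step for general cones, pointwise convergence $\Proj_{C_j}(\vct{x}) \to \Proj_C(\vct{x})$ yields convergence of the left-hand side of~\eqref{eq:steiner} only at continuity points of the limiting distribution, i.e., for all but countably many $\eps$; rather than proving absence of atoms, you can close this gap by observing that the left-hand side is monotone and left-continuous in $\eps$ while the limiting right-hand side is continuous on $(0,1)$ (each $I_k^d$ being continuous there), so agreement off a countable set upgrades to agreement for every $\eps$, the endpoints $\eps \in \{0,1\}$ being verified directly. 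With these two remarks your sketch is a complete proof of Fact~\ref{fact:steiner} from Definition~\ref{def:intr-vol}, which is precisely the route the paper gestures at but does not carry out.
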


\noindent
The spherical Steiner formula often serves as the \emph{definition} of
conic intrinsic volumes. The formula~\eqref{eq:steiner}
can also be derived from the definition here.
For a proof of Fact~\ref{fact:steiner} in the
spirit of this work, see~\cite[Thm.~6.5.1]{scwe:08} or~\cite[Prop.~3.4]{McCTro:13}.

\subsection{The conic kinematic formula}
\label{sec:kinematic}

Next, we present another major result from the theory of
conic integral geometry.  This statement involves partial sums of
the intrinsic volumes.

\begin{definition}[Tail functionals]
Let $C$ be a closed convex cone in $\R^d$.  For each $k = 0, 1, 2, \dots, d$, the
$k$th \term{tail functional} is given by
\begin{align} \label{eq:tail-fn}
t_k(C) &:= v_k(C) + v_{k+1}(C) + \dots = \sum_{\quad j=k\quad}^d v_j(C).
\intertext{The $k$th \term{half-tail functional} is defined as}
\label{eq:half-tail-fn}
h_k(C) &:= v_k(C) + v_{k+2}(C) + \dots = \sum_{\substack{j=k \\ \text{$j-k$ even}}}^d v_j(C).
\end{align}
\end{definition}

\noindent
The two types of tail functionals are related through the following interlacing
inequality.

\begin{proposition}[Interlacing] \label{prop:interlacing}
For each closed convex cone $C$ in $\R^d$ that is not a linear subspace,
$$
2 \, h_k(C) \geq t_k(C) \geq 2 \, h_{k+1}(C)
\quad\text{for each $k = 0, 1, 2, \dots, d-1$.}
$$
\end{proposition}

\noindent
We establish Proposition~\ref{prop:interlacing} in Appendix~\ref{sec:proof-backgr-results}.

With this notation, we can present a modern formulation of the
\term{conic kinematic formula}, which provides an exact expression
for the probability that a randomly oriented convex cone has a
nontrivial intersection with a fixed convex cone.

\begin{fact}[Conic kinematic formula] \label{fact:kinematic}
Let $C$ and $K$ be closed convex cones in $\R^d$, and assume that $C$ is not a subspace.  Then
\begin{align} \label{eq:kinematic}
\Prob\big\{ C \cap \mtx{Q} K \neq \{ \vct{0} \} \big\}
	&= 2 \, h_{d+1}(C \times K).
\intertext{For a linear subspace $L_{d-m}$ in $\R^d$ with dimension $d-m$, this expression
reduces to the \term{Crofton formula}}
  \label{eq:Crofton}
  \Prob\big\{ C \cap \mtx{Q} L_{d-m} \neq \{ \vct{0} \} \big\}
  &= 2 \, h_{m+1}(C).
\end{align}
\end{fact}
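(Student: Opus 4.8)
The engine I would build is the \emph{kinematic formula for conic intrinsic volumes}: for closed convex cones $C, K \in \mC_d$ and a random orthogonal $\mtx{Q}$,
$$\Expect\big[ v_k(C \cap \mtx{Q} K) \big] = \sum_{i + j = d + k} v_i(C)\, v_j(K) = v_{d+k}(C \times K) \quad\text{for each } k \geq 1,$$
where the second equality is just the product rule~\eqref{eq:prod}. Everything in the statement then follows by purely combinatorial manipulations, so the proof splits into one piece of genuine integral geometry (establishing the display) and one piece of bookkeeping (extracting the hitting probability). To prove the display I would start from the spherical Steiner formula, Fact~\ref{fact:steiner}, written for the random cone $C \cap \mtx{Q}K$, and integrate over $\mtx{Q}$; matching the coefficients of the tropic functions $I_k^d(\eps)$ on both sides isolates the expected intrinsic volumes. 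Equivalently, one can invoke a spherical Hadwiger-type characterization of rotation-invariant valuations, which forces the kinematic integral to be a bilinear convolution of the $v_i$ with universal constants, and then pin the constants down on subspaces using Example~\ref{ex:subspace}.

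Granting the engine, I would extract the hitting probability as follows. Because $C$ is not a subspace, the intersection $C \cap \mtx{Q}K$ is, almost surely, either $\{\vct{0}\}$ or a cone that is not a subspace; this is the only place the subspace hypothesis enters, and it is a measure-zero genericity argument about the lineality space of $C$. Introduce the conic Euler characteristic $\chi(D) := \sum_{k}(-1)^k v_k(D)$. On $\{\vct{0}\}$ we have $\chi = v_0 = 1$, while on any cone that is not a subspace the Gauss--Bonnet relation~\eqref{eq:gauss-bonnet} gives $\chi = 0$. Hence $\Expect[\chi(C \cap \mtx{Q}K)] = \Prob\{C \cap \mtx{Q}K = \{\vct{0}\}\}$. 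Now substitute the engine for the indices $k \geq 1$ and handle the $k = 0$ term through the complementary identity $\Expect[1 - v_0(C \cap \mtx{Q}K)] = \Expect[\sum_{k \geq 1} v_k(C \cap \mtx{Q}K)] = t_{d+1}(C \times K)$. A short parity computation collapses the alternating sum: the even-minus-odd cancellation leaves exactly $\Prob\{C \cap \mtx{Q}K = \{\vct{0}\}\} = 1 - 2 h_{d+1}(C \times K)$, which is the claimed kinematic formula after taking complements.

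The Crofton formula is then the special case $K = L_{d-m}$. Since a subspace has $v_j(L_{d-m}) = 1$ exactly when $j = d-m$ and vanishes otherwise (Example~\ref{ex:subspace}), the product rule~\eqref{eq:prod} gives $v_\ell(C \times L_{d-m}) = v_{\ell - (d-m)}(C)$, and re-indexing the half-tail~\eqref{eq:half-tail-fn} yields $h_{d+1}(C \times L_{d-m}) = h_{m+1}(C)$. Substituting into the kinematic formula produces $2 h_{m+1}(C)$, as stated.

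The main obstacle is the engine, i.e. the kinematic formula for the intrinsic volumes themselves. The two elementary reductions above (the Euler-characteristic dichotomy and the parity collapse) are robust, but the bilinear convolution identity is the deep integral-geometric input, and it is exactly the content that the spherical Steiner formula does not immediately hand over: the integration over $\mathsf{O}_d$ must be carried out carefully, and one must verify that the only invariant, additive, continuous functionals in play are the intrinsic volumes, so that the kinematic integral is forced into convolution form. I also expect the genericity lemma (that a nontrivial section of a non-subspace cone is almost surely not a subspace) to need a careful, if routine, argument, since it is what converts the expected Euler characteristic into the exact hitting probability.
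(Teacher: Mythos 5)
The paper itself offers no proof of this Fact: it is quoted from Schneider \& Weil, with the proof deferred to \cite[p.~261]{scwe:08}. Your reduction is essentially the derivation carried out in that reference, and it is sound: the identity $\mathbb{1}\{D \neq \{\vct{0}\}\} = 2\sum_{k \geq 1,\ k\ \mathrm{odd}} v_k(D)$, valid whenever $D$ is either $\{\vct{0}\}$ or a closed convex cone that is not a subspace (by the Gauss--Bonnet relation~\eqref{eq:gauss-bonnet}), combined with the kinematic formula for intrinsic volumes, gives $\Prob\{C \cap \mtx{Q}K \neq \{\vct{0}\}\} = 2\sum_{k\geq 1,\ \mathrm{odd}} v_{d+k}(C \times K) = 2\,h_{d+1}(C\times K)$, and your re-indexing for the Crofton case via Example~\ref{ex:subspace} and the product rule~\eqref{eq:prod} is correct. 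You also correctly locate where the hypothesis on $C$ enters: without it the intersection can be a nontrivial subspace with positive probability, its Euler characteristic is then $\pm 1$ rather than $0$, and the formula genuinely fails (e.g., for two subspaces whose dimensions sum to more than $d$, the left side is $1$ while $2\,h_{d+1}$ is $0$ or $2$).

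The genuine gap lies in your two proposed proofs of the ``engine,'' $\Expect\big[v_k(C\cap \mtx{Q}K)\big] = v_{d+k}(C\times K)$ for $k \geq 1$, which carries the entire integral-geometric content of the statement. (i) Integrating the Steiner formula over $\mtx{Q}$ and ``matching coefficients of the tropic functions'' is circular: it only expresses $\sum_k \Expect\big[v_k(C\cap\mtx{Q}K)\big]\, I_k^d(\eps)$ in terms of the double average $\Expect_{\mtx{Q},\vct{\theta}}\,\Prob\big\{\enormsm{\Proj_{C\cap\mtx{Q}K}(\vct{\theta})}^2 \geq \eps\big\}$, and evaluating that average in terms of $v_i(C)$ and $v_j(K)$ separately \emph{is} the kinematic formula you are trying to prove; the Steiner formula supplies no mechanism for decoupling the two cones. (ii) The Hadwiger-type step---``the only invariant, additive, continuous functionals in play are the intrinsic volumes''---is precisely the spherical Hadwiger conjecture, which, as the paper itself notes in the discussion surrounding Proposition~\ref{prop:stat-dim}, remains open for $d\geq 4$, so it cannot be invoked. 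The rigorous substitutes are either a direct computation for polyhedral cones (enumerating the faces of $C \cap \mtx{Q}K$ and evaluating the rotation integral by general-position counting), followed by approximation of general cones, or the characterization theorem for the localized invariants, the conic \emph{curvature measures} \cite[Thm.~6.5.4]{scwe:08}, for which a Hadwiger-type axiomatization is actually available; the latter is the route taken in \cite{scwe:08}. Replacing your step (ii) with either of these would close the gap; as written, the proposal rests on an open conjecture at its key step.
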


\noindent
The compact notation here disguises the equivalence
between~\eqref{eq:kinematic} and Fact~\ref{fact:kinem-intro}.
To verify this point, expand the half-tail functional
using~\eqref{eq:half-tail-fn} and apply the direct
product rule~\eqref{eq:sdim-direct-product}.
See~\cite[p.~261]{scwe:08} for a proof of Fact~\ref{fact:kinematic}.

\begin{remark}[Extended kinematic formula]
By induction, the kinematic formula generalizes to a family $\{ C, K_1, \dots, K_r \}$ of closed convex cones in $\R^d$.  If $C$ is not a subspace, then
\begin{equation} \label{eq:kinem-extend}
\Prob\big\{ C \cap \mtx{Q}_1 K_1 \cap \dots \cap \mtx{Q}_r K_r \neq \{ \vct{0} \} \big\}
	= 2 \, h_{rd + 1}(C \times K_1 \times \dots \times K_r).
\end{equation}
Each matrix $\mtx{Q}_i \in \R^{n \times n}$ is a random orthogonal basis, chosen independently
from the others.  This result can be used to analyze demixing problems with more than
two constituents. See the follow-up work~\cite{McCTro:13a} for details.
\end{remark}

\subsection{Characterizations of the statistical dimension}
\label{sec:revisit-stat-dimen}

In Section~\ref{sec:conic-phase}, we presented two different ways
of thinking about the statistical dimension.  Definition~\ref{def:sdim-int},
offers an intrinsic characterization in terms of the conic intrinsic volumes,
and it links the statistical dimension to the theory of conic integral geometry.
Proposition~\ref{def:sdim} provides a metric characterization that leads to
powerful tools for calculating the statistical dimension for specific cones.
The following result applies the spherical Steiner formula to verify that the
two formulations coincide.

\begin{proposition}[Statistical dimension: Equivalent characterizations] \label{prop:sdim-int-vols}
For each closed convex cone $C$ in $\R^d$, 
\begin{equation*} \label{eq:sdim=E[v]}
\delta(C) = \Expect \big[ \norm{\smash{\Proj_{C}(\vct{g})}}^2 \big]
	= \sum_{k=0}^d k \, v_k(C).
\end{equation*}
\end{proposition}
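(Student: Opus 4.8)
The plan is to connect the two probabilistic descriptions of the random variable $\enormsm{\Proj_C(\vct{\theta})}^2$ that are already available. On one hand, the spherical formulation \eqref{eq:sdim-circ-expect} records $\delta(C) = d\,\Expect\big[\enormsm{\Proj_C(\vct{\theta})}^2\big]$. On the other hand, the spherical Steiner formula \eqref{eq:steiner} describes the entire upper tail of this same variable in terms of the intrinsic volumes and the tropic functions. The bridge between an expectation and its tail is the layer-cake identity, and since the orthogonal decomposition forces $\enormsm{\Proj_C(\vct{\theta})}^2 \in [0,1]$, it takes the clean form $\Expect\big[\enormsm{\Proj_C(\vct{\theta})}^2\big] = \int_0^1 \Prob\big\{ \enormsm{\Proj_C(\vct{\theta})}^2 \geq \eps \big\} \idiff{\eps}$.

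First I would substitute the Steiner formula \eqref{eq:steiner} into this integral and interchange the finite sum with the integral, obtaining $\Expect\big[\enormsm{\Proj_C(\vct{\theta})}^2\big] = \sum_{k=0}^d v_k(C) \int_0^1 I_k^d(\eps) \idiff{\eps}$. This reduces the problem to evaluating the single tropic integral $\int_0^1 I_k^d(\eps) \idiff{\eps}$, which carries all the geometric content.

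Next I would observe that this integral is itself a layer-cake expression. By the definition \eqref{eq:tropic}, $I_k^d(\eps) = \Prob\big\{ \enormsm{\Proj_{L_k}(\vct{\theta})}^2 \geq \eps \big\}$ for a fixed $k$-dimensional subspace $L_k$, so the same layer-cake identity yields $\int_0^1 I_k^d(\eps) \idiff{\eps} = \Expect\big[\enormsm{\Proj_{L_k}(\vct{\theta})}^2\big]$. Applying the spherical formulation \eqref{eq:sdim-circ-expect} to $L_k$ itself, together with the fact that $\delta(L_k) = \dim(L_k) = k$ (Proposition~\ref{prop:prop-sdim}), gives $\int_0^1 I_k^d(\eps) \idiff{\eps} = \delta(L_k)/d = k/d$.

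Assembling the pieces yields $\Expect\big[\enormsm{\Proj_C(\vct{\theta})}^2\big] = \sum_{k=0}^d v_k(C)\cdot (k/d)$; multiplying through by $d$ and dropping the vanishing $k=0$ term gives $\delta(C) = \sum_{k=1}^d k\,v_k(C)$, as claimed. I do not expect a serious obstacle here: the only points deserving care are the justification of the sum–integral interchange (immediate, since the $v_k(C)$ are nonnegative and the sum is finite) and the recognition that the tropic integral collapses, via the subspace case of the statistical dimension, to precisely the weight $k/d$. In effect the identity asserts that the statistical dimension of a general cone is the $v_k$-weighted average of the dimensions $k$ of the subspaces whose tropic functions assemble its spherical Steiner formula.
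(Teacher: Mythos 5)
Your proposal is correct and follows essentially the same route as the paper's own proof: spherical formulation, layer-cake identity to pass to tail probabilities, substitution of the Steiner formula, and recognition that each tropic integral collapses (via a second application of the layer-cake identity and the subspace case) to the weight $k/d$. The only cosmetic differences are that you make the sum--integral interchange and the vanishing $k=0$ term explicit, which the paper handles implicitly.
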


\begin{proof}
The Gaussian formulation~\eqref{eq:sdim-gauss} and the spherical formulation~\eqref{eq:sdim-circ-expect} of statistical dimension coincide, so
$$
\Expect \big[ \enormsm{ \Proj_C(\vct{g}) }^2 \big]
	= d \, \Expect \big[ \enormsm{ \Proj_C(\vct{\theta}) }^2 \big]
	= d \int_0^1 \Prob\big\{ \enormsm{ \Proj_C(\vct{\theta}) }^2 \geq \eps \big\} \idiff{\eps}.
$$
We have used integration by parts to express the expectation as an integral of tail probabilities.
The Steiner formula~\eqref{eq:steiner} and the definition~\eqref{eq:tropic} of the tropic function
allow us to write the probability as a sum:
$$
\Expect \big[ \enormsm{ \Proj_C(\vct{g}) }^2 \big]
 	= d \sum_{k=0}^d v_k(C)
	\int_0^1 \Prob\big\{ \enormsm{ \Proj_{L_k}(\vct{\theta}) }^2 \geq \eps \big\} \idiff{\eps}
	= \sum_{k=0}^d v_k(C) \, \big( d \, \Expect\big[ \enormsm{ \Proj_{L_k}(\vct{\theta}) }^2 \big] \big)
	= \sum_{k=0}^d k \, v_k(C).
$$
where $L_k$ is an arbitrary $k$-dimensional subspace.  The last identity follows from elementary
geometric reasoning.
\end{proof}

\subsection{The statistical dimension is canonical}
\label{sec:sdim-canon}

The intrinsic characterization of the statistical dimension, Definition~\ref{def:sdim-int}, has a significant consequence
from the point of view of integral geometry.  We summarize the ideas for the benefit of geometers; other readers may prefer to skip this material.

Let $\mathscr{C}_d$ denote the family of closed convex cones in $\R^d$, equipped with the conic Hausdorff metric to form a compact metric space.
A geometric functional $v : \mathscr{C}_d \to \R$ is called a \term{continuous, rotation-invariant valuation} if it satisfies the properties

\begin{enumerate} \setlength{\itemsep}{1mm}
\item	\textbf{Valuation I.}  For the trivial cone, $v(\{\vct{0}\}) = 0$.
\item	\textbf{Valuation II.}	If $C, K \in \mathscr{C}_d$ and $C \cup K \in \mathscr{C}_d$, then $v(C \cup K) + v(C \cap K) = v(C) + v(K)$.
\item	\textbf{Rotation invariance.}	For each $C \in \mathscr{C}_d$,	we have $v(\mtx{U} C) = v(C)$ for each orthogonal matrix $\mtx{U} \in \R^{d \times d}$.
\item	\textbf{Continuity.} 	If $C_i \to C$ in $\mathscr{C}_d$, then $\lim_{i \to \infty} v(C_i) \to v(C)$.
\end{enumerate}

\noindent
Continuous, rotation-invariant valuations are natural geometric measures defined on convex cones.  Many of the valuations that arise in conic geometry are also \term{localizable}, which is a subtle technical property~\cite[p.~254]{scwe:08}.

In particular, each intrinsic volume $v_k$ is a continuous, rotation-invariant, localizable valuation on the set of closed convex cones.  It follows from the intrinsic formulation~\eqref{eq:sdim-intr} that the statistical dimension inherits these technical properties.
It is known that each continuous, rotation-invariant, and localizable valuation on $\mathscr{C}_d$ is determined by the values it takes on linear subspaces; see \cite[Satz 4.2.2]{Gl},~\cite[Thm.~5]{Gl:Summ}, or~\cite[Thm.~6.5.4]{scwe:08}.  Therefore,
\begin{quotation} \it
The statistical dimension $\delta$ is the \emph{unique} continuous, rotation-invariant, localizable valuation on the set $\mathscr{C}_d$ of closed convex cones that satisfies $\delta(L) = \dim(L)$ for each subspace $L$.
\end{quotation}
In other words, the statistical dimension canonically extends the linear dimension to the class of closed convex cones.

The long-standing \term{spherical Hadwiger conjecture} states that the condition of localizability is unnecessary here.  More precisely, the conjecture posits that every continuous and rotation-invariant valuation on the set of closed convex cones can be expressed as a linear combination of the conic intrinsic volumes.  For a discussion of the spherical Hadwiger conjecture, see the works~\cite[p.~976]{McMu:93},~\cite[Sec.~11.5]{KR:97}, and~\cite[p.~263]{scwe:08}.  The conjecture currently stands open for $d \geq 4$.

\section{Intrinsic volumes concentrate at the statistical dimension} \label{sec:conc-intr-volum}

The main technical result in this paper describes a new property of conic intrinsic volumes: The intrinsic volumes of a closed convex cone concentrate near the statistical dimension of the cone on a scale determined by the statistical dimension.
This phenomenon is depicted in Figure~\ref{fig:int-vol-conc}.

\begin{theorem}[Concentration of intrinsic volumes] \label{thm:main-conc}
Let $C$ be a closed convex cone. Define the transition width
$$
\omega(C) := \sqrt{\delta(C) \wedge \delta(C^\polar)},
$$
and introduce the function
\begin{equation} \label{eq:volume-tail}
p_C( \lambda ) := 4 \exp\left( \frac{-\lambda^2/8}{\omega^2(C) + \lambda} \right)
\quad\text{for $\lambda \geq 0$.}
\end{equation}
Then
\begin{align}
k_- \leq \delta(C) - \lambda + 1
&\quad\Longrightarrow\quad
t_{k_-}(C) \geq 1 - p_C(\lambda); \label{eq:tk-} \\
k_+ \geq \delta(C) + \lambda \phantom{\ +1}
&\quad\Longrightarrow\quad
t_{k_+}(C) \leq \phantom{1 - \ } p_C(\lambda). \label{eq:tk+}
\end{align}
The tail functional $t_k$ is defined in~\eqref{eq:tail-fn}.
The operator $\wedge$ returns the minimum of two numbers.
\end{theorem}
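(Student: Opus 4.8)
The plan is to read the intrinsic volumes as a probability distribution and prove a Bernstein-type concentration inequality for it. Introduce the random variable $V$ on $\{0,1,\dots,d\}$ with $\Prob\{V=k\} = v_k(C)$. Proposition~\ref{prop:sdim-int-vols} identifies its mean as $\Expect[V] = \delta(C)$, and by definition $t_k(C) = \Prob\{V \geq k\}$, so the two implications \eqref{eq:tk+} and \eqref{eq:tk-} are exactly the upper and lower tail bounds for $V$ about its mean. The bridge to analysis is a Gaussian refinement of the spherical Steiner formula (Fact~\ref{fact:steiner}): combined with the beta--gamma calculus it shows that $\enormsq{\Proj_C(\vct g)}$ has the law of the mixture $\sum_k v_k(C)\,\chi^2_k$, that is, $\enormsq{\Proj_C(\vct g)} \stackrel{d}{=} \chi^2_V$, a chi-square whose random number of degrees of freedom is $V$; simultaneously $\enormsq{\Proj_{C^\polar}(\vct g)}$ has the law $\chi^2_{d-V}$. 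This is consistent with $\Expect[\enormsq{\Proj_C(\vct g)}] = \delta(C)$ from \eqref{eq:sdim-gauss}.

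From this representation I would pass to moment generating functions. For $0 \leq s < \tfrac12$ one has $\Expect[\exp(s\,\chi^2_k)] = (1-2s)^{-k/2}$, so the mixture gives $\Expect[\exp(s\,\enormsq{\Proj_C(\vct g)})] = \Expect[(1-2s)^{-V/2}]$. Setting $1-2s = \econst^{-2\zeta}$, i.e. $s = \tfrac12(1-\econst^{-2\zeta})$, converts this into the clean identity $\Expect[\econst^{\zeta V}] = \Expect[\exp(s\,\enormsq{\Proj_C(\vct g)})]$, valid for every real $\zeta$ (positive $\zeta$ corresponds to $s \in (0,\tfrac12)$, negative $\zeta$ to $s < 0$). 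The point is that the combinatorial MGF of $V$ becomes an analytic MGF of a projection length, which can be controlled by Gaussian concentration without circularity.

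The independent input is that $\vct g \mapsto \enorm{\Proj_C(\vct g)}$ is $1$-Lipschitz (projection onto a convex set is nonexpansive), while its square has gradient $\nabla\enormsq{\Proj_C(\vct g)} = 2\,\Proj_C(\vct g)$ by \eqref{eq:grad-proj}, so that $\enormsq{\nabla \enormsq{\Proj_C(\vct g)}} = 4\enormsq{\Proj_C(\vct g)}$. Feeding this self-bounding relation into the Gaussian logarithmic Sobolev inequality (a Herbst-type argument) yields a sub-gamma bound for $\enormsq{\Proj_C(\vct g)}$ with variance proxy of order $\delta(C) = \Expect[\enormsq{\Proj_C(\vct g)}]$ and a bounded scale; the same estimate applied to $C^\polar$ controls $\enormsq{\Proj_{C^\polar}(\vct g)}$ at proxy $\delta(C^\polar)$. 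To reach the sharp proxy $\omega^2(C) = \delta(C)\wedge\delta(C^\polar)$, I would run the estimate along whichever of the two is smaller: the upper tail of $V$ is bounded either directly (through the $s>0$ MGF of $\enormsq{\Proj_C(\vct g)}$, proxy $\delta(C)$) or, using $V = d-(d-V)$ together with $\delta(C)+\delta(C^\polar)=d$ from \eqref{eq:sdim-polarity-sum}, through the $s<0$ MGF of $\enormsq{\Proj_{C^\polar}(\vct g)}$ (proxy $\delta(C^\polar)$), and symmetrically for the lower tail. The polarity reversal $v_k(C)=v_{d-k}(C^\polar)$ from \eqref{eq:polarity-reverse} moreover shows that \eqref{eq:tk-} for $C$ is identical to \eqref{eq:tk+} for $C^\polar$, so only one tail needs a direct proof.

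Finally I would apply Markov's inequality in Chernoff form $\Prob\{V \geq k_+\} \leq \econst^{-\zeta k_+}\Expect[\econst^{\zeta V}]$ for $\zeta > 0$ (with the mirror bound for the lower tail), insert the sub-gamma bound on $\Expect[\econst^{\zeta V}]$, use $k_+ - \delta(C) \geq \lambda$, and optimize over $\zeta$. A standard sub-gamma optimization produces a bound of Bernstein shape $\exp\big(-\lambda^2/(c(\omega^2(C)+\lambda))\big)$, and tracking the constants (together with the two-route step) delivers the stated $p_C(\lambda) = 4\exp\big(\tfrac{-\lambda^2/8}{\omega^2(C)+\lambda}\big)$. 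The main obstacle is the third step: proving the sub-gamma MGF bound for the squared projection length with constants tight enough to reproduce the advertised exponent, and arranging the polarity argument so that the variance proxy collapses to the minimum $\delta(C)\wedge\delta(C^\polar)$ rather than merely $\delta(C)$. Establishing the chi-square mixture representation rigorously (the master Steiner formula) is the other substantial ingredient, which I would relegate to the appendix.
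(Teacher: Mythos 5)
Your proposal is correct in outline, but it follows a genuinely different route from the paper's proof. The paper stays on the sphere: it combines the spherical Steiner formula (Fact~\ref{fact:steiner}) with an elementary beta-distribution estimate (Lemma~\ref{lem:median-mean}, $I_k^d(k/d)\geq 0.3$) to dominate the tail functional by a projection probability, $t_{k_+}(C)\leq 4\,\Prob\bigl\{d\,\enormsm{\Proj_C(\vct{\theta})}^2\geq\delta(C)+\lambda\bigr\}$ --- this lossy conversion is precisely where the prefactor $4$ originates --- and then proves concentration of $\enormsm{\Proj_C(\vct{\theta})}^2$ by passing to Gaussian variables via Jensen's inequality and running the log-Sobolev/Herbst argument with the self-bounding identity $\enormsm{\nabla\enormsm{\Proj_C(\vct{g})}^2}^2=4\,\enormsm{\Proj_C(\vct{g})}^2$; the lower tail and the minimum in $\omega^2(C)$ come from polarity, exactly as you arrange them. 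Your route instead makes the combinatorial-to-analytic dictionary exact: the mixture law $\enormsm{\Proj_C(\vct{g})}^2\overset{d}{=}\chi^2_V$ converts the moment generating function of $V$ into a Gaussian one with no loss, after which Chernoff applies directly to $V$. That representation is rigorously available from the paper's own tools, at least in the marginal form you actually need: by Fact~\ref{fact:steiner} the law of $\enormsm{\Proj_C(\vct{\theta})}^2$ is the $v_k(C)$-mixture of $\mathrm{Beta}(k/2,(d-k)/2)$ laws, and multiplying by an independent $\chi^2_d$ (beta--gamma algebra) yields $\chi^2_V$. As for what each approach buys: the paper needs only the classical Steiner formula plus a crude tropics bound, whereas yours requires establishing the extra distributional identity but eliminates the factor-of-$4$ step; carried out carefully (e.g., using $s=\tfrac12(1-\econst^{-2\zeta})\leq\zeta$ to compare exponents), your argument reproduces the stated exponent \emph{without} the prefactor $4$, hence a strictly sharper bound, and it is in fact the strategy adopted in the authors' later work on the master Steiner formula. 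Both proofs share the same analytic core --- the exponential-moment sublemma from the Gaussian logarithmic Sobolev inequality and the polarity/two-route trick that collapses the variance proxy to $\delta(C)\wedge\delta(C^\polar)$ --- so your plan is sound provided you prove that sublemma and the mixture representation with the care you indicate.
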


See Section~\ref{sec:conc-discuss} for a discussion of Theorem~\ref{thm:main-conc}.
In Section~\ref{sec:heur-proof} and~\ref{sec:proof-main-conc},
we summarize the intuition behind the proof,
and we follow up with the technical details.  Later, Sections~\ref{sec:kinem-cons-conc}--\ref{sec:appl-permutahedron} highlight applications in conic geometry, optimization theory, and signal processing.  The follow-up work~\cite{McCTro:13} contains some improvements on Theorem~\ref{thm:main-conc}.

\subsection{Discussion} \label{sec:conc-discuss}

Theorem~\ref{thm:main-conc} states that the sequence $\{ t_k(C) : k = 0, 1, 2, \dots, d \}$ of tail functionals drops from one to zero near the statistical dimension $\delta(C)$, and the transition occurs over a range of $O(\omega(C))$ indices.  Owing to the fact~\eqref{eq:probab-dist} that the intrinsic volumes form a probability distribution, we must conclude that the intrinsic volumes $v_k(C)$ are all negligible in size, except for those whose index $k$ is close to the statistical dimension $\delta(C)$.  We learn that the intrinsic volumes of a convex cone $C$ with statistical dimension $\delta(C)$ are qualitatively similar with the intrinsic volumes of a subspace with dimension about $\delta(C)$.

Theorem~\ref{thm:main-conc} contains additional information about the rate at which the tail functionals of a cone transit from one to zero.  To extract this information, it helps to note the weaker inequality
\begin{equation} \label{eq:pC-bd}
p_C(\lambda) \leq \begin{cases}
	4 \, \econst^{-\lambda^2/(16\,\omega^2(C))}, & 0 \leq \lambda \leq \omega^2(C) \\
	4 \, \econst^{-\lambda/16}, & \phantom{0 \leq \; } \lambda \geq \omega^2(C).
\end{cases}
\end{equation}
We see that~\eqref{eq:tk-} and~\eqref{eq:tk+} are vacuous until $\lambda \approx 4 \, \omega(C)$.  As $\lambda$ increases, the function $p_C(\lambda)$ decays like the tail of a Gaussian random variable with standard deviation $\leq 2\sqrt{2} \, \omega(C)$.  When $\lambda$ reaches $\omega^2(C)$, the decay slows to match the tail of an exponential random variable with mean $\leq 16$.  In particular, the behavior of the tail functionals depends on the intrinsic properties of the cone, rather than the ambient dimension.

\subsection{Heuristic proof of Theorem~\ref{thm:main-conc}}
\label{sec:heur-proof}

The basic ideas behind the argument are easy to summarize, but the details demand some effort.  Let $C$ be a closed convex cone in $\R^d$.  Recall the spherical Steiner formula~\eqref{eq:steiner}:
\begin{equation} \label{eq:conc-proof-heur}
\Prob\big\{ \enormsm{\Proj_C(\vct{\theta})}^2 \geq \eps \big\}
	= \sum_{k=0}^d v_k(C) \, I_k^d(\eps),
\end{equation}
where $\vct{\theta}$ is uniformly distributed on the sphere $\sphere{d-1}$
and the tropic function $I_k^d$ is defined in~\eqref{eq:tropic}.

Concentration of measure on the sphere implies that the random variable $\enormsm{\Proj_C(\vct{\theta})}^2$ is typically very close to its expected value $\delta(C)/d$, determined by~\eqref{eq:sdim-circ-expect}.  Thus, the left-hand side of~\eqref{eq:conc-proof-heur} is very close to one when $\eps d < \delta(C)$ and very close to zero when $\eps d > \delta(C)$.

As for the right-hand side of~\eqref{eq:conc-proof-heur}, recall that the tropic function $I_k^d(\eps)$ is the proportion of points on the sphere within a distance of $\sqrt{1 - \eps}$ from a fixed $k$-dimensional subspace.  Once again, concentration of measure ensures that $I_k^d(\eps)$ is close to zero when $k < \eps d$ and close to one when $k > \eps d$.  Therefore, the sum on the right-hand side of~\eqref{eq:conc-proof-heur} is approximately equal to the tail functional $t_{\eps d}(C)$.

Combining these two observations, we conclude that the sequence $\{ t_k(C) : k = 0, 1, 2, \dots, d \}$ of tail functionals makes a sharp transition from one to zero when $k \approx \delta(C)$.  It remains to make this reasoning rigorous and to determine the range of $k$ over which the transition takes place.

\subsection{Proof of Theorem~\ref{thm:main-conc}} \label{sec:proof-main-conc}

Let $C$ be a closed convex cone in $\R^d$, and define $\eps := k_+/d$.  The first part of the argument requires a technical lemma that we prove in Appendix~\ref{sec:proof-backgr-results}.  This result quantifies how much of the sphere in $\R^d$ lies within an angle $\arccos(k/d)$ of a $k$-dimensional subspace.

\begin{lemma}[The tropics] \label{lem:median-mean}
For all integers $0 \leq k \leq d$, the tropic function $I_{k}^d(k/d) \geq 0.3$.
\end{lemma}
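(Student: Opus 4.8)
The plan is to reduce the geometric quantity to a ratio of independent chi-square variables and then bound the resulting tail probability from below. Fix a $k$-dimensional coordinate subspace $L_k$ and write $\vct{\theta} = \vct{g}/\enormsm{\vct{g}}$ with $\vct{g}$ standard normal. Then $\enormsm{\Proj_{L_k}(\vct{g})}^2 = U/(U+V)$, where $U := \sum_{i \le k} g_i^2$ and $V := \sum_{i > k} g_i^2$ are independent, $U \sim \chi^2_k$ and $V \sim \chi^2_{d-k}$. Rearranging $U/(U+V) \ge k/d$, the definition~\eqref{eq:tropic} gives
\begin{equation*}
I_k^d(k/d) = \Prob\big\{ (d-k)\, U \ge k\, V \big\} = \Prob\big\{ \tfrac1k U \ge \tfrac{1}{d-k} V \big\}.
\end{equation*}
Equivalently $I_k^d(k/d) = \Prob\{B \ge \Expect B\}$ for a $\mathrm{Beta}(k/2,(d-k)/2)$ variable $B$ of mean $k/d$, so the claim is that a Beta law places at least mass $0.3$ above its own mean. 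The endpoints $k=0$ and $k=d$ are trivial (probability one), and the complementary-projection identity $\enormsm{\Proj_{L_k}(\vct{\theta})}^2 + \enormsm{\Proj_{L_k^\perp}(\vct{\theta})}^2 = 1$ yields the symmetry $I_k^d(k/d) = 1 - I_{d-k}^d((d-k)/d)$, which I will use to trade the regime $k > d/2$ for the regime $k < d/2$.

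The heart of the matter is the most skewed case, $k$ small, where the infimum of $I_k^d(k/d)$ is approached. Here I would condition on $V$ and apply Jensen's inequality to the survival function $\bar G_k(x) := \Prob\{\chi^2_k \ge x\}$:
\begin{equation*}
I_k^d(k/d) = \Expect_V\Big[ \bar G_k\big( \tfrac{k}{d-k} V \big) \Big] \ge \bar G_k\Big( \tfrac{k}{d-k}\, \Expect V \Big) = \bar G_k(k),
\end{equation*}
using $\Expect V = d-k$. The point is that $\bar G_k$ is \emph{convex} precisely when its negative derivative $f_k$ (the $\chi^2_k$ density) is non-increasing, which holds for $k \in \{1,2\}$; a short check shows that the tangent to $\bar G_k$ at $x=k$ remains a global minorant for the next couple of values of $k$ as well, so the Jensen step still applies. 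For $k = 1$ this produces the sharp bound $I_1^d(1/d) \ge \bar G_1(1) = \Prob\{\chi^2_1 \ge 1\} = 2\Phi(-1) \approx 0.3173 > 0.3$, and for $k = 2$ it gives $\bar G_2(2) = \econst^{-1} > 0.3$. Since $\bar G_k(k) = \Prob\{\chi^2_k \ge k\}$ increases toward $\tfrac12$, these extremal values are exactly what pin the constant in the lemma near $0.3$.

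It remains to cover the large and intermediate ranges, where there is ample slack. For $k \ge d/2$ I would invoke the symmetry above with the classical fact that the index-$(d-k)$ probability, corresponding to a right-skewed Beta law (first parameter $\le$ second), puts at most half its mass above its mean — equivalently, the relevant $\chi^2$ median lies below its mean — so $I_{d-k}^d((d-k)/d) \le \tfrac12$ and hence $I_k^d(k/d) \ge \tfrac12$. For the bulk $5 \le k \le d/2$, both $\tfrac1k U$ and $\tfrac{1}{d-k} V$ are well concentrated, and $I_k^d(k/d) = \Prob\{ S \ge 0\}$ for the mean-zero statistic $S := \tfrac1k U - \tfrac{1}{d-k} V$, a weighted sum of $d$ independent centered $\chi^2_1$ terms; a Berry--Esseen estimate shows $|I_k^d(k/d) - \tfrac12|$ is small, and in fact the exact ratio values in this range never fall below $\Prob\{\chi^2_5 \ge 5\} \approx 0.416$, so any reasonable central-limit bound clears $0.3$ with room. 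One checks that the three ranges overlap without a gap: $k \le 4$ is handled by Jensen, $5 \le k \le d/2$ by concentration (both degrees of freedom are then at least $5$), and $k > d/2$ by symmetry.

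The main obstacle is therefore \emph{sharpness in the skewed regime}, not breadth of case analysis: because the true infimum $2\Phi(-1)$ exceeds $0.3$ by only about $0.017$, a crude second-moment or product bound for small $k$ is doomed to lose too much (such bounds stall near $\tfrac14$), and the convexity/Jensen observation is what makes the extremal value $k=1$ come out on the correct side of $0.3$. The remaining work is to verify the tangent-minorant property over the short list of small $k$ for which $\bar G_k$ fails to be globally convex, and to fix an explicit Berry--Esseen constant that certifies the bulk bound from $k=5$ onward; both are routine once the sharp extremal estimate is in hand.
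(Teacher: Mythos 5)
Your reduction (Beta law, mass above its own mean), the endpoint cases, and the treatment of $k \geq d/2$ via symmetry and the mean--median inequality coincide with the paper's proof. Your Jensen/tangent-line device for $k \leq 4$ is correct and is a nice touch the paper does not have: for $k \in \{1,2\}$ the survival function $\bar{G}_k(x) := \Prob\{\chi^2_k \geq x\}$ is convex, and for $k \in \{3,4\}$ the tangent at $x = k$ is a global minorant precisely because $\bar{G}_k(k) + k f_k(k) \leq 1$ (the values are $\approx 0.85$ and $\approx 0.95$, with $f_k$ the $\chi^2_k$ density), so you capture the sharp extremal value $2\Phi(-1) \approx 0.317$ at $k=1$ directly.

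The genuine gap is the bulk regime $5 \leq k < d/2$, and contrary to your closing remark it is not routine. Two problems. (i) The assertion that ``the exact ratio values in this range never fall below $\Prob\{\chi^2_5 \geq 5\} \approx 0.416$'' is precisely what must be proved: it amounts to the claim $I_k^d(k/d) \geq \bar{G}_k(k)$, i.e.\ that the finite-$d$ value dominates its $d \to \infty$ limit, and the tool that would deliver this (Jensen) is exactly what breaks at $k = 5$, since $\bar{G}_5(5) + 5 f_5(5) \approx 1.03 > 1$ means the tangent at the mean is no longer a global minorant. (ii) The Berry--Esseen route cannot certify the bound for moderate $k$. Writing $I_k^d(k/d) = \Prob\{S \geq 0\}$ with $S = \tfrac{1}{k} U - \tfrac{1}{d-k} V$, the Berry--Esseen ratio is $c_3 \bigl( k^{-2} + (d-k)^{-2} \bigr) \big/ \bigl( 2/k + 2/(d-k) \bigr)^{3/2}$, where $c_3 = \Expect \bigl[ \abs{g^2-1}^3 \bigr] \geq \Expect \bigl[ (g^2-1)^3 \bigr] = 8$. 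In the worst case $d \gg k$ this is at least $8 \cdot 2^{-3/2} k^{-1/2} \approx 2.8\, k^{-1/2}$; even multiplied by the best known constant ($\approx 0.56$ in the non-i.i.d.\ case), the resulting error bound at $k=5$ is about $0.7$, far above the $0.2$ slack you need to conclude $\tfrac12 - \mathrm{err} \geq 0.3$, and it does not fall below $0.2$ until $k$ is on the order of $70$. So an infinite family of pairs (all $d > 2k$ with $5 \leq k \lesssim 70$) is left uncovered. This is exactly the regime where the paper invokes a different, certified tool: the Abramowitz--Stegun normal approximation to the Beta distribution function, with stated error at most $5 \cdot 10^{-3}$ whenever $d \geq 12$ and $0 < k < \tfrac12 d$, supplemented by finite enumeration for $d \leq 12$. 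Your argument becomes a complete (and partly sharper) proof if you replace the Berry--Esseen step by such a certified Beta approximation, or if you prove monotonicity of $I_k^d(k/d)$ in $d$, which together with your limiting value $\bar{G}_k(k)$ would close the bulk.
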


\noindent 
We begin by expressing the tail functional $t_{k_+}(C)$ in terms of the probability that a spherical variable lies near the cone $C$.
\begin{align}
t_{k_+}(C) &= \sum_{k = k_+}^d v_k(C) \big[ I_k^d(\eps) + \big(1 - I_{k}^d(\eps)\big) \big] \notag \\
	&\leq \sum_{k=0}^d v_k(C) \, I_k^d(\eps)
	+ \big(1 - I_{k_+}^d(\eps) \big) \sum_{k=k_+}^d v_k(C) \notag \\
	&\leq \Prob\big\{ \enormsm{\Proj_C(\vct{\theta})}^2 \geq \eps \big\} + 0.7 \, t_{k_+}(C).
	\label{eq:tail-tail}
\end{align}
The first identity is the definition~\eqref{eq:tail-fn} of the tail function.
To reach the second inequality, we inspect the definition~\eqref{eq:tropic} to
see that $I_{k}^d(\eps)$ is a decreasing function of $k$ when the other parameters are fixed.  In the third line, we invoke the Steiner formula~\eqref{eq:steiner} to rewrite the first sum.  The bound for the second sum follows and the definition~\eqref{eq:tail-fn} of the tail functional and from Lemma~\ref{lem:median-mean} seeing that $\eps = k_+/d$.  

Rearranging~\eqref{eq:tail-tail}, we obtain the bound
\begin{equation} \label{eq:tail-fn-bd}
t_{k_+}(C) \leq 4 \, \Prob\big\{ d \, \enormsm{\Proj_C(\vct{\theta})}^2 \geq d \eps \big\}
\leq 4 \, \Prob\big\{ d \, \enorm{\Proj_C(\vct{\theta})}^2 \geq \delta(C) + \lambda \big\}.
\end{equation}
The last inequality depends on the fact that $\eps = k_+/d$ and the definition~\eqref{eq:tk+} of $k_+$.  In other words, the tail functional is dominated by the probability that a random point on the sphere is close to the cone.

To estimate the probability in~\eqref{eq:tail-fn-bd}, we need a tail bound for the squared norm of the projection of a spherical variable onto a cone.  This result is encapsulated in the following lemma.  The approach is more or less standard, so we defer the details to Appendix~\ref{sec:proof-backgr-results}.

\begin{lemma}[Tail bound for conic projections] \label{lem:concentration}
For each closed convex cone $C$ in $\R^d$,
\begin{equation} \label{eq:conc-claim}
\Prob\left\{ d \, \enormsm{ \Proj_C(\vct{\theta}) }^2 \geq
	\delta(C) + \lambda \right\} 
	\leq 
	\exp\left( \frac{-\lambda^2/8}{\omega^2(C) + \lambda} \right)
	\quad\text{for $\lambda \geq 0$.}
\end{equation}
\end{lemma}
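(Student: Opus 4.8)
The plan is to recognize the target estimate as a sub-gamma tail bound and to produce it from a self-bounding property of the projection. Write $\tilde h(\vct{\theta}) := \enormsm{\Proj_C(\vct{\theta})}^2$ and $X := d\,\tilde h$, so that $\Expect[X] = \delta(C)$ by the spherical formulation~\eqref{eq:sdim-circ-expect}. The right-hand side of~\eqref{eq:conc-claim} is precisely the sub-gamma bound $\exp(-\lambda^2/(2(v+b\lambda)))$ with variance factor $v = 4\,\omega^2(C)$ and scale $b = 4$. Thus it suffices to control the logarithmic moment generating function of $X$ and then optimize the resulting Chernoff bound; the entire difficulty sits in the variance factor, which must see $\omega^2(C) = \delta(C) \wedge \delta(C^\polar)$ rather than the ambient dimension $d$.

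First I would record the self-bounding identity. The squared projection is differentiable with $\nabla \enormsm{\Proj_C(\vct{x})}^2 = 2\,\Proj_C(\vct{x})$ by~\eqref{eq:grad-proj}. Restricting to the sphere, the tangential gradient is $\nabla_{\mathsf{S}}\tilde h(\vct{\theta}) = 2(\Id - \vct{\theta}\vct{\theta}^\transp)\Proj_C(\vct{\theta})$, whence
$$
\enormsm{\nabla_{\mathsf{S}} \tilde h(\vct{\theta})}^2 = 4\,\enormsm{\Proj_C(\vct{\theta})}^2 - 4\,\ip{\Proj_C(\vct{\theta})}{\vct{\theta}}^2 = 4\,\tilde h(1 - \tilde h),
$$
where the last equality uses the orthogonal decomposition~\eqref{eq:orthog-decomp} to evaluate $\ip{\Proj_C(\vct{\theta})}{\vct{\theta}} = \enormsm{\Proj_C(\vct{\theta})}^2 = \tilde h$. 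Since $\enorm{\Proj_C(\vct{\theta})} \le \enorm{\vct{\theta}} = 1$, we have $0 \le \tilde h \le 1$, so this yields the two self-bounding inequalities $\enormsm{\nabla_{\mathsf{S}} X}^2 \le 4d\,X$ and $\enormsm{\nabla_{\mathsf{S}} X}^2 \le 4d\,(d - X)$.

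Next I would feed the first inequality into a Herbst argument on the sphere. The uniform measure on $\sphere{d-1}$ satisfies a logarithmic Sobolev inequality with constant of order $1/(d-1)$, and the factor $d$ in $\enormsm{\nabla_{\mathsf{S}} X}^2 \le 4d\,X$ cancels against it, producing a differential inequality for $K(\xi) := \log\Expect[\econst^{\xi(X - \delta(C))}]$ of sub-gamma type, with variance factor proportional to $\Expect[X] = \delta(C)$ and a bounded scale. Optimizing the Chernoff bound gives the upper tail with $\delta(C)$ in place of $\omega^2(C)$. To obtain the minimum, I would exploit the complementarity $\tilde h = 1 - \enormsm{\Proj_{C^\polar}(\vct{\theta})}^2$ coming from~\eqref{eq:orthog-decomp} and~\eqref{eq:pythag}: the event in~\eqref{eq:conc-claim} coincides with a \emph{lower} deviation of $d\,\enormsm{\Proj_{C^\polar}(\vct{\theta})}^2$ below its mean $\delta(C^\polar)$, by the totality law~\eqref{eq:sdim-polarity-sum}. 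Running the same self-bounding argument for the cone $C^\polar$, whose lower tail is subgaussian with variance factor proportional to $\delta(C^\polar)$, bounds the identical probability with $\delta(C^\polar)$ in the denominator. Taking the better of the two estimates replaces the variance factor by $\delta(C) \wedge \delta(C^\polar) = \omega^2(C)$, which is the asserted bound.

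I expect the main obstacle to be bookkeeping the constants so that the optimization lands exactly on the numerator $-\lambda^2/8$ and the denominator $\omega^2(C) + \lambda$; this requires the sharp log-Sobolev (or an equivalent concentration) constant for $\sphere{d-1}$ and careful handling of the $d/(d-1)$ factor. A clean alternative that avoids the spherical log-Sobolev inequality is to pass to the Gaussian model via $\vct{g} = R\,\vct{\theta}$, where $R = \enorm{\vct{g}}$ is a chi variable independent of $\vct{\theta}$: since $\enorm{\Proj_C(\vct{x})} = \dist(\vct{x}, C^\polar)$ is convex and $1$-Lipschitz, Gaussian concentration controls $\enorm{\Proj_C(\vct{g})}$ directly, and one transfers the estimate to the sphere after accounting for the concentration of $R$ about $\sqrt{d}$, while the self-bounding identity continues to supply the variance-adaptive $\omega^2(C)$ scale.
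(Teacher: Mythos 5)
Your overall architecture is the right one, and it is in fact very close to the paper's own proof: the paper also runs a Laplace-transform/Chernoff argument, also exploits the self-bounding gradient identity (in the Gaussian form, $\nabla\, \enormsm{\Proj_C(\vct{g})}^2 = 2\, \Proj_C(\vct{g})$ from~\eqref{eq:grad-proj}, so the gradient has squared norm $4\,\enormsm{\Proj_C(\vct{g})}^2$), also obtains the moment generating function bound from a logarithmic-Sobolev-type exponential moment inequality (a Herbst argument, Fact~\ref{fact:exp-moment}), and closes with exactly your polarity step --- rewriting the upper tail for $C$ as a lower deviation for $C^\polar$ via~\eqref{eq:pythag} and~\eqref{eq:sdim-polarity-sum} --- to replace $\delta(C)$ by $\omega^2(C)$. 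The gap sits in the one place you flagged yourself: neither of your two proposed bridges between the sphere and the Gaussian delivers the stated constants for every $d$ and every cone. The spherical log-Sobolev route carries a constant of order $2/(d-2)$ (Bakry--\'Emery on $\sphere{d-1}$), so the factor $d$ in $\enormsm{\nabla_{\mathsf{S}} X}^2 \leq 4dX$ cancels only up to $d/(d-2)$; this is harmless asymptotically but ruins the uniform constants $-\lambda^2/8$ and $\omega^2(C)+\lambda$ in low dimension, and these constants are not cosmetic --- they feed directly into $a_\eta$ in Theorem~\ref{thm:kinematic} via Theorem~\ref{thm:main-conc}. Your fallback route transfers in the lossy direction: expressing the spherical event as a Gaussian event requires conditioning on $R = \enorm{\vct{g}}$, and the threshold is then perturbed by roughly $\delta(C)\,\abs{R^2/d - 1} \approx \delta(C)/\sqrt{d}$, which in the critical regime $\lambda \asymp \omega(C)$ with $\delta(C)$ close to $d$ (so that $\omega^2(C) = \delta(C^\polar)$ is small) is as large as or larger than $\lambda$ itself; a union bound over $R$ then either degrades the constants or fails outright.

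The missing ingredient is that the transfer should happen at the level of the moment generating function, in the lossless direction (sphere to Gauss), using nonnegative homogeneity of the projection and Jensen's inequality: since $R$ and $\vct{\theta}$ are independent, $\Expect [R^2] = d$, and $\Proj_C(R\vct{\theta}) = R\,\Proj_C(\vct{\theta})$, convexity of the exponential gives, for $\xi > 0$,
$$
\Expect\, \econst^{\xi d\, \enormsm{\Proj_C(\vct{\theta})}^2}
 = \Expect\, \econst^{\xi (\Expect [R^2])\, \enormsm{\Proj_C(\vct{\theta})}^2}
 \leq \Expect\, \econst^{\xi R^2 \enormsm{\Proj_C(\vct{\theta})}^2}
 = \Expect\, \econst^{\xi \enormsm{\Proj_C(\vct{g})}^2}.
$$
After this one line you may work entirely in the Gaussian model, where the log-Sobolev constant is dimension-free: your self-bounding identity combined with Fact~\ref{fact:exp-moment} yields $\Expect \econst^{\xi(\enormsm{\Proj_C(\vct{g})}^2 - \delta(C))} \leq \exp\bigl(2\xi^2\delta(C)/(1-4\xi)\bigr)$ for $0 < \xi < \tfrac14$, and the Chernoff optimization at $\xi = \lambda/(4\delta(C) + 4\lambda)$ lands exactly on $\exp\bigl(-(\lambda^2/8)/(\delta(C)+\lambda)\bigr)$. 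Your polarity observation (run the same argument for $C^\polar$, using the negative-parameter branch of the moment bound for the lower tail) then produces the minimum $\omega^2(C) = \delta(C) \wedge \delta(C^\polar)$. So the strategy is sound; the Jensen bridge is the single step you need to make it airtight with the advertised constants.
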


\noindent
Introducing~\eqref{eq:conc-claim} into~\eqref{eq:tail-fn-bd}, we reach the upper bound~\eqref{eq:tk+} on the tail functional.

To develop the lower bound~\eqref{eq:tk-} on the tail functional $t_{k_-}(C)$,
we use a polarity argument.  Note that
\begin{equation} \label{eq:tk-bd1}
t_{k_-}(C) = \sum_{k=k_-}^d v_k(C) = \sum_{k=0}^{d-k_-} v_k(C^\polar) = 1 - t_{d-k_- + 1}(C^\polar).
\end{equation}
The first identity is the definition~\eqref{eq:tail-fn} of the tail functional $t_{k_-}(C)$.  The second
relation holds because of the fact~\eqref{eq:polarity-reverse} that polarity reverses intrinsic volumes, and the last part relies on~\eqref{eq:tail-fn} and the property~\eqref{eq:probab-dist} that the intrinsic volumes sum to one.  Owing to the complementarity law~\eqref{eq:sdim-polarity-sum} and the definition~\eqref{eq:tk-} of $k_-$,
$$
d - k_- + 1 = \delta(C^\polar) + \delta(C) - k_- + 1
	\geq \delta(C^\polar) + \lambda.
$$
Therefore, we may apply~\eqref{eq:tk+} to obtain an upper bound on the tail functional $t_{d-k_-+1}(C^\polar)$.  Substitute this bound into~\eqref{eq:tk-bd1} to establish the lower bound on the tail functional $t_{k_-}(C)$ stated in~\eqref{eq:tk-}.

\section{Approximate kinematic bounds}
\label{sec:kinem-cons-conc}

We are now prepared to establish an approximate version of the conic kinematic formula, expressed in terms of the statistical dimension.  Most of the applied results in this paper ultimately depend on this theorem.  The proof combines the exact kinematic formula~\eqref{eq:kinematic} with the concentration of intrinsic volumes, guaranteed by Theorem~\ref{thm:main-conc}.

\begin{theorem}[Approximate kinematics] \label{thm:approx-kinem}
Assume that $\lambda \geq 0$.
Let $C$ be a convex cone in $\R^d$.
For a $(d-m)$-dimensional subspace $L_{d-m}$, it holds that
\begin{equation}
\begin{split} \label{eq:subspace-kinem}
m \geq \delta(C) + \lambda
&\quad\Longrightarrow\quad
\Prob\big\{ C \cap \mtx{Q}L_{d-m} \neq \{ \vct{0} \} \big\}
	\leq \phantom{1 - \;\, }p_{C}(\lambda); \\
m \leq \delta(C) - \lambda
&\quad\Longrightarrow\quad
\Prob\big\{ C \cap \mtx{Q}L_{d-m} \neq \{ \vct{0} \} \big\} \geq 1 - p_{C}(\lambda).
\end{split}
\end{equation}
For any convex cone $K$ in $\R^d$, it holds that
\begin{equation} \label{eq:cone-kinem}
\begin{split}
\delta(C) + \delta(K) \leq d - 2\lambda
&\quad\Longrightarrow\quad
\Prob\big\{ C \cap \mtx{Q}K \neq \{ \vct{0} \} \big\}
	\leq \phantom{1 - (} p_{C}(\lambda) + p_{K}(\lambda); \\
\delta(C) + \delta(K) \geq d + 2\lambda 
&\quad\Longrightarrow\quad
\Prob\big\{ C \cap \mtx{Q}K \neq \{ \vct{0} \} \big\}
	\geq 1 - (p_{C}(\lambda) + p_{K}(\lambda)).
\end{split}
\end{equation}
The functions $p_C$ and $p_K$ are defined by the expression~\eqref{eq:volume-tail}.
\end{theorem}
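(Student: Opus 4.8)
The plan is to reduce both halves of the statement to the concentration bounds of Theorem~\ref{thm:main-conc} by feeding the exact conic kinematic formula (Fact~\ref{fact:kinematic}) through the interlacing inequality of Proposition~\ref{prop:interlacing}. The common mechanism is this: the relevant intersection probability is an exact \emph{half}-tail functional $2h_{\ell+1}(\cdot)$, and interlacing sandwiches it between two ordinary tail functionals, $t_{\ell+1}\le 2h_{\ell+1}\le t_\ell$; on these ordinary tail functionals I can then invoke \eqref{eq:tk+} and \eqref{eq:tk-} directly.

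For the subspace case \eqref{eq:subspace-kinem} I would start from the Crofton formula \eqref{eq:Crofton}, which gives $\Prob\{C\cap \mtx{Q}L_{d-m}\neq\{\vct{0}\}\}=2h_{m+1}(C)$ (legitimate since $C$ is not a subspace). Interlacing yields $t_{m+1}(C)\le 2h_{m+1}(C)\le t_m(C)$. In the regime $m\ge\delta(C)+\lambda$ I apply \eqref{eq:tk+} with $k_+=m$ to the upper sandwich, obtaining $2h_{m+1}(C)\le t_m(C)\le p_C(\lambda)$. In the regime $m\le\delta(C)-\lambda$ I apply \eqref{eq:tk-} with $k_-=m+1$ (note $m+1\le\delta(C)-\lambda+1$) to the lower sandwich, obtaining $2h_{m+1}(C)\ge t_{m+1}(C)\ge 1-p_C(\lambda)$. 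This disposes of the subspace statement cleanly.

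For the cone case \eqref{eq:cone-kinem} I would set $D:=C\times K\in\mC_{2d}$, use the kinematic formula \eqref{eq:kinematic} to write the probability as $2h_{d+1}(D)$, and exploit the product rule \eqref{eq:prod} to read the tail functionals of $D$ as convolution tails. Concretely, if $I$ and $J$ are \emph{independent} integer random variables whose laws are the intrinsic-volume distributions of $C$ and $K$, then \eqref{eq:prod} gives $t_d(D)=\Prob\{I+J\ge d\}$ and $t_{d+1}(D)=\Prob\{I+J\ge d+1\}$, while $\Expect I=\delta(C)$ and $\Expect J=\delta(K)$ by Proposition~\ref{prop:sdim-int-vols} together with \eqref{eq:sdim-direct-product}. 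Interlacing again sandwiches $2h_{d+1}(D)$ between these two. The decisive step is a union bound on the convolution tail: when $\delta(C)+\delta(K)\le d-2\lambda$, the event $\{I+J\ge d\}$ is contained in $\{I\ge\delta(C)+\lambda\}\cup\{J\ge\delta(K)+\lambda\}$ (otherwise $I+J<\delta(C)+\delta(K)+2\lambda\le d$), so $\Prob\{I+J\ge d\}\le\Prob\{I\ge\delta(C)+\lambda\}+\Prob\{J\ge\delta(K)+\lambda\}\le p_C(\lambda)+p_K(\lambda)$ by two applications of \eqref{eq:tk+}. Symmetrically, when $\delta(C)+\delta(K)\ge d+2\lambda$ the event $\{I+J\le d\}$ is contained in $\{I\le\delta(C)-\lambda\}\cup\{J\le\delta(K)-\lambda\}$ (otherwise $I+J>\delta(C)+\delta(K)-2\lambda\ge d$), and \eqref{eq:tk-} bounds each piece, giving $\Prob\{I+J\le d\}\le p_C(\lambda)+p_K(\lambda)$ and hence $2h_{d+1}(D)\ge t_{d+1}(D)\ge 1-(p_C(\lambda)+p_K(\lambda))$.

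I expect the only real subtlety to be the bookkeeping that lets me replace a single concentration bound for the product cone $D$ (which would carry the ``wrong'' transition width $\omega(D)$ living in dimension $2d$) by the \emph{sum} $p_C(\lambda)+p_K(\lambda)$ of the two intrinsic bounds: this is precisely what the union-bound splitting of $\{I+J\ge d\}$ buys, and it is the reason the slack $2\lambda$ rather than $\lambda$ appears in the hypotheses. The remaining care is purely arithmetic — passing from the real thresholds $\delta(C)\pm\lambda$ to integer indices via ceilings and floors, so that $\Prob\{I\ge\delta(C)+\lambda\}=t_{\lceil\delta(C)+\lambda\rceil}(C)$ and $\Prob\{I\le\delta(C)-\lambda\}=1-t_{\lfloor\delta(C)-\lambda\rfloor+1}(C)$ match the hypotheses of \eqref{eq:tk+} and \eqref{eq:tk-} exactly.
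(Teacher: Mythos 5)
Your proposal is correct and follows essentially the same route as the paper: Crofton/kinematic formula, interlacing (Proposition~\ref{prop:interlacing}), and the concentration bounds of Theorem~\ref{thm:main-conc}, with your union-bound step on the independent intrinsic-volume variables $I$ and $J$ being precisely the content and the proof of the paper's Lemma~\ref{lem:prod-tail} in Appendix~\ref{sec:proof-backgr-results}. The only detail you omit is the preliminary reduction to \emph{closed} cones (the theorem allows non-closed $C$ and $K$, while the exact kinematic formulas are stated for closed cones); the paper dispatches this in one sentence via the touching-probability argument of~\cite[pp.~258--259]{scwe:08}.
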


We discuss Theorem~\ref{thm:approx-kinem} below in Section~\ref{sec:discuss-approx-kinem}.
The proof appears in Section~\ref{sec:pf-approx-kinem}.  In Section~\ref{sec:pf-kinem},
we derive Theorem~\ref{thm:kinematic} from a similar, but slightly
easier argument.  The follow-up work~\cite{McCTro:13a}
contains an improvement on Theorem~\ref{thm:approx-kinem}.

\subsection{Discussion} \label{sec:discuss-approx-kinem}

Theorem~\ref{thm:approx-kinem} has an attractive interpretation.  The first statement~\eqref{eq:subspace-kinem} shows that a randomly oriented subspace with codimension $m$ is unlikely to share a ray with a fixed cone $C$, provided that the codimension $m$ is larger than the statistical dimension $\delta(C)$ of the cone.  When the codimension $m$ is smaller than the statistical dimension $\delta(C)$, the subspace and the cone are likely to share a ray.

The transition in behavior expressed in~\eqref{eq:subspace-kinem} takes place when the codimension $m$ of the subspace changes by about $\omega(C) = \sqrt{\delta(C) \wedge \delta(C^\polar)}$.  This point explains why the empirical success curves taper in the corners of the graphs in Figure~\ref{fig:lin-inv-1}.  Indeed, on the bottom-left side of each panel, the relevant descent cone is small; on the top-right side of each panel, the descent cone is large, so its polar is small.  In these regimes, the result~\eqref{eq:subspace-kinem} shows that the phase transition must occur over a narrow range of codimensions.

The second statement~\eqref{eq:cone-kinem} provides analogous results for the probability that a randomly oriented cone shares a ray with a fixed cone.  This event is unlikely when the total statistical dimension of the two cones is smaller than the ambient dimension; it is likely to occur when the total statistical dimension exceeds
the ambient dimension.

For the case of two cones, it is harder to analyze the size of the transition region.
Since the probability bounds in~\eqref{eq:cone-kinem}
are controlled by the sum $p_C(\lambda) + p_K(\lambda)$,
we can only be certain that the probability estimate
depends on the larger of the two quantities.
It follows that the width of the transition does not exceed the
\emph{larger} of $\omega(C) = \sqrt{\delta(C) \wedge \delta(C^\polar)}$
and $\omega(K) = \sqrt{\delta(K) \wedge \delta(K^\polar)}$.
This observation is sufficient to explain why the empirical
success curves taper at the top-left and bottom-right of the graphs in
Figure~\ref{fig:MCA-phase}.  Indeed, these are the regions
where one of the descent cones is small and the polar
of the other descent cone is small.

\subsection{Proof of Theorem~\ref{thm:approx-kinem}}
\label{sec:pf-approx-kinem}

We may assume that $C$ and $K$ are both closed because $\Prob\{ C \cap \mtx{Q} K \neq \{\vct{0}\} \} = \Prob\{ \overline{C} \cap \mtx{Q} \overline{K} \neq \{\vct{0}\} \}$, where the overline denotes closure.  This is a subtle point that follows from the discussion of touching probabilities located in~\cite[pp.~258--259]{scwe:08}.

Let us begin with the first set~\eqref{eq:subspace-kinem} of results, concerning the probability that a randomly oriented subspace intersects a fixed cone along a ray.  Consider the first implication, which operates when $m \geq \delta(C) + \lambda$. The implication clearly holds when \(C\) is a subspace.  When \(C\) is not a subspace, the Crofton formula~\eqref{eq:Crofton} shows that
$$
\Prob\big\{ C \cap \mtx{Q}L \neq \{ \vct{0} \} \big\}
	= 2 \, h_{m+1}(C)
	\leq t_m(C),
$$
where the inequality depends on the interlacing result, Proposition~\ref{prop:interlacing}.  The concentration of intrinsic volumes, Theorem~\ref{thm:main-conc}, demonstrates that the tail functional satisfies the bound
$$
t_m(C) \leq p_{C}(\lambda)
\quad\text{when}\quad
m \geq \delta(C) + \lambda.
$$
This completes the first bound.  The second result, which holds when $m \leq \delta(C) - \lambda$, follows from a parallel argument.

The conic kinematic formula is required for the second set~\eqref{eq:cone-kinem} of results, which concern the probability that a randomly oriented cone intersects a fixed cone nontrivially.
Consider the situation where $\delta(C) + \delta(K) \leq d - 2\lambda$.  The kinematic formula~\eqref{eq:kinematic} yields
\begin{equation} \label{eq:kinem-in-action}
\Prob\big\{ C \cap \mtx{Q}K \neq \{ \vct{0} \} \big\}
	= 2 \, h_{d+1}(C \times K) \leq t_d(C \times K),
\end{equation}
where the inequality follows from the interlacing result, Proposition~\ref{prop:interlacing}.

We rely on a simple lemma to bound the tail functional of the product in
terms of the individual tail functionals.

\begin{lemma}[Tail functionals of a product] \label{lem:prod-tail}
Let $C$ and $K$ be closed convex cones.  Then
$$
t_{\lceil \delta(C) + \delta(K) + 2 \lambda \rceil}(C \times K)
	\leq t_{\lceil \delta(C) + \lambda \rceil}(C)
	+ t_{\lceil \delta(K) + \lambda \rceil}(K).
$$
\end{lemma}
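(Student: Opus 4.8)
The plan is to reduce the tail functional of the product to a double sum over the intrinsic volumes of the two factors, and then to split that sum with a union bound. First I would invoke the product rule~\eqref{eq:prod} for intrinsic volumes to write, for any index $k$,
\[
t_k(C \times K) = \sum_{\ell \geq k} v_\ell(C \times K) = \sum_{i + j \geq k} v_i(C) \, v_j(K),
\]
so that the tail functional of $C \times K$ is exactly the sum of the products $v_i(C)\,v_j(K)$ over the region $\{(i,j) : i + j \geq k\}$.

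Next I would abbreviate $s := \lceil \delta(C) + \delta(K) + 2\lambda \rceil$, $a := \lceil \delta(C) + \lambda \rceil$, and $b := \lceil \delta(K) + \lambda \rceil$, and establish the elementary arithmetic fact that $i + j \geq s$ forces $i \geq a$ or $j \geq b$. The key is to trap $a + b$ between two bounds. On the one hand, $a \geq \delta(C) + \lambda$ and $b \geq \delta(K) + \lambda$ give $a + b \geq \delta(C) + \delta(K) + 2\lambda$, and since $a + b$ is an integer it dominates the ceiling, so $a + b \geq s$. On the other hand, the strict estimates $a < \delta(C) + \lambda + 1$ and $b < \delta(K) + \lambda + 1$ yield $a + b < \delta(C) + \delta(K) + 2\lambda + 2 \leq s + 2$, hence $a + b \leq s + 1$ by integrality. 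Consequently $a + b - 2 < s$, so a pair with $i \leq a - 1$ and $j \leq b - 1$ satisfies $i + j \leq a + b - 2 < s$; contrapositively, whenever $i + j \geq s$ we cannot have both $i < a$ and $j < b$.

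With the containment $\{i + j \geq s\} \subseteq \{i \geq a\} \cup \{j \geq b\}$ in hand, and since each $v_i(C)\,v_j(K) \geq 0$, I would bound the sum over the union by the sum of the two pieces:
\[
t_s(C \times K) \leq \sum_{i \geq a} \sum_j v_i(C)\, v_j(K) + \sum_i \sum_{j \geq b} v_i(C)\, v_j(K).
\]
Using the fact~\eqref{eq:probab-dist} that the intrinsic volumes of each cone sum to one, the free inner sums collapse to $1$, and the two double sums reduce to $t_a(C)$ and $t_b(K)$. This is precisely the asserted inequality.

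I would expect the only delicate point to be the ceiling arithmetic in the second paragraph, namely keeping careful track of the strict versus non-strict inequalities so that the integer bound $a + b \leq s + 1$ emerges correctly; this is what makes the index $s$ in the conclusion exactly a ceiling of $\delta(C) + \delta(K) + 2\lambda$ rather than something larger. Everything else is a routine application of the product rule followed by a union bound over nonnegative terms.
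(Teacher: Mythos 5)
Your proof is correct and is essentially the paper's argument: the paper also combines the product rule~\eqref{eq:prod} with a union bound, merely phrasing it probabilistically via independent integer-valued random variables $X$, $Y$ whose laws are the intrinsic volume sequences of $C$ and $K$, so that $\Prob\{X+Y \geq \delta(C)+\delta(K)+2\lambda\} \leq \Prob\{X \geq \delta(C)+\lambda\} + \Prob\{Y \geq \delta(K)+\lambda\}$. Your direct summation version, with the ceiling arithmetic handled explicitly, is the same argument in different notation (the paper sidesteps the ceiling bookkeeping by using real thresholds and the fact that integer-valued variables satisfy $\Prob\{X \geq t\} = \Prob\{X \geq \lceil t \rceil\}$).
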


\noindent
  The proof appears in
Appendix~\ref{sec:proof-backgr-results}.  See the follow-up work~\cite{McCTro:13}
for an improvement on this result.

Since the tail functionals are weakly decreasing,
our assumption that $\delta(C) + \delta(K) \leq d - 2\lambda$
implies that
$$
t_d(C \times K)
	\leq t_{\lceil \delta(C) + \delta(K) + 2\lambda \rceil}(C \times K)
	\leq t_{\lceil \delta(C) + \lambda \rceil}(C)
	+ t_{\lceil \delta(K) + \lambda \rceil}(K).
$$
Theorem~\ref{thm:main-conc} delivers an upper bound of
$p_C(\lambda) + p_K(\lambda)$ for the right-hand side.  Introduce these
bounds into the probability inequality~\eqref{eq:kinem-in-action}
to complete the proof of the first statement in~\eqref{eq:cone-kinem}.
The second result follows from an analogous argument. \qed

\subsection{Proof of Theorem~\ref{thm:kinematic}}
\label{sec:pf-kinem}

The simplified kinematic bound of Theorem~\ref{thm:kinematic} involves
an argument similar with the proof of Theorem~\ref{thm:approx-kinem}.
First, assume that $\delta(C) + \delta(K) \leq d - \lambda$.
As before, the kinematic formula~\eqref{eq:kinematic}
and the interlacing result, Proposition~\ref{prop:interlacing},
ensure that
\begin{equation} \label{eq:kinem-in-action-2}
\Prob\big\{ C \cap \mtx{Q}K \neq \{ \vct{0} \} \big\}
	= 2 \, h_{d+1}(C \times K) \leq t_d(C \times K).
\end{equation}
The product rule~\eqref{eq:sdim-direct-product} for cones shows that
$\delta(C \times K) = \delta(C) + \delta(K) \leq d - \lambda$, so the implication~\eqref{eq:tk+}
in Theorem~\ref{thm:main-conc} yields
\begin{equation} \label{eq:simp-kinem-1}
t_{d}(C\times K)
	\leq p_{C \times K}(\lambda)
	\leq 4 \exp\left( \frac{-\lambda^2 /8}{\delta(C \times K) + \lambda} \right)
	\leq 4\,\econst^{-\lambda^2/(8d)}.
\end{equation}
Substitute the inequality~\eqref{eq:simp-kinem-1} into the kinematic bound~\eqref{eq:kinem-in-action-2}.  Then make the change of variables $\lambda \mapsto a_{\eta} \sqrt{d}$, where $a_{\eta} := \sqrt{8\log(4/\eta)}$, to obtain the estimate
$$
\Prob\big\{ C \cap \mtx{Q}K \neq \{ \vct{0} \} \big\} \leq \eta.
$$
This establishes the first part of Theorem~\ref{thm:kinematic}.
The argument for the second part follows the same pattern. \qed

\section{Application: Cone programs with random constraints}
\label{sec:feas-cone-progr}

The concentration of intrinsic volumes has far-reaching consequences for the theory of optimization.  This section describes a new type of phase transition phenomenon that appears in a cone program with random affine constraints.  We begin with a theoretical result, and then we exhibit some numerical examples that confirm the analysis.

\subsection{Cone programs}

A \term{cone program} is a convex optimization problem with the following structure:
\begin{equation} \label{eq:cone-prog}
	\minimize \ip{ \vct{u} }{ \vct{x} }
	\subjto \mtx{A}\vct{x} = \vct{b}
	\quad\text{and}\quad \vct{x} \in C,
\end{equation}
where $C \subset \R^d$ is a closed convex cone.  The decision variable $\vct{x} \in \R^d$,
and the problem data consists of a vector $\vct{u} \in \R^d$, a matrix $\mtx{A} \in \R^{m \times d}$, and another vector $\vct{b} \in \R^m$.
This formalism includes several fundamental classes of
convex programs:

\vspace{1mm}

\begin{enumerate} \setlength{\itemsep}{1mm}
\item	\textbf{Linear programs.}  If $C = \R_+^d$, then~\eqref{eq:cone-prog} reduces to a linear program in standard form.

\item	\textbf{Second-order cone programs.}  If $C = \mathbb{L}^{d+1}$, then~\eqref{eq:cone-prog} is a type of second-order cone program.

\item	\textbf{Semidefinite programs.}  When $C = \mathbb{S}_+^{n \times n}$, we recover the class of (real) semidefinite programs.
\end{enumerate}

\vspace{1mm}

\noindent
In addition to their flexibility and modeling power, cone programs enjoy effective algorithms and a crisp theory.  We refer to~\cite{B-TN:01} for further details.

The cone program~\eqref{eq:cone-prog} can exhibit several interesting behaviors.  Let us remind the reader of the terminology.  A point $\vct{x}$ that satisfies the constraints $\mtx{A}\vct{x} = \vct{b}$ and $\vct{x} \in C$ is called a \term{feasible point}, and the cone program is \term{infeasible} when no feasible point exists.  The cone program is \term{unbounded} when there exists a sequence $\{\vct{x}_k\}$ of feasible points with the property $\ip{\vct{u}}{\vct{x}_k} \to - \infty$.

\begin{figure}[t]
  \centering
  \includegraphics[width=\columnwidth]{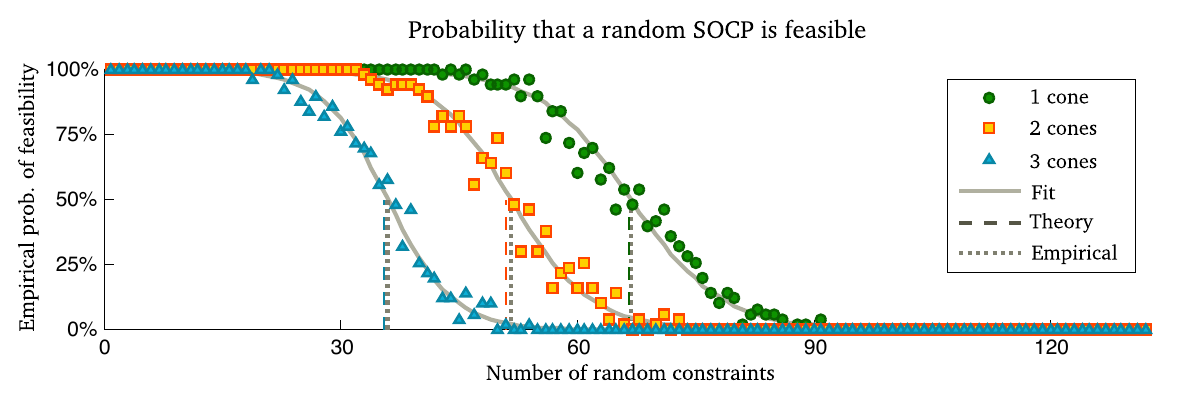}
  \caption{\textbf{Phase transitions in random cone programs.}
  For each cone $C_i$ in~\eqref{eq:cones}, we plot
  the empirical probability that the cone program~\eqref{eq:cone-prog}
  with random affine constraints is feasible.  The solid gray curve
  traces the logistic fit to the data, and the finely dashed line is the empirical
  $50\%$ success threshold, computed from the regression model.  The
  coarsely dashed line marks the statistical dimension $\delta(C_i)$,
  which is the theoretical estimate for the location of the phase
  transition.}
  \label{fig:SOCP}
\end{figure}

Our theory allows us analyze the properties of a random cone program.
It turns out that the number $m$ of
affine constraints controls whether the cone program is infeasible or unbounded.

\begin{theorem}[Phase transitions in cone programming] \label{thm:cone-prog}
Let $C$ be a closed convex cone in $\R^d$.  Consider the cone program~\eqref{eq:cone-prog} where $\vct{b}$ is a fixed nonzero vector, while the vector $\vct{u} \in \R^d$ and the matrix $\mtx{A} \in \R^{m \times d}$ have independent standard normal entries.  Then
\begin{align*}
m \leq \delta(C) - \lambda
	&\quad\Longrightarrow\quad
	\text{\eqref{eq:cone-prog} is unbounded with probability $\geq 1 - p_C(\lambda)$}; \\
m \geq \delta(C) + \lambda 	&\quad\Longrightarrow\quad
	\text{\eqref{eq:cone-prog} is infeasible with probability $\geq 1 - p_C(\lambda)$}.
\end{align*}
The function $p_C$ is defined by the expression~\eqref{eq:volume-tail}.
\end{theorem}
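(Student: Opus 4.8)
The plan is to reduce both probabilistic claims to the concentration of intrinsic volumes (Theorem~\ref{thm:main-conc}) by expressing the probabilities of infeasibility and of unboundedness as tail functionals of the cone $C$. First I would record the elementary facts that, for generic data, the program~\eqref{eq:cone-prog} is infeasible exactly when the affine solution set $\{\vct{x}:\mtx{A}\vct{x}=\vct{b}\}$ misses $C$, and that a feasible program is unbounded exactly when its recession cone $C\cap\nullity(\mtx{A})$ contains a direction $\vct{v}$ with $\ip{\vct{u}}{\vct{v}}<0$. Since $\mtx{A}$ is standard normal, the null space $L:=\nullity(\mtx{A})$ is almost surely a uniformly random $(d-m)$-dimensional subspace, and the minimum-norm solution $\vct{c}:=\mtx{A}^{\psinv}\vct{b}$ is a nonzero vector of $M:=L^{\perp}$ whose direction is uniform on the sphere of $M$, by the bi-rotational invariance of the Gaussian matrix. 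Because $C$ is a cone, only the direction of the offset $\vct{c}$ matters, so feasibility is governed by a uniformly oriented affine flat striking $C$.

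The technical heart is the pair of exact identities
\[
\Prob\{\,\vct{b}\in\mtx{A}C\,\}=t_m(C)\qquad\text{and}\qquad\Prob\{\,\exists\,\vct{v}\in C\cap L:\ \ip{\vct{u}}{\vct{v}}<0\,\}=t_{m+1}(C).
\]
I would prove the first by noting that feasibility is equivalent to $\vct{c}\in C+L$; since the direction of $\vct{c}$ is uniform in $M=L^{\perp}$, this event has the same probability as $\{\vct{g}\in C+L\}$ for an independent standard normal $\vct{g}$, which I would evaluate through the spherical Steiner formula (Fact~\ref{fact:steiner}) together with a Crofton/projection identity of conic integral geometry that turns the expected top intrinsic volume of a random projection of $C$ into $t_m(C)$; the cases $C=\R_{+}^{d}$ and $C$ a ray serve as sanity checks. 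The second identity follows by polarity: conditioned on $\mtx{A}$, the probability that the independent Gaussian $\vct{u}$ supplies an improving recession direction is $1-v_0(C\cap L)$ computed inside $L$, and averaging over the uniform $L$ via the same projection identity applied to $C^{\polar}$ gives $1-\Expect[v_0(C\cap L)]=t_{m+1}(C)$, using $v_k(C)=v_{d-k}(C^{\polar})$.

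With these identities the infeasibility claim is immediate: when $m\ge\delta(C)+\lambda$, the upper-tail bound~\eqref{eq:tk+} yields $t_m(C)\le p_C(\lambda)$, so the program is infeasible with probability $1-t_m(C)\ge 1-p_C(\lambda)$. For the unboundedness claim, unboundedness is the intersection of the feasibility event and the improving-recession event, whose probabilities $t_m(C)$ and $t_{m+1}(C)$ are each at least $1-p_C(\lambda)$ by the lower-tail bound~\eqref{eq:tk-} when $m\le\delta(C)-\lambda$ (taking $k_-=m$ and $k_-=m+1$, both at most $\delta(C)-\lambda+1$).

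I expect the main obstacle to lie precisely in this last combination. The two events are not independent, but conditioning on $\mtx{A}$ decouples the roles of the fixed datum $\vct{b}$ (which determines feasibility) and the random objective $\vct{u}$ (which determines the improving direction), giving $\Prob\{\text{unbounded}\}=\Expect_{\mtx{A}}[\mathbf{1}_{\text{feasible}}\,(1-v_0(C\cap L))]\ge t_m(C)+t_{m+1}(C)-1$. The delicate point is then to pass from this lower bound to the stated $1-p_C(\lambda)$ without a spurious loss of constant, i.e.\ to control the complementary ``infeasible-or-bounded'' event by a single lower tail of the intrinsic volumes near the index $m$ rather than by a crude union bound over the two rare events; establishing the affine Crofton identity and this sharp bookkeeping are the two steps that require genuine work, whereas the concentration input is entirely furnished by Theorem~\ref{thm:main-conc}.
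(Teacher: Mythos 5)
Your high-level strategy is the same as the paper's: express the infeasibility and unboundedness probabilities exactly through tail functionals of $C$ and then apply Theorem~\ref{thm:main-conc}. The paper, however, does not derive those identities; it quotes Amelunxen \& B{\"u}rgisser~\cite{ambu:11c}, whose theorem states precisely that $\Prob\{\text{infeasible}\}=1-t_m(C)$, $\Prob\{\text{unique minimizer}\}=v_m(C)$, and $\Prob\{\text{unbounded}\}=t_{m+1}(C)$. So what you are really attempting is a proof of that imported result, and your argument has a genuine gap exactly where that result is hard. As you yourself note, writing unboundedness as the intersection of the feasibility event with the improving-direction event only yields $\Prob\{\text{unbounded}\}\geq t_m(C)+t_{m+1}(C)-1\geq 1-2p_C(\lambda)$, which is strictly weaker than the claimed $1-p_C(\lambda)$. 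The missing idea is not sharper bookkeeping but an almost-sure \emph{containment}: for Gaussian $\mtx{A}$, the event $\{C\cap\nullity(\mtx{A})\neq\{\vct{0}\}\}$, hence the improving-direction event, is a.s.\ contained in the feasibility event. Indeed, by the touching-probability argument (the same general-position fact invoked at the start of the proof of Theorem~\ref{thm:approx-kinem}, cf.~\cite[pp.~258--259]{scwe:08}), on the event $C\cap\nullity(\mtx{A})\neq\{\vct{0}\}$ the subspace $\nullity(\mtx{A})$ a.s.\ meets $\relint(C)$; taking a nonzero $\vct{v}$ there shows $\mtx{A}C=\mtx{A}(\lspan{C})$, and a generic dimension count ($\dim\lspan{C}>m$ whenever $\lspan{C}\cap\nullity(\mtx{A})\neq\{\vct{0}\}$) forces $\mtx{A}(\lspan{C})=\R^m$, so the program is feasible for \emph{every} $\vct{b}$. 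Granting this, $\Prob\{\text{unbounded}\}=\Prob\{\exists\,\vct{v}\in C\cap\nullity(\mtx{A}):\ip{\vct{u}}{\vct{v}}<0\}=t_{m+1}(C)$ exactly, and~\eqref{eq:tk-} gives the stated constant.

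Two further steps are glossed over. First, your ``elementary fact'' that a feasible program is unbounded if and only if $C\cap\nullity(\mtx{A})$ contains $\vct{v}$ with $\ip{\vct{u}}{\vct{v}}<0$ is false as a deterministic statement for non-polyhedral cones: with $C=\mathbb{S}^{2\times 2}_+$, minimizing the off-diagonal entry $x_{12}$ subject to $x_{22}=1$ is feasible and unbounded, yet every recession direction has $x_{22}=0$, hence $x_{12}=0$, so no strictly improving direction exists. The equivalence holds only almost surely over $(\mtx{A},\vct{u})$, and establishing that is part of the work you are replacing the citation with. Second, your feasibility identity $\Prob\{\vct{b}\in\mtx{A}C\}=t_m(C)$ cannot come from the paper's Crofton formula~\eqref{eq:Crofton} as stated, because the flat $\{\vct{x}:\mtx{A}\vct{x}=\vct{b}\}$ does not pass through the origin; one must either import the linear section formula $\Expect[v_j(C\cap \mtx{Q}L)]=v_{j+m}(C)$, $j\geq 1$ (Glasauer/Schneider--Weil), which also underlies your second identity, or homogenize: a.s.\ feasibility is equivalent to $C\cap\mtx{Q}H\neq\{\vct{0}\}$ for a uniformly random half-subspace $H$ of dimension $d-m+1$, whereupon the kinematic formula~\eqref{eq:kinematic} and the product rule~\eqref{eq:prod} give $2\,h_{d+1}(C\times H)=t_m(C)$. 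With these repairs your outline does become a correct, self-contained proof; without them, it proves only the weaker bound $1-2p_C(\lambda)$ for unboundedness and leaves the key identities unestablished.
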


\begin{proof}
Amelunxen \& B{\"u}rgisser~\cite[Thm.~1.3]{ambu:11c} have shown that the intrinsic volumes of the cone $C$ control the properties of the random cone program~\eqref{eq:cone-prog}:
\begin{align*}
   \Prob\{\text{\eqref{eq:cone-prog} is infeasible}\} & = 1 - t_m(C);
\\ \Prob\{\text{\eqref{eq:cone-prog} has a unique minimizer}\} & = v_m(C);
\\ \Prob\{\text{\eqref{eq:cone-prog} is unbounded}\} & = t_{m+1}(C).
\end{align*}
We apply Theorem~\ref{thm:main-conc} to see that the tail functional $t_{m+1}(C)$ is extremely close to one when the number $m$ of constraints is smaller than the statistical dimension $\delta(C)$.  Likewise, $t_{m}(C)$ is extremely close to zero when the number $m$ of constraints is larger than the statistical dimension.  We omit the details, which are analogous with the proof of Theorem~\ref{thm:approx-kinem}.
\end{proof}

\subsection{A numerical example}

We have conducted a computer experiment to compare the predictions of Theorem~\ref{thm:cone-prog} with the empirical behavior of a generic cone program.  For this purpose, we study some random second-order cone programs.
In each case, the ambient dimension $d = 396$, and we consider three options for the cone $C$ in~\eqref{eq:cone-prog}:
\begin{subequations} \label{eq:cones}
\begin{align}
	C_1 &:= \Circ_d(\alpha_1); \\
	C_2 &:= \Circ_{d/2}(\alpha_2) \times \Circ_{d/2}(\alpha_2); \\
	C_3 &:= \Circ_{d/3}(\alpha_3) \times \Circ_{d/3}(\alpha_3) \times \Circ_{d/3}(\alpha_3).
\end{align}
\end{subequations}
The angles satisfy $\tan^2(\alpha_1) = \tfrac{1}{5}$ and $\tan^2(\alpha_2) = \tfrac{1}{7}$ and $\tan^2(\alpha_3) = \tfrac{1}{11}$.
Using the product rule~\eqref{eq:sdim-direct-product} and the integral
expression~\eqref{eq:sdim-circ} for the statistical dimension of a circular cone,
numerical quadrature yields
$$
\delta(C_1) \approx 66.67; \quad
\delta(C_2) \approx 51.00; \quad
\delta(C_3) \approx 35.50.
$$
Theorem~\ref{thm:cone-prog} indicates that a cone program~\eqref{eq:cone-prog} with the cone $C_i$ and generic constraints is likely to be feasible when the number $m$ of affine constraints is smaller than $\delta(C_i)$; it is likely to be infeasible when the number $m$ of affine constraints is larger than $\delta(C_i)$.

We can test this prediction numerically.  For each $i = 1, 2, 3$ and each $m \in \big\{ 1, 2, 3, \dots, \big[ \tfrac{1}{3} d \big] \big\}$, we perform the following steps 50 times:
\begin{enumerate}
\item	Independently draw a standard normal matrix $\mtx{A} \in \R^{m \times d}$ and standard normal $\vct{u} \in \R^d$ and $\vct{b} \in \R^m$.

\item	Use the \textsc{Matlab} package \texttt{CVX} to solve the cone program~\eqref{eq:cone-prog} with $C = C_i$.

\item	Report failure if \texttt{CVX} declares the cone program infeasible.
\end{enumerate}
For each $i = 1, 2, 3$, Figure~\ref{fig:SOCP} displays the empirical success probability, along with a logistic fit (Appendix~\ref{sec:logistic}).  We also mark the theoretical estimate for the location of the phase transition, which equals the statistical dimension $\delta(C_i)$.  Table~\ref{tab:locations-SOCP} reports the discrepancy between the theoretical and empirical behaviors.

\begin{table}[t]
  \centering
  \caption{\textbf{Phase transitions in random cone programs.}
  For each cone $C_i$ listed in~\eqref{eq:cones}, we compare the theoretical
  location of the phase transition, equal to the statistical
  dimension $\delta(C_i)$, with the empirical location $\mu_i$,
  computed from the logistic regression model in Figure~\ref{fig:SOCP}.  The last
  column lists the errors $\abs{\delta(C_i) - \mu_i}/d$, relative to the dimension $d = 396$ of the
  problem.}

    \begin{tabu}{ccccc}
    \toprule Cone & 
    Theoretical & Empirical &
    Error \\
    \midrule
    $C_1$ & 
    $66.67$ & $66.88$ &	 $0.054 \%$ \\
    $C_2$	& 
    $51.00$ & $51.70$ & $0.177\%$ \\
    $C_3$ &	 
    $35.50$ & $36.06$ & $0.141\%$ 
    \\
    \bottomrule
  \end{tabu}
  \label{tab:locations-SOCP}
\end{table}

\section{Application: Vectors from lists?}
\label{sec:appl-permutahedron}

This section describes a situation where our results prove that a particular linear inverse problem does \emph{not} provide an effective way to recover a structured vector.  Indeed, a significant contribution of our theory, which has no parallel in the current literature, is that we can obtain negative results as well as positive results.

In~\cite[Sec.~2.2]{CRPW:12}, Chandrasekaran et al.~propose a method for recovering a vector from an unordered list of its entries, along with some linear measurements.  Here is one way to frame this problem.  Suppose that $\vct{x}_0 \in \R^d$ is an unknown vector.  We are given the vector $\vct{y}_0 = \vct{x}_0^\downarrow$, whose entries list the components of $\vct{x}_0$ in weakly decreasing order.  We also collect data $\vct{z}_0 = \mtx{A} \vct{x}_0$ where $\mtx{A}$ is an $m \times d$ matrix. To identify $\vct{x}_0$, we must solve a structured linear inverse problem.

To solve this problem, Chandrasekaran et al.~propose to use a convex regularizer $f$ that exploits the information in $\vct{y}_0$.  They consider the Minkowski gauge of the permutahedron~\eqref{eq:perm} generated by $\vct{y}_0$.
$$
\norm{ \vct{x} }_{\coll{P}(\vct{y}_0)}
	:= \inf\big\{ \tau > 0 : \vct{x} \in \tau \, \coll{P}(\vct{y}_0) \big\},
$$
and they frame the regularized linear inverse problem
\begin{equation} \label{eq:perm-inv}
\minimize \ \norm{ \vct{x} }_{\coll{P}(\vct{y}_0)}
\subjto \vct{z}_0 = \mtx{A}\vct{x}.
\end{equation}
It is natural to ask how many linear measurements we need to be able to solve this inverse problem reliably.  Our theory allows us to answer this question decisively when the measurements are random.

\begin{proposition}[Vectors from lists?] \label{prop:vec-list}
Let $\vct{x}_0 \in \R^d$ be a fixed vector with distinct entries.  Suppose we are given the data $\vct{y}_0 = \vct{x}_0^\downarrow$ and $\vct{z}_0 = \mtx{A} \vct{x}_0$, where the matrix $\mtx{A} \in \R^{m \times d}$ has standard normal entries.  In the range
$0 \leq \lambda < \sqrt{\mathrm{H}_d}$, it holds that
\begin{align*}
m \leq d - \mathrm{H}_d - \lambda \sqrt{\mathrm{H}_d}
&\quad\Longrightarrow\quad
\text{\eqref{eq:perm-inv} succeeds with probability $\leq \phantom{1 - \ } 4 \, \econst^{-\lambda^2/16}$;} \\
m \geq d - \mathrm{H}_d + \lambda \sqrt{\mathrm{H}_d}
&\quad\Longrightarrow\quad
\text{\eqref{eq:perm-inv} succeeds with probability $\geq 1 - 4\,\econst^{-\lambda^2/16}$.}
\end{align*}
The $d$th harmonic number $\mathrm{H}_d$ satisfies $\log d < \mathrm{H}_d < 1 + \log d$.
\end{proposition}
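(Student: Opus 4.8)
The plan is to read~\eqref{eq:perm-inv} as a regularized linear inverse problem with the permutahedral gauge $f := \norm{\vct{x}}_{\coll{P}(\vct{y}_0)}$ as its regularizer, compute the statistical dimension of the associated descent cone, and then feed this into the subspace form~\eqref{eq:subspace-kinem} of the approximate kinematic bound. First I would invoke the optimality condition, Fact~\ref{fact:lin-inv-opt}, which says that~\eqref{eq:perm-inv} succeeds exactly when $\Desc(f, \vct{x}_0) \cap \nullity(\mtx{A}) = \{\vct{0}\}$. Exactly as in the proof of Theorem~\ref{thm:phase-trans-lin-inv}, the rotational invariance of the standard normal law makes $\nullity(\mtx{A})$ almost surely a uniformly random subspace of dimension $d-m$. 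Writing $C := \Desc(f, \vct{x}_0)$, the success event then coincides with the intersection event $C \cap \mtx{Q} L_{d-m} = \{\vct{0}\}$ of codimension $m$ that~\eqref{eq:subspace-kinem} controls.

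The geometric heart of the argument is the claim that $\delta(C) = d - \mathrm{H}_d$. Since $\vct{x}_0$ is a vertex of $\coll{P}(\vct{y}_0)$ with $f(\vct{x}_0) = 1$, the polarity relation~\eqref{eq:descent-normal} identifies $C^\polar$ with the normal cone of the unit sublevel set at $\vct{x}_0$, which I would argue equals $\NormC(\coll{P}(\vct{y}_0), \vct{x}_0)$. Because $\vct{x}_0$ is a coordinate permutation of the generator $\vct{y}_0 = \vct{x}_0^\downarrow$, a permutation matrix (which is orthogonal and fixes the permutahedron) carries $\NormC(\coll{P}(\vct{y}_0), \vct{y}_0)$ onto $\NormC(\coll{P}(\vct{y}_0), \vct{x}_0)$; the rotational invariance~\eqref{eq:sdim-rot-inv} of the statistical dimension together with Proposition~\ref{prop:perm-cone} then gives $\delta(C^\polar) = \mathrm{H}_d$, and the totality law~\eqref{eq:sdim-polarity-sum} yields $\delta(C) = d - \mathrm{H}_d$. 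One also notes that $C$ is not a subspace (its polar is a proper cone with a genuine vertex structure), as required by Theorem~\ref{thm:approx-kinem}.

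It remains to translate~\eqref{eq:subspace-kinem} into the stated form. Since $\mathrm{H}_d < 1 + \log d$ is much smaller than $d - \mathrm{H}_d$, the transition width obeys $\omega^2(C) = \delta(C) \wedge \delta(C^\polar) = \mathrm{H}_d$. Setting $\mu := \lambda \sqrt{\mathrm{H}_d}$ and substituting $\delta(C) = d - \mathrm{H}_d$ into~\eqref{eq:subspace-kinem}, I obtain success probability at least $1 - p_C(\mu)$ when $m \geq d - \mathrm{H}_d + \lambda\sqrt{\mathrm{H}_d}$ and at most $p_C(\mu)$ when $m \leq d - \mathrm{H}_d - \lambda\sqrt{\mathrm{H}_d}$. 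The definition~\eqref{eq:volume-tail} then gives
\[
p_C(\mu) = 4 \exp\left( \frac{-\lambda^2 \mathrm{H}_d / 8}{\mathrm{H}_d + \lambda\sqrt{\mathrm{H}_d}} \right) = 4 \exp\left( \frac{-\lambda^2 \sqrt{\mathrm{H}_d}/8}{\sqrt{\mathrm{H}_d} + \lambda} \right) \leq 4\,\econst^{-\lambda^2/16},
\]
where the last inequality uses the hypothesis $\lambda < \sqrt{\mathrm{H}_d}$ to bound $\sqrt{\mathrm{H}_d} + \lambda \leq 2\sqrt{\mathrm{H}_d}$; the harmonic estimate $\log d < \mathrm{H}_d < 1 + \log d$ is classical.

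The probabilistic and estimation steps above are routine once the machinery is in place. I expect the main obstacle to be the second paragraph, namely establishing rigorously that $C^\polar = \NormC(\coll{P}(\vct{y}_0), \vct{x}_0)$ and that its statistical dimension is exactly $\mathrm{H}_d$. The delicate feature is that the permutahedron is not full dimensional and does not contain the origin in its interior, so the unit sublevel set of the gauge must be handled carefully before one can equate its normal cone at the vertex with the permutahedral normal cone; this combinatorial-geometric identification (supplied by Proposition~\ref{prop:perm-cone} and its appendix proof) is the substantive input, while everything downstream follows the template of Theorem~\ref{thm:phase-trans-lin-inv}.
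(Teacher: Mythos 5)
Your proposal is correct and follows essentially the same route as the paper's proof: identify $\Desc(f,\vct{x}_0)^\polar$ with the permutahedral normal cone, obtain $\delta\big(\Desc(f,\vct{x}_0)\big) = d - \mathrm{H}_d$ from Proposition~\ref{prop:perm-cone} together with the totality law~\eqref{eq:sdim-polarity-sum}, and feed this into the subspace bound~\eqref{eq:subspace-kinem} of Theorem~\ref{thm:approx-kinem}; your inline estimate of $p_C(\lambda\sqrt{\mathrm{H}_d})$ simply reproduces the simplification~\eqref{eq:pC-bd} that the paper invokes. One caution: the step you yourself flag as the crux --- that the gauge's unit sublevel set has the same normal cone at $\vct{x}_0$ as the permutahedron, even though the origin lies outside $\coll{P}(\vct{y}_0)$ --- is \emph{not} supplied by Proposition~\ref{prop:perm-cone} or its appendix proof; the paper disposes of it by asserting outright that the unit ball of $\norm{\cdot}_{\coll{P}(\vct{y}_0)}$ coincides with $\coll{P}(\vct{y}_0)$ (equivalently, that the additional descent direction $-\vct{x}_0$ toward the origin already lies in the tangent cone of the permutahedron at $\vct{x}_0$), so in a complete write-up you would need to argue this identification directly rather than defer it to the statistical-dimension computation.
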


Proposition~\ref{prop:vec-list} yields the depressing assessment that we need a near-complete set of linear measurements to resolve our uncertainty about the ordering of the vector.  Nevertheless, we do not need \emph{all} of the measurements.  It would be interesting to understand how much the situation improves for vectors with many duplicated entries.

\begin{proof}
This result follows from Fact~\ref{fact:lin-inv-opt} and the kinematic bound~\eqref{eq:subspace-kinem} in Theorem~\ref{thm:approx-kinem} as soon as we compute the statistical dimension of the descent cone of the regularizer $\norm{ \cdot }_{\coll{P}(\vct{y}_0)}$ at the point $\vct{x}_0$.
By construction, the unit ball of $\norm{ \cdot }_{\coll{P}(\vct{y}_0)}$ coincides with the permutahedron $\coll{P}(\vct{y}_0)$, which equals $\coll{P}(\vct{x}_0)$ by permutation invariance.  Therefore,
$$
\Desc( \norm{ \cdot }_{\coll{P}(\vct{y}_0)}, \vct{x}_0 )
	= \Desc( \norm{ \cdot }_{\coll{P}(\vct{x}_0)}, \vct{x}_0 )
	= \NormC( \coll{P}(\vct{x}_0), \vct{x}_0 )^\polar.
$$
The second identity holds because a closed descent cone coincides with the polar of the normal cone of the sublevel set; see Appendix~\ref{sec:repr-desc-cones} for details.
Figure~\ref{fig:signed_permutahedron} illustrates the corresponding facts
about signed permutahedra.
To compute the statistical dimension, we apply the complementarity law~\eqref{eq:sdim-polarity-sum} to see that
$$
\delta\big( \Desc( \norm{ \cdot }_{\coll{P}(\vct{y}_0)}, \vct{x}_0 ) \big)
	= d - \delta\big(\NormC( \coll{P}(\vct{x}_0), \vct{x}_0 )\big)
	= d - \mathrm{H}_d,
$$
where the second relation follows from Proposition~\ref{prop:perm-cone}.  Apply the kinematic result~\eqref{eq:subspace-kinem} for subspaces, and invoke~\eqref{eq:pC-bd} to simplify the error bound $p_C(\lambda)$.
\end{proof}

\begin{remark}[Signed vectors]
An equally disappointing result holds for the problem of reconstructing a general vector from an unordered list of the \emph{magnitudes} of its entries, along with some linear measurements.  In this case, the appropriate regularizer is the Minkowski gauge of the signed permutahedron~\eqref{eq:sperm}.  We can use Proposition~\ref{prop:perm-cone} to compute the statistical dimension of the descent cone.  For a $d$-dimensional vector with distinct entries, we need about $d - \half \mathrm{H}_d$ random measurements to succeed reliably.
\end{remark}

\subsection{A numerical example}

We present a computer experiment that confirms our pessimistic analysis.  Fix the ambient dimension $d = 100$.  Set $\vct{x}_0 = (1, 2, \dots, 100)$ and $\vct{y}_0 = \vct{x}_0^\downarrow$.  For each $m = 85, 86, \dots, 100$, we repeat the following procedure 50 times:
\begin{enumerate}
\item	Draw a matrix $\mtx{A} \in \R^{m \times d}$ with independent standard normal entries, and form $\vct{z}_0 = \mtx{A} \vct{x}_0$.

\item	Use the \textsc{Matlab} package \texttt{CVX} to solve the linear inverse problem~\eqref{eq:perm-inv}.

\item	Declare success if the solution $\widehat{\vct{x}}$ satisfies $\enormsm{\widehat{\vct{x}} - \vct{x}_0} \leq 10^{-5}$.
\end{enumerate}
Figure~\ref{fig:vect-from-list} displays the outcome of this experiment.  As usual, the phase transition predicted at the statistical dimension $d - \mathrm{H}_d$ is very close to the empirical 50\% mark, which we obtain by performing a logistic regression of the data (see Appendix~\ref{sec:logistic}).

\begin{figure}[t!]
  \centering
  \includegraphics[width=\columnwidth]{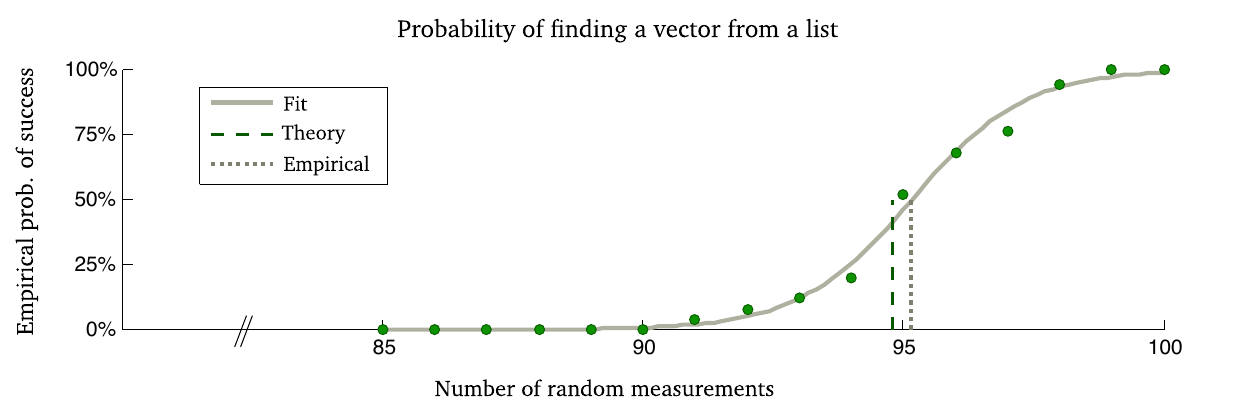}
  \caption{\textbf{Vectors from lists?}
  The empirical probability that the convex program~\eqref{eq:perm-inv} correctly identifies a vector $\vct{x}_0$ in $\R^{100}$ with distinct entries, provided an unordered list $\vct{y}_0$ of the entries of $\vct{x}_0$ and $m$ random linear measurements $\vct{z}_0 = \mtx{A}\vct{x}_0$.  The solid gray curve marks the logistic fit to the data.  The midpoint of the logistic curve $\mu = 95.17$ (finely dashed line), while the theory predicts a phase transition at the statistical dimension $\delta = 94.81$ (coarsely dashed line).  The error relative to the dimension $|\mu-\delta|/d = 0.36\%$.}
  \label{fig:vect-from-list}
\end{figure}

\section{Related work}
\label{sec:conclusion}

To conclude the body of the paper, we place our work in the context of the literature on geometric analysis of random convex optimization problems.  We trace four lines of thought on this subject.  The first draws from the theory of polytope angles; the second involves conic integral geometry; the third is based on comparison inequalities for Gaussian processes; and the last makes a connection with statistical decision theory.  Our results have some overlap with earlier work, but our discovery that the sequence of conic intrinsic volumes concentrates at the statistical dimension allows us to resolve several subtle but important questions that have remained open until now.

\subsection{Polytope-angle calculations}
\label{sec:polytope-angles}

The theory of polytope angles dates to the work of Schl{\"a}fli in the 1850s~\cite{schl:52}.  In pioneering research, Vershik \& Sporyshev~\cite{VS:86} used polytope-angle calculations to analyze random convex optimization problems.  They were able to estimate the average number of steps that the simplex algorithm requires to solve a linear program with random constraints as the number of decision variables tends to infinity.  This research inspired further theoretical work on the neighborliness of random polytopes~\cite{vesp:92,afsc:92,BH:99}.
More recently, Donoho~\cite{dono:06b} and Donoho \& Tanner~\cite{DT:05,dota:09a,DT:10,DT:10b} have used similar ideas to study specific regularized linear inverse problems with random data.  The papers~\cite{XH2011,KXAH2011} contain some additional work in this direction.  Let us offer a short, qualitative summary of this research.

Donoho~\cite{dono:06b} analyzed the performance of the convex program~\eqref{eqn:l1-min} for solving the compressed sensing problem described in Section~\ref{sec:cs}.  In the asymptotic regime where the number $s$ of nonzeros is proportional to the ambient dimension $d$, he obtained a lower bound $m \geq \psi(s)$ on the number $m$ of Gaussian measurements that are sufficient for the optimization to succeed (the \term{weak} threshold).  Numerical experiments~\cite{DonTan:09a} suggest that this bound is sharp, but the theoretical analysis in~\cite{dono:06b} falls short of establishing that a phase transition actually exists and identifying its location rigorously.  Finite-dimensional results with a similar flavor appear in~\cite{DT:10b}.

Donoho~\cite{dono:06b} also established an asymptotic lower bound on the number $m$ of random measurements sufficient to recover \emph{all} $s$-sparse vectors in $\R^d$ with high probability (the \term{strong} threshold).  Using different methods, Stojnic~\cite{stojnic10} has improved this bound for some values of the sparsity $s$.  These bounds are not subject to numerical interrogation, so we do not have reliable evidence about what actually happens.  Indeed, it remains an open question to prove that a strong phase transition exists and to identify its exact location in the regime where the sparsity is proportional to the ambient dimension.  

Donoho \& Tanner~\cite{dota:09a} have also made a careful study of the behavior of the convex program~\eqref{eqn:l1-min} in the asymptotic regime where the sparsity $s \ll d$.  In this case, they succeeded in proving that weak and strong thresholds exist, and they obtained exact formulas for the thresholds.  More precisely, at the computed thresholds, they show that the probability of success jumps from one to $1 - \eps$, where $\eps$ is positive.  Although these results do not ensure that certain failure awaits on the other side of the threshold curve, they do establish that the behavior changes.

Donoho \& Tanner~\cite{DT:05,dota:09a} provide similar results for the problem of recovering a sparse nonnegative vector by solving the $\ell_1$ minimization problem~\eqref{eqn:l1-min} with an additional nonnegativity constraint.  Once again, they obtain lower bounds on the number of Gaussian measurements required for weak and strong success.    These bounds are sharp in the ultrasparse regime $s = o(\log d)$.  The earlier work of Vershik \& Sporyshev~\cite{vesp:92} contains results closely related to the weak transition estimate from~\cite{dota:09a}.

Other authors have used polytope angle calculations to develop theory for related $\ell_1$ minimization problems.  For example, Khajehnejad et al.~\cite{KXAH2011} provide an analysis of the performance of weighted $\ell_1$ regularizers.  Xu \& Hassibi~\cite{XH2011} obtain lower bounds for the number of measurements required for \emph{stable} recovery of a sparse vector via $\ell_1$ minimization.  

Finally, let us mention that Donoho \& Tanner~\cite{DT:10} have obtained a more complete theory about the location of the phase transition for a regularized linear inverse problem where the $\ell_\infty$ norm is used as a regularizer.  Their results, formulated in terms of projections of the hypercube and orthant, are a consequence of a geometric theorem that goes back to Schl{\"a}fli~\cite{Sch:50}; see the discussion~\cite[p.~299]{scwe:08}.

\subsubsection{Faces of randomly projected polytopes}
\label{sec:face-plant}

It may be helpful if we elaborate on the relationship between the proof techniques in this paper and the approach described above.  For concreteness, we focus on the $\ell_1$ minimization method~\eqref{eqn:l1-min} for the compressed sensing problem,
but the ideas apply more broadly.

Suppose that we have acquired the vector $\vct{z}_0 = \mtx{A} \vct{x}_0$, where $\mtx{A}$ is a standard normal matrix and $\vct{x}_0$ is a vector with $s + 1$ nonzero entries.  We may assume that $\pnorm{1}{\vct{x}_0} = 1$.
Define the cross-polytope $P := \{ \vct{x} : \pnorm{1}{\vct{x}} \leq 1 \}$, and note that the sparse vector $\vct{x}_0$ lies in an $s$-dimensional face $F$ of the cross-polytope $P$.  Donoho shows that~\eqref{eqn:l1-min} succeeds if and only if $\mtx{A} F$ is an $s$-dimensional face of the projected polytope $\mtx{A} P$.  In other words, we must determine whether a face of the polytope ``survives'' a random projection.  Donoho~\cite{dono:06b} and Donoho \& Tanner~\cite{dota:09a} use the polytope-angle theory to address this question.

The approximate kinematic formula, Theorem~\ref{thm:kinematic}, provides a muscular approach to the same problem.  For a face $F$ of a polytope $P$, we define $\cone(F, P) := \cone( P - \vct{x} )$, where $\vct{x}$ is any point in the relative interior of the face~\cite[p.~297]{Gru:68}.
In the context of the $\ell_1$ minimization problem above, $\cone(F,P)$ coincides with the descent cone $\Desc(\pnorm{1}{\cdot}, \vct{x}_0)$.  The discussion in~\cite[Chap.~8.3]{scwe:08} yields the following observation.

\begin{fact} \label{fact:faces}
Let $F$ be an $s$-dimensional face of a polytope $P$ in $\R^d$, let $L$ be a $m$-dimensional subspace of $\R^d$, and let $\mtx{Q} \in \R^{d \times d}$ be a random rotation.  Then
$$
\Prob\big\{ \Proj_{\mtx{Q}L} F \text{ is an $s$-dimensional face of } \Proj_{\mtx{Q} L} P \big\}
	= \Prob\big\{ \mtx{Q}L^\perp \cap \cone(F, P) = \{ \vct{0} \} \big\}.
$$
\end{fact}

\noindent
In other words, the probability that a face $F$ of a polytope $P$ in $\R^d$ maintains its dimension under projection onto a random $m$-dimensional subspace is equal to the probability that a random $(d-m)$-dimensional subspace does not share a ray with $\cone(F,P)$.  When $m \leq \sdim(\cone(F,P))$ or, equivalently, $\sdim(\cone(F,P)) + (d - m) \geq d$, Theorem~\ref{thm:kinematic} shows that a random $(d-m)$-dimensional subspace is likely to share a ray with $\cone(F,P)$.  Therefore,
the face $F$ is unlikely to survive the projection onto an $m$-dimensional subspace.  Conversely, the face $F$ is likely to survive when $m \geq \sdim(\cone(F,P))$.
This analysis applies to any polytope, so our theory provides new results about a fundamental problem in combinatorial geometry.

\subsubsection{Commentary}

The analysis of structured inverse problems by means of polytope-angle computations has led to some striking conclusions, but this approach has inherent limitations.  First, the method is restricted to polyhedral cones, which means that it is silent about the behavior of many important regularizers, including the Schatten 1-norm.  Second, it requires detailed bounds on all angles of a given polytope (equivalently, all the intrinsic volumes of the normal cones of the polytope), which means that it is difficult to extend beyond a few highly symmetric examples.  For this reason, most of the existing results are asymptotic in nature.  Third, because of the intricacy of the calculations, this research has produced few definitive results of the form ``the probability of success jumps from zero to one at a specified location.''

We believe that our analysis supersedes most of the research on weak phase transitions for $\ell_1$ regularized linear inverse problems that is based on polytope angles.  We have shown for the first time that there is a transition from absolute success to absolute failure, and we have characterized the location of the threshold when the sparsity $s \geq \sqrt{d} + 1$.  On the other hand, our error estimate for the statistical dimension of the descent cone of the $\ell_1$ norm at a sparse vector is not very accurate when $s \leq \sqrt{d}$.  The independent work of Foygel \& Mackey~\cite[Prop.~1]{FoyMac:13} contains a bound that improves our result in this sparsity regime.

It is not hard to extend our analysis to the other settings discussed in this section.  Indeed, we can easily study regularized inverse problems involving weighted $\ell_1$ norms and $\ell_1$ norms with nonnegativity constraints.  We can effortlessly derive  phase transitions for $\ell_{\infty}$-regularized problems using Proposition~\ref{prop:linf-desc}.  Bounds for strong transitions are also accessible to our methods.  We have omitted all of this material for brevity.

\subsection{Conic intrinsic volumes}

Research on polytope angles has largely been supplanted
by spherical and conical integral geometry~\cite{Gl,scwe:08}.
Several authors have independently recognized the power of this
approach for analyzing random instances of convex optimization problems.

Amelunxen~\cite{am:thesis} and Amelunxen \& B{\"u}rgisser~\cite{ambu:11b,ambu:11c}
have shown that conic geometry offers an elegant way to perform
average-case and smoothed analysis of conic optimization problems.
Their work requires detailed computations of conic intrinsic volumes,
which can make it challenging to apply to particular cases.  We can
simplify some of their techniques using the new fact, Theorem~\ref{thm:main-conc},
that intrinsic volumes concentrate at the statistical dimension.
Theorem~\ref{thm:cone-prog} is based on their research.

McCoy \& Tropp~\cite{mctr:12} have used conic intrinsic volumes to study the
behavior of regularized linear inverse problems with random measurements
and regularized demixing problems under a random model.
This approach leads to both upper and lower bounds for weak and strong
phase transitions in a variety of problems.  As with Amelunxen's
work~\cite{am:thesis}, this research depends on detailed computations
of conic intrinsic volumes.  As a consequence, it was not possible to
rigorously locate the phase transition, nor was there any general
theory to inform us that phase transitions must exist in general.
By combining the ideas from~\cite{mctr:12} with Theorem~\ref{thm:approx-kinem},
the present work reaches stronger conclusions than~\cite{mctr:12}.

\subsection{Analysis based on Gaussian process theory}
\label{sec:statdim_sqgw}
The work we have discussed so far depends on various flavors
of integral geometry.  There is a completely different technique for
analyzing linear inverse problems with random data that depends on
a comparison inequality~\cite[Thm.~1.4]{gord:85} for Gaussian processes
due to Gordon.  Gordon~\cite{gord:88}
explains how to use this comparison to find accurate
bounds on the probability that a randomly oriented
subspace intersects a subset of the sphere. 

Rudelson \& Vershynin~\cite{RV:08} were the first authors
to observe that Gordon's work is relevant to the analysis
of the $\ell_1$ minimization method~\eqref{eqn:l1-min} for
the compressed sensing problem.
Stojnic~\cite{stojnic10} refined their method enough that
he was able to establish an empirically sharp condition
for $\ell_1$ minimization to succeed.  Oymak \& Hassibi~\cite{OH:10}
extended Stojnic's approach to study the low-rank recovery problem~\eqref{eqn:S1-min}
with random measurements.  Later, Chandrasekaran et al.~\cite{CRPW:12}
observed that the same approach can be used to analyze the
regularized linear inverse problem~\eqref{eq:lin-inv-gen} with random
measurements whenever the regularizer $f$ is convex.  
In particular, the result~\cite[Cor.~3.3(1)]{CRPW:12} is equivalent
with the success condition from Theorem~\ref{thm:phase-trans-lin-inv},
modulo the exact value of the constant $a_{\eta}$.

None of these authors has examined the failure condition
for random linear inverse problems, but we have
observed that it is possible to obtain such a result
by incorporating a polarity argument. We can also extend the Gaussian process approach to obtain a success condition
for demixing, but it does not yield a failure condition for these problems.
After this paper was written, Stojnic~\cite{Sto13:Regularly-Random}
developed some related results.

\subsubsection{The Gaussian width}

This line of work uses a summary parameter for convex cones
called the \term{Gaussian width}.  For a convex cone $C \subset \R^d$,
the width is defined as
\begin{equation*} \label{eq:gauss-width}
  w(C) := \Expect\bigl[\sup\nolimits_{\smash{\vct{y}} \in C \cap \sphere{d-1}} \
\ip{\smash{\vct{y}}}{\smash{\vct{g}}}\bigr].
\end{equation*}
The Gaussian width $w(C)$ is proportional to the classical mean width~\cite[p.~42]{Sch:93} of the set $C\cap \sphere{d-1}$.  The width $w(C)$ also has a numerical relationship with the statistical dimension $\delta(C)$, which is not surprising in view of the mean-squared-width formulation~\eqref{eq:sdim-sup-expect}.

\begin{proposition}[Statistical dimension \& Gaussian width]
\label{prop:sdim-width}
Let $C$ be a convex cone.  Then
\begin{equation} \label{eq:width-sdim}
w^2(C) \leq \delta(C) \leq w^2(C) + 1.
\end{equation}
\end{proposition}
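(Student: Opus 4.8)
The plan is to reduce both quantities to a single scalar random variable, the \emph{width function}
\[
s(\vct g) := \sup\nolimits_{\vct x \in C \cap \sphere{d-1}} \ip{\vct g}{\vct x},
\]
and to exploit that the Euclidean projection and the width are two faces of the same object. First I would assume $C$ is closed, which changes neither side by Definition~\ref{def:sdim} and~\eqref{eq:gauss-width}, so that $C \cap \sphere{d-1}$ is compact and $s(\vct g)$ is finite and attained. The crucial identity to establish is $\enormsm{\Proj_C(\vct g)} = \max\{s(\vct g),\,0\}$. I would verify this by first showing $\enormsm{\Proj_C(\vct g)} = \sup_{\vct x \in C \cap \ball{d}} \ip{\vct g}{\vct x}$, reading off the bound from the orthogonal decomposition~\eqref{eq:orthog-decomp} and Cauchy--Schwarz and attaining it at $\Proj_C(\vct g)/\enormsm{\Proj_C(\vct g)}$, and then rescaling radii inside the ball to collapse the ball-supremum to $\max\{s(\vct g),0\}$. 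With this in hand, $\delta(C) = \Expect[\max\{s(\vct g),0\}^2]$ by~\eqref{eq:sdim-gauss}, while $w(C) = \Expect[s(\vct g)]$ by definition.

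Next I would record that $s$ is $1$-Lipschitz on $\R^d$, being a supremum of the linear functionals $\vct g \mapsto \ip{\vct g}{\vct x}$ over unit vectors $\vct x$, so the Gaussian Poincar\'e inequality yields $\Var[s(\vct g)] \le 1$. For the upper bound, the key observation is the pointwise inequality $\max\{s,0\}^2 \le s^2$, whence
\[
\delta(C) = \Expect\big[\max\{s(\vct g),0\}^2\big] \le \Expect\big[s(\vct g)^2\big] = \Var[s(\vct g)] + w^2(C) \le w^2(C) + 1.
\]

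For the lower bound I would apply Jensen's inequality: since $s(\vct g) \le \max\{s(\vct g),0\}$,
\[
w(C) = \Expect[s(\vct g)] \le \Expect\big[\max\{s(\vct g),0\}\big] \le \sqrt{\Expect\big[\max\{s(\vct g),0\}^2\big]} = \sqrt{\delta(C)}.
\]
To square this into $w^2(C) \le \delta(C)$ I must also know $w(C) \ge 0$; this follows from the pointwise bound $s(\vct g) + s(-\vct g) \ge 0$ (evaluate both suprema at any common $\vct x_0 \in C \cap \sphere{d-1}$) together with the symmetry of the Gaussian, which gives $2\,w(C) = \Expect[s(\vct g) + s(-\vct g)] \ge 0$.

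The main obstacle is not any single inequality but the mismatch between the sphere-based width $w(C) = \Expect[s(\vct g)]$ and the ball/projection quantity $\Expect[\max\{s(\vct g),0\}]$. A naive application of Poincar\'e to the $1$-Lipschitz map $\enormsm{\Proj_C(\cdot)}$ only delivers $\delta(C) \le \big(\Expect[\max\{s(\vct g),0\}]\big)^2 + 1$, and $\big(\Expect[\max\{s(\vct g),0\}]\big)^2 \ge w^2(C)$ runs the wrong way. The resolution---applying Poincar\'e to the possibly sign-changing function $s$ rather than to its positive part, and only afterward invoking $\max\{s,0\}^2 \le s^2$---is precisely what makes the upper bound emerge in terms of $w^2(C)$ instead of the larger quantity. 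Establishing $w(C) \ge 0$ for the lower bound is a minor but necessary companion point.
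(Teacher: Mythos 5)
Your proof is correct and takes essentially the same approach as the paper's: both arguments rest on the duality identity $\enormsm{\Proj_C(\vct g)} = \dist(\vct g, C^\polar) = \sup_{\vct x \in C\cap\ball{d}} \ip{\vct g}{\vct x}$, apply the Gaussian Poincar\'e inequality (Fact~\ref{fact:gauss-lip}) to the $1$-Lipschitz spherical supremum to get the upper bound, and use Jensen's inequality for the lower bound. The differences are only organizational: your pointwise inequality $\max\{s(\vct g),0\}^2 \le s(\vct g)^2$ repackages the paper's indicator step $\Expect[Z^2] \ge \Expect\big[Z^2\,\mathbb{1}_{\R^d\setminus C^\polar}(\vct g)\big]$, and your explicit verification that $w(C)\ge 0$ justifies a squaring that the paper performs without comment.
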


\noindent
The lower bound in~\eqref{eq:width-sdim} requires little more than Jensen's inequality. The upper bound depends on some concentration
arguments.  See Appendix~\ref{app:sdim-width} for a short proof.

\subsubsection{Commentary}

In light of Proposition~\ref{prop:sdim-width}, the statistical dimension and the Gaussian width are essentially equivalent as summary parameters for cones.  This observation allows us to bridge the chasm between two perspectives on random convex optimization problems: the approach based on integral geometry and the approach based on Gaussian process theory.  Even in the simple case of $\ell_1$ minimization~\eqref{eqn:l1-min}, this connection is entirely new; cf.~\cite[pp.~312--313]{XuHas:12}.

This link yields many insights.  For example, all of the statistical dimension calculations here lead to analogous estimates for the Gaussian width.  In particular, Theorem~\ref{thm:sharp-descent} provides the first accurate lower bounds for the Gaussian width of a descent cone.  Furthermore, we can use the intrinsic characterization of the statistical dimension, Definition~\ref{def:sdim-int}, to study cones that are not accessible to the metric characterization, Proposition~\ref{def:sdim}.  For instance, consider the results for permutahedra in Proposition~\ref{prop:perm-cone}.

Despite these connections, we did not arrive at the statistical dimension by squaring the Gaussian width.  Rather, Definition~\ref{def:sdim-int} emerges from Theorem~\ref{thm:main-conc}, our result that the conic intrinsic volumes concentrate.  To travel from this definition to the Gaussian width, we must take several long steps: Proposition~\ref{prop:sdim-int-vols} leads to the metric characterization of the statistical dimension, Proposition~\ref{def:sdim}; we apply Proposition~\ref{prop:prop-sdim} to obtain the mean-squared-width formulation~\eqref{eq:sdim-sup-expect} of statistical dimension; and Proposition~\ref{prop:sdim-width} sandwiches the mean-squared-width formulation by the Gaussian width.

\subsection{Minimax denoising}
\label{sec:minimax}

Several authors~\cite{DonMalMon:09,DonJohMon:11,DonGavMon:13} have remarked on the power of statistical decision theory to empirically predict the location of the phase transition in a regularized linear inverse problem with random data.  For the compressed sensing problem, two recent papers~\cite{BayMon:12,BayLelMon:12} provide a rigorous explanation for this coincidence.  But there is no general theory that illuminates the connection between these two settings.  Our work and a recent paper of Oymak \& Hassibi~\cite{OymHas:12} together resolve this issue.  In short, Oymak \& Hassibi show that the minimax risk for denoising is essentially the same as the statistical dimension, while our research proves that a phase transition must occur at the statistical dimension.  Let us elaborate.

A classical problem in statistics is to estimate a target vector $\vct{x}_0$ given an observation of the form $\vct{z}_0 = \vct{x}_0 + \sigma \vct{g}$ where $\vct{g}$ is a standard normal vector and $\sigma$ is an unknown variance parameter.  When the unknown vector $\vct{x}_0$ has specified properties (e.g., sparsity), we can often construct a convex regularizer $f$ that promotes this type of structure~\cite{CRPW:12}.  A natural estimation procedure is to solve the convex optimization problem
\begin{equation} \label{eq:DN}
\widehat{\vct{x}}_{\gamma} :=
	\argmin_{\vct{x} \in \R^d} \ \gamma f(\vct{x}) + \tfrac{1}{2} \enormsm{\vct{z}_0 - \vct{x}}^2.
\end{equation}
The regularization parameter $\gamma > 0$ negotiates a tradeoff between the structural penalty and the data fidelity term.  One way to assess the performance of the estimator~\eqref{eq:DN} is the \term{minimax MSE risk},\footnote{The usual definition of the minimax risk involves an additional supremum over  a class of distributions on the target $\vct{x}_0$. In many applications, the symmetries in the regularizer $f$ allow a straightforward reduction to the case of a fixed target $\vct{x}_0$.  See~\cite[Sec.~6.4]{OymHas:12}.}
defined as
$$
\Rmm(\vct{x}_0) := \sup_{\sigma > 0} \ \inf_{\gamma > 0} \
	\frac{1}{\sigma^2} \Expect \big[ \enormsm{ \widehat{\vct{x}}_{\gamma} - \vct{x}_0 }^2 \big].
$$
In other words, the risk identifies the relative mean-square error for the best choice of tuning parameter $\gamma$ at the worst choice of the noise variance $\sigma^2$.

The papers~\cite{DonMalMon:09,DonJohMon:11,DonGavMon:13} examine several regularizers $f$ where the minimax risk empirically predicts the performance of the linear inverse problem~\eqref{eq:lin-inv-gen} with a Gaussian measurement matrix $\mtx{A}$.  The authors of this research propound a conjecture that may be expressed as follows.

\begin{conjecture}[Minimax risk predicts phase transitions] \label{conj:rmm}
Suppose that $\mtx{A} \in \R^{m \times d}$ is a matrix with independent standard normal entries, and let $f : \R^d \to \overline{\R}$ be a proper convex function.  Then
\begin{align*}
m \geq \Rmm(\vct{x}_0) + o(d)
&\quad\Longrightarrow\quad
\text{\eqref{eq:lin-inv-gen} succeeds with probability $1 - o(1)$;} \\
m \leq \Rmm(\vct{x}_0) + o(d)
&\quad\Longrightarrow\quad
\text{\eqref{eq:lin-inv-gen} succeeds with probability $o(1)$.}
\end{align*}
The order notation here should be interpreted heuristically.
\end{conjecture}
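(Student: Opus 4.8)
The plan is to derive the conjecture as an immediate corollary of two results already in hand: Theorem~\ref{thm:phase-trans-lin-inv}, which locates the phase transition for~\eqref{eq:lin-inv-gen} at the statistical dimension $\delta(\Desc(f,\vct{x}_0))$ of the descent cone, and the equivalence of Oymak \& Hassibi~\cite{OymHas:12} between the minimax denoising risk and this same statistical dimension, written schematically as $\Rmm(\vct{x}_0) = \delta(\Desc(f,\vct{x}_0)) + o(d)$. The essence of the argument is that, once the risk is identified with the statistical dimension, the sharp transition supplied by Theorem~\ref{thm:phase-trans-lin-inv} can simply be re-expressed in terms of $\Rmm(\vct{x}_0)$, with no new geometry required.

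First I would recast the finite-dimensional bounds of Theorem~\ref{thm:phase-trans-lin-inv} in asymptotic form. Since $a_{\eta} = 4\sqrt{\log(4/\eta)}$ grows only like $\sqrt{\log(1/\eta)}$, I may let the tolerance $\eta = \eta_d$ tend to zero slowly enough that the transition window $a_{\eta_d}\sqrt{d}$ remains $o(d)$ while $\eta_d \to 0$. With this choice, Theorem~\ref{thm:phase-trans-lin-inv} yields that $m \geq \delta(\Desc(f,\vct{x}_0)) + o(d)$ forces success with probability $1 - o(1)$, and $m \leq \delta(\Desc(f,\vct{x}_0)) - o(d)$ forces success with probability $o(1)$.

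Next I would substitute the Oymak--Hassibi identity into these thresholds, absorbing the discrepancy $\Rmm(\vct{x}_0) - \delta(\Desc(f,\vct{x}_0)) = o(d)$ into the $o(d)$ slack already present in the transition window. The two sources of error are of the same (negligible) order relative to the ambient dimension, so replacing $\delta(\Desc(f,\vct{x}_0))$ by $\Rmm(\vct{x}_0)$ shifts each threshold by at most $o(d)$, and the two implications of the conjecture follow directly.

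The main obstacle is not the logical skeleton but the bookkeeping of the error terms. One must confirm that the discrepancy in the equivalence of~\cite{OymHas:12} is genuinely $o(d)$ throughout the regime of interest, and that combining it additively with the $O(\sqrt{d})$ transition width preserves the \emph{directions} of the inequalities (an upper threshold plus positive slack must remain an upper threshold, and likewise for the lower threshold). A secondary subtlety is that the conjecture is stated only heuristically: part of the work is to pin down a precise hypothesis on the class of regularizers $f$ and targets $\vct{x}_0$ under which the Oymak--Hassibi equivalence holds with a controlled remainder, so that the two $o(d)$ terms may be legitimately merged into the single $o(d)$ window of Theorem~\ref{thm:phase-trans-lin-inv}.
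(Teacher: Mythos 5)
Your proposal is correct and follows essentially the same route as the paper: Section~\ref{sec:minimax} settles Conjecture~\ref{conj:rmm} precisely by combining Theorem~\ref{thm:phase-trans-lin-inv} (transition at $\delta(\Desc(f,\vct{x}_0))$ with $O(\sqrt{d})$ width) with the Oymak--Hassibi bound $\abs{\Rmm(\vct{x}_0) - \delta(\Desc(f,\vct{x}_0))} = O(\sqrt{d})$, absorbing both errors into the $o(d)$ slack. Your additional step of sending the tolerance $\eta_d \to 0$ slowly so that $a_{\eta_d}\sqrt{d} = o(d)$ makes explicit a detail the paper leaves implicit, but the argument is the same.
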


\noindent
To the best of our knowledge, this claim has been established rigorously only for the $\ell_1$ norm~\cite[Thm.~8]{BayLelMon:12} in the asymptotic setting.  The paper~\cite{BayLelMon:12} also includes analysis for a wider class of matrices.

Together, our paper and the recent paper~\cite{OymHas:12} settle Conjecture~\ref{conj:rmm} in the nonasymptotic setting for many regularizers of interest.  Indeed, Oymak \& Hassibi~\cite{OymHas:12} prove that
$$
\abs{ \Rmm(\vct{x}_0) - \delta\big( \Desc(f, \vct{x}_0) \big) } = O(\sqrt{d}).
$$
Their result holds under mild conditions on the regularizer $f$ that suffice to address most of the phase transitions conjectured in the literature.  Our result, Theorem~\ref{thm:phase-trans-lin-inv}, demonstrates that the phase transition in the linear inverse problem~\eqref{eq:lin-inv-gen} with a standard normal matrix $\mtx{A} \in \R^{m\times d}$ occurs when
$$
\abs{ m - \delta\big(\Desc(f, \vct{x}_0)\big) } = O(\sqrt{d}).
$$
Combining these two results, we conclude that, in some generality, the minimax risk coincides with the location of the phase transition in a regularized linear inverse problem with random measurements.

\appendix

\section{Computer experiments}
\label{app:experiments}

We confirm the predictions of our theoretical analysis by
performing computer experiments.  This appendix contains
some of the details of our numerical work.  All experiments
were performed using the \texttt{CVX} package~\cite{GB2013}
for \textsc{Matlab} with the default settings in place. 

\subsection{Linear inverse problems with random measurements}

This section describes the two experiments from Section~\ref{sec:line-inverse-probl}
that illustrate the empirical phase transition in compressed sensing via $\ell_1$ minimization and in low-rank matrix recovery via Schatten 1-norm minimization.

In the compressed sensing example, we fix the ambient dimension $d = 100$.  For each $m = 1, 2, 3, \dots, 100$ and each $s = 1, 2, 3, \dots, 100$, we repeat
the following procedure $50$ times:
\begin{enumerate}
\item	Construct a vector $\vct{x}_0 \in \R^d$ with $s$ nonzero entries.  The locations
of the nonzero entries are selected at random; each nonzero equals $\pm 1$ with equal probability.

\item	Draw a standard normal matrix $\mtx{A} \in \R^{m \times d}$, and form $\vct{z}_0 = \mtx{A} \vct{x}_0$.

\item	Solve~\eqref{eq:l1-min-v2} to obtain an optimal point $\widehat{\vct{x}}$.

\item	Declare success if $\enorm{\widehat{\vct{x}} - \vct{x}_0} \leq 10^{-5}$.
\end{enumerate}
All random variables are drawn independently in each step and at each iteration.  Figures~\ref{fig:l1-min}[left] and~\ref{fig:lin-inv-1}[left] show the empirical probability of success for this procedure.  We performed a similar experiment for ambient dimension $d = 600$.  In this case, we iterate over $s \in \{1, 7, 13, \dots, 595\}$ and $m \in \{0, 6, 12, \dots, 600\}$; Figure~\ref{fig:l1-min}[right] displays a subset of this data.

In the compressed sensing experiment, 50 trials suffice to estimate the probability because of the concentration phenomenon established in Theorem~\ref{thm:phase-trans-lin-inv}.  Furthermore, the experiment does not appear particularly sensitive to the success tolerance \(10^{-5}\) above. Limited tests confirm that tolerances with orders of magnitude \(10^{-3}\) through \(10^{-7}\) give essentially the same results.  The value \(10^{-5}\) was chosen as a stringent condition that would be insensitive to numerical errors produced by the \texttt{CVX} software package.
Nevertheless, we encountered some numerical problems in the experiment with $d = 600$, which led to a few spurious failures in Figure~\ref{fig:l1-min}[right].

We take a similar approach in the low-rank matrix recovery problem.  Fix $n = 30$, and consider square $n \times n$ matrices.  For each rank $r = 1, 2, \dots, n$ and each $m = 1, 29, 58, 87, \dots, n^2$, we repeat the following procedure $50$ times:
\begin{enumerate}
\item	If $r \geq \lceil \sqrt{m} \rceil + 1$, declare failure because the number of degrees of freedom in an $n\times n$ rank-$r$ matrix exceeds the number $m$ of measurements.

\item	Draw a rank-$r$ matrix $\mtx{X}_0 = \mtx{Q}_1 \mtx{Q}_2^\transp$, where $\mtx{Q}_1$ and $\mtx{Q}_2$ are independent $n \times r$ matrices with orthonormal columns, drawn uniformly from an appropriate Stiefel manifold~\cite{Mez2007}.

\item	Draw a standard normal matrix $\mtx{A} \in \R^{m \times n^2}$, and define $\coll{A}(\mtx{X}) := \mtx{A} \cdot \mathrm{\vec}(\mtx{X})$, where the vectorization operator stacks the columns of a matrix.  Form the vector of measurements $\vct{z}_0 = \coll{A}(\mtx{X}_0)$.

\item	Solve~\eqref{eqn:S1-min} to obtain an optimal point $\widehat{\mtx{X}}$.

\item	Declare success if $\fnorm{\smash{\widehat{\mtx{X}}- \mtx{X}_0}} \leq 10^{-5}$.
\end{enumerate}
As before, all random variables are chosen independently.
Readers interested in reproducing this experiment should be aware that this procedure required nearly one month to execute on a desktop workstation.
Figure~\ref{fig:lin-inv-1}[right] displays the results of this experiment.  Once again, the probability of success and failure is relatively insensitive to the precise tolerance \(10^{-5}\) used above.

\subsection{Statistical dimension curves}
\label{sec:addend-comp-stat}

The formulas~\eqref{eq:l1-sdim} and~\eqref{eq:S1-sdim} for the statistical dimension of the descent cones of the $\ell_1$ norm and the Schatten 1-norm do not have a closed form representation.  Nevertheless, we can evaluate these expressions using simple numerical methods.  Indeed, in each case, we solve the stationary equation~\eqref{eq:l1-stationary} and~\eqref{eq:S1-stationary} using the rootfinding procedure \texttt{fzero}, which works well because the left-hand side of each equation is a monotone function of $\tau$.  To evaluate the integral in~\eqref{eq:l1-sdim}, we use the command \texttt{erfc}.  To evaluate the integral in~\eqref{eq:S1-sdim}, we use the quadrature function~\texttt{quadgk}.

We have encountered some numerical stability problems evaluating~\eqref{eq:l1-sdim} when the proportional sparsity $\rho = s/d$ is close to zero or one.  Similarly, there are sometimes difficulties with~\eqref{eq:S1-sdim} when the proportional rank $\rho = r/m$ or the aspect ratio $\nu = m/n$ are close to zero or one.  Nevertheless, relatively simple code based on this approach is usually reliable.  Software is available online~\cite{snow-code}.

\subsection{Logistic regression}
\label{sec:logistic}

Several of the experiments involve fitting the logistic function
\begin{equation*}
  \ell(x) \defeq \frac{1}{1+\econst^{-(\beta_0 + \beta_1 x)}}
\end{equation*}
to the data, where $\beta_0,\beta_1 \in \R$ are parameters.
We use the command \texttt{glmfit} to accomplish this task.
The center $\mu\defeq -\beta_0/\beta_1$ of the logistic function is
the point such that $\ell(\mu)=\frac{1}{2}$.

\section{Background on conic geometry}
\label{sec:notation-conventions}

Sections~\ref{sec:repr-desc-cones}--\ref{sec:eucl-proj-onto-1} below provide some important facts from convex geometry that we will use liberally.  Section~\ref{sec:proof-prop-refpr} establishes the properties of the statistical dimension listed in Proposition~\ref{prop:prop-sdim}.

\subsection{Representation of descent cones}
\label{sec:repr-desc-cones}

Let $f : \R^d \to \overline{\R}$ be a
proper convex function.  Recall that the \term{descent cone} of $f$
at a point $\vct{x}$ is given by
$$
\Desc(f,\vct{x}) \defeq \bigcup_{\tau > 0}  \big\{\vct{y} \in \R^d :
  f(\vct{x} + \tau \vct{y} ) \leq f(\vct{x}) \big\}.
$$
The \term{normal cone} of a convex set $E$ at a point $\vct{x}$ is defined as
\begin{equation*}  \NormC( E, \vct{x} ) := \big\{ \vct{s} \in \R^d : \ip{ \vct{s} }{ \smash{\vct{y}} - \vct{x} } \leq 0
 \quad\text{for all $\vct{y} \in E$} \big\}.
 \end{equation*}
The polar of a descent cone has some attractive duality properties.
First, the polar of a descent cone coincides with the normal cone of a sublevel set:
\begin{equation} \label{eq:descent-normal}
\Desc(f, \vct{x})^\polar = \NormC(E, \vct{x})
\quad\text{where}\quad
E = \big\{ \vct{y} \in \R^d : f(\vct{y}) \leq f(\vct{x}) \big \}.
\end{equation}
If the descent cone is closed, we can apply the bipolar theorem~\cite[Thm.~14.1]{Rock} to transfer the polar in~\eqref{eq:descent-normal} from the normal cone to the descent cone.
Second, there is a duality between descent cones and subdifferentials.  Recall that
$$
\partial f(\vct{x}) :=
	\big\{ \vct{s} \in \R^d : f(\vct{y}) \geq f(\vct{x}) + \ip{\vct{s}}{\smash{\vct{y}} - \vct{x}} \text{ for all $\vct{y} \in \R^d$} \big\}.
$$
Assuming that the subdifferential $\partial f(\vct{x})$ is nonempty, compact, and does not contain the origin, the result~\cite[Cor.~23.7.1]{Rock} provides that
\begin{equation} \label{eqn:desc-polar}
\Desc(f, \vct{x})^\polar = \mathrm{\cone}(\partial f(\vct{x}))
	:= \bigcup_{\tau \geq 0} \tau \cdot \partial f(\vct{x}). 
\end{equation}
The expression $\tau \cdot \partial f(\vct{x})$ represents dilation of the subdifferential by a factor $\tau$.  The relation~\eqref{eqn:desc-polar} offers a powerful tool for computing the statistical dimension of a descent cone, as described in Proposition~\ref{prop:sdim-descent}.  Related identities hold under weaker technical conditions~\cite[Thm.~23.7]{Rock}; these results can be used to establish the formula~\eqref{eq:sdim-descent} without the compactness assumption.

\subsection{Euclidean projections onto sets} 
\label{sec:eucl-proj-onto}
Let $E \subset \R^d$ be a closed convex set. 
 The \term{Euclidean projection} onto the set $E$ is the map
 $$
 \vct{\pi}_{E} : \R^d \to E
 \quad\text{where}\quad
 \vct{\pi}_{E}(\vct{x}) :=
 \argmin\big\{ \enormsm{ \vct{x} - \vct{y} } : \vct{y} \in E \big\}.
 $$
 The projection takes a well-defined value because the norm is strictly convex.  
Let us note some properties of the distance and projection maps.
First, the function $\dist(\cdot, E)$ is convex~\cite[p.~34]{Rock}.
Next, the maps $\vct{\pi}_E$ and $\Id - \vct{\pi}_E$ are nonexpansive with respect
to the Euclidean norm~\cite[Thm.~31.5 et seq.]{Rock}:
\begin{equation} \label{eq:I-pi}
\enormsm{\vct{\pi}_E(\vct{x}) - \vct{\pi}_E(\vct{y})}
	\leq \enormsm{\vct{x} - \vct{y} }
	\quad\text{and}\quad
\enormsm{ (\Id - \vct{\pi}_E)(\vct{x}) - (\Id -\vct{\pi}_E)(\vct{y}) }
	\leq \enormsm{ \vct{x} - \vct{y} }
	\quad\text{for all $\vct{x}, \vct{y} \in \R^d$.}
\end{equation}
As a consequence, the projection $\vct{\pi}_E$ is continuous,
and the distance function is 1-Lipschitz with respect to the
Euclidean norm:
\begin{equation} \label{eq:dist-lip}
\abs{ \dist(\vct{x}, E) - \dist(\vct{y}, E) }
	\leq \enormsm{ \vct{x} - \vct{y} }
	\quad\text{for all $\vct{x}, \vct{y} \in \R^d$.}
\end{equation}
The squared distance is differentiable everywhere, and the derivative satisfies
\begin{equation} \label{eq:grad-dist}
\nabla \dist^2( \vct{x}, E ) = 2 \, (\vct{x} - \vct{\pi}_{E}(\vct{x}))
\quad\text{for all $\vct{x} \in \R^d$.}
\end{equation}
This point follows from~\cite[Thm.~2.26]{RW:98}.

\subsection{Euclidean projections onto cones} 
\label{sec:eucl-proj-onto-1}

Let $C \subset \R^d$ be a closed convex cone.  Recall that the Euclidean projection onto the cone $C$ is the map
$$
\Proj_C : \R^d \to C
\quad\text{where}\quad
\Proj_C(\vct{x}) := \argmin \big\{ \enormsm{ \vct{x} - \vct{y} } : \vct{y} \in C \big\}.
$$
We have used a separate notation for the projection onto a set because the projection onto a cone enjoys a number of additional properties~\cite[Sec.~III.3.2]{HUL:93a}.
First, the projection onto a cone is nonnegative homogeneous:
$\Proj_C( \tau \, \vct{x}) = \tau \, \Proj_C(\vct{x})$ for all $\tau \geq 0$.
Next, the cone $C$ induces an orthogonal decomposition of $\R^d$.
\begin{equation} \label{eq:orthog-decomp}
\vct{x} = \Proj_C(\vct{x}) + \Proj_{C^\polar}(\vct{x})
\quad\text{and}\quad
\ip{ \Proj_C(\vct{x}) }{ \Proj_{C^\polar}(\vct{x}) } = 0
\quad\text{for all $\vct{x} \in \R^d$.}
\end{equation}
The decomposition~\eqref{eq:orthog-decomp} yields the Pythagorean identity
\begin{equation} \label{eq:pythag}
\normsq{\vct{x}}
	= \normsq{ \Proj_C(\vct{x}) } + \normsq{ \Proj_{C^\polar}(\vct{x}) }
	\quad\text{for all $\vct{x} \in \R^d$.}	
\end{equation}
It also implies the distance formulas
\begin{equation} \label{eq:dist-proj}
\dist(\vct{x}, C) 	= \enormsm{ \vct{x} - \Proj_C(\vct{x}) }
	= \enormsm{ \Proj_{C^\polar}(\vct{x}) }
	\quad\text{for all $\vct{x} \in \R^d$}.
\end{equation}
The squared norm of the projection has a nice regularity
property, which follows from a short argument based
on~\eqref{eq:grad-dist},~\eqref{eq:orthog-decomp}, and~\eqref{eq:dist-proj}:
\begin{equation} \label{eq:grad-proj}
\nabla \enormsm{ \Proj_C(\vct{x}) }^2 = 2\,\Proj_C(\vct{x})
\quad\text{for all $\vct{x} \in \R^d$.}
\end{equation}
Indeed,
$
\nabla \enormsm{\Proj_C(\vct{x})}^2 	= \nabla \dist^2(\vct{x}, C^\polar) = 2 (\vct{x} - \Proj_{C^\polar}(\vct{x})) = 2 \,\Proj_C(\vct{x}).
$
Finally, the projection map decomposes under Cartesian products.
For two closed convex cones 
$C_1 \subset \R^{d_1}$ and $C_2\subset \R^{d_2}$ the product $C_1 \times C_2 \subset \R^{d_1 + d_2}$ satisfies
\begin{equation} \label{eq:proj-split}
\Proj_{C_1 \times C_2}\big((\vct{x}_1, \vct{x}_2)\big)
	= \big(\Proj_{C_1}(\vct{x}_1), \Proj_{C_2}(\vct{x}_2) \big)
	\quad\text{for all $\vct{x}_1 \in \R^{d_1}$ and $\vct{x}_2 \in \R^{d_2}$.} 
\end{equation}
The relation~\eqref{eq:proj-split} is easy to check directly.

\subsection{Proof of Proposition~\ref{prop:prop-sdim}}
\label{sec:proof-prop-refpr}

Let us verify the elementary properties of the statistical dimension delineated in Proposition~\ref{prop:prop-sdim}.  These results follow quickly from the orthogonal decomposition~\eqref{eq:pythag} and basic facts about a standard normal random vector.

\begin{enumerate} \setlength{\itemsep}{1mm}
\item	The intrinsic formulation~\eqref{eq:sdim-intr} simply restates Definition~\ref{def:sdim-int}.

\item	The Gaussian formulation~\eqref{eq:sdim-gauss} follows from Proposition~\ref{def:sdim} or the equivalent Proposition~\ref{prop:sdim-int-vols}.

\item	To derive the spherical formulation~\eqref{eq:sdim-circ-expect} from~\eqref{eq:sdim-gauss}, we introduce the spherical decomposition $\vct{g} = R \, \vct{\theta}$, where $R := \enormsm{\vct{g}}$ is a chi random variable with
$d$ degrees of freedom that is independent from the spherical variable $\vct{\theta}$.  By nonnegative homogeneity and independence,
\begin{equation*}
  \sdim(C) =\Expect\left[R^2 \enormsm{ \Proj_C(\vct{\theta}) }^2 \right] =\Expect\left[R^2\right] \cdot \Expect\left[\enormsm{ \Proj_C(\vct{\theta}) }^2 \right]  = d \,\Expect\left[\enormsm{\Proj_C(\vct{\theta})}^2\right],
\end{equation*}
where the last equality follows because $\Expect\left[\enormsm{\vct{g}}^2\right] = d$.

\item
The polar identity~\eqref{eq:delta-dist} is a direct consequence of the distance
formula~\eqref{eq:dist-proj}, which implies that $\enormsm{ \Proj_C(\vct{g}) } = \dist( \vct{g}, C^\polar )$.

\item  The supremum formulation~\eqref{eq:sdim-sup-expect} also follows from~\eqref{eq:sdim-gauss}.  Using the Pythagorean decomposition~\eqref{eq:pythag}, we observe that
$$
\sup\nolimits_{\vct{y} \in C \cap \ball{d}} \ \ip{ \smash{\vct{y}} }{ \smash{\vct{g}} }
	= \sup\nolimits_{\vct{y} \in C \cap \ball{d}} \ \ip{ \smash{\vct{y}} }{ \Proj_C(\smash{\vct{g}}) + \Proj_{C^\polar}(\smash{\vct{g}}) }
	\leq \sup\nolimits_{\vct{y} \in C \cap \ball{d}} \ \ip{ \smash{\vct{y}} }{ \Proj_C(\smash{\vct{g}}) }.
$$
The second inequality is immediate from the definition~\eqref{eq:polar-cone} of the polar cone.  Choose $\vct{y}$ to be the unit vector in the direction $\Proj_C(\vct{g})$ to see that
$$
\sup\nolimits_{\vct{y} \in C \cap \ball{d}} \ \ip{ \smash{\vct{y}} }{ \smash{\vct{g}} }
	= \enormsm{ \Proj_C(\vct{g}) }.
$$
Square the expression, and take the expectation to complete the argument.

\item
The rotational invariance property~\eqref{eq:sdim-rot-inv} follows immediately from the fact that a standard normal vector is rotationally invariant.

\item
To compute the statistical dimension of a subspace, note that the Euclidean projection of a standard normal vector onto a subspace has the standard normal distribution supported on that subspace, so its expected squared norm equals the dimension of the subspace.

\item
The complementarity law~\eqref{eq:sdim-polarity-sum}
follows from the Pythagorean identity~\eqref{eq:pythag}.

\item
We obtain the direct product rule~\eqref{eq:sdim-direct-product} from the
observation~\eqref{eq:proj-split} that projection splits over a direct
product, coupled with the fact that projecting a standard normal
vector onto each of two orthogonal subspaces results in two independent
standard normal vectors.

\item
Finally, we verify the monotonicity law.
Polarity reverses inclusion, so $K^\polar \subset C^\polar$.  Using
the polarity identity~\eqref{eq:delta-dist} twice, we obtain
$$
\delta(C) = \Expect \big[ \dist^2( \vct{g}, C^\polar ) \big]
	\leq \Expect \big[ \dist^2( \vct{g}, K^\polar ) \big]
	= \delta(K).
$$
\end{enumerate}

\noindent
This completes the recitation. 
\section{Theoretical results on descent cones}
\label{app:descent-cones}

This appendix contains the theoretical analysis
that permits us to calculate the statistical
dimension of a descent cone.  In particular, we prove Proposition~\ref{prop:sdim-descent}
and establish Theorem~\ref{thm:sharp-descent}.

\subsection{The expected distance to the subdifferential}
\label{sec:dist-conic}

In this section, we complete the proof of Proposition~\ref{prop:sdim-descent}.  This result forms the basis for Recipe~\ref{fig:desc-recipe}, which lets us compute the statistical dimension of a descent cone.  
We must show that the function
$\distsubdiff : \tau \mapsto \Expect\big[ \dist^2 \big(\vct{g}, \tau \cdot \partial f(\vct{x}) \big) \big]$
exhibits a number of analytic and geometric properties.
The hypotheses of the proposition ensure that
$\partial f(\vct{x})$ is a nonempty,
compact, convex set that does not contain the origin.
For clarity, we establish an abstract result that only depends
on the distinguished properties of the subdifferential.
Let us begin with a lemma about a related, but simpler, function.

\begin{lemma}[Distance to a dilated set] \label{slem:dist-slice}
Let $S$ be a nonempty, compact, convex subset of $\R^d$ that does
not contain the origin.  In particular, there are numbers that satisfy
$b \leq \enorm{\vct{s}} \leq B$
for all $\vct{s} \in S$.
Fix a point $\vct{u} \in \R^d$, and define the function
\begin{equation} \label{eq:Fx-def}
\distsubdiff_{\vct{u}} : \tau \mapsto \dist^2( \vct{u}, \tau S )
\quad\text{for $\tau \geq 0$.}
\end{equation}
The following properties hold.

\begin{enumerate} \setlength{\itemsep}{1mm}
\item	The function $\distsubdiff_{\vct{u}}$ is convex.

\item	The function $\distsubdiff_{\vct{u}}$ satisfies the lower bound
\begin{equation} \label{eq:Fx-unbdd}
\distsubdiff_{\vct{u}}(\tau) \geq (\tau b - \enorm{\vct{u}})^2
\quad\text{for all $\tau \geq \enorm{\vct{u}} / b$.}
\end{equation}
In particular, $\distsubdiff_{\vct{u}}$ attains its minimum value in the compact interval $[0, 2\norm{\vct{u}}/b]$.

\item	The function $\distsubdiff_{\vct{u}}$ is continuously differentiable,
and the derivative takes the form
\begin{equation} \label{eq:Fx-derivative}
\distsubdiff'_{\vct{u}}(\tau) =
- \frac{2}{\tau} \ip{ \vct{u} - \vct{\pi}_{\tau S}(\vct{u}) }{ \vct{\pi}_{\tau S}(\vct{u}) }
\quad\text{for $\tau > 0$.}
\end{equation}
The right derivative $\distsubdiff'_{\vct{u}}(0)$ exists, and $\distsubdiff'_{\vct{u}}(0) = \lim_{\tau \downarrow 0} \distsubdiff'_{\vct{u}}(\tau)$.

\item	The derivative admits the bound
\begin{equation} \label{eq:Fx'-bdd}
\abs{ \distsubdiff'_{\vct{u}}(\tau) } \leq 2B \, \big(\enorm{\vct{u}} + \tau B \big)
\quad\text{for all $\tau \geq 0$.}
\end{equation}

\item	Furthermore, the map $\vct{u} \mapsto \distsubdiff_{\vct{u}}'(\tau)$ is Lipschitz for each $\tau \geq 0$:
\begin{equation} \label{eq:Fx'-lip}
\abs{  \distsubdiff_{\vct{u}}'(\tau) - \smash{\distsubdiff_{\vct{y}}'(\tau)} }
	\leq 2 B \enormsm{ \vct{u} - \vct{y} }
	\quad\text{for all $\vct{u}, \vct{y} \in \R^d$.}
\end{equation}
\end{enumerate}
\end{lemma}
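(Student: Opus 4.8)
The plan is to reduce every assertion to a statement about the ordinary metric projection $\vct{\pi}_{S}$ onto $S$, exploiting the homogeneity relation $\vct{\pi}_{\tau S}(\vct{u}) = \tau \, \vct{\pi}_{S}(\vct{u}/\tau)$ for $\tau > 0$. Writing $F_{\vct{u}}(\tau) = \min_{\vct{s} \in S} \enormsm{\vct{u} - \tau \vct{s}}^2$, the minimizer is the projection of $\vct{u}/\tau$ onto $S$. For convexity (1), I would observe that the set $W := \{ (\tau,\vct{v}) : \tau \geq 0, \ \vct{v} \in \tau S \}$ is a convex cone in $\R \times \R^d$ — given two of its points, the weighted average of the corresponding generators of $S$ again lies in $S$ by convexity — and that $F_{\vct{u}}(\tau) = \inf_{\vct{v}} \{ \enormsm{\vct{u} - \vct{v}}^2 : (\tau,\vct{v}) \in W \}$ is the partial minimization over $\vct{v}$ of the jointly convex function $(\tau,\vct{v}) \mapsto \enormsm{\vct{u} - \vct{v}}^2$ (with the convex indicator of $W$ added). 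Since partial minimization of a jointly convex function is convex~\cite{Rock}, $F_{\vct{u}}$ is convex.

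For the lower bound (2), the reverse triangle inequality gives $\enormsm{\vct{u} - \tau \vct{s}} \geq \tau \enorm{\vct{s}} - \enorm{\vct{u}} \geq \tau b - \enorm{\vct{u}}$; squaring (valid once $\tau b \geq \enorm{\vct{u}}$) and taking the infimum over $\vct{s} \in S$ yields~\eqref{eq:Fx-unbdd}. Comparing with $F_{\vct{u}}(0) = \enormsm{\vct{u}}^2$ shows $F_{\vct{u}}(\tau) \geq F_{\vct{u}}(0)$ once $\tau \geq 2\enorm{\vct{u}}/b$, so by convexity and continuity the minimum is attained in the bounded interval $[0, 2\enorm{\vct{u}}/b]$ (which I read as the intended form of the stated interval).

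For (3) and (4): since $\vct{s} \mapsto \enormsm{\vct{u} - \tau \vct{s}}^2$ is strictly convex for $\tau > 0$, its minimizer $\vct{s}^\star(\tau) = \vct{\pi}_{S}(\vct{u}/\tau)$ is unique and, by Berge's theorem, continuous in $\tau$; Danskin's envelope theorem then gives $F_{\vct{u}}'(\tau) = -2\ip{\vct{u} - \tau \vct{s}^\star(\tau)}{\vct{s}^\star(\tau)}$, which becomes~\eqref{eq:Fx-derivative} after substituting $\tau \vct{s}^\star(\tau) = \vct{\pi}_{\tau S}(\vct{u})$, and continuity of $\vct{s}^\star$ upgrades this to $C^1$ on $(0,\infty)$. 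At $\tau = 0$ one checks directly that the difference quotient of $F_{\vct{u}}$ tends to $-2\sup_{\vct{s} \in S} \ip{\vct{u}}{\vct{s}}$, and that this equals $\lim_{\tau \downarrow 0} F_{\vct{u}}'(\tau)$ (the regularizing term $\tau \enormsm{\vct{s}}^2$ selects the limiting maximizer). Finally,~\eqref{eq:Fx'-bdd} follows from~\eqref{eq:Fx-derivative} by Cauchy--Schwarz together with $\enormsm{\vct{\pi}_{\tau S}(\vct{u})} \leq \tau B$.

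The Lipschitz bound (5) is the crux. Using~\eqref{eq:Fx-derivative} and homogeneity I would write $F_{\vct{u}}'(\tau) = -2\tau \, G(\vct{u}/\tau)$, where $G(\vct{w}) := \ip{\vct{w} - \vct{\pi}_{S}(\vct{w})}{\vct{\pi}_{S}(\vct{w})}$; then $\abs{F_{\vct{u}}'(\tau) - F_{\vct{y}}'(\tau)} = 2\tau \, \abs{G(\vct{u}/\tau) - G(\vct{y}/\tau)}$, so~\eqref{eq:Fx'-lip} reduces to showing $G$ is $B$-Lipschitz on $\R^d$. Setting $\vct{p} = \vct{\pi}_{S}(\vct{w})$, $\vct{p}' = \vct{\pi}_{S}(\vct{w}')$ and residuals $\vct{d} = \vct{w} - \vct{p}$, $\vct{d}' = \vct{w}' - \vct{p}'$, I use the telescoping identity $G(\vct{w}) - G(\vct{w}') = \ip{\vct{d} - \vct{d}'}{\vct{p}} + \ip{\vct{d}'}{\vct{p} - \vct{p}'}$. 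The obtuse-angle characterization of the projection (test point $\vct{p} \in S$) gives $\ip{\vct{d}'}{\vct{p} - \vct{p}'} \leq 0$, while $\enormsm{\vct{d} - \vct{d}'} \leq \enormsm{\vct{w} - \vct{w}'}$ by nonexpansiveness of $\Id - \vct{\pi}_{S}$ in~\eqref{eq:I-pi} and $\enorm{\vct{p}} \leq B$; hence $G(\vct{w}) - G(\vct{w}') \leq B \enormsm{\vct{w} - \vct{w}'}$, and symmetry closes the estimate. The boundary case $\tau = 0$ is handled separately, where $F_{\vct{u}}'(0) = -2\sup_{\vct{s} \in S} \ip{\vct{u}}{\vct{s}}$ is $-2$ times a support function of a set inside the ball of radius $B$, hence $2B$-Lipschitz. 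I expect this sharp-constant estimate to be the main obstacle: naive splittings of $F_{\vct{u}}' - F_{\vct{y}}'$ yield an extra factor and a spurious $1/\tau$ blow-up, and it is precisely the reduction to $G$ combined with the one-sided variational inequality that eliminates both.
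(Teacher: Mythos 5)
Your proof is correct, and it reaches the lemma by a partly different route from the paper's. For (1), the paper writes $F_{\vct{u}}(\tau) = \bigl[\tau \cdot \dist(\vct{u}/\tau, S)\bigr]^2$ and cites convexity of the distance function, the perspective transformation, and the square of a nonnegative convex function; your partial minimization of a jointly convex function over the convex cone $W$ generated by $\{1\}\times S$ is an equally standard alternative. For (3), the paper differentiates that same representation directly using the gradient formula~\eqref{eq:grad-dist} and obtains the right derivative at zero from the theory of one-sided derivatives of convex functions~\cite[Thm.~24.1]{Rock}; your Danskin/Berge envelope argument, plus the explicit difference quotient at zero with the Tikhonov-selection remark, buys you the closed form $F_{\vct{u}}'(0) = -2\sup_{\vct{s}\in S}\ip{\vct{u}}{\vct{s}}$, which the paper never writes down and which lets you dispatch the $\tau=0$ case of (5) by a support-function argument instead of a limiting one. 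For (5) itself---the crux, as you say---your estimate is mechanism-for-mechanism the same as the paper's: both decompositions kill one term with the obtuse-angle variational inequality for the projection (tested at the other projection point) and control the remaining term with Cauchy--Schwarz, nonexpansiveness~\eqref{eq:I-pi} of $\Id - \vct{\pi}_{S}$, and the norm bound $B$; the paper simply works at scale $\tau$ rather than rescaling to your $G$, and then recovers $\tau = 0$ by letting $\tau \downarrow 0$. Two small remarks: in (2) you invoke continuity of $F_{\vct{u}}$ at $\tau = 0$ for attainment of the minimum without proving it (a finite convex function on $[0,\infty)$ can jump down at the endpoint); it follows in one line from $F_{\vct{u}}(\tau) \geq (\enorm{\vct{u}} - \tau B)^2$, and the paper verifies it explicitly. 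Also, you correctly read the interval in (2) as $[0, 2\enorm{\vct{u}}/b]$, which is indeed what the paper's proof establishes despite the typo in the statement.
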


\begin{proof} These claims all take some work.  Along the way, we also need
to establish some auxiliary results to justify the main points.

\textit{Convexity.}  For $\tau > 0$, convexity follows from the representation
\begin{equation} \label{eq:Fx-convex}
\distsubdiff_{\vct{u}}(\tau)
	= \bigg( \inf_{\vct{s} \in S} \ \enormsm{ \vct{u} - \tau \vct{s} } \bigg)^2
	= \bigg( \tau \cdot \inf_{\vct{s} \in S} \ \enormsm{ \vct{u}/\tau - \vct{s} } \bigg)^2
	= \big( \tau \cdot \dist(\vct{u}/\tau, S) \big)^2.
\end{equation}
By way of justification, the distance to a closed convex set is a convex function~\cite[p.~34]{Rock}, the perspective transformation~\cite[Sec.~IV.2.2]{HUL:93a} of a convex function is convex, and the square of a nonnegative convex function is convex by a direct calculation.

\textit{Continuity.}  The representation~\eqref{eq:Fx-convex} shows that the
function $\distsubdiff_{\vct{u}}$ is continuous for $\tau > 0$ because the distance
to a convex set is a Lipschitz function, as stated in~\eqref{eq:dist-lip}.
To obtain continuity at $\tau = 0$, simply note that
$$
\abs{\distsubdiff_{\vct{u}}(\eps) - \distsubdiff_{\vct{u}}(0)}
	= \abs{ \enormsm{ \vct{u} - \vct{\pi}_{\eps S}(\vct{u}) }^2 - \enormsm{\vct{u}}^2 }
	\leq 2 \absip{ \vct{u} }{ \vct{\pi}_{\eps S}(\vct{u}) }
	+ \enormsm{ \vct{\pi}_{\eps S}(\vct{u}) }^2
	\leq 2 \enorm{ \vct{u} } (\eps B)  + (\eps B)^2
	\to 0
	\quad\text{as $\eps \to 0$.}
$$
Indeed, each point in $\eps S$ is bounded in norm by $\eps B$, so the projection
$\vct{\pi}_{\eps S}(\vct{u})$ admits the same bound.  Continuity implies that $\distsubdiff_{\vct{u}}$ is convex on the entire domain $\tau \geq 0$.

\textit{Attainment of minimum.}  Assume that $\tau \geq \enorm{\vct{u}} / b$.  Then
$$
\dist(\vct{u}, \tau S) = \inf_{\vct{s} \in S} \ \enorm{ \tau \vct{s} - \vct{u} }
	\geq \inf_{\vct{s} \in S} \ \big( \tau \enorm{\vct{s}} - \enorm{\vct{u}} \big)
	\geq \tau b - \enorm{ \vct{u} }
	\geq 0.
$$
Square this relation to reach~\eqref{eq:Fx-unbdd}.  As a consequence, for all $\tau > 2 \enorm{\vct{u}}/b$, we have $\distsubdiff_{\vct{u}}(\tau) > \distsubdiff_{\vct{u}}(0) = \enormsq{\vct{u}}$.  If follows that any minimizer of $\distsubdiff_{\vct{u}}$ must occur in the compact interval $[0, 2\enorm{\vct{u}}/b]$.  Since $\distsubdiff_{\vct{u}}$ is continuous, it attains its minimal value in this set.

\textit{Differentiability.}  We obtain the derivative from a direct calculation:
\begin{multline*} \label{eq:Fx-der1}
\distsubdiff'_{\vct{u}}(\tau) = \frac{\diff{}}{\diff{\tau}} \big( \tau^2 \dist^2(\vct{u}/\tau, S) \big)
	= 2\tau \cdot \dist^2(\vct{u}/\tau, S) +
	\tau^2 \ip{ 2\, \big( (\vct{u}/\tau) - \vct{\pi}_{S}(\vct{u}/\tau) \big)}{-\vct{u}/\tau^2} \\
	= \frac{2}{\tau} \big( \dist^2(\vct{u}, \tau S) -
	\ip{ \vct{u} - \vct{\pi}_{\tau S}(\vct{u}) }{ \vct{u} } \big)
	= - \frac{2}{\tau} \ip{\vct{u} - \vct{\pi}_{\tau S}(\vct{u})}
	{\vct{\pi}_{\tau S}(\vct{u})}
\end{multline*}
The first relation follows from~\eqref{eq:Fx-convex}.  The second relies
on the formula~\eqref{eq:grad-dist} for the derivative of the squared
distance.  To obtain the last relation, we express the squared distance
as $\enormsm{\vct{u} - \vct{\pi}_{\tau S}(\vct{u})}^2$.

\textit{Right derivative at zero.}
The right derivative $\distsubdiff'_{\vct{u}}(0)$ exists, and the limit formula holds because
$\distsubdiff_{\vct{u}}$ is a proper convex function that is continuous on $[0, \infty]$
and differentiable on $(0, \infty)$; see~\cite[Thm.~24.1]{Rock}.

\textit{Continuity of the derivative.}  The
expression~\eqref{eq:Fx-derivative} already implies that
$\distsubdiff'_{\vct{u}}$ is continuous for $\tau > 0$ because
the projection onto a convex set is continuous~\cite[Thm.~2.26]{RW:98}.
Continuity of the derivative at zero follows from the limit formula for the
right derivative at zero.

\textit{Bound for the derivative.}  Given the formula~\eqref{eq:Fx-derivative}, it is easy to control the derivative when $\tau > 0$:
$$
\abs{ \distsubdiff'_{\vct{u}}(\tau) }
	\leq \frac{2}{\tau} \enorm{ \vct{u} - \vct{\pi}_{\tau S}(\vct{u}) } \enorm{ \vct{\pi}_{\tau S}(\vct{u}) }
	\leq \frac{2}{\tau} \, (\enorm{\vct{u}} + \tau B)(\tau B)
	= 2B \, (\enorm{\vct{u}} + \tau B).
$$
We obtain the estimate for $\tau = 0$ by taking the limit.

\textit{Lipschitz property.}  We obtain the Lipschitz bound~\eqref{eq:Fx'-lip}
from~\eqref{eq:Fx-derivative} after some effort.  Fix $\tau > 0$.
The optimality condition~\cite[Thm.~III.3.1.1]{HUL:93a} for a projection
onto a closed convex set implies that
$$
\ip{ \vct{y} - \vct{\pi}_{\tau S}(\vct{y}) }{ \vct{\pi}_{\tau S}(\vct{y}) }
	\geq \ip{ \vct{y} - \vct{\pi}_{\tau S}(\vct{y}) }{ \vct{\pi}_{\tau S}(\vct{u}) }
	\quad\text{for all $\vct{u}, \vct{y} \in \R^d$.}
$$
As a consequence,
\begin{multline*}
\ip{ \vct{u} - \vct{\pi}_{\tau S}(\vct{u}) }{ \vct{\pi}_{\tau S}(\vct{u}) } -
\ip{ \vct{y} - \vct{\pi}_{\tau S}(\vct{y}) }{ \vct{\pi}_{\tau S}(\vct{y}) }
	\leq \ip{ \big(\vct{u} - \vct{\pi}_{\tau S}(\vct{u})\big)
	- \big(\vct{y} - \vct{\pi}_{\tau S}(\vct{y}) \big) }{ \vct{\pi}_{\tau S}(\vct{u})} \\
	\leq \enormsm{ (\Id - \vct{\pi}_{\tau S})( \vct{u} ) - (\Id - \vct{\pi}_{\tau S})(\vct{y})} \enormsm{ \vct{\pi}_{\tau S}(\vct{u}) }
	\leq \enormsm{ \vct{u} - \vct{y} } \cdot (\tau B).
\end{multline*}
The last relation relies on the fact~\eqref{eq:I-pi} that the map $\Id - \vct{\pi}_{\tau S}$ is nonexpansive.  Reversing the roles of $\vct{u}$ and $\vct{y}$ in the last calculation, we see that
$$
\abs{ \ip{ \vct{u} - \vct{\pi}_{\tau S}(\vct{u}) }{ \vct{\pi}_{\tau S}(\vct{u}) } -
\ip{ \vct{y} - \vct{\pi}_{\tau S}(\vct{y}) }{ \vct{\pi}_{\tau S}(\vct{y}) } }
	\leq (\tau B) \cdot \enormsm{ \vct{u} - \vct{y} }.
$$
Combining this estimate with the expression~\eqref{eq:Fx-derivative} for the derivative,
we reach
$$
\abs{ \distsubdiff'_{\vct{u}}(\tau) - \smash{\distsubdiff'_{\vct{y}}(\tau)} }
	\leq 2B \cdot \enormsm{ \vct{u} - \vct{y} }.
$$
For $\tau = 0$, the result follows when we take the limit as $\tau \downarrow 0$.
\end{proof}

With this result at hand, we are prepared to prove
a lemma that confirms the remaining claims from
Proposition~\ref{prop:sdim-descent}.

\begin{lemma}[Expected distance to a dilated set] \label{slem:distance-conic}
Let $S$ be a nonempty, compact, convex subset of $\R^d$ that does not contain the origin.
Consider the function
$$
\distsubdiff( \tau ) := \Expect{} \big[ \dist^2( \vct{g}, \tau S ) \big]
	= \Expect{}\big[  \distsubdiff_{\vct{g}}(\tau) \big]
\quad\text{for $\tau \geq 0$.}
$$
The function $\distsubdiff$ is strictly convex, continuous at $\tau = 0$, and
differentiable for $\tau > 0$.  It attains its minimum at a unique point.
Furthermore,
\begin{equation} \label{eq:F'=EF_g'}
\distsubdiff'(\tau) = \Expect{} \big[\distsubdiff'_{\vct{g}}(\tau)\big]
\quad\text{for all $\tau \geq 0$.}
\end{equation}
For $\tau = 0$, we interpret $\distsubdiff'(\tau)$ as a right derivative.
\end{lemma}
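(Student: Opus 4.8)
The plan is to lift each conclusion from the pointwise statements in Lemma~\ref{slem:dist-slice} to their expectations, using dominated convergence to justify every interchange of $\Expect$ with a limit or a derivative. \textbf{Convexity} of $F$ is immediate: each $F_{\vct g}$ is convex, so the pointwise expectation $F=\Expect[F_{\vct g}]$ is convex. For \textbf{continuity at $\tau=0$}, I would dominate $F_{\vct g}(\tau)=\dist^2(\vct g,\tau S)\le(\enorm{\vct g}+\tau B)^2\le(\enorm{\vct g}+B)^2$ for $\tau\le 1$; since $\enorm{\vct g}^2$ is integrable under the Gaussian, dominated convergence gives $F(\tau)\to\Expect[\enorm{\vct g}^2]=F(0)$. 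For \textbf{differentiability on $(0,\infty)$ and the formula}~\eqref{eq:F'=EF_g'}, fix $\tau_0>0$ and write the difference quotient of $F$ as the expectation of the difference quotient of $F_{\vct g}$; the mean value theorem identifies the latter with $F'_{\vct g}(\xi_{\vct g})$ for some $\xi_{\vct g}$ near $\tau_0$, and the bound~\eqref{eq:Fx'-bdd} dominates it by $2B(\enorm{\vct g}+(\tau_0+1)B)$, which is integrable. Continuity of $F'_{\vct g}$ then lets dominated convergence pass to the limit, yielding $F'(\tau_0)=\Expect[F'_{\vct g}(\tau_0)]$; the same argument with right derivatives handles $\tau=0$.

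To see that $F$ \textbf{attains its minimum}, I would establish coercivity. The pointwise lower bound~\eqref{eq:Fx-unbdd} gives $F_{\vct g}(\tau)\ge(\tau b-\enorm{\vct g})^2\to\infty$ as $\tau\to\infty$ for every fixed $\vct g$, so Fatou's lemma forces $\liminf_{\tau\to\infty}F(\tau)=\infty$. Since $F$ is continuous on $[0,\infty)$, it attains its infimum at some point of $[0,\infty)$; uniqueness will follow once strict convexity is in hand.

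The \textbf{strict convexity} of $F$ is the one genuinely delicate point, and I expect it to be the main obstacle. The individual functions $F_{\vct g}$ need not be strictly convex---indeed $F_{\vct g}$ can be constant (equal to $0$) on an entire subinterval of $\tau$ when $\vct g$ lies in $\tau S$ throughout---so strictness must come from the averaging. I would reduce the claim to showing that $F'$ is strictly increasing on $(0,\infty)$. Each $F'_{\vct g}$ is nondecreasing by convexity, hence so is $F'=\Expect[F'_{\vct g}]$; if $F'$ were constant on some $[\tau_1,\tau_2]$ with $\tau_1<\tau_2$, then $\Expect[F'_{\vct g}(\tau_2)-F'_{\vct g}(\tau_1)]=0$ with a nonnegative integrand would force $F_{\vct g}$ to be affine on $[\tau_1,\tau_2]$ for almost every $\vct g$. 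It therefore suffices to exhibit an event of positive probability on which $F_{\vct g}$ is \emph{not} affine on $[\tau_1,\tau_2]$.

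This is where the geometry of $S$ and the full support of the Gaussian enter. Writing $F_{\vct g}(\tau)=\enorm{\vct g}^2+\min_{\vct s\in S}\bigl(\tau^2\enorm{\vct s}^2-2\tau\ip{\vct g}{\vct s}\bigr)$, each branch is a strictly convex parabola with curvature $2\enorm{\vct s}^2\ge 2b^2>0$. When $\enorm{\vct g}$ is large---an event of positive Gaussian probability---the point $\vct g$ lies far from every dilate $\tau S$ with $\tau\in[\tau_1,\tau_2]$, the minimizing $\vct s^{\star}(\tau)$ varies within a compact subset of $\partial S$, and a direct estimate of the lower envelope (exploiting the uniform curvature $2b^2$ of the branches and formula~\eqref{eq:Fx-derivative}) shows that $F_{\vct g}$ inherits strictly positive curvature there; in particular it is not affine on $[\tau_1,\tau_2]$. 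This contradicts the conclusion of the previous paragraph, so $F'$ is strictly increasing and $F$ is strictly convex. Combined with the attainment already proved, strict convexity yields a \emph{unique} minimizer, which completes the proof.
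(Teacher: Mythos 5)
Your treatment of convexity, continuity at $\tau=0$, differentiability together with the interchange $F'(\tau)=\Expect\big[F'_{\vct g}(\tau)\big]$, and attainment of the minimum is correct; your Fatou--coercivity route to attainment is a legitimate alternative to the paper's estimate (which conditions on the event $\enorm{\vct g}\le\sqrt d$ and uses~\eqref{eq:Fx-unbdd}). Your reduction of strict convexity to exhibiting a positive-probability event on which $F_{\vct g}$ is \emph{not} affine on the putative flat segment $[\tau_1,\tau_2]$ is also exactly the paper's reduction.

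The gap is in the event you exhibit. It is false that large $\enorm{\vct g}$ forces $F_{\vct g}$ to be non-affine on $[\tau_1,\tau_2]$. Counterexample: let $S=\{t\vct w : t_1\le t\le t_2\}$ be a segment of a ray through the origin, with $0<t_1<t_2$ and $\enorm{\vct w}=1$, and suppose $\tau_2/\tau_1<t_2/t_1$. If $c:=\ip{\vct g}{\vct w}\in[\tau_2 t_1,\,\tau_1 t_2]$, then for every $\tau\in[\tau_1,\tau_2]$ the point $c\vct w$ lies in $\tau S$ and is the nearest point of $\tau S$ to $\vct g$, so $F_{\vct g}(\tau)=\enormsq{\vct g - c\vct w}$ is \emph{constant} on $[\tau_1,\tau_2]$. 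This slab of vectors $\vct g$ contains points of arbitrarily large norm, lying arbitrarily far from every dilate $\tau S$ (move $\vct g$ orthogonally to $\vct w$), so both of your hypotheses hold and yet $F_{\vct g}$ is affine there. The reason the heuristic fails is that the curvature of a lower envelope equals the curvature $2\enormsq{\vct s^\star(\tau)}$ of the active branch \emph{minus} a nonnegative correction coming from the motion of the minimizer $\vct s^\star(\tau)$, and this correction can cancel the branch curvature exactly, as above; your inequality goes the wrong way. The paper sidesteps this by localizing at the origin rather than at infinity: $F_{\vct 0}(\tau)=\tau^2\inf_{\vct s\in S}\enormsq{\vct s}$ is strictly convex precisely because $S$ avoids the origin (this is where that hypothesis earns its keep), so the Jensen inequality for the given $\rho,\tau,\eta$ is strict at $\vct u=\vct 0$, and continuity of $\vct u\mapsto\dist^2(\vct u,\cdot)$ propagates the strict inequality to a small ball around $\vct 0$, an event of positive Gaussian measure, contradicting the almost-sure affineness. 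Replacing your large-norm event with this small ball (in the example above, the half-space $\ip{\vct g}{\vct w}<0$ would also do) repairs the argument; as written, the key step fails.
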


\begin{proof} These properties will follow from Lemma~\ref{slem:dist-slice}, and we continue using the notation from this result.

\textit{Continuity at zero.}
Imitating the continuity argument in Lemma~\ref{slem:dist-slice}, we find that
$$
\distsubdiff(\eps) - \distsubdiff(0)
	= \Expect \big[ \distsubdiff_{\vct{g}}(\eps) - \distsubdiff_{\vct{g}}(0) \big]
	\leq \Expect \big[2 \enormsm{ \vct{g} } \enormsm{\pi_{\eps S}(\vct{g})}
	+ \enormsm{\vct{\pi}_{\eps S}(\vct{g})}^2 \big]
	\leq 2 \sqrt{d} \cdot (\eps B) + (\eps B)^2
	\to 0
	\quad\text{as $\eps \to 0$.}
$$
This is all that is required.

\textit{Differentiability.}
This point follows from a routine application of the Dominated Convergence Theorem.
Indeed, for every $\tau \geq 0$, the function $\distsubdiff(\tau) = \Expect{}[ \distsubdiff_{\vct{g}}(\tau) ]$ takes a finite value, and
Lemma~\ref{slem:dist-slice} establishes that $\distsubdiff_{\vct{g}}'$
is continuously differentiable.
For each compact interval $I$, the bound~\eqref{eq:Fx'-bdd} ensures that
$$
\Expect{} \sup\nolimits_{\tau \in I} \ \big\vert \distsubdiff_{\vct{g}}'(\tau) \big\vert
	\leq \Expect{} \sup\nolimits_{\tau \in I} \ \big( 2B \, (\enormsm{\vct{g}} + \tau B) \big)
	\leq 2B \sqrt{d} + 2B^2 \left(\sup\nolimits_{\tau \in I} \ \tau \right)
	< \infty.
$$
The convergence theorem now implies that
$\distsubdiff'(\tau) = \tfrac{\diff{}}{\diff{\tau}} \Expect{} [ \distsubdiff_{\vct{g}}(\tau) ]
	= \Expect{} [ \distsubdiff_{\vct{g}}'(\tau) ]$
for all $\tau \geq 0$.

\textit{Convexity.}  The function $\distsubdiff$ is convex for $\tau \geq 0$ because it is an average of functions of the form $\distsubdiff_{\vct{g}}$, each of which is convex.

\textit{Strict convexity.}  We argue by contradiction.  We have shown that $\distsubdiff$ is convex and differentiable.  If $\distsubdiff$ were not strictly convex, its graph would contain a linear segment.
More precisely, there would be numbers $0 \leq \rho < \tau$ and $\eta \in (0,1)$ for which
\begin{equation} \label{eq:strict-contra}
\Expect{}\big[  \distsubdiff_{\vct{g}}\big((\eta \rho + (1-\eta) \tau ) S \big) \big]
	= \Expect \bigg[ \eta \cdot \distsubdiff_{\vct{g}}(\rho) + (1-\eta) \cdot \distsubdiff_{\vct{g}}(\tau) \bigg].
\end{equation}
The convexity of $\distsubdiff_{\vct{g}}$ ensures that, for each $\vct{g}$, the bracket on the right-hand side is no smaller than the bracket on the left-hand side.  Therefore, the relation~\eqref{eq:strict-contra} holds if and only if the two brackets are equal almost surely with respect to the Gaussian measure.  But note that
\begin{multline*}
\distsubdiff_{\vct{0}}\big( \eta \rho + (1-\eta)\tau \big)
	= \dist^2\big(\vct{0}, (\eta \rho + (1-\eta) \tau ) S \big)
	= \big(\eta \rho + (1-\eta) \tau \big)^2 \cdot \inf_{\vct{s} \in S} \ \enormsq{\vct{s}} \\
	< \big(\eta \rho^2 + (1-\eta) \tau^2 \big) \cdot \inf_{\vct{s} \in S} \ \enormsq{\vct{s}}
	= \eta \cdot \dist^2(\vct{0}, \rho S) + (1-\eta) \cdot \dist^2(\vct{0}, \tau S)
	= \eta \cdot \distsubdiff_{\vct{0}}(\rho) + (1-\eta) \cdot \distsubdiff_{\vct{0}}(\tau).
\end{multline*}
The strict inequality depends on the strict convexity of the square, together with the fact that the infimum is strictly positive.
On account of~\eqref{eq:dist-lip}, the squared distance to a convex set is a continuous function, so there is an open ball around the origin where the same relation holds.
That is, for some $\eps > 0$,
$$
\distsubdiff_{\vct{u}}( \eta \rho + (1- \eta) \tau )
	< \eta \cdot \distsubdiff_{\vct{u}}(\rho) + (1-\eta) \cdot \distsubdiff_{\vct{u}}(\tau)
	\quad\text{when $\enorm{\vct{u}} < \eps$.}
$$
This statement contravenes~\eqref{eq:strict-contra}.

\textit{Attainment of minimum.}
The median of the random variable $\enormsm{\vct{g}}$ does not exceed $\sqrt{d}$.  Therefore, when $\tau b \geq \sqrt{d}$, we have
$$
\distsubdiff(\tau) \geq \Expect\big[ \distsubdiff_{\vct{g}}(\tau) \, \big| \,
	\enormsm{\vct{g}} \leq \sqrt{d} \big] \cdot
	\Prob\big\{ \enormsm{\vct{g}} \leq \sqrt{d} \big\}
	\geq \frac{1}{2} \Expect\big[ 
	\big( \tau b - \enormsm{\vct{g}} \big)^2 \, \big| \,
	\enormsm{\vct{g}} \leq \sqrt{d} \big]
	\geq \frac{1}{2} \big(\tau b - \sqrt{d} \big)^2.
$$
The first inequality follows from the law of total expectation, and the second depends on~\eqref{eq:Fx-unbdd}.  In particular, $\distsubdiff(\tau) > \distsubdiff(0) = d$ when $\tau > 2b^{-1}\sqrt{d}$.  Thus, any minimizer of $\distsubdiff$ must occur in the compact interval $\big[0, 2b^{-1}\sqrt{d} \big]$.  Since $\distsubdiff$ is continuous and strictly convex, it attains its minimum at a unique point.
\end{proof}

\subsection{Error bound for descent cone calculations}
\label{app:sharp-descent}

In this section, we prove Theorem~\ref{thm:sharp-descent},
which provides an error bound for Proposition~\ref{prop:sdim-descent}.
We require a standard result concerning the variance
of a Lipschitz function of a standard normal vector.

\begin{fact}[Variance of a Lipschitz function]
\label{fact:gauss-lip}
Let $F : \R^d \to \R$ be a function that is Lipschitz with respect
to the Euclidean norm:
$$
\abs{ F( \vct{u} ) - F(\vct{y}) } \leq M \cdot \enorm{\vct{u} - \smash{\vct{y}}}
\quad\text{for all $\vct{u}, \vct{y} \in \R^d$.}
$$
For a standard normal vector $\vct{g}$ in $\R^d$, we have
\begin{equation} \label{eq:gauss-lip}
\Var\big( F(\vct{g}) \big) \leq M^2.
\end{equation}
\end{fact}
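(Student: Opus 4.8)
The inequality~\eqref{eq:gauss-lip} is the \emph{Gaussian Poincar\'e inequality}: the master estimate $\Var(H(\vct{g})) \leq \Expect\big[\enormsm{\nabla H(\vct{g})}^2\big]$, valid for sufficiently regular $H$, specializes to~\eqref{eq:gauss-lip} because an $M$-Lipschitz function satisfies $\enorm{\nabla H} \leq M$ wherever the gradient exists. The plan is therefore to (i) establish the Poincar\'e inequality for smooth functions, and then (ii) discard the smoothness hypothesis by approximation. Throughout, I note that $H(\vct{g}) \in L^2$: the Lipschitz bound gives $\abs{H(\vct{g})} \leq \abs{H(\vct{0})} + M\enorm{\vct{g}}$, and $\enorm{\vct{g}}$ has a finite second moment, so the variance in~\eqref{eq:gauss-lip} is well defined and finite.

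For the smooth case I would expand $H$ in the orthonormal basis of multivariate Hermite polynomials $\{ h_{\alpha} \}$ for the Gaussian measure on $\R^d$, writing $H = \sum_{\alpha} c_{\alpha} h_{\alpha}$. The two ingredients are the Parseval identity $\Var(H(\vct{g})) = \sum_{\alpha \neq 0} c_{\alpha}^2$ (the $\alpha = 0$ term being the squared mean $c_0 = \Expect[H(\vct{g})]$) together with the differentiation rule $\partial_i h_{\alpha} = \sqrt{\alpha_i}\, h_{\alpha - \onevct_i}$, which yields $\Expect\big[\enormsm{\nabla H(\vct{g})}^2\big] = \sum_{\alpha} \abs{\alpha}\, c_{\alpha}^2$, where $\abs{\alpha} = \alpha_1 + \dots + \alpha_d$. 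Since $\abs{\alpha} \geq 1$ whenever $\alpha \neq 0$, comparing the two sums gives $\Var(H(\vct{g})) \leq \Expect\big[\enormsm{\nabla H(\vct{g})}^2\big]$ with the sharp spectral-gap constant. An equally clean alternative runs through the Ornstein--Uhlenbeck semigroup $P_t$, exploiting the commutation relation $\nabla P_t = \econst^{-t} P_t \nabla$ and integrating $\tfrac{\diff{}}{\diff{t}} \Expect\big[(P_t H)^2\big] = -2\,\Expect\big[\enormsm{\nabla P_t H}^2\big]$ from $t = 0$ to $t = \infty$.

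To pass from smooth functions to a general $M$-Lipschitz $H$, I would mollify: set $H_{\eps} := H * \rho_{\eps}$ for a smooth, compactly supported mollifier $\rho_{\eps}$. Then $H_{\eps}$ is smooth, still $M$-Lipschitz (so $\enorm{\nabla H_{\eps}} \leq M$ everywhere), and $H_{\eps} \to H$ uniformly on compact sets as $\eps \to 0$. The smooth case gives $\Var(H_{\eps}(\vct{g})) \leq \Expect\big[\enormsm{\nabla H_{\eps}(\vct{g})}^2\big] \leq M^2$, and the Lipschitz estimate $\abs{H_{\eps}(\vct{g})} \leq \abs{H(\vct{0})} + 1 + M\enorm{\vct{g}}$ supplies a square-integrable dominating function, so the Dominated Convergence Theorem lets me send $\eps \to 0$ to conclude $\Var(H(\vct{g})) \leq M^2$.

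The main obstacle is the second step rather than the first: the Poincar\'e inequality is cleanest for smooth functions, yet a Lipschitz function need only be differentiable almost everywhere (Rademacher's theorem). One must therefore verify that mollification preserves the Lipschitz constant exactly and that the uniform bound $\enorm{\nabla H_{\eps}} \leq M$ survives the limit; the Gaussian integrability furnished by the Lipschitz estimate is precisely what makes the limiting argument go through.
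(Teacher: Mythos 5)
Your proposal is correct and takes essentially the same route as the paper: the paper also reduces the fact to the Gaussian Poincar\'e inequality $\Var(H(\vct{g})) \leq \Expect\big[\enormsm{\nabla H(\vct{g})}^2\big]$, but it simply cites \cite[Thm.~1.6.4]{Bog:98} and \cite[p.~49]{ledo:01} rather than proving it. Your Hermite-expansion (equivalently, Ornstein--Uhlenbeck) proof of Poincar\'e and the mollification step that transfers the bound from smooth to general Lipschitz functions are precisely the details those references supply, and both steps are sound.
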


\noindent
Fact~\ref{fact:gauss-lip}
is a consequence of the Gaussian Poincar{\'e} inequality; see~\cite[Thm.~1.6.4]{Bog:98} or~\cite[p.~49]{ledo:01}.

\begin{proof}[Proof of Theorem~\ref{thm:sharp-descent}]

Let $f : \R^d \to \R$ be a norm, and fix a nonzero point $\vct{x} \in \R^d$.
According to~\cite[Ex.~VI.3.1]{HUL:93a}, the subdifferential of the norm satisfies
\begin{equation} \label{eq:norm-subdiff}
S := \partial f(\vct{x}) =
	\big\{ \vct{s} \in \R^d : \ip{\vct{s}}{\vct{x}} = f(\vct{x})
	\quad\text{and}\quad f^\polar(\vct{s}) = 1 \big\},
\end{equation}
where $f^\polar$ is the norm dual to $f$.  Thus, $S$ is nonempty, compact, convex, and it does not contain the origin.

As in Lemmas~\ref{slem:dist-slice} and~\ref{slem:distance-conic}, we introduce the functions
$$
\distsubdiff_{\vct{u}} : \tau \mapsto \dist^2( \vct{u}, \tau S )
\quad\text{and}\quad
\distsubdiff : \tau \mapsto \Expect \big[ \distsubdiff_{\vct{g}}(\tau) \big],
$$
where $\vct{g}$ is a standard normal vector.  Proposition~\ref{prop:sdim-descent} provides the upper bound $\Expect{} \inf_{\tau \geq 0} \ \distsubdiff_{\vct{g}}(\tau) \leq \inf_{\tau \geq 0} \ \distsubdiff(\tau)$.  Our objective is to develop a reverse inequality.

We establish the result by linearizing each function $\distsubdiff_{\vct{g}}$ around a suitable point.  Lemma~\ref{slem:distance-conic} shows that the function $\distsubdiff$ attains its minimum at a unique location, so we may define
$$
\tau_{\star} := \argmin_{\tau \geq 0} \ \distsubdiff(\tau).
$$
Similarly, for each $\vct{u} \in \R^d$, Lemma~\ref{slem:dist-slice} shows that $\distsubdiff_{\vct{u}}$ attains its minimum at some point $\tau_{\vct{u}} \geq 0$.  For the moment,
it does not matter how we select this minimizer.
Since $\distsubdiff_{\vct{u}}$ is convex and differentiable, we can bound its minimum value below using the tangent at $\tau_{\star}$.  That is,
$$
\inf_{\tau \geq 0}\ \distsubdiff_{\vct{u}}(\tau)
	= \distsubdiff_{\vct{u}}(\tau_{\vct{u}})
	\geq \distsubdiff_{\vct{u}}(\tau_{\star}) + (\tau_{\vct{u}} -\tau_{\star}) \cdot 
	\distsubdiff_{\vct{u}}'(\tau_{\star}).
$$
Should $\tau_{\star} = 0$, we interpret $\distsubdiff'_{\vct{u}}(\tau_{\star})$ as a right derivative.  Replacing $\vct{u}$ by the random vector $\vct{g}$ and
taking the expectation, we reach \begin{align} \label{eq:Fg-lower}
\Expect{} \bigg[ \inf_{\tau \geq 0}\ \distsubdiff_{\vct{g}}(\tau) \bigg]
	&\geq \Expect{} \big[ \distsubdiff_{\vct{g}}(\tau_{\star}) \big] +
	\Expect\big[ (\tau_{\vct{g}} - \tau_{\star}) \cdot \distsubdiff_{\vct{g}}'(\tau_{\star}) \big] \notag \\
	&= \distsubdiff(\tau_{\star}) +
	\Expect{}\big[ \big( \tau_{\vct{g}} - \tau_{\star} \big) \cdot
	\big( \distsubdiff'_{\vct{g}}(\tau_{\star}) - \Expect{} \big[ \distsubdiff'_{\vct{g}}(\tau_{\star}) \big] \big) \big]
	+ \Expect [ \tau_{\vct{g}} - \tau_{\star} ] \cdot \Expect\big[ \distsubdiff'_{\vct{g}}(\tau_{\star}) \big] \notag \\
	&= \distsubdiff(\tau_{\star}) +
	\Expect{}\big[ \big( \tau_{\vct{g}} - \Expect[ \tau_{\vct{g}}] \big) \cdot
	\big( \distsubdiff'_{\vct{g}}(\tau_{\star}) - \Expect{} \big[ \distsubdiff'_{\vct{g}}(\tau_{\star}) \big] \big) \big]
	+ \Expect [ \tau_{\vct{g}} - \tau_{\star} ] \cdot \distsubdiff'(\tau_{\star}) \notag \\
	&\geq \inf_{\tau \geq 0}\ \distsubdiff(\tau) - \big( \Var[ \tau_{\vct{g}} ] \cdot
	\Var\big[ \distsubdiff'_{\vct{g}}(\tau_{\star}) \big] \big)^{1/2}
	+ \Expect{} [\tau_{\vct{g}} - \tau_{\star} ] \cdot \distsubdiff'(\tau_{\star}).
\end{align}
To reach the second line, we add and subtract the constant $\Expect\big[ \distsubdiff_{\vct{g}}'(\tau_{\star}) \big]$.  Next we use the fact that the second parenthesis has zero mean to replace the constant $\tau_{\star}$ in the first parenthesis by the constant $\Expect[ \tau_{\vct{g}} ]$.  Then we invoke \eqref{eq:F'=EF_g'} to identify the derivative of $\distsubdiff$.  The last inequality depends on Cauchy--Schwarz.  From here, we obtain the conclusion~\eqref{eq:sharp-descent} as soon as we estimate the two error terms.  The advantage of this formulation is that the Lipschitz properties of the random variables allow us to control their variances.

First, let us demonstrate that the last term on the right-hand side of~\eqref{eq:Fg-lower} is nonnegative.  Abbreviate $e_1 := \Expect{} [ \tau_{\vct{g}} - \tau_{\star} ] \cdot \distsubdiff'(\tau_{\star})$.
There are two possibilities to consider.  When $\tau_{\star} > 0$,
the derivative $\distsubdiff'(\tau_{\star}) = 0$ because $\tau_{\star}$ minimizes $\distsubdiff$.  Thus, $e_1 = 0$.   On the other hand, when $\tau_{\star} = 0$, it must
be the case that the right derivative $\distsubdiff'(\tau_{\star}) \geq 0$,
or else the minimum of the convex
function $\distsubdiff$ would occur at a strictly positive value.  Since $\tau_{\vct{g}} \geq 0$, we see that the quantity $e_1 \geq 0$.

To compute the variance of $\tau_{\vct{g}}$, we need to devise a consistent method for selecting a minimizer $\tau_{\vct{u}}$ of $\distsubdiff_{\vct{u}}$.  Introduce the closed convex cone $K := \cone(S)$, and notice that
$$
\inf_{\tau \geq 0} \ \distsubdiff_{\vct{u}}(\tau)
	= \inf_{\tau \geq 0} \ \dist^2( \vct{u}, \tau S)
	= \dist^2( \vct{u}, K ).
$$
In other words, the minimum distance to one of the sets $\tau S$ is attained at the point $\Proj_K(\vct{u})$.  As such, it is natural to pick a minimizer $\tau_{\vct{u}}$ of $\distsubdiff_{\vct{u}}$ according to the rule
\begin{equation} \label{eq:tau-g}
\tau_{\vct{u}} := \inf\big\{ \tau \geq 0 : \Proj_{K}(\vct{u}) \in \tau S \big\}
	= \frac{\ip{ \Proj_K(\vct{u}) }{ \vct{x} }}{f(\vct{x})}.
\end{equation}
The latter identity follows from the expression~\eqref{eq:norm-subdiff} for the subdifferential $S$.  In light of~\eqref{eq:tau-g},
$$
\abs{ \tau_{\vct{u}} - \tau_{\vct{y}} }
	=\frac{1}{f(\vct{x})} \absip{ \Proj_K(\vct{u}) - \Proj_K(\vct{y}) }{ \vct{x} }
	\leq \frac{\enorm{\vct{x}}}{f(\vct{x})} \cdot \enormsm{ \Proj_K(\vct{u}) - \Proj_K(\vct{y}) }
	\leq \frac{\enorm{\vct{x}}}{f(\vct{x})} \cdot \enormsm{ \vct{u} - \vct{y} }.
$$
We have used the fact~\eqref{eq:I-pi} that the projection onto a closed convex set is nonexpansive.  Fact~\ref{fact:gauss-lip} delivers
\begin{equation} \label{eq:e2-tau}
\big( \Var( \tau_{\vct{g}} ) \big)^{1/2}
	\leq \frac{ \enormsm{\vct{x}} }{ f(\vct{x}) }
	= \frac{1}{ f(\vct{x} / \norm{\vct{x}}) }.
\end{equation}

Finally, let us turn to the remaining variance term in~\eqref{eq:Fg-lower}.  We have already computed the Lipschitz bound we need for the analysis.  Indeed, the inequality~\eqref{eq:Fx'-lip} states that
$$
\abs{ \distsubdiff_{\vct{u}}'(\tau_{\star}) - \smash{\distsubdiff_{\vct{y}}'(\tau_{\star})} }
	\leq \big(2 \, \sup\nolimits_{\vct{s} \in S} \ \enorm{\vct{s} } \big)
	\cdot \enormsm{\vct{u}-\vct{y}}.
$$
Another invocation of Fact~\ref{fact:gauss-lip} delivers the estimate
\begin{equation} \label{eq:e1-Fg'}
\big( \Var\big( \distsubdiff_{\vct{g}}'(\tau_{\star}) \big) \big)^{1/2}
	\leq 2 \, \sup\nolimits_{\vct{s} \in S} \ \enorm{\vct{s}}.
\end{equation}

To complete the proof, we combine the inequalities~\eqref{eq:Fg-lower},~\eqref{eq:e2-tau},~\eqref{eq:e1-Fg'}, and the fact that $e_1 \geq 0$.  This is the advertised result~\eqref{eq:sharp-descent}.
\end{proof}

\section{Statistical dimension calculations}
\label{sec:stat-dim-app}

This appendix contains the details of the calculations
of the statistical dimension for several families of
convex cones: circular cones, $\ell_1$ descent cones,
and Schatten 1-norm descent cones.

\subsection{Circular cones}
\label{app:circ-cone}

First, we approximate the statistical dimension of a circular cone.
 
\begin{proof}[Proof of Proposition~\ref{prop:circ-cone}]
We begin with an exact integral expression for the statistical dimension
of the circular cone $C = \Circ_d(\alpha)$.
The spherical formulation~\eqref{eq:sdim-circ-expect}
of the statistical dimension asks us to average the squared norm
of the projection of a random unit vector $\vct{\theta}$ onto the cone.
Introduce the angle $\beta := \beta(\vct{\theta}) := \arccos(\theta_1)$
between $\vct{\theta}$ and the first standard basis vector $(1, 0, \dots, 0)$.
Elementary trigonometry shows that the squared norm of the projection of $\vct{\theta}$ onto the cone $C$ admits the expression
$$
F( \beta ) := \enormsm{ \Proj_C(\vct{\theta}) }^2 =
\begin{cases}
	1, & 0 \leq \beta < \alpha, \\
	\cos^2(\beta - \alpha), & \alpha \leq \beta < \tfrac{\pi}{2} + \alpha, \\
	0, & \tfrac{\pi}{2} + \alpha \leq \beta \leq \pi.
\end{cases}
$$
To obtain the exact statistical dimension $\delta(C)$ from~\eqref{eq:sdim-circ-expect},
we integrate $F(\phi)$ in polar coordinates in the usual way
(cf.~\cite[Lem.~6.5.1]{scwe:08}) to find
\begin{equation} \label{eq:sdim-circ}
\delta(C) = d \cdot \frac{\Gamma\big(\half d \big)}{\sqrt{\pi} \, \Gamma\big(\half (d-1)\big)}
\int_0^\pi \sin^{d-2}(\beta) \, F(\beta) \idiff{\beta}.
\end{equation}
We can approximate the integral by a routine application of Laplace's
method~\cite[Lem.~6.2.3]{AF:03}, which yields 
$$
\int_0^\pi \sin^{d-2}(\beta) \, F(\beta) \idiff{\beta}
	= \sqrt{\frac{2\pi}{d}} \, F\!\left(\frac{\pi}{2}\right) + O(d^{-3/2}).
$$
To simplify the ratio of gamma functions, recall Gautschi's inequality~\cite[Sec.~5.6.4]{Olver:2010:NMHF}:
$$
\sqrt{d-2} < \frac{\sqrt{2} \, \Gamma\big(\half d \big)}{\Gamma \big(\half (d-1) \big)} < \sqrt{d}.
$$
Combine the last three displays to reach the expression~\eqref{eq:sdim-cone-est}.

To obtain the more refined estimate $\cos(2\alpha)$ for the error term,
one may use the fact that
the intrinsic volumes of a circular cone satisfy
\begin{equation} \label{eq:circ-vols}
v_k\big( \Circ_d(\alpha) \big) =
	\frac{1}{2} { {\half (d-2)} \choose {\half (k-1)} }
	\sin^{k-1}(\alpha) \cos^{d-k-1}(\alpha)
	\quad\text{for $k = 1, \dots, d-1$.}
\end{equation}
This formula is drawn from~\cite[Ex.~4.4.8]{am:thesis}.  We are using the analytic extension to define the binomial coefficient.  The easiest way to study this sequence is to observe the close connection with the density of a binomial random variable and to apply the interlacing result, Proposition~\ref{prop:interlacing}.  In the interest of brevity, we omit the details.  
\end{proof}

\subsection{Descent cones of the \texorpdfstring{$\ell_1$}{l1} norm}
\label{app:l1-sdim}

In this section, we show how to use Recipe~\ref{fig:desc-recipe}
and Theorem~\ref{thm:sharp-descent} to compute the statistical dimension of the
descent cone of the $\ell_1$ norm at a sparse vector.  This is a warmup for the
more difficult, but entirely similar, calculation in the next section.

\begin{proof}[Proof of Proposition~\ref{prop:l1-sdim}]
Since the $\ell_1$ norm is invariant under signed permutations,
we may assume that the sparse vector $\vct{x} \in \R^d$ takes the
form $\vct{x} = (x_1, \dots, x_s, 0, \dots, 0)$,
where $x_i > 0$.  To compute $\delta\big( \Desc(\pnorm{1}{\cdot}, \vct{x}) \big)$,
we use the subdifferential bound~\eqref{eq:sdim-descent}
for the statistical dimension of a descent cone:
\begin{equation} \label{eq:sdim-descent-l1}
\delta\big( \Desc( \pnorm{1}{\cdot}, \vct{x} ) \big)
	\leq \inf_{\tau \geq 0} \ \Expect \big[ \dist^2(\vct{g}, \tau \cdot \partial \pnorm{1}{\vct{x}}) \big].
\end{equation}
Observe that the subdifferential of the $\ell_1$ norm at $\vct{x}$ has
the following structure:
\begin{equation} \label{eq:l1-subdiff}
\vct{u} \in \partial \pnorm{1}{\vct{x}}
\quad\Longleftrightarrow\quad
\begin{cases}
	u_i = 1, & i = 1, \dots, s \\
	\abs{u_i} \leq 1, & i = s+1, \dots, d.
\end{cases}
\end{equation}
We can compute the distance from a standard normal vector $\vct{g}$ to the dilated subdifferential as follows.
$$
\dist^2\big( \vct{g}, \tau \cdot \partial \pnorm{1}{\vct{x}} \big)
	= \sum_{i=1}^s (g_i - \tau)^2
	+ \sum_{i=s+1}^d \pos^2(\abs{\smash{g_i}} - \tau),
$$
where $\pos(a) := a \vee 0$ and the operator $\vee$ returns the maximum of two numbers.  Indeed, we always suffer an error in the first $s$ components, and we can always reduce the magnitude of the other components by the amount $\tau$.  Taking the expectation, we reach
\begin{equation} \label{eq:l1-sdim-pre}
\Expect \big[ \dist^2\big( \vct{g}, \tau \cdot \partial\pnorm{1}{\vct{x}} \big) \big]
	= s \, \big(1 + \tau^2\big) + (d-s) \, \sqrt{\frac{2}{\pi}} \int_\tau^\infty (u-\tau)^2 \, \econst^{-u^2/2} \idiff{u}.
\end{equation}
Introduce this expression into~\eqref{eq:sdim-descent-l1} and normalize by the ambient dimension $d$ to reach
\begin{equation} \label{eq:l1-sdim-copy}
\frac{\sdim\big( \Desc(\pnorm{1}{\cdot}, \vct{x}) \big)}{d}
	\leq \inf_{\tau \geq 0} \ \left\{ (s/d) \big(1 + \tau^2 \big)  + \left(1 -  s/d \right)
	\sqrt{\frac{2}{\pi}}  \int_\tau^\infty (u - \tau)^2 \, \econst^{-u^2/2}\idiff{u} \right\}.
\end{equation}
This expression matches the upper bound in~\eqref{eq:l1-sdim}.

Now, we need to invoke the error estimate, Theorem~\ref{thm:sharp-descent}.  An inspection of~\eqref{eq:l1-subdiff} shows that the subdifferential $\partial \pnorm{1}{\vct{x}}$
depends on the number $s$ of nonzero entries in $\vct{x}$ but
not on their magnitudes.  It follows from~\eqref{eqn:desc-polar} that, up to isometry, the descent cone $\Desc(\pnorm{1}{\cdot}, \vct{x})$ only depends on the sparsity $s$.
Therefore, we may as well assume that
$\vct{x} = (1, \dots, 1, 0, \dots, 0)$.  For this vector,
$\pnorm{1}{\vct{x} / \enorm{\vct{x}}} = \sqrt{s}$.
Second, the expression~\eqref{eq:l1-subdiff} for the subdifferential shows that $\enorm{\vct{u}} \leq \sqrt{d}$ for every subgradient $\vct{u} \in \partial \norm{\vct{x}}$.  Therefore, the error in the inequality~\eqref{eq:l1-sdim-copy} is at most $2 \sqrt{d/s}$.  We reach the lower bound in~\eqref{eq:l1-sdim}.

Finally, Lemma~\ref{slem:distance-conic} shows that the brace in~\eqref{eq:l1-sdim-copy} is a strictly convex, differentiable function of $\tau$ with a unique minimizer.  It can be verified that the minimum does not occur at $\tau = 0$.  Therefore, we determine the stationary equation~\eqref{eq:l1-stationary} by setting the derivative of the brace to zero and simplifying.
\end{proof}

\subsection{Descent cones of the Schatten 1-norm}
\label{sec:feas-cone-schatt}

Now, we present the calculation of the statistical dimension of the
descent cone of the Schatten 1-norm at a low-rank matrix.  The approach
is entirely similar with the argument in Appendix~\ref{app:l1-sdim}.

\begin{proof}[Proof of Proposition~\ref{prop:S1-sdim}]
Our aim is to identify the statistical dimension of the descent cone of the Schatten 1-norm at a fixed low-rank matrix.  The argument here parallels the proof of Proposition~\ref{prop:l1-sdim}, but we use classical results from random matrix theory to obtain the final expression.  Our asymptotic theory demonstrates that this simplification still results in a sharp estimate.

We begin with the fixed-dimension setting.  Consider an $m \times n$ real matrix $\mtx{X}$ with rank $r$.  Without loss of generality, we assume that $m \leq n$ and $0 < r \leq m$.  The Schatten 1-norm is unitarily invariant, so we can also assume that $\mtx{X}$ takes the form
$$
\mtx{X} = \begin{bmatrix} \mtx{\Sigma} & \mtx{0} \\ \mtx{0} & \mtx{0} \end{bmatrix}
\quad\text{where}\quad
\mtx{\Sigma} = \operatorname{diag}( \sigma_1, \sigma_2, \dots, \sigma_r )
\quad\text{and}\quad
\text{$\sigma_i > 0$ for $i = 1, \dots, r$.}
$$
The subdifferential bound~\eqref{eq:sdim-descent} for the statistical dimension of a descent cone states that
\begin{equation} \label{eq:sdim-descent-S1}
\delta\big( \Desc( \snorm{\cdot}, \mtx{X} ) \big)
	\leq \inf_{\tau \geq 0} \ \Expect
	\big[ \dist^2\big( \mtx{G}, \tau \cdot \partial \snorm{\mtx{X}} \big) \big],
\end{equation}
where we compute distance with respect to the Frobenius norm.  The $m \times n$ matrix
$\mtx{G}$ has independent standard normal entries, and it is partitioned conformally with $\mtx{X}$:
$$
\mtx{G} = \begin{bmatrix}
	\mtx{G}_{11} & \mtx{G}_{12} \\ \mtx{G}_{21} & \mtx{G}_{22}
\end{bmatrix}
	\quad\text{where \quad $\mtx{G}_{11}$ is $r \times r$ \quad and \quad $\mtx{G}_{22}$ is $(m-r)\times(n-r)$.}
$$
According to~\cite[Ex.~2]{Wat:92}, the subdifferential of the Schatten 1-norm at $\mtx{X}$ takes the form
\begin{equation} \label{eq:S1-subdiff}
\partial \snorm{\mtx{X}} = \left\{
	\begin{bmatrix} \Id_r & \mtx{0} \\ \mtx{0} & \mtx{W} \end{bmatrix} :
	\sigma_1(\mtx{W}) \leq 1 \right\},
\end{equation}   
where $\sigma_1( \mtx{W} )$ denotes the maximum singular value of $\mtx{W}$.
It follows that
$$
\dist\big( \mtx{G}, \tau \cdot \partial \snorm{\mtx{X}} \big)^2
	= \fnormsq{ \begin{bmatrix} \mtx{G}_{11} - \tau \, \Id_r & \mtx{G}_{12} \\
	\mtx{G}_{21} & \mtx{0} \end{bmatrix} }
	+ \inf_{\sigma_1(\mtx{W}) \leq 1} \ \fnormsq{ \mtx{G}_{22} - \tau \mtx{W} }.
$$
Using the Hoffman--Wielandt Theorem~\cite[Cor.~7.3.8]{HJ:85}, we can derive
$$
\inf_{\sigma_1(\mtx{W}) \leq 1} \ \fnormsq{ \mtx{G}_{22} - \tau \mtx{W} }
	= \inf_{\sigma_1(\mtx{W}) \leq 1} \ \sum_{i=1}^{m-r}
	\big( \sigma_i(\mtx{G}_{22}) - \tau \, \sigma_i(\mtx{W}) \big)^2
	= \sum_{i=1}^{m-r} \pos^2( \sigma_i(\mtx{G}_{22}) - \tau ),
$$
where $\sigma_i(\cdot)$ is the $i$th largest singular value.
Combining the last two displays and taking the expectation,
\begin{equation} \label{eq:S1-dist-slice}
\Expect \big[ \dist^2\big( \mtx{G}, \tau \cdot \partial \snorm{\mtx{X}} \big) \big]
	= r \big(m + n -r + \tau^2 \big) + \Expect \left[ \sum_{i=1}^{m-r}
	\pos^2(\sigma_i(\mtx{G}_{22}) - \tau) \right].
\end{equation}
Introduce this expression into~\eqref{eq:sdim-descent-S1}:
\begin{equation} \label{eq:S1-sdim-pre}
\delta\big( \Desc\big( \snorm{\cdot}, \mtx{X} \big) \big)
	\leq \inf_{\tau \geq 0} \ \left\{ r (m + n -r + \tau^2 )
	+ \Expect \left[ \sum_{i=1}^{m-r} \pos^2(\sigma_i(\mtx{G}_{22}) - \tau) \right] \right\}.
\end{equation}
We reach a nonasymptotic bound on the statistical dimension.

Next, we apply the error bound, Theorem~\ref{thm:sharp-descent}.  The expression~\eqref{eq:S1-subdiff} shows that the subdifferential $\partial \snorm{\mtx{X}}$ depends only on the rank of the matrix $\mtx{X}$ and its dimension, so the descent cone $\Desc(\snorm{\cdot}, \mtx{X})$ has the same invariance.
Therefore, we may consider the $m \times n$ rank-$r$ matrix $\mtx{X} = \Id_r \oplus \mtx{0}$, which verifies $\snorm{ \mtx{X} / \norm{\mtx{X}} } = \sqrt{r}$.  Each subgradient $\mtx{Y} \in \partial \snorm{\mtx{X}}$ satisfies the norm bound $\fnorm{ \mtx{Y} } \leq \sqrt{m}$.  We conclude that the error in~\eqref{eq:S1-sdim-pre} is no worse than $2 \sqrt{m/r}$.

It is challenging to evaluate the formula~\eqref{eq:S1-sdim-pre} exactly.  In principle, we could accomplish this task using the joint singular value density~\cite[p.~534]{And:84} of the Gaussian matrix $\mtx{G}_{22}$.  Instead, we set up a framework in which we can use classical random matrix theory to obtain a sharp asymptotic result.

Consider an infinite sequence $\{ \mtx{X}(r,m,n) \}$ of matrices, where $\mtx{X}(r,m,n)$ has rank $r$ and dimension $m \times n$ with $m \leq n$.  For simplicity, we assume that the problem parameters $r, m, n \to \infty$ with constant ratios $r/m =: \rho \in (0,1)$ and $m/n =: \nu \in (0,1]$.  The general case follows from a continuity argument.
After a change of variables $\tau \mapsto \tau \sqrt{n-r}$ and a rescaling, the expression~\eqref{eq:S1-dist-slice} leads to
\begin{multline} \label{eq:S1-dist-slice-2}
\frac{1}{mn} \Expect \big[ \dist^2\big( \mtx{G}, \tau \sqrt{n-r} \cdot \partial \snorm{\mtx{X}(r,m,n)} \big) \big] \\
	= \rho \nu +  \rho (1 - \rho \nu) \big(1 + \tau^2 \big) + (1 - \rho)(1 - \rho \nu) \cdot
	\Expect \left[ \frac{1}{m-r} \sum_{i=1}^{m-r} \pos^2( \sigma_i( \mtx{Z} ) - \tau )
	\right].
\end{multline}
Here, $\mtx{G}$ is an $m \times n$ standard normal matrix.  The matrix $\mtx{Z}$ has dimension $(m-r) \times (n-r)$, and its entries are independent $\normal(0, (n-r)^{-1})$ random variables.  

Observe that the expectation in~\eqref{eq:S1-dist-slice-2} can be viewed as a spectral function of a Gaussian matrix.  We can obtain the limiting value of this expectation from a variant of the Mar{\v c}enko--Pastur Law~\cite{MP:67}.

\begin{fact}[Spectral functions of a Gaussian matrix] \label{fact:mp}
Fix a continuous function $F : \R_+ \to \R$.
Suppose $p, q \to \infty$ and $p/q \to y \in (0, 1]$.
Let $\mtx{Z}_{pq}$ be a $p \times q$ matrix with independent $\normal(0, q^{-1})$ entries.  Then
$$
\Expect \left[ \frac{1}{p} \sum_{i=1}^p F\big(\sigma_i(\mtx{Z}_{pq}) \big) \right]
	\to \int_{a_-}^{a_+} F(u) \cdot \phi_y(u) \idiff{u}.
$$
The limits $a_{\pm} := 1 \pm \sqrt{y}$.
The kernel $\phi_y$ is a probability density supported on $[a_-, a_+]$:
$$
\phi_y(u) := \frac{1}{\pi y u} \sqrt{(u^2-a_-^2)(a_+^2-u^2)}
\quad\text{for $u \in [a_-, a_+]$.}
$$
\end{fact}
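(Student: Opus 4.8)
The plan is to reduce the statement to the classical Mar{\v c}enko--Pastur law for sample covariance matrices and then to transfer the conclusion from bounded test functions to the (possibly unbounded) continuous integrand $F$ of interest.

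First I would pass from the singular values of $\mtx{Z}_{pq}$ to the eigenvalues of the Gram matrix $\mtx{W} := \mtx{Z}_{pq}\mtx{Z}_{pq}^{\transp}$. This matrix is symmetric positive semidefinite with eigenvalues $\lambda_i = \sigma_i(\mtx{Z}_{pq})^2$, so the empirical singular-value measure $\hat{\mu}_p := p^{-1}\sum_i \delta_{\sigma_i}$ is the pushforward of the empirical eigenvalue measure $\hat{\nu}_p := p^{-1}\sum_i \delta_{\lambda_i}$ under $\lambda \mapsto \sqrt{\lambda}$. The entries of $\mtx{Z}_{pq}$ have variance $q^{-1}$ and $p/q \to y$, which is exactly the normalization for which the Mar{\v c}enko--Pastur law \cite{MP:67} applies: the expected measure $\Expect[\hat{\nu}_p]$ converges weakly to the Mar{\v c}enko--Pastur distribution $\nu_y$ supported on $[a_-^2, a_+^2] = [(1-\sqrt{y})^2, (1+\sqrt{y})^2]$ with density $\lambda \mapsto (2\pi y \lambda)^{-1}\sqrt{(a_+^2 - \lambda)(\lambda - a_-^2)}$.

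Next I would carry out the change of variables $\lambda = u^2$, so that $\diff\lambda = 2u \idiff u$ sends $\nu_y$ to a measure on $[a_-, a_+]$ with density $\phi_y(u) = 2u \cdot (2\pi y u^2)^{-1}\sqrt{(a_+^2 - u^2)(u^2 - a_-^2)}$, which matches the stated kernel exactly. Combined with the weak convergence of $\Expect[\hat{\nu}_p]$, this already yields $\Expect\big[p^{-1}\sum_i F(\sigma_i)\big] \to \int_{a_-}^{a_+} F(u)\phi_y(u)\idiff u$ whenever $F$ is bounded and continuous.

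The main obstacle is upgrading this from bounded continuous $F$ to the integrands that arise in the application, where $F(u) = \pos^2(u - \tau) \le u^2$. Weak convergence does not by itself control the contribution of the largest singular values, so I would supply a uniform-integrability estimate. Writing $\mtx{Z}_{pq} = q^{-1/2}\mtx{G}$ with $\mtx{G}$ standard normal, the largest singular value $\sigma_1(\mtx{Z}_{pq})$ is a $q^{-1/2}$-Lipschitz function of $\mtx{G}$ with respect to the Frobenius norm, so Gaussian concentration of measure---the tail analogue of the variance bound in Fact~\ref{fact:gauss-lip}---pins it near $1 + \sqrt{y}$ with subgaussian fluctuations. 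This controls $\Expect[\sigma_1^{2k}]$ and hence the tail $\Expect\big[p^{-1}\sum_i F(\sigma_i)\,\mathbf{1}\{\sigma_i > M\}\big]$ uniformly in $p$; truncating $F$ at a large threshold $M$, applying the bounded-continuous case to a smoothed cutoff of $F$, and letting $M \to \infty$ then closes the argument. For the quadratic growth of $\pos^2$ the crude identity $p^{-1}\sum_i \sigma_i^2 = (pq)^{-1}\fnormsq{\mtx{G}}$ already has bounded second moment, so in the case we actually need the uniform-integrability step is routine.
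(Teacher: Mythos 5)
Your proposal is correct, and its skeleton is the same as the paper's: pass to the Gram matrix $\mtx{Z}_{pq}\mtx{Z}_{pq}^{\transp}$, invoke the Mar{\v c}enko--Pastur law, and change variables $\lambda = u^2$ to produce the kernel $\phi_y$. Where you genuinely diverge is on the one delicate point, the handling of continuous but unbounded test functions and the passage to expectations. The paper's stated reduction cites the almost sure weak convergence of the empirical spectral distribution \cite[Thm.~3.6]{BS10:Spectral-Analysis} \emph{together with} the almost sure convergence of the extreme eigenvalues \cite[Thm.~5.8]{BS10:Spectral-Analysis}; the extreme-eigenvalue theorem confines the spectrum to a neighborhood of $[a_-^2, a_+^2]$, so continuity of $F$ alone gives almost sure convergence of the linear statistic, and the paper then dismisses the transfer from almost sure convergence to convergence of expectations as an ``uninteresting detail.'' You instead work with the expected empirical spectral distribution throughout and replace the extreme-eigenvalue theorem with Gaussian concentration: since $\sigma_1(\mtx{Z}_{pq})$ is a $q^{-1/2}$-Lipschitz function of the underlying standard normal matrix, it has subgaussian fluctuations around a quantity converging to $1+\sqrt{y}$, which gives uniform moment bounds and lets a truncation argument close the gap. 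Each route buys something. The paper's citations are shorter and yield almost sure statements; your route is more self-contained (it uses only the concentration machinery the paper already deploys elsewhere, cf.~Fact~\ref{fact:gauss-lip}), it addresses the expectation form of the claim---which is what the Fact actually asserts---without leaving the uniform-integrability step implicit, and it exposes a point the paper glosses over: as literally stated for arbitrary continuous $F$, the Fact requires a growth restriction, since an $F$ growing faster than any Gaussian tail makes every expectation infinite at finite $p,q$. Restricting the unbounded analysis to the quadratically dominated integrands $\pos^2(u-\tau) \leq u^2$ that the application needs, as you do, is the correct resolution of that imprecision.
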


\noindent
Fact~\ref{fact:mp} is usually stated differently in the literature.  The result here follows from the almost sure weak convergence of the empirical spectral density of a sample covariance matrix to the Mar{\v c}enko--Pastur density~\cite[Thm.~3.6]{BS10:Spectral-Analysis} and the almost sure convergence of the extreme eigenvalues of a sample covariance matrix~\cite[Thm.~5.8]{BS10:Spectral-Analysis}, followed by a change of variables in the integral.  We omit the uninteresting details of this reduction.

Let us apply Fact~\ref{fact:mp} to our problem.  The limiting aspect ratio $y$ of the matrix $\mtx{Z}$ satisfies
$$
y = \frac{m-r}{n-r} = \frac{\nu(1- \rho)}{1-\rho \nu}.
$$
As $r, m, n \to \infty$, we obtain the limit, pointwise in \(\tau\ge 0\),
$$
\Expect \left[ \frac{1}{m-r} \sum_{i=1}^{m-r} \pos^2( \sigma_i( \mtx{Z} ) - \tau )
	\right]
	\to \int_{a_-}^{a_+} \pos^2( u - \tau ) \cdot \phi_y(u) \idiff{u}.
$$
Simplifying the latter integral and introducing it into~\eqref{eq:S1-dist-slice-2}, we reach
\begin{multline*}
\inf_{\tau\ge 0}\left\{\frac{1}{mn} \Expect \big[ \dist^2\big( \mtx{G}, \tau \sqrt{n-r} \cdot \partial \snorm{\mtx{X}(r,m,n)} \big) \big]\right\}
	\\ \rightarrow
	\inf_{\tau\ge 0}\left\{\rho \nu + \rho (1 - \rho \nu)\big(1 + \tau^2 \big) + (1 - \rho)(1 - \rho \nu)
	\int_{a_- \vee \tau}^{a_+} (u-\tau)^2 \cdot \phi_y(u) \idiff{u}\right\}.
\end{multline*}
By itself, pointwise convergence does not imply convergence of the infimal values.  The limit above follows from the fact that all of the functions involved are strictly convex.  For the details, see~\cite[p.~105]{McC:13}.

Rescaling the error estimate for~\eqref{eq:S1-sdim-pre}, we see that the error in the normalized statistical dimension is at most $2/(n \sqrt{mr})$, which converges to zero as the parameters grow.  We obtain the asymptotic result
\begin{equation*}
\frac{1}{mn} \delta\big( \Desc\big(\snorm{\cdot}, \mtx{X}(r,m,n) \big) \big)
	\rightarrow
	\inf_{\tau \geq 0} \ \left\{ \rho \nu + \rho  (1 - \rho \nu) \big(1 + \tau^2 \big)
	+ (1 - \rho)(1- \rho\nu)
	\int_{a_- \vee \tau}^{a_+} (u-\tau)^2 \cdot \phi_y(u) \idiff{u} \right\}.
\end{equation*}
This is the main conclusion~\eqref{eq:S1-sdim}.
To obtain the stationary equation~\eqref{eq:S1-stationary}, we differentiate the brace with respect to $\tau$ and set the derivative to zero.
\end{proof}

\subsection{Permutahedra and finite reflection groups}
\label{sec:chamb-finite-refl}
In this section, we use a deep connection between conic geometry and classical combinatorics to compute the statistical dimension of the normal cone of a (signed) permutahedron.  This computation is based on the intrinsic characterization of statistical dimension in Definition~\ref{def:sdim-int}, which is also restated in~\eqref{eq:sdim-intr}.

A \term{finite reflection group} is a finite subgroup $\coll{G}$ of the orthogonal group\footnote{The orthogonal group consists of the set of orthogonal matrices equipped with the group operation of matrix multiplication.} that is generated by reflections across hyperplanes~\cite{ST:54,Stei:59,CM:72,BB:10}.  Each finite reflection group partitions $\R^d$ into a set $\{ \mtx{U} C_{\coll{G}} : \mtx{U} \in \coll{G} \}$ of polyhedral cones called \term{chambers}. The chambers of the infinite families $A_{d-1}$ and $BC_{d}$ of irreducible finite reflection groups are isometric to the cones
\begin{gather*}
C_A := \big\{ \vct{x} \in \R^d : x_1 \leq \dots \leq x_d \big\}
\quad\text{and}\quad
C_{BC} := \big\{ \vct{x}  \in \R^d : 0 \leq x_1 \leq \dots \leq x_d \big\}.
\end{gather*}
It turns out that the chambers $C_A$ and $C_{BC}$ coincide with the normal cones of certain permutahedra.

\begin{fact}[Normal cones of permutahedra] \label{fact:perm-norm}
Suppose that the vector $\vct{x}$ has distinct entries.  Then the normal cone $\NormC(\mathcal{P}(\vct{x}), \vct{x})$ is isometric to $C_A$ and the normal cone
$\NormC(\mathcal{P}_{\pm}(\vct{x}), \vct{x})$ is isometric to $C_{BC}$.
\end{fact}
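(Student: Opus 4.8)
The plan is to compute each normal cone directly from its definition and recognize it as the stated chamber. Recall from~\eqref{eq:ncone} that $\vct{u}\in\NormC(\mathcal{P}(\vct{x}),\vct{x})$ precisely when $\ip{\vct{u}}{\vct{y}-\vct{x}}\le 0$ for every $\vct{y}\in\mathcal{P}(\vct{x})$. Since $\mathcal{P}(\vct{x})$ is the convex hull of the coordinate permutations of $\vct{x}$, it suffices to test this inequality on the generators; thus $\vct{u}$ lies in the normal cone exactly when the identity permutation maximizes $\sigma\mapsto\ip{\vct{u}}{\sigma(\vct{x})}$ over all coordinate permutations $\sigma$. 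First I would reduce to a canonical base vertex: every coordinate permutation $\pi$ is an orthogonal map with $\pi\,\mathcal{P}(\vct{x})=\mathcal{P}(\vct{x})$, and a one-line computation gives $\NormC(\mathcal{P}(\vct{x}),\pi\vct{x})=\pi\,\NormC(\mathcal{P}(\vct{x}),\vct{x})$. Hence, up to an isometry, I may choose $\pi$ so that the base point is sorted, $x_1<x_2<\dots<x_d$, which is possible because the entries of $\vct{x}$ are distinct.

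With $\vct{x}$ sorted increasingly, the claim $\NormC(\mathcal{P}(\vct{x}),\vct{x})=C_A$ is exactly the rearrangement inequality. If $u_j>u_k$ for some $j<k$, then the transposition swapping coordinates $j,k$ changes the objective by $(u_j-u_k)(x_k-x_j)>0$, so the identity is not maximal and $\vct{u}$ fails to lie in the cone; conversely, if $u_1\le\dots\le u_d$ then the rearrangement inequality shows the identity is optimal. This identifies the normal cone with $\{u_1\le\dots\le u_d\}=C_A$, and the general case follows by the isometry above.

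For the signed permutahedron the argument runs in two stages. Using that signed coordinate permutations also lie in $\mathsf{O}_d$ and fix the polytope $\mathcal{P}_\pm(\vct{x})$ setwise, I would first reduce (again by the conjugation identity for normal cones) to a base point with $0<x_1<\dots<x_d$. A vector $\vct{u}$ lies in $\NormC(\mathcal{P}_\pm(\vct{x}),\vct{x})$ iff the trivial signed permutation maximizes $\sigma_\pm\mapsto\sum_i u_i\,\epsilon_i\,x_{\sigma(i)}$ over all sign patterns $\epsilon\in\{\pm1\}^d$ and permutations $\sigma$. Optimizing over the signs first, and using $x_{\sigma(i)}>0$, the maximum over $\epsilon$ equals $\sum_i\abs{u_i}\,x_{\sigma(i)}$; matching this against the value $\ip{\vct{u}}{\vct{x}}$ at the trivial element forces $u_i\ge 0$ for every $i$, and then the positive rearrangement inequality (as above) forces $u_1\le\dots\le u_d$. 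The two conditions combine to give $\{0\le u_1\le\dots\le u_d\}=C_{BC}$, and the converse inclusion is immediate from $\ip{\vct{u}}{\sigma_\pm(\vct{x})}\le\sum_i u_i x_{\sigma(i)}\le\ip{\vct{u}}{\vct{x}}$.

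The one point that needs care, and the main obstacle, is the reduction to the canonical base point in the signed case. Sorting $\vct{x}$ to $0<x_1<\dots<x_d$ by a signed coordinate permutation requires the magnitudes $\abs{x_1},\dots,\abs{x_d}$ to be distinct and nonzero, i.e.\ that $\vct{x}$ lie in the interior of a chamber of the reflection group $BC_d$; this is the genuine genericity hypothesis (strictly stronger than having distinct entries, as the degenerate example $\vct{x}=(1,-1)$ shows). Under this condition the whole computation is an instance of the standard fact that, for a finite reflection group $\coll{G}$ with fundamental chamber $\overline{C}$ and a point $\vct{x}$ in its interior, one has $\ip{\vct{u}}{\vct{x}}\ge\ip{\vct{u}}{w\vct{x}}$ for all $w\in\coll{G}$ exactly when $\vct{u}\in\overline{C}$; applying this to $A_{d-1}$ and $BC_d$ reproduces $C_A$ and $C_{BC}$. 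I would present the elementary rearrangement computation as the main argument and invoke the reflection-group picture to explain why the chambers are the right answer.
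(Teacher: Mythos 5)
Your proof is correct, but it follows a genuinely different route from the paper: the paper offers no argument for Fact~\ref{fact:perm-norm} at all, deferring entirely to the citation \cite[Sec.~2]{HLT:11}. Your argument is self-contained and elementary: reduce the normal-cone membership test \eqref{eq:ncone} to the finitely many generating vertices, conjugate by a (signed) permutation matrix to normalize the base vertex, and then run the rearrangement/transposition computation. All three steps check out, including the conjugation identity $\NormC(\mathcal{P}(\vct{x}),\pi\vct{x})=\pi\,\NormC(\mathcal{P}(\vct{x}),\vct{x})$ and the strict-inequality swap argument, which is exactly where the strict sorting of the base point is used. What your approach buys is a proof readable without the external reference, and one that makes visible precisely where each genericity assumption enters.

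Your caveat about the signed case is not a defect of your proof but a genuine correction to the statement as printed. With $\vct{x}=(1,-1)$ the entries are distinct, yet $\mathcal{P}_{\pm}(\vct{x})=[-1,1]^2$ and the normal cone at the vertex $(1,-1)$ is the quadrant $\{\vct{u}\in\R^2 : u_1\ge 0,\ u_2\le 0\}$, a self-dual cone with statistical dimension $1$; it is not isometric to $C_{BC}$, whose statistical dimension is $\half\mathrm{H}_2=\tfrac{3}{4}$. So the hypothesis for the second claim must be that the magnitudes $\abs{x_1},\dots,\abs{x_d}$ are distinct and nonzero, which implies, and is strictly stronger than, distinctness of the entries. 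This does not disturb anything downstream in the paper: the unsigned claim, which is all that Proposition~\ref{prop:vec-list} uses, needs only distinct entries; the figures accompanying Section~\ref{sec:fin_refl_gr} use the generators $(3,-1)$ and $(3,-2)$, which satisfy the stronger condition; and the remark on signed vectors in Section~\ref{sec:appl-permutahedron} should simply be read with the corrected hypothesis.
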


\noindent
See~\cite[Sec.~2]{HLT:11} or~\cite[Ex.~7.15]{Z:95} for a proof of Fact~\ref{fact:perm-norm}.

We claim that the statistical dimensions of the chambers $C_A$ and $C_{BC}$ can be expressed as
\begin{equation} \label{eq:frg-sdim}
\delta(C_A) = \mathrm{H}_d
\quad\text{and}\quad
\delta(C_{BC}) = \frac{1}{2} \mathrm{H}_d,
\end{equation}
where $\mathrm{H}_d := \sum_{k=1}^d i^{-1}$ is the $d$th harmonic number.
Proposition~\ref{prop:perm-cone} follows immediately when we combine this
statement with Fact~\ref{fact:perm-norm}.

Let us explain how the theory of finite reflection groups allows us to deduce the expression~\eqref{eq:frg-sdim} for the statistical dimension of the chambers.  First, it follows from~\cite{BZ:09} and the characterization~\cite[Eq.~(6.50)]{scwe:08} of intrinsic volumes in terms of polytope angles that
$$
v_k( C_{\coll{G}} ) = \frac{\abs{\coll{H}_k}}{\abs{\coll{G}}},
$$
where $\coll{H}_k := \{ \mtx{U} \in \coll{G} : \dim(\nullity(\Id - \mtx{U})) = k \}$.  Define the generating polynomial of the intrinsic volumes
$$
q_{\coll{G}}(s) := \sum_{k=0}^d v_k(C_{\coll{G}}) \, s^k
	= \frac{1}{\abs{\coll{G}}} \sum_{k=0}^d \abs{\coll{H}_k} \, s^k.
$$
This polynomial is a well-studied object in the theory of finite reflection groups, and it has many applications in conic geometry as well~\cite[Sec.~4.4]{am:thesis}.  For our purposes, we only need the relationships
$$
q_{\coll{G}}(1) = 1
\quad\text{and}\quad
\frac{\diff{q_{\coll{G}}}}{\diff{s}}(1)
	= \sum_{k=1}^d k \, v_k(C_{\coll{G}})
	= \delta(C_{\coll{G}}). 
$$
These points follow immediately from~\eqref{eq:probab-dist} and the intrinsic formulation~\eqref{eq:sdim-intr}
of the statistical dimension.

The roots $\{ \zeta_k : k = 1, \dots, d \}$ of the polynomial $q_{\coll{G}}$ are called the (negative) \term{exponents} of the reflection group~\cite[Sec.~7.9]{CM:72}.  Factoring the generating polynomial, we obtain a concise expression for the statistical dimension:
$$
\delta(C_{\coll{G}}) = \frac{\diff{q_{\coll{G}}}}{\diff{s}}(1)
	= \left( \prod_{k=1}^d \frac{1}{1-\zeta_k} \right) \cdot
	\frac{\diff{}}{\diff{s}} \prod_{k=1}^d (s - \zeta_k) \bigg\vert_{s=1}
	= \sum_{k=1}^d \frac{1}{1-\zeta_k}. 
$$
We can deduce the value of the large parenthesis because of the normalization $q_{\coll{G}}(1) = 1$.  The exponents associated with the groups $A_{d-1}$ and $BC_d$ are collected in~\cite[Tab.~10]{CM:72}, from which it follows immediately that
$$
\delta(C_A) = \sum_{k=1}^d \frac{1}{k}
\quad\text{and}\quad
\delta(C_{BC}) = \frac{1}{2} \sum_{k=1}^d \frac{1}{k}.
$$
This completes the proof of the claim~\eqref{eq:frg-sdim}. 

\begin{remark}[Related work]
The intrinsic volumes of chambers of finite reflection groups, and more generally of polyhedral cones, have appeared in many different contexts.  For example, the papers~\cite{drton2010geometric,klivans2011projection} relate the intrinsic volumes of regions of hyperplane arrangements to the characteristic polynomial of the arrangement. This result can be used to give an alternative derivation of~\eqref{eq:frg-sdim}.
\end{remark}

\section{Technical lemmas for concentration of intrinsic volumes}
\label{sec:proof-backgr-results}

This appendix contains the technical results that undergird the
proof of the result on concentration of intrinsic volumes,
Theorem~\ref{thm:main-conc}, and the approximate kinematic
bound, Theorem~\ref{thm:approx-kinem}.

\subsection{Interlacing of tail functionals}

First, we establish the interlacing inequality for the tail functionals.
This result is a straightforward consequence of the Crofton formula~\eqref{eq:Crofton}.

\begin{proof}[Proof of Proposition~\ref{prop:interlacing}]
Let $L_{d -k + 1}$ be a linear subspace of dimension $d - k + 1$, and
let $L_{d-k}$ be a linear subspace of dimension $d - k$ inside $L_{d-k+1}$.
The Crofton formula~\eqref{eq:Crofton} shows that the half-tail functionals
are weakly decreasing:
$$
2 h_{k+1}(C) = \Prob \big\{ C \cap \mtx{Q} L_{d-k} \neq \{ \vct{0} \} \big\}
	\leq \Prob\big\{ C \cap \mtx{Q} L_{d-k+1} \neq \{ \vct{0} \} \big\}
	= 2 h_k(C),
$$
where the inequality follows from the containment of the subspaces.
We can express the tail functional $t_k$ as the average of the half-tail functionals:
$$
\half t_k(C) = \half \big[ h_k(C) + h_{k+1}(C) \big]. 
$$
Therefore, $2 h_{k}(C) \geq t_k(C) \geq 2 h_{k+1}(C)$.
\end{proof}

\subsection{Bounds for tropic functions}

We continue with the proof of Lemma~\ref{lem:median-mean},
which provides a bound on the tropic functions.
This argument is based on an approximation formula from
the venerable compendium of
Abramowitz \& Stegun~\cite[Sec.~26.5.21]{AS52:Handbook-Mathematical}.

\begin{fact}[Approximation of beta distributions] \label{fact:beta}
Let $X$ be a $\textsc{beta}(a, b)$ random variable.
Assume that $a + b > 6$ and $(a + b - 1)(1 - x) \geq 0.8$.
Define the quantity $y$ via the formula
$$
y = \frac{3 \left[ w_1 \left(1 - \frac{1}{9b}\right) - w_2 \left(1 - \frac{1}{9a}\right)\right]}{\left[ \frac{w_1^2}{b} + \frac{w_2^2}{a} \right]^{1/2}}
\quad\text{where}\quad
w_1 = (bx)^{1/3}
\quad\text{and}\quad
w_2 = (a(1-x))^{1/3}.
$$
Then
$$
\Prob\{ X \leq x \} = \Phi(y) + \eps(x)
\quad\text{with}\quad
\abs{\eps(x)} \leq 5 \cdot 10^{-3}.
$$
The function $\Phi$ represents the cumulative distribution of a standard
normal random variable.
\end{fact}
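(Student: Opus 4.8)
The plan is to recognize the stated formula as a \emph{Wilson--Hilferty} cube-root normal approximation for the beta law, and to derive it from the gamma representation of the beta distribution together with the normalizing power transformation for gamma variables. The cube roots $w_1 = (bx)^{1/3}$, $w_2 = (a(1-x))^{1/3}$ and the correction factors $1 - \tfrac{1}{9b}$, $1 - \tfrac{1}{9a}$ appearing in $y$ are the unmistakable signature of this transformation, which dictates the whole argument.

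First I would represent the beta variable as a ratio: write $X = U/(U+V)$ where $U \sim \textsc{gamma}(a,1)$ and $V \sim \textsc{gamma}(b,1)$ are independent. The event of interest then rearranges to a ratio comparison, since $\{X \le x\}$ is equivalent to $\{(1-x)\,U \le x\,V\}$, i.e.\ to $\{U/V \le x/(1-x)\}$. The purpose of this reduction is that $U$ and $V$ are independent, so after normalizing each factor separately we will be comparing two independent, approximately Gaussian quantities.

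Second, I would apply the Wilson--Hilferty transformation. Setting $p := (U/a)^{1/3}$ and $q := (V/b)^{1/3}$, the classical fact (obtained from the identity $2U \sim \chi^2_{2a}$, after matching the first two moments of the cube root and annihilating its leading skewness by Taylor expansion) is that $p$ is approximately normal with mean $1 - \tfrac{1}{9a}$ and variance $\tfrac{1}{9a}$, and likewise $q$ is approximately normal with mean $1 - \tfrac{1}{9b}$ and variance $\tfrac{1}{9b}$. Substituting $U = a p^3$ and $V = b q^3$, the event $\{U/V \le x/(1-x)\}$ becomes $\{p/q \le w_1/w_2\}$, hence $\{w_1 q - w_2 p \ge 0\}$, with $w_1, w_2$ exactly as in the statement. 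Since $p$ and $q$ are independent, the linear combination $w_1 q - w_2 p$ is approximately normal with mean $w_1(1 - \tfrac{1}{9b}) - w_2(1 - \tfrac{1}{9a})$ and variance $\tfrac{1}{9}\bigl(\tfrac{w_1^2}{b} + \tfrac{w_2^2}{a}\bigr)$; dividing the mean by the standard deviation produces precisely the quantity $y$, so that $\Prob\{X \le x\} \approx \Phi(y)$.

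The main obstacle is the uniform error bound $\abs{\eps(x)} \le 5 \cdot 10^{-3}$ with its explicit constant. The cube root is engineered to cancel the leading skewness term in the normal approximation of each gamma factor, so the naive error improves from order $(a+b)^{-1/2}$ to roughly order $(a+b)^{-1}$; but pinning down the numerical constant demands a quantitative Edgeworth or Berry--Esseen analysis of the remainder, held uniform over the admissible region $a+b > 6$ and $(a+b-1)(1-x) \ge 0.8$. I would attempt this by expanding the densities of $p$ and $q$ to one further order and bounding the tail contributions near the ratio threshold, but matching the sharp constant $5 \cdot 10^{-3}$ is ultimately a finite numerical verification. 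For that reason I would, as the authors do, cite the tabulated bound of Abramowitz \& Stegun rather than reprove the constant from scratch.
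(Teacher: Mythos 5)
Your proposal is correct in its identification of the mechanism, but you should know that the paper does not prove this statement at all: it is stated as a \emph{Fact} and quoted verbatim from Abramowitz \& Stegun~[Sec.~26.5.21], which is the paper's entire ``proof.'' Your Wilson--Hilferty derivation is sound as far as it goes and is indeed the standard route to this formula: writing $X = U/(U+V)$ with independent $U \sim \textsc{gamma}(a,1)$, $V \sim \textsc{gamma}(b,1)$, reducing $\{X \le x\}$ to $\{w_1 q - w_2 p \ge 0\}$ with $p = (U/a)^{1/3}$, $q = (V/b)^{1/3}$, and using the approximate normality of $p$ and $q$ with means $1 - \tfrac{1}{9a}$, $1 - \tfrac{1}{9b}$ and variances $\tfrac{1}{9a}$, $\tfrac{1}{9b}$ reproduces exactly the quantity $y$ in the statement. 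So your sketch explains \emph{where} the formula comes from, which the paper never does. However, the load-bearing content of the Fact is the uniform, explicit error bound $\abs{\eps(x)} \le 5 \cdot 10^{-3}$ over the region $a+b > 6$, $(a+b-1)(1-x) \ge 0.8$, and your argument does not (and, as you concede, cannot without substantial quantitative Edgeworth-type work or numerical verification) deliver that constant; you resolve this by citing the same tabulated bound the paper cites. In the end, then, your proposal and the paper rest on the identical citation for the rigorous claim, with your derivation serving as correct and useful motivation rather than as a proof.
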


\begin{proof}[Proof of Lemma~\ref{lem:median-mean}]
We need to show that the tropic function $I_k^d(k/d) \geq 0.3$ for
all integers $k$ and $d$ that satisfy $1 \leq k \leq d - 1$.
(The cases $k = 0$ and $k = d$ are trivial.)
To accomplish this goal, we represent
the tropic function in terms of a beta random variable,
and we apply Fact~\ref{fact:beta} to approximate its value.

Let $X \sim \textsc{beta}(a,b)$ with shape parameters $a = \half k$ and $b = \half (d-k)$.  In particular, $\Expect[ X ] = k/d$.  It is well known~\cite[Sec.~2]{Art:02} that the tropic function can be expressed in terms of this random variable:
$$
1 - I_k^d(k/d) = \Prob\big\{ \enormsm{\Proj_{L_k}(\vct{\theta})}^2 < k/d \big\}
	= \Prob\{ X < \Expect[ X ] \}.
$$
We must show that this probability is bounded above by $0.7$.

First, the mean--median--mode inequality~\cite{vDVW:93} implies
that the mean $k/d$ of $X$ is smaller than the median when
$k \geq \half d$, so the probability is bounded by $0.5$
in this regime.

Second, the cases where $a + b \leq 6$ correspond with the situation
where $d \leq 12$.  We can enumerate the cases where $d \leq 12$ and
$0 \leq k \leq d/2$.  In each case, we verify numerically that
the required probability is less than $0.7$.

It is easy to check that for $a=\half k$, $b=\half (d-k)$, and
$x=k/d$, the inequality $(a+b-1)(1-x) \geq 0.8$ holds when $0<k< \half d$ and $d \geq 12$.  Instantiating the formula from Fact~\ref{fact:beta}
and simplifying, we reach
$$
y = \frac{\sqrt{2}}{3} \cdot \frac{d - 2k}{\sqrt{dk(d-k)}}
	< \frac{\sqrt{2}}{3}.
$$
Indeed, for each $d$, the extremal choice is $k = 1$.
The function $\Phi$ is increasing, so we conclude that
$$
\Prob\{ X \leq k/d \} \leq \Phi\left(\frac{\sqrt{2}}{3}\right) + 5 \cdot 10^{-3}
	< 0.69.
$$
The latter bound results from numerical computation.
\end{proof}

\subsection{The projection of a spherical variable onto a cone}

In this section, we establish Lemma~\ref{lem:concentration},
which controls the probability that a spherical random variable
has an unusually large projection on a cone.
Although the lemma is framed in terms of a spherical random variable,
it is cleaner to derive the result using Gaussian methods.
We require an exponential moment inequality~\cite[Cor.~1.7.9]{Bog:98}
that ultimately depends on the Gaussian logarithmic Sobolev inequality.  

\begin{fact}[Exponential moments for a function of a Gaussian variable] \label{fact:exp-moment}
Suppose that $F : \R^d \to \R$ satisfies $\Expect F(\vct{g}) = 0$.  Assume moreover that $F$ belongs to the Gaussian Sobolev class $H^1(\gamma_d)$, i.e., the squared norm of its gradient is integrable with respect to the Gaussian measure.  Then
\begin{equation} \label{eq:gauss-conc}
\Expect \econst^{\xi F(\vct{g})}
	\leq \bigg( \Expect \econst^{ (\xi/4) \,\enormsm{\nabla F(\vct{g})}^2 } \bigg)^{2\xi/(1-2\xi)}
	\quad\text{when $0 < \xi < \half$.}
\end{equation}
\end{fact}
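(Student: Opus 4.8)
The plan is to derive~\eqref{eq:gauss-conc} from the Gaussian logarithmic Sobolev inequality by converting it into a differential inequality for the cumulant generating function. Write $H(\xi) := \Expect[\econst^{\xi F(\vct{g})}]$ and $\Lambda(\xi) := \log H(\xi)$, and abbreviate the quantity governing the right-hand side by $N(\xi) := \log \Expect[\econst^{(\xi/4)\enormsm{\nabla F(\vct{g})}^2}]$. We may assume $N(\xi) < \infty$, since otherwise the claim is vacuous; because $N$ is the cumulant generating function of the nonnegative variable $\tfrac14\enormsm{\nabla F(\vct{g})}^2$, it is convex with $N(0)=0$, and finiteness at $\xi$ forces finiteness on all of $[0,\xi]$. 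After taking logarithms, the target reads exactly $\Lambda(\xi) \leq \tfrac{2\xi}{1-2\xi}\,N(\xi)$.

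First I would apply the Gaussian log-Sobolev inequality to the function $\econst^{\xi F/2}$. Writing $\operatorname{Ent}(Z) := \Expect[Z\log Z] - \Expect[Z]\log\Expect[Z]$, this gives
\begin{equation*}
\operatorname{Ent}\big(\econst^{\xi F}\big) \leq \frac{\xi^2}{2}\, \Expect\big[\enormsm{\nabla F}^2\,\econst^{\xi F}\big].
\end{equation*}
The weighted gradient term is not yet of the advertised form, so the next step is to peel the gradient out of the joint expectation using the entropy duality inequality $\Expect[ZU] \leq \operatorname{Ent}(Z) + \Expect[Z]\log\Expect[\econst^{U}]$, valid for any $Z\geq 0$. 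Taking $Z = \econst^{\xi F}$ and $U = \tfrac{\xi}{4}\enormsm{\nabla F}^2$ yields
\begin{equation*}
\frac{\xi}{4}\,\Expect\big[\enormsm{\nabla F}^2\,\econst^{\xi F}\big] \leq \operatorname{Ent}\big(\econst^{\xi F}\big) + H(\xi)\,N(\xi).
\end{equation*}
Substituting this into the log-Sobolev bound and collecting the entropy terms produces the clean relation $(1-2\xi)\,\operatorname{Ent}(\econst^{\xi F}) \leq 2\xi\,H(\xi)\,N(\xi)$.

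To finish, I would convert this into a differential inequality. Using the identity $\operatorname{Ent}(\econst^{\xi F}) = \xi H'(\xi) - H(\xi)\log H(\xi)$ and dividing by $H(\xi)>0$ gives $(1-2\xi)\big(\xi\Lambda'(\xi) - \Lambda(\xi)\big) \leq 2\xi\,N(\xi)$. Dividing by $\xi^2$ and recognizing the left-hand side as $\tfrac{\diff{}}{\diff{\xi}}\big(\Lambda(\xi)/\xi\big)$, I set $\phi := \Lambda/\xi$ and obtain $\phi'(\xi) \leq \tfrac{2N(\xi)}{\xi(1-2\xi)}$. Since $\Expect[F(\vct{g})]=0$ forces $\Lambda(0)=\Lambda'(0)=0$, we have $\phi(0)=0$, so integrating gives $\phi(\xi) \leq \int_0^\xi \tfrac{2N(t)}{t(1-2t)}\idiff{t}$. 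Now I invoke two elementary facts: the slope $N(t)/t$ is nondecreasing (convexity of $N$ with $N(0)=0$), and $-\log(1-u)\leq u/(1-u)$ on $(0,1)$. Bounding $N(t)/t \leq N(\xi)/\xi$ inside the integral, evaluating $\int_0^\xi (1-2t)^{-1}\idiff{t} = -\tfrac12\log(1-2\xi)$, and applying the elementary inequality with $u=2\xi$ yields $\phi(\xi) \leq \tfrac{2N(\xi)}{1-2\xi}$, i.e.\ $\Lambda(\xi) \leq \tfrac{2\xi}{1-2\xi}\,N(\xi)$; exponentiating recovers~\eqref{eq:gauss-conc}.

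The main obstacle is analytic bookkeeping rather than the chain of inequalities: one must justify differentiating $H(\xi) = \Expect[\econst^{\xi F(\vct{g})}]$ under the expectation and verify that membership $F \in H^1(\gamma_d)$ is enough regularity to invoke the log-Sobolev inequality. The standard remedy is to prove the estimate first for bounded, smooth, compactly supported $F$, where all integrability and interchange-of-limit steps are immediate, and then pass to the general case by truncation and mollification together with Fatou's lemma — exactly the routine approximation that underlies~\cite[Cor.~1.7.9]{Bog:98}.
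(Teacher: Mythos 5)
Your proof is correct. Note that the paper itself contains no proof of this statement: it is labeled a Fact and delegated to~\cite[Cor.~1.7.9]{Bog:98}, with only the remark that the inequality ``ultimately depends on the Gaussian logarithmic Sobolev inequality.'' Your Herbst-style argument is essentially the standard derivation behind that cited corollary (in the tradition of Aida, Masuda \& Shigekawa), so you have supplied in full the mechanism the paper merely names. I verified the chain: the LSI with constant $2$ applied to $\econst^{\xi F/2}$ gives $\operatorname{Ent}\big(\econst^{\xi F}\big) \leq \tfrac{\xi^2}{2}\,\Expect\big[\enormsm{\nabla F}^2 \econst^{\xi F}\big]$; your entropy-duality step with $U = \tfrac{\xi}{4}\enormsm{\nabla F}^2$ is the correct variational inequality, and the resulting self-referential bound closes precisely because $\tfrac{\xi^2}{2}\cdot\tfrac{4}{\xi} = 2\xi < 1$; the identity $\operatorname{Ent}(\econst^{\xi F}) = \xi H'(\xi) - H(\xi)\log H(\xi)$, the recognition of $\tfrac{\diff{}}{\diff{\xi}}\big(\Lambda(\xi)/\xi\big)$, the slope monotonicity of the convex function $N$ with $N(0)=0$, and the bound $-\log(1-2\xi) \leq 2\xi/(1-2\xi)$ combine to give exactly $\Lambda(\xi) \leq \tfrac{2\xi}{1-2\xi}N(\xi)$, matching the stated exponent (a sanity check with linear $F$ gives $\econst^{\xi^2/2} \leq \econst^{\xi^2/(2(1-2\xi))}$, consistent and asymptotically tight as $\xi \downarrow 0$). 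Two small points worth making explicit in a final write-up: the integrand $2N(t)/\big(t(1-2t)\big)$ is integrable at $t=0$ because $N(t)/t \to \tfrac14\Expect\enormsm{\nabla F}^2 < \infty$ for $F \in H^1(\gamma_d)$; and in the truncation step the truncated functions are no longer exactly mean-zero, so one should apply the bounded case to $F_n - \Expect F_n$ (which leaves the gradient unchanged, and $\Expect F_n \to 0$ since $F \in L^2(\gamma_d) \subset L^1(\gamma_d)$), while truncation only decreases $\enormsm{\nabla F_n}$ pointwise and smoothing via the Ornstein--Uhlenbeck semigroup contracts gradients, so Fatou's lemma passes the inequality to the limit exactly as you indicate.
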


Using this exponential moment bound, we reach an elegant estimate for the moment generating function of the squared projection of a standard normal vector onto a cone.

\begin{sublemma}[Exponential moment bounds]
Let $K \subset \R^d$ be a closed convex cone.  Then
\begin{equation} \label{eq:exp-mom-bd}
\Expect \econst^{\xi \, (\enormsm{\Proj_K(\vct{g})}^2 - \delta(K))}
	\leq \exp\left(
	\frac{2 \xi^2 \, \delta(K)}{1 - 4 \abs{\xi}} \right)
	\quad\text{for}\quad
	-\tfrac{1}{4} < \xi < \tfrac{1}{4}.
\end{equation}
\end{sublemma}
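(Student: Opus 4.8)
The plan is to apply the exponential moment inequality~\eqref{eq:gauss-conc} to the centered function
$$F(\vct{x}) := \enormsm{\Proj_K(\vct{x})}^2 - \delta(K), \qquad \vct{x} \in \R^d,$$
and then to close a self-referential bound on its moment generating function. First I would verify the hypotheses of the Fact: by the Gaussian formulation~\eqref{eq:sdim-gauss} the function is centered, $\Expect F(\vct{g}) = 0$; by the gradient formula~\eqref{eq:grad-proj} we have $\nabla F(\vct{x}) = 2\,\Proj_K(\vct{x})$, so $\enormsm{\nabla F(\vct{x})}^2 = 4\,\enormsm{\Proj_K(\vct{x})}^2$ and $\Expect\enormsm{\nabla F(\vct{g})}^2 = 4\delta(K) < \infty$, placing $F$ in $H^1(\gamma_d)$. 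Writing $\Lambda(\xi) := \Expect \econst^{\xi F(\vct{g})}$ and $M(\xi) := \Expect \econst^{\xi \enormsm{\Proj_K(\vct{g})}^2} = \econst^{\xi\delta(K)}\Lambda(\xi)$, the nonexpansiveness~\eqref{eq:I-pi} of the projection (together with $\Proj_K(\vct{0}) = \vct{0}$) gives $\enormsm{\Proj_K(\vct{g})} \le \enorm{\vct{g}}$, whence $M(\xi) \le \Expect \econst^{\xi\enorm{\vct{g}}^2} = (1-2\xi)^{-d/2} < \infty$ for $0 \le \xi < \tfrac12$; so every transform in play is finite.

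For $0 < \xi < \tfrac14$ I would run the self-bounding argument. Substituting $\enormsm{\nabla F(\vct{g})}^2 = 4\,\enormsm{\Proj_K(\vct{g})}^2$ into~\eqref{eq:gauss-conc} yields
$$\Lambda(\xi) \le M(\xi)^{2\xi/(1-2\xi)} = \big(\econst^{\xi\delta(K)}\Lambda(\xi)\big)^{2\xi/(1-2\xi)}.$$
Taking logarithms and setting $p := 2\xi/(1-2\xi)$, this reads $\log\Lambda(\xi) \le p\,\big(\xi\delta(K) + \log\Lambda(\xi)\big)$. Since $1 - p = (1-4\xi)/(1-2\xi) > 0$ throughout $0 < \xi < \tfrac14$, I may solve for $\log\Lambda(\xi)$ to obtain $\log\Lambda(\xi) \le \frac{p\,\xi\,\delta(K)}{1-p} = \frac{2\xi^2\delta(K)}{1-4\xi}$, which is the claimed bound for positive $\xi$.

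For $-\tfrac14 < \xi < 0$ I would write $\xi = -\eta$ with $0 < \eta < \tfrac14$ and apply~\eqref{eq:gauss-conc} instead to $-F$, which is also centered and has the same squared gradient $4\,\enormsm{\Proj_K(\vct{g})}^2$. This gives $\Lambda(-\eta) = \Expect\econst^{\eta(-F(\vct{g}))} \le M(\eta)^{2\eta/(1-2\eta)}$, bounding the negative-side transform in terms of the \emph{positive} transform $M(\eta)$, which the previous step already controls: $M(\eta) = \econst^{\eta\delta(K)}\Lambda(\eta) \le \exp\big(\eta\delta(K) + \tfrac{2\eta^2\delta(K)}{1-4\eta}\big)$. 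Substituting and simplifying the exponent — the factor $2\eta/(1-2\eta)$ applied to $\eta\delta(K) + 2\eta^2\delta(K)/(1-4\eta)$ telescopes to exactly $2\eta^2\delta(K)/(1-4\eta)$ — yields $\log\Lambda(-\eta) \le 2\eta^2\delta(K)/(1-4\eta)$, matching the assertion with $\abs{\xi} = \eta$. Since $\xi = 0$ is trivial, these cases together cover the whole interval $-\tfrac14 < \xi < \tfrac14$.

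The main obstacle is the negative-$\xi$ regime: the exponential moment inequality~\eqref{eq:gauss-conc} is one-sided, bounding $\Expect\econst^{\xi F}$ only for $\xi > 0$, so it cannot be applied to $F$ at a negative argument. The resolution is the observation that applying it to $-F$ produces a bound in terms of $M(\eta)$ at a \emph{positive} argument, which decouples the estimate and lets the positive-side result feed the negative-side result; verifying that the resulting exponent collapses to the clean form $2\eta^2\delta(K)/(1-4\eta)$ is the one routine computation that must be done carefully.
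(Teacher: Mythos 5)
Your proposal is correct and follows essentially the same route as the paper: apply the log-Sobolev exponential moment inequality (Fact~\ref{fact:exp-moment}) to $\pm F$ with the gradient identity $\enormsm{\nabla F}^2 = 4\enormsm{\Proj_K(\cdot)}^2$, then close the resulting self-referential bound on the moment generating function. Your handling of the negative range—feeding the positive-side bound into $M(\eta) = \econst^{\eta\delta(K)}\Lambda(\eta)$—is algebraically identical to the paper's second self-comparison (which yields $\log M(\eta) \leq \eta(1-2\eta)\delta(K)/(1-4\eta)$, the same quantity), and your explicit finiteness check via nonexpansiveness is a welcome point of rigor the paper leaves implicit.
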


\begin{proof}
The result is trivial when $\xi = 0$, so we may limit our attention
to the case where the parameter $\xi$ is strictly positive or strictly negative.
First, suppose that $\xi > 0$.  Consider the zero-mean function
\begin{equation} \label{eq:F-defn}
F(\vct{g}) = \enormsm{\Proj_K(\vct{g})}^2 - \delta(K)
\quad\text{with}\quad
\enormsm{\nabla F(\vct{g})}^2 = 4 \enormsm{\Proj_K(\vct{g})}^2.
\end{equation}
The gradient calculation follows from~\eqref{eq:grad-proj},
and it is easy to see that $F \in H^1(\gamma_d)$ because
the projection onto a cone is a contraction.
The exponential moment bound~\eqref{eq:gauss-conc}
delivers the estimate
$$
\Expect \econst^{\xi \, F(\vct{g})}
	\leq \left( \Expect \econst^{\xi \, \enormsm{\Proj_K(\vct{g})}^2} \right)^{2\xi/(1 - 2\xi)}
	= \exp\left( \frac{2\xi^2 \, \delta(K)}{1 - 2\xi} \right) \cdot
	\left( \Expect \econst^{\xi \, F(\vct{g})} \right)^{2\xi/(1-2\xi)}.
$$
The second relation follows when we add and subtract $\xi \, \delta(K)$ in the exponential function.  We have compared the moment generating function of $F(\vct{g})$ with itself.  Solving the relation, we obtain the inequality
$$
\Expect \econst^{\xi \, F(\vct{g})}
	\leq \exp\left( \frac{2\xi^2 \, \delta(K)}{1 - 4\xi} \right)
	\quad\text{for $0 < \xi < \tfrac{1}{4}$.}
$$
This is the bound~\eqref{eq:exp-mom-bd} for the positive range of parameters.

Now, we turn to the negative range of parameters, which requires
a more convoluted argument.  To make the analysis clearer,
we continue to assume that $\xi > 0$, and we write the negation
explicitly.
Replacing $F$ by $-F$, the exponential moment bound~\eqref{eq:gauss-conc} yields
\begin{equation} \label{eq:neg-mgf}
\Expect \econst^{- \xi \, F(\vct{g})}
	= \Expect \econst^{\xi \, (- F(\vct{g}))}
	\leq \left( \Expect \econst^{\xi \, \enormsm{\Proj_{K}(\vct{g})}^2} \right)^{2\xi/(1-2\xi)}.
\end{equation}
This time, we cannot identify a copy of the left-hand side on the right-hand side.
Instead, let us run the moment comparison argument directly on the remaining expectation:
$$
\Expect \econst^{\xi \, \enormsm{\Proj_{K}(\vct{g})}^2}
	= \econst^{\xi \, \delta(K)} \cdot
	\Expect \econst^{\xi \, F(\vct{g})}
	\leq \econst^{\xi \, \delta(K)}
	\left( \Expect \econst^{\xi \,\enormsm{\Proj_{K}(\vct{g})}^2} \right)^{2\xi/(1-2\xi)}.
$$
The last inequality follows from the exponential moment bound~\eqref{eq:gauss-conc},
just as before.  Solving this relation, we obtain
$$
\Expect \econst^{\xi \, \enormsm{\Proj_{K}(\vct{g})}^2}
	\leq \exp\left( \frac{\xi(1 - 2\xi) \, \delta(K)}{1-4\xi} \right)
	\quad\text{for $0 < \xi < \tfrac{1}{4}$.}
$$
Introduce the latter inequality into~\eqref{eq:neg-mgf} to reach
$$
\Expect \econst^{- \xi \, F(\vct{g})} \leq
	\exp\left( \frac{2\xi^2 \, \delta(K)}{1 - 4\xi} \right).
$$
This estimate addresses the remaining part of the parameter range in~\eqref{eq:exp-mom-bd}.
\end{proof}

With this result at hand, we can easily prove the tail bound
for the projection of a spherical random variable onto a cone.

\begin{proof}[Proof of Lemma~\ref{lem:concentration}]
For a parameter $\xi > 0$, the Laplace transform method~\cite[Sec.~2.1]{BLM:13} delivers
$$
\Prob\big\{ d \, \enormsm{\Proj_C(\vct{\theta})}^2 \geq \delta(C)  + \lambda \big\}
	\leq \econst^{- \xi \lambda - \xi \,\delta(C)}
	\cdot \Expect \econst^{\xi d \, \enormsm{\Proj_C(\vct{\theta})}^2 }.
$$
Let $R$ be a chi random variable with $d$ degrees of freedom, independent from $\vct{\theta}$.  Using Jensen's inequality, we can bound the expectation:
$$
\Expect \econst^{\xi d \, \enormsm{\Proj_C(\vct{\theta})}^2 }
	= \Expect \econst^{\xi \, (\Expect R^2) \, \enormsm{\Proj_C(\vct{\theta})}^2}
	\leq \Expect \econst^{\xi \, \enormsm{\Proj_C(R\vct{\theta})}^2}
	= \Expect \econst^{\xi \,\enormsm{\Proj_C(\vct{g})}^2}.
$$
Combining these results, we obtain
\begin{equation} \label{eq:gauss-tail}
\Prob\big\{ d \, \enormsm{\Proj_C(\vct{\theta})}^2 \geq \delta(C) + \lambda \big\}
	\leq \econst^{-\xi \, \lambda} \cdot 
	\Expect \econst^{\xi \, (\enormsm{\Proj_C(\vct{g})}^2 - \delta(C))}.
\end{equation}
Substitute the inequality for the moment generating function~\eqref{eq:exp-mom-bd} with $K = C$ into~\eqref{eq:gauss-tail} to reach
$$
\Prob\big\{ d \, \enormsm{\Proj_C(\vct{\theta})}^2 \geq \delta(C) + \lambda \big\}
	\leq \econst^{-\xi \, \lambda} \cdot
	\exp\left( \frac{2 \xi^2 \, \delta(C)}{1 - 4\xi} \right)
	\quad\text{for}\quad 0 < \xi < \tfrac{1}{4}.
$$
Select $\xi = \lambda / (4 \, \delta(C) + 4 \lambda)$ to determine that
\begin{equation} \label{eqn:tail-bound-part1}
\Prob\big\{ d \, \enormsm{\Proj_C(\vct{\theta})}^2 \geq \delta(C) + \lambda \big\}
	\leq \exp\left( \frac{-\lambda^2/8}{\delta(C) + \lambda} \right).
\end{equation}
This completes the first half of the argument.

The second half of the proof results in an analogous bound with $C$ replaced
by $C^\polar$.  Note that
$$
\Prob\big\{ d \, \enormsm{\Proj_C(\vct{\theta})}^2 \geq \delta(C) + \lambda \big \}
	= \Prob\big\{ d \, \big( \enormsm{\Proj_{C}(\vct{\theta})}^2 - 1\big)
		+ (d - \delta(C)) \geq \lambda \big\}
	= \Prob\big\{ \delta(C^\polar) -
	d \, \enormsm{\Proj_{C^\polar}(\vct{\theta})}^2 \geq  \lambda \big\}.
$$
The second relation follows from the Pythagorean identity~\eqref{eq:pythag} and the complementarity law~\eqref{eq:delta-dist}.  Repeating the Laplace transform argument from above, with $\xi > 0$, we obtain the inequality
\begin{equation} \label{eq:gauss-tail-2}
\Prob\big\{ d \, \enormsm{\Proj_{C}(\vct{\theta})}^2 \geq \delta(C) + \lambda \big\}
	\leq \econst^{-\xi \lambda} \cdot
	\Expect \econst^{- \xi \, (\enormsm{\Proj_{C^\polar}(\vct{g})}^2 - \delta(C^\polar))}.
\end{equation}
Introduce the bound~\eqref{eq:exp-mom-bd} with $K = C^\polar$ into~\eqref{eq:gauss-tail-2} to see that
$$
\Prob\big\{ d \, \enormsm{\Proj_C(\vct{\theta})}^2 \geq \delta(C) + \lambda \big\}
	\leq \econst^{-\xi \, \lambda} \cdot
	\exp\left( \frac{2 \xi^2 \, \delta(C^\polar)}{1 - 4\xi} \right)
	\quad\text{for}\quad 0 < \xi < \tfrac{1}{4}.
$$
Choose $\xi = \lambda/(4 \, \delta(C^\polar) + 4\lambda)$ to reach
\begin{equation} \label{eqn:tail-bound-part2}
\Prob\big\{ d \, \enormsm{\Proj_C(\vct{\theta})}^2 \geq \delta(C) + \lambda \big\}
	\leq \exp\left( \frac{-\lambda^2/8}{\delta(C^\polar) + \lambda} \right).
\end{equation}
Combine the probability bounds~\eqref{eqn:tail-bound-part1} and~\eqref{eqn:tail-bound-part2} and identify the transition width $\omega^2(C)$ to complete the proof.
\end{proof}

\subsection{Tail functionals of a product}

Finally, we argue that the tail functional of a product
cone is controlled by the tail functionals of the two
summands.

\begin{proof}[Proof of Lemma~\ref{lem:prod-tail}]
Let $C, K \subset \R^d$ be closed convex cones.
Define independent random variables $X$ and $Y$
that take values in $\{0,1, \dots,d\}$ and whose
distributions are given by the intrinsic volumes
of the cones $C$ and $K$, respectively.  That is,
$$
\Prob\{ X = k \} = v_k(C)
\quad\text{and}\quad
\Prob\{ Y = k \} = v_k(K)
\quad\text{for $k = 0, 1, 2, \dots, d$.}
$$
According to the rule~\eqref{eq:prod} for the intrinsic volumes of
a product cone,
$$
\Prob\{ X + Y = k \} = v_k(C \times K)
\quad\text{for $k = 0, 1, 2, \dots, 2d$.}
$$
By dint of this identity, we can use probabilistic reasoning to bound the tail functionals of the cone $v_k(C \times K)$.  Indeed, observe that
$$
\Prob\{ X + Y \geq \delta(C) + \delta(K) + 2\lambda \}
	\leq \Prob\{ X \geq \delta(C) + \lambda \}
	+ \Prob\{ Y \geq \delta(K) + \lambda \}.
$$
We can rewrite this inequality in terms of tail functionals:
$$
t_{\lceil \delta(C) + \delta(K) + 2 \lambda \rceil}(C \times K)
	\leq t_{\lceil \delta(C) + \lambda \rceil}(C)
	+ t_{\lceil \delta(K) + \lambda \rceil}(K).
$$
This is the advertised conclusion.
\end{proof}

\section{Statistical dimension and Gaussian width}
\label{app:sdim-width}

This appendix contains a short proof of Proposition~\ref{prop:sdim-width}, which
states that the Gaussian width of a spherical convex set is comparable
with the statistical dimension of the cone generated by the set.

\begin{proof}[Proof of Proposition~\ref{prop:sdim-width}]
Let $C$ be a convex cone in $\R^d$; we may assume $C$ is closed.  It is easy to check that the statistical dimension of $C$ dominates its squared Gaussian width:
$$
w^2(C) := \left( \Expect{} \sup\nolimits_{\smash{\vct{y}} \in C \cap \sphere{d}} \ \ip{ \smash{\vct{y}} }{ \smash{\vct{g}} } \right)^2
\leq \left( \Expect{} \sup\nolimits_{\smash{\vct{y}} \in C \cap \ball{d}} \ \ip{ \smash{\vct{y}} }{ \smash{\vct{g}} } \right)^2
\leq \Expect \bigg[ \bigl(
\sup\nolimits_{\smash{\vct{y}} \in C\cap \ball{d}}
	\ip{ \smash{\vct{y}} }{\smash{\vct{g}}} \big)^2
  \bigg]
  = \delta(C).
  $$
The first inequality holds because we have enlarged the range of the supremum.  Afterward, we invoke Jensen's inequality, and we recognize the supremum form~\eqref{eq:sdim-sup-expect} of the statistical dimension.

For the reverse inequality, define the random variable $Z := Z(\vct{g}) := \sup_{\smash{\vct{y}} \in C \cap \sphere{d-1}} \ \ip{\smash{\vct{y}}}{ \smash{\vct{g}}}$, and note that $w(C) = \Expect Z$.  The function $\vct{g} \mapsto Z(\vct{g})$ is 1-Lipschitz because the supremum is restricted to a subset of the Euclidean unit sphere.  Therefore, we can bound the fluctuation of $Z$ as follows.
\begin{equation} \label{eq:width-var}
\Expect\big[Z^2\big] - w^2(C)
	= \Expect \big[ (Z - \Expect Z)^2 \big]
	= \Var(Z)
	\leq 1.
\end{equation}
The last inequality follows from Fact~\ref{fact:gauss-lip}

As a consequence of~\eqref{eq:width-var}, we obtain the required bound $\delta(C) \leq w^2(C) + 1$ as soon as we verify that $\delta(C) \leq \Expect \big[Z^2\big]$.
Since $Z^2$ is a nonnegative random variable,
$$
\Expect \big[ Z^2 \big]
	\geq \Expect \big[ Z^2 \cdot \mathbb{1}_{\R^d \setminus C^\polar}(\vct{g}) \big]
	= \Expect \left[ \left(
	\sup\nolimits_{\smash{\vct{y}} \in C \cap \sphere{d-1}} \ \ip{ \smash{\vct{y}} }{\smash{\vct{g}}} \right)^2
	\cdot \mathbb{1}_{\R^d \setminus C^\polar}(\vct{g}) \right],
$$
where $\mathbb{1}_E$ denotes the indicator of the event $E$.  We claim that the right-hand side of this inequality equals the statistical dimension $\delta(C)$.  Indeed, for any $\vct{x} \notin C^\polar$, we must have $\sup_{\smash{\vct{y}} \in C \cap \sphere{d-1}} \ \ip{\smash{\vct{y}}}{\vct{x}} = \sup_{\smash{\vct{y}} \in C \cap \ball{d-1}} \ \ip{\smash{\vct{y}}}{\vct{x}}$ because the supremum over the ball occurs at a unit vector.  On the other hand, when $\vct{x} \in C^\polar$, we have the relation $\sup_{\smash{\vct{y}} \in C \cap \ball{d-1}} \ \ip{\smash{\vct{y}}}{\vct{x}} = 0$.  Combine these observations to reach
$$
\Expect \big[ Z^2 \big] \geq
\Expect \left[ \left(
	\sup\nolimits_{\smash{\vct{y}} \in C \cap \sphere{d-1}} \ \ip{ \smash{\vct{y}} }{\smash{\vct{g}}} \right)^2
	\cdot \mathbb{1}_{\R^d \setminus C^\polar}(\vct{g}) \right]
	= \Expect \left[ \left( \sup\nolimits_{\smash{\vct{y}} \in C \cap \ball{d-1}} \ \ip{ \smash{\vct{y}} }{\smash{\vct{g}}} \right)^2 \right].
$$
On account of~\eqref{eq:sdim-circ-expect}, we identify the right-hand side as the statistical dimension $\delta(C)$.
\end{proof}

\section*{Acknowledgments}

DA is with the School of
Mathematics, The University of Manchester.  Research supported by DFG grant AM 386/1-1 and 386/1-2.
  
ML is with the School of Mathematics, The
  University of Manchester. Research supported by Leverhulme Trust
  grant R41617 and a Seggie Brown Fellowship of the University of
  Edinburgh.
    
MBM and JAT~are with the Department of Computing
    and Mathematical Sciences, California Institute of
    Technology. Research supported by ONR awards N00014-08-1-0883 and
    N00014-11-1002, AFOSR award FA9550-09-1-0643,
    and a Sloan Research Fellowship.
    
The authors wish to thank Babak Hassibi and Samet Oymak for
helpful discussions on the connection between phase transitions
and minimax risk.  Jared Tanner provided detailed information
about contemporary research on phase transitions for random
linear inverse problems.

\bibliographystyle{myalpha}
\def\cprime{$'$} \def\cprime{$'$}

\end{document}